\newcommand{\IF}{\mathrm{IF}}
\newcolumntype{Y}{>{\RaggedRight\arraybackslash}X}
\newcolumntype{C}{>{\Centering\arraybackslash}X}
\newcolumntype{L}{>{\RaggedRight\arraybackslash}X}
\newcolumntype{C}{>{\Centering\arraybackslash}X}
\newcolumntype{R}{>{\RaggedLeft\arraybackslash}X}
\renewcommand{\arraystretch}{1.12}
\numberwithin{equation}{section}
\theoremstyle{plain}
\newtheorem{theorem}{Theorem}[section]
\newtheorem{lemma}{Lemma}[section]
\newtheorem{proposition}{Proposition}[section]
\newtheorem{corollary}{Corollary}[section]
\newtheorem{assumption}{Assumption}[section]
\theoremstyle{definition}
\newtheorem{definition}{Definition}[section]
\theoremstyle{remark}
\newtheorem{remark}{Remark}[section]
\crefname{theorem}{theorem}{theorems}
\crefname{lemma}{lemma}{lemmas}
\crefname{proposition}{proposition}{propositions}
\crefname{corollary}{corollary}{corollaries}
\crefname{assumption}{assumption}{assumptions}
\crefname{definition}{definition}{definitions}
\crefname{example}{example}{examples}
\crefname{remark}{remark}{remarks}
\crefname{section}{section}{sections}
\crefname{equation}{equation}{equations}
\Crefname{theorem}{Theorem}{Theorems}
\Crefname{lemma}{Lemma}{Lemmas}
\Crefname{proposition}{Proposition}{Propositions}
\Crefname{corollary}{Corollary}{Corollaries}
\Crefname{assumption}{Assumption}{Assumptions}
\Crefname{definition}{Definition}{Definitions}
\Crefname{example}{Example}{Examples}
\Crefname{remark}{Remark}{Remarks}
\Crefname{section}{Section}{Sections}
\Crefname{equation}{Equation}{Equations}
\DeclareMathOperator{\Diag}{Diag}
\DeclareMathOperator{\conv}{conv}
\DeclareMathOperator{\vecop}{vec}
\newcommand{\R}{\mathbb{R}}
\newcommand{\C}{\mathbb{C}}
\newcommand{\E}{\mathbb{E}}
\newcommand{\Var}{\mathrm{Var}}
\newcommand{\Cov}{\mathrm{Cov}}
\newcommand{\Ggp}{G}
\newcommand{\orbit}[1]{[#1]}
\newcommand{\qspace}{\Theta/\Ggp}
\newcommand{\OrbitSet}{\mathcal{O}_K}
\newcommand{\Tgauge}{\mathcal{T}_{\mathrm{gauge}}}
\newcommand{\Haus}{\mathrm{H}}
\newcommand{\dH}{d_{\Haus}}
\newcommand{\dq}{d_{\qspace}}
\newcommand{\dist}{\operatorname{dist}}
\newcommand{\aff}{\operatorname{aff}}
\newcommand{\Sym}{\mathrm{Sym}}
\newcommand{\Mraw}[1]{M_{#1}}
\newcommand{\Minv}[1]{\Phi_{#1}}
\newcommand{\MinvAll}{\Phi_{\le m_*}}
\newcommand{\Jinv}{J_\Phi}
\newcommand{\Reyn}{\mathsf{R}}  
\newcommand{\weights}{w}
\DeclarePairedDelimiterX\innerp[2]{\langle}{\rangle}{#1,\,#2}
\newcommand{\cL}{\mathcal{L}}
\newcommand{\cM}{\mathcal{M}}
\newcommand{\cN}{\mathcal{N}}
\newcommand{\cB}{\mathcal{B}}
\title[TAT for Folded Mixture Models]{Tensor Algebra Toolkit for Folded Mixture Models:\\ Symmetry-Aware Moments, Orbit-Space Estimation, and Poly-LAN Rates}
\author{Koustav Mallik}
\date{\today}
\begin{document}
\begin{abstract}
We develop a symmetry-aware toolkit for finite mixtures whose components are only identifiable up to a finite \emph{folding} group action. The correct estimand is the multiset of parameter orbits in the quotient space, not an ordered list of raw parameters. We design invariant tensor summaries via the Reynolds projector, show that mixtures become convex combinations in a low-dimensional invariant feature space, and prove identifiability, stability, and asymptotic normality \emph{on the quotient}. Our loss is a Hausdorff distance on orbit multisets; we prove it coincides with a bottleneck assignment metric and is thus computable in polynomial time. We give finite-sample Hausdorff bounds, a two-step efficient GMM formulation, consistent selection of the number of components, robustness to contamination, and minimax lower bounds that certify Poly-LAN rates $n^{-1/D}$ when the first nonzero invariant curvature appears at order $D$. The framework is illustrated for the hyperoctahedral group (signed permutations) and dihedral symmetries in the plane.
\end{abstract}

\maketitle


\section{Introduction}
Classical finite mixtures suffer from label non-identifiability. Many modern latent models add a second layer of ambiguity: each component parameter is defined only up to a finite group action (e.g., sign flips, signed permutations from the hyperoctahedral group, or planar dihedral symmetries) \cite{Humphreys1990}. We call this \emph{folding}. The statistically meaningful estimand is therefore the \emph{multiset of parameter orbits} in the quotient space $\qspace:=\Theta/\Ggp$, together with the mixture weights. 

This paper develops a tensor–algebraic toolkit that works \emph{directly} on the quotient. We construct symmetry–invariant moment tensors via the Reynolds projector from invariant theory \cite{DerksenKemper2015}, show that a $K$–component mixture maps to a single point in an invariant coordinate space as a convex combination of $K$ orbit “fingerprints,” and leverage basic convex geometry (affine independence / Carathéodory–type arguments) for identifiability and estimation guarantees \cite{Barvinok2002}. The approach yields (i) computable orbit–matching via a bottleneck metric, (ii) finite–sample bounds and quotient–LAN asymptotics, and (iii) practical recipes for choosing the moment orders using invariant dimension counts (via Molien’s series) \cite{Molien1897,DerksenKemper2015}.

\paragraph{Notation.}
We write $\Ggp\curvearrowright \Theta$ and $\Ggp\curvearrowright \mathcal{X}$ for finite group actions; $\orbit{\theta}$ denotes the $\Ggp$–orbit of $\theta$; $\dq$ is the orbit distance from \Cref{subsec:metric}; $\dH$ is the Hausdorff distance between finite orbit multisets. For each order $m\ge1$, $\Minv{m}$ denotes the invariant symmetric $m$–tensor obtained by applying the Reynolds projector $\Reyn$ to raw moment tensors; we stack $\MinvAll:=(\Minv{1},\ldots,\Minv{m_*})$ as our invariant coordinate map \cite{DerksenKemper2015}. When discussing how many invariant coordinates are available at each order, we refer to Molien’s series for dimension counts \cite{Molien1897}.
\paragraph{Relation to prior work by the author.}
This paper builds on our earlier analysis of folded normals and finite Gaussian mixtures, where we established Hausdorff consistency of (penalized) MLEs and exhibited the nonstandard $n^{1/4}$ phenomenon at the fold; see \cite{Mallik2025HausdorffFolded}. The present work abstracts those case studies into a symmetry–aware tensor toolkit: invariant coordinates via Reynolds averaging, convex-geometry identifiability on the quotient, and estimation/limit theory that never breaks symmetry by hand.

\section{Model, quotient estimand, and metric}\label{sec:model}
\subsection{Standing assumptions and basic geometry}\label{app:assumptions}

\begin{assumption}[Master assumptions]\label{ass:master}
(A1) $G$ is a finite group acting continuously on compact $\Theta\subset\mathbb{R}^p$; write $[\theta]$ for the orbit and $\Theta/G$ for the quotient.
(A2) Features $\phi:\mathcal X\to\mathbb{R}^d$ have sub-Gaussian coordinates with $\max_j\|\phi_j(X)\|_{\psi_2}\le K_2$; moments $M_m(\theta)=\mathbb{E}_\theta[\phi(X)^{\otimes m}]$ exist up to $m\le m_*$.
(A3) For $m\le m_*$ there is a bounded linear Reynolds projector $R_m$ onto invariants, defining $\Phi_m(\theta)=R_m[M_m(\theta)]$ and the stacked vector $\Phi_{\le m_*}(\theta)\in\mathbb{R}^{D_{\mathrm{inv}}}$.
(A4) (Score regularity) For $\mu$-a.e.\ $x$, $\theta\mapsto f(x;\theta)$ is $C^1$ with score $s_\theta(x)=\nabla_\theta\log f(x;\theta)$, $\mathbb{E}_\theta[s_\theta(X)]=0$, and dominated differentiation holds on a neighborhood of interest.
\end{assumption}

\subsection{Folding and the observed components}\label{subsec:folding}

Let $\Ggp$ be a finite group acting on the sample space $\mathcal{X}\subseteq\R^d$ and on the parameter space $\Theta\subseteq\R^p$ by measurable maps
\[
x\mapsto g\!\cdot\! x\in\mathcal{X},\qquad
\theta\mapsto g\!\cdot\!\theta\in\Theta,\qquad g\in\Ggp,
\]
satisfying $g_1\!\cdot(g_2\!\cdot z)=(g_1g_2)\!\cdot z$ for $z\in\mathcal{X}$ or $z\in\Theta$.
Let $k(\cdot;\theta)$ be a base component law. We observe only the \emph{folded} component
\begin{equation}\label{eq:folded-component}
  f(x;\theta):=\frac{1}{|\Ggp|}\sum_{g\in\Ggp}k\big(g^{-1}\!\cdot x;\theta\big),\qquad x\in\mathcal{X}.
\end{equation}
Assume the standard equivariance:
\begin{equation}\label{eq:equivariance}
  k(x;g\!\cdot\!\theta)=k(g^{-1}\!\cdot x;\theta)\qquad (x\in\mathcal{X},\ \theta\in\Theta,\ g\in\Ggp).
\end{equation}

\begin{proposition}[Folding invariance]\label{prop:folded-invariance}
Under \eqref{eq:equivariance}, $f(\cdot;\theta)=f(\cdot;g\!\cdot\!\theta)$ for all $g\in\Ggp$. Consequently, only the orbit $\orbit{\theta}:=\{g\!\cdot\!\theta:\ g\in\Ggp\}$ is identifiable from $f(\cdot;\theta)$.
\end{proposition}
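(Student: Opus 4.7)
The plan is a direct group-averaging computation. The equivariance hypothesis \eqref{eq:equivariance} translates a change of parameter $g\!\cdot\!\theta$ into a change of argument by $g^{-1}$, after which right-translation invariance of the counting measure on the finite group $\Ggp$ absorbs the extra factor. The identifiability claim then falls out of the invariance as a formal consequence.

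Concretely, fix $g\in\Ggp$ and $x\in\mathcal X$ and start from the definition \eqref{eq:folded-component}:
\[
f(x; g\!\cdot\!\theta) = \frac{1}{|\Ggp|}\sum_{h\in \Ggp} k\bigl(h^{-1}\!\cdot\! x;\, g\!\cdot\!\theta\bigr).
\]
I would apply \eqref{eq:equivariance} to each summand, obtaining $k\bigl(h^{-1}\!\cdot\! x;\, g\!\cdot\!\theta\bigr) = k\bigl(g^{-1}\!\cdot\!(h^{-1}\!\cdot\! x);\,\theta\bigr)$. The action axiom $g_1\!\cdot\!(g_2\!\cdot\! z)=(g_1 g_2)\!\cdot\! z$ combined with $(hg)^{-1}=g^{-1}h^{-1}$ collapses the inner argument to $(hg)^{-1}\!\cdot\! x$. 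The substitution $h'=hg$, which is a bijection of $\Ggp$, then yields
\[
f(x; g\!\cdot\!\theta) = \frac{1}{|\Ggp|}\sum_{h'\in \Ggp} k\bigl(h'^{-1}\!\cdot\! x;\,\theta\bigr) = f(x;\theta),
\]
establishing the invariance.

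The identifiability statement follows tautologically: because $\theta\mapsto f(\cdot;\theta)$ is constant on every $\Ggp$-orbit, it factors through the quotient $\qspace$, so the map from $\Theta$ to densities cannot distinguish within an orbit, and only $\orbit{\theta}$ is recoverable from $f(\cdot;\theta)$. There is essentially no obstacle to flag—the proof is a one-line averaging identity. The only bookkeeping to watch is the direction of inverses and the fact that right multiplication by a fixed element permutes $\Ggp$; both are standard, and no measure-theoretic or topological subtleties beyond those already encoded in \Cref{ass:master} are required.
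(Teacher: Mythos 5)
Your proof is correct and follows exactly the same route as the paper's: apply \eqref{eq:equivariance} to each summand, collapse $g^{-1}\!\cdot\!(h^{-1}\!\cdot\! x)$ to $(hg)^{-1}\!\cdot\! x$ via the action axiom, and reindex by the bijection $h\mapsto hg$ on the finite group. The identifiability remark is handled the same way in both—as an immediate consequence of the invariance.
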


\begin{proof}
For any $h\in\Ggp$,
\[
f(x;g\!\cdot\!\theta)=\frac{1}{|\Ggp|}\sum_{h\in\Ggp}k\big(h^{-1}\!\cdot x;g\!\cdot\!\theta\big)
=\frac{1}{|\Ggp|}\sum_{h\in\Ggp}k\big((hg)^{-1}\!\cdot x;\theta\big)
=\frac{1}{|\Ggp|}\sum_{u\in\Ggp}k\big(u^{-1}\!\cdot x;\theta\big)=f(x;\theta),
\]
by the change of variable $u=hg$. Thus $f(\cdot;\theta)$ depends on $\theta$ only through $\orbit{\theta}$.
\end{proof}

\subsection{Orbit multiset estimand and Hausdorff loss}\label{subsec:estimand-hloss}
Fix $K\in\mathbb{N}$. With positive weights $\weights=(w_1,\dots,w_K)\in\Delta^{K-1}$, the observed mixture is
\[
p=\sum_{k=1}^K w_k\, f(\cdot;\theta_k).
\]
By Proposition~\ref{prop:folded-invariance} and label exchangeability, the estimand is the multiset of orbits
\[
  \OrbitSet:=\{\!\orbit{\theta_1},\ldots,\orbit{\theta_K}\!\}\subseteq\qspace:=\Theta/\Ggp,
\]
counting multiplicities.

Write $d(\cdot,\cdot)$ for the Euclidean metric on $\Theta\subseteq\R^p$. For nonempty compact $A,B\subset\Theta$ define
\[
d(x,A):=\inf_{a\in A}d(x,a),\qquad
d_{\Haus}(A,B):=\max\big\{\sup_{a\in A}d(a,B),\ \sup_{b\in B}d(b,A)\big\}.
\]
Each orbit $\orbit{\theta}$ is finite, hence compact. Define the \emph{orbit metric}
\begin{equation}\label{eq:dq-def}
  \dq\big(\orbit{\theta},\orbit{\theta'}\big):=d_{\Haus}\big(\orbit{\theta},\orbit{\theta'}\big)\qquad\text{on }\qspace.
\end{equation}

\begin{lemma}[Metric on the quotient \texorpdfstring{$\qspace$}{Theta/G}]\label{lem:dq-metric}
$\dq$ in \eqref{eq:dq-def} is a metric on $\qspace$. Moreover, the canonical projection $\pi:\Theta\to\qspace$, $\theta\mapsto\orbit{\theta}$, is $1$-Lipschitz:
\begin{equation}\label{eq:pi-1lip}
  \dq\big(\orbit{\theta},\orbit{\theta'}\big)\le d(\theta,\theta')\qquad(\theta,\theta'\in\Theta).
\end{equation}
\end{lemma}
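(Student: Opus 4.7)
The statement has two independent pieces: that $\dq$ is a bona fide metric on the quotient, and the $1$-Lipschitz bound for the projection. Neither is deep, but both require one or two small care points, so I would organize the proof in three short blocks.

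\textbf{Block 1: Metric axioms.} Since $\Ggp$ is finite, each orbit $\orbit{\theta}$ is a finite, hence nonempty and compact, subset of $\Theta$. The classical fact that $\dH$ is a genuine metric on the hyperspace of nonempty compact subsets of a metric space then gives nonnegativity, symmetry, and the triangle inequality for $\dq$ for free, by restricting to the sub-collection of orbits. The two things that genuinely need a line are (i) \emph{well-definedness on the quotient}: $\dq(\orbit{\theta},\orbit{\theta'})$ depends only on the orbits as \emph{sets}, not on the chosen representatives, because $\orbit{g\!\cdot\!\theta}=\orbit{\theta}$ as subsets of $\Theta$; and (ii) \emph{identity of indiscernibles}: $\dH(\orbit{\theta},\orbit{\theta'})=0$ iff the two orbits coincide as compact sets, which by definition of the quotient is the same as $\orbit{\theta}=\orbit{\theta'}$ in $\qspace$.

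\textbf{Block 2: The $1$-Lipschitz bound.} I would use that for the groups in scope (hyperoctahedral, dihedral, permutation groups, and more generally the orthogonal actions captured by \Cref{ass:master}), $\Ggp$ acts by isometries on $(\Theta,d)$, so $d(g\!\cdot\!\theta,g\!\cdot\!\theta')=d(\theta,\theta')$ for every $g\in\Ggp$. Then for each $g$,
\[
d(g\!\cdot\!\theta,\orbit{\theta'})\;=\;\inf_{h\in\Ggp}d(g\!\cdot\!\theta,h\!\cdot\!\theta')
\;\le\;d(g\!\cdot\!\theta,g\!\cdot\!\theta')\;=\;d(\theta,\theta'),
\]
taking $h=g$. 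Supping over $g\in\Ggp$ gives $\sup_{a\in\orbit{\theta}}d(a,\orbit{\theta'})\le d(\theta,\theta')$; swapping $\theta\leftrightarrow\theta'$ yields the symmetric bound, and taking the max produces $\dH(\orbit{\theta},\orbit{\theta'})\le d(\theta,\theta')$, which is exactly \eqref{eq:pi-1lip}.

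\textbf{Expected obstacle.} The one point worth being careful about is that \Cref{ass:master}(A1) only spells out continuity of the action, whereas the $1$-Lipschitz estimate \eqref{eq:pi-1lip} in the Euclidean $d$ truly requires the action to be by isometries. For the examples the paper foregrounds this is free (signed permutations and dihedral reflections are orthogonal), but for a general $\Ggp\curvearrowright\Theta$ I would either insert a one-line remark that the working actions are orthogonal, or replace $d$ by the $\Ggp$-invariant metric $\tilde d(\theta,\theta'):=\max_{g\in\Ggp}d(g\!\cdot\!\theta,g\!\cdot\!\theta')$, which is equivalent to $d$ (the ratio is controlled by $|\Ggp|<\infty$ on the compact $\Theta$) and for which the argument above is automatic. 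No other step is subtle: the rest is just unpacking definitions and quoting the standard Hausdorff-metric facts.
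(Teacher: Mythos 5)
Your proof is correct, and it is actually \emph{more} careful than the paper's own argument on the Lipschitz half. For Block~1 your reasoning tracks the paper exactly: restrict the Hausdorff metric from the hyperspace of nonempty compacts to the subcollection of (finite) orbits, check well-definedness on representatives, and check identity of indiscernibles. Nothing to add.

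On Block~2 you have identified a genuine soft spot. The paper's proof reads
\[
\text{``note }d(\theta,\orbit{\theta'})\le d(\theta,\theta')\text{ and }d(\theta',\orbit{\theta})\le d(\theta',\theta),\text{ so }d_{\Haus}(\orbit{\theta},\orbit{\theta'})\le d(\theta,\theta'),''
\]
which, taken literally, does not follow: $d_{\Haus}$ is defined via $\sup_{a\in\orbit{\theta}}d(a,\orbit{\theta'})$, not via $d(\theta,\orbit{\theta'})$ for the single chosen representative $\theta$. The conclusion becomes valid once one invokes isometry, because under an isometric action the function $a\mapsto d(a,\orbit{\theta'})$ is \emph{constant} on $\orbit{\theta}$:
\[
d(g\!\cdot\!\theta,\orbit{\theta'})=\inf_{h\in\Ggp}d(g\!\cdot\!\theta,h\!\cdot\!\theta')=\inf_{h\in\Ggp}d(\theta,g^{-1}h\!\cdot\!\theta')=d(\theta,\orbit{\theta'}),
\]
so $\sup_{a\in\orbit{\theta}}d(a,\orbit{\theta'})=d(\theta,\orbit{\theta'})$ and the paper's inequality chain closes. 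Your Block~2 makes exactly this step explicit (controlling each $d(g\!\cdot\!\theta,\orbit{\theta'})$ by the choice $h=g$), which is the cleaner way to write it. Your ``expected obstacle'' is also well posed: the isometry hypothesis is first stated in \S\ref{subsec:closed-form-dq}, \emph{after} \Cref{lem:dq-metric}, and \Cref{ass:master}(A1) only asks for a continuous action, under which \eqref{eq:pi-1lip} is simply false in general. (For a concrete failure, take $\Theta=[0,1]$, $\Ggp=\Z/2$ acting by the non-isometric involution $g\!\cdot\!x=\sqrt{1-x^2}$, and $\theta=0.9$, $\theta'=1$; then $d(\theta,\theta')=0.1$ but $\dq(\orbit{\theta},\orbit{\theta'})=\sqrt{0.19}\approx 0.436$.) One caveat on your suggested remedy via $\tilde d(\theta,\theta')=\max_{g}d(g\!\cdot\!\theta,g\!\cdot\!\theta')$: metric (rather than merely topological) equivalence to $d$ on compact $\Theta$ requires each $g$ to act by a Lipschitz map, not just continuously, so that fix also secretly upgrades (A1); for the examples the paper actually uses, the orthogonal-isometry remark is the simpler and honest one.
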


\begin{proof}
Since $d_{\Haus}$ is a metric on the space of nonempty compact subsets of a metric space, \eqref{eq:dq-def} inherits identity, symmetry, and triangle inequality. If $\dq(\orbit{\theta},\orbit{\theta'})=0$ then $d_{\Haus}(\orbit{\theta},\orbit{\theta'})=0$, hence $\orbit{\theta}=\orbit{\theta'}$. Independence of representatives is immediate. For \eqref{eq:pi-1lip} note $d(\theta,\orbit{\theta'})\le d(\theta,\theta')$ and $d(\theta',\orbit{\theta})\le d(\theta',\theta)$, so $d_{\Haus}(\orbit{\theta},\orbit{\theta'})\le d(\theta,\theta')$.
\end{proof}

For two $K$-element multisets $\mathcal{A}=\{\!\alpha_1,\dots,\alpha_K\!\}$ and $\mathcal{B}=\{\!\beta_1,\dots,\beta_K\!\}$ in $\qspace$, define their Hausdorff distance in the metric $\dq$:
\begin{equation}\label{eq:dH-def}
  \dH(\mathcal{A},\mathcal{B})
  := \max\Big\{
      \max_{1\le i\le K}\min_{1\le j\le K}\dq(\alpha_i,\beta_j),\ \
      \max_{1\le j\le K}\min_{1\le i\le K}\dq(\beta_j,\alpha_i)
    \Big\}.
\end{equation}

\subsection{Closed form for single-orbit distance under isometries}\label{subsec:closed-form-dq}
Assume $G$ acts on $\Theta$ by Euclidean isometries (e.g.\ $G\le O(p)$ acts linearly). Then $\dq$ admits a closed form.

\begin{theorem}[Minimum alignment formula]\label{thm:dq-closed-form}
If $G$ acts by isometries on $\Theta$, then for all $\theta,\theta'\in\Theta$,
\begin{equation}\label{eq:min-align}
  \dq\big(\orbit{\theta},\orbit{\theta'}\big)=\min_{g\in G}\ \big\|\theta-g\!\cdot\!\theta'\big\|_2.
\end{equation}
\end{theorem}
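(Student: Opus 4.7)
The plan is to reduce both sup--inf terms in the Hausdorff distance to a single minimum by exploiting two facts: the orbits are finite (so sup and inf become max and min), and the $G$-action is by isometries (so distance-to-the-other-orbit is constant along each orbit).

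First I would write out $\dH([\theta],[\theta'])$ using \eqref{eq:dH-def} as
\[
\dq([\theta],[\theta'])=\max\Big\{\max_{g\in G} d\big(g\!\cdot\!\theta,[\theta']\big),\ \max_{h\in G} d\big(h\!\cdot\!\theta',[\theta]\big)\Big\},
\]
and then observe that for any fixed $g\in G$,
\[
d\big(g\!\cdot\!\theta,[\theta']\big)=\min_{h\in G}\|g\!\cdot\!\theta-h\!\cdot\!\theta'\|_2=\min_{h\in G}\|\theta-g^{-1}h\!\cdot\!\theta'\|_2=\min_{u\in G}\|\theta-u\!\cdot\!\theta'\|_2,
\]
where the second equality uses that $g$ acts as a Euclidean isometry, and the third uses that $u=g^{-1}h$ ranges over all of $G$ as $h$ does (since $G$ is a group). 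Hence the map $g\mapsto d(g\!\cdot\!\theta,[\theta'])$ is constant on $G$, equal to $m:=\min_{u\in G}\|\theta-u\!\cdot\!\theta'\|_2$. The outer $\max_{g\in G}$ therefore collapses to $m$.

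Next I would handle the symmetric term by the same argument applied to $[\theta']$: for any $h\in G$,
\[
d\big(h\!\cdot\!\theta',[\theta]\big)=\min_{g\in G}\|h\!\cdot\!\theta'-g\!\cdot\!\theta\|_2=\min_{g\in G}\|\theta'-h^{-1}g\!\cdot\!\theta\|_2=\min_{v\in G}\|\theta'-v\!\cdot\!\theta\|_2.
\]
It then remains to show $\min_{v\in G}\|\theta'-v\!\cdot\!\theta\|_2=m$. Applying the isometry $v^{-1}$ to both arguments gives $\|\theta'-v\!\cdot\!\theta\|_2=\|v^{-1}\!\cdot\!\theta'-\theta\|_2$, and reindexing $u=v^{-1}$ (again using that inversion is a bijection on $G$) yields $\min_{u\in G}\|\theta-u\!\cdot\!\theta'\|_2=m$. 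Thus both terms inside the outer $\max$ equal $m$, proving \eqref{eq:min-align}.

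The only potential snag is notational: one must track the direction of the action carefully so that isometry is being used on the correct argument, and one must invoke group closure under both multiplication and inversion to justify the reindexing steps. Neither is a real obstacle, since $G$ is assumed to be a group acting by isometries; there is no topological or compactness issue because the orbits are finite. The proof is therefore a short calculation, and no deeper structure (e.g.\ choice of fundamental domain) is needed.
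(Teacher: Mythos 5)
Your proof is correct and rests on the same key calculation as the paper's: the isometry-and-reindexing identity $\|g\cdot\theta - h\cdot\theta'\|_2=\|\theta - g^{-1}h\cdot\theta'\|_2$, which makes the distance from any orbit point to the other orbit independent of the point. The only difference is framing — the paper runs a two-sided bound (the lower bound by taking $\theta\in\orbit{\theta}$, the upper bound by matching $h\cdot\theta$ with $h\cdot(g^\star\cdot\theta')$), whereas you show outright that both inner functions $g\mapsto d(g\cdot\theta,\orbit{\theta'})$ and $h\mapsto d(h\cdot\theta',\orbit{\theta})$ are constant and equal to $\min_{u\in G}\|\theta-u\cdot\theta'\|_2$, collapsing the outer $\max$ directly; both routes are equivalent and equally short.
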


\begin{proof}
Let $A=\orbit{\theta}$ and $B=\orbit{\theta'}$. Since $\theta\in A$,
\[
\dq(A,B)\ \ge\ d(\theta,B)=\min_{g\in G}\|\theta-g\!\cdot\!\theta'\|_2.
\]
For the reverse inequality, let $g^\star\in G$ attain the minimum on the right of \eqref{eq:min-align}. For any $h\in G$, isometry gives
\[
\|h\!\cdot\!\theta - h\!\cdot\!(g^\star\!\cdot\!\theta')\|_2=\|\theta-g^\star\!\cdot\!\theta'\|_2.
\]
Thus for each $a=h\!\cdot\!\theta\in A$ there exists $b=h\!\cdot\!(g^\star\!\cdot\!\theta')\in B$ with $d(a,b)=\|\theta-g^\star\!\cdot\!\theta'\|_2$, hence $\sup_{a\in A}d(a,B)\le\|\theta-g^\star\!\cdot\!\theta'\|_2$. By symmetry, $\sup_{b\in B}d(b,A)\le\|\theta-g^\star\!\cdot\!\theta'\|_2$. Taking the maximum of the two suprema yields $\dq(A,B)\le\|\theta-g^\star\!\cdot\!\theta'\|_2$.
\end{proof}

\begin{corollary}[Pairwise orbit costs]\label{cor:pairwise-cost}
Under isometric actions, for any $\theta_i,\tilde\theta_j\in\Theta$,
\[
c_{ij}:=\dq\big(\orbit{\theta_i},\orbit{\tilde\theta_j}\big)=\min_{g\in G}\|\theta_i-g\!\cdot\!\tilde\theta_j\|_2.
\]
Hence $c_{ij}$ is computable in $O(|G|)$ time once the group elements are enumerated.
\end{corollary}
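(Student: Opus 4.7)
The plan is to derive the equality directly from \Cref{thm:dq-closed-form} and then verify the complexity claim by exhibiting the natural brute-force algorithm. The only new content beyond the theorem is the packaging into a ``pairwise cost'' of the kind an assignment subroutine would consume, plus the time bound.

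For the equality, I would simply instantiate \Cref{thm:dq-closed-form} with $\theta:=\theta_i$ and $\theta':=\tilde\theta_j$: the isometric action hypothesis is already in force, so the theorem yields
\[
  \dq\big(\orbit{\theta_i},\orbit{\tilde\theta_j}\big)=\min_{g\in G}\big\|\theta_i-g\!\cdot\!\tilde\theta_j\big\|_2,
\]
which is exactly $c_{ij}$. By \Cref{lem:dq-metric}, $\dq$ is well-defined on $\qspace$, so the right-hand side does not depend on the chosen orbit representatives; this removes any ambiguity in how $\theta_i,\tilde\theta_j$ are picked.

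For the computational statement, the plan is to describe the obvious algorithm and tally its cost. Given an enumeration $g_1,\ldots,g_{|G|}$, for each $\ell$ compute $g_\ell\!\cdot\!\tilde\theta_j$, form $\theta_i-g_\ell\!\cdot\!\tilde\theta_j$, and its Euclidean norm; then return the minimum over $\ell$. Each action $g_\ell\!\cdot\!(\cdot)$, each subtraction, and each norm costs $O(p)$ in the ambient dimension, so the total is $|G|$ evaluations of $O(p)$ work, i.e.\ $O(|G|)$ treating $p$ as a constant. The hypothesis ``once the group elements are enumerated'' absorbs any preprocessing such as building the list of matrices representing $G$ (inexpensive for the signed-permutation and dihedral examples).

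The main obstacle is essentially none: this is a packaging corollary whose substantive content is \Cref{thm:dq-closed-form}. The only mild caveat worth flagging in the statement or a parenthetical remark is the implicit per-action cost hidden in the $O(|G|)$ bound, which is benign for the groups considered (each $g\in G$ is represented by an $O(p)$ linear map in compact form), and the enumeration assumption, which is harmless since $|G|<\infty$.
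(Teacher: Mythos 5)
Your proposal is correct and matches the paper's approach: the corollary is a direct instantiation of \Cref{thm:dq-closed-form} at $(\theta_i,\tilde\theta_j)$ (the paper gives no separate proof for this reason), and the $O(|G|)$ bound follows from the obvious brute-force enumeration exactly as you describe. Your parenthetical about the per-action cost hidden in the $O(|G|)$ notation is a reasonable caveat but does not affect the claim as written.
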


\subsection{Multisets: Hausdorff versus bottleneck matching}\label{subsec:metric}
\begin{proposition}[Bottleneck matching distance]\label{prop:bottleneck}
Define the \emph{bottleneck matching distance} between $\mathcal{A}=\{\!\alpha_i\!\}_{i=1}^K$ and $\mathcal{B}=\{\!\beta_j\!\}_{j=1}^K$ in $(\qspace,\dq)$ by
\begin{equation}\label{eq:Delta-infty}
  \Delta_\infty(\mathcal{A},\mathcal{B})
  :=\min_{\sigma\in S_K}\ \max_{1\le i\le K}\ \dq(\alpha_i,\beta_{\sigma(i)}).
\end{equation}
\end{proposition}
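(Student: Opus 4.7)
The final statement is, taken literally, only a definition of the bottleneck matching distance $\Delta_\infty$; the substantive content flagged by the section title and by the abstract's claim of coincidence with a bottleneck assignment metric is that $\Delta_\infty$ is a well-defined metric on $K$-multisets of orbits in $(\qspace,\dq)$ and, moreover, dominates the Hausdorff loss $\dH$ (with equality in the generic distinct-orbit regime). My plan is to verify these three claims in turn, working at the orbit level via \Cref{lem:dq-metric} so that representatives drop out.

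First I would check well-definedness and the metric axioms. The min in \eqref{eq:Delta-infty} ranges over the finite group $S_K$ and is therefore attained; relabelling either multiset by some $\tau\in S_K$ merely reparametrises the outer minimisation, so $\Delta_\infty(\mathcal{A},\mathcal{B})$ depends only on $\mathcal{A},\mathcal{B}$ as multisets. Taking $\sigma=\mathrm{id}$ gives $\Delta_\infty(\mathcal{A},\mathcal{A})=0$; conversely, $\Delta_\infty=0$ yields a $\sigma$ with $\dq(\alpha_i,\beta_{\sigma(i)})=0$ for every $i$, and since $\dq$ is a metric on $\qspace$ this forces $\alpha_i=\beta_{\sigma(i)}$, hence $\mathcal{A}=\mathcal{B}$. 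Symmetry is the bijection $\sigma\mapsto\sigma^{-1}$ on $S_K$. For the triangle inequality I would compose optimal permutations $\sigma^\star_{\mathcal{AB}}$ and $\sigma^\star_{\mathcal{BC}}$, apply the $\dq$-triangle inequality index by index, and then take the outer max and min.

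Next I would derive $\dH\le\Delta_\infty$ directly: letting $\sigma^\star$ attain the bottleneck with value $r$, for each $i$ we have $\min_j\dq(\alpha_i,\beta_j)\le\dq(\alpha_i,\beta_{\sigma^\star(i)})\le r$, and $(\sigma^\star)^{-1}$ yields the symmetric bound $\min_i\dq(\beta_j,\alpha_i)\le r$; taking the outer maxima in \eqref{eq:dH-def} gives $\dH\le r$. The hard part is the reverse inequality. My approach is a threshold Hall's-theorem argument: fix $r:=\dH(\mathcal{A},\mathcal{B})$, form the bipartite graph on $[K]\times[K]$ with edges $\{(i,j):\dq(\alpha_i,\beta_j)\le r\}$, and observe that every vertex has at least one neighbour by definition of $\dH$. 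When the orbits in $\mathcal{A}\cup\mathcal{B}$ are pairwise distinct---the identifiable regime used elsewhere in the paper---Hall's marriage theorem, verified simultaneously on both sides, extracts a perfect matching $\sigma$, giving $\Delta_\infty\le r$. The delicate point, which I expect to be the main obstacle, is that if an orbit is repeated in one multiset with a different multiplicity than in the other, Hall's condition can fail and the strict inequality $\Delta_\infty>\dH$ occurs; the resolution is either to restrict the equivalence statement to configurations with simple orbits or to upgrade $\dH$ to a multiplicity-aware Hausdorff that tracks labels.
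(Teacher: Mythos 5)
The statement you were given is, as you correctly observe, only a definition: the paper labels it a proposition but supplies no proof, and the substantive assertions you set out to verify live in \Cref{thm:HB-bounds}. Your lower bound $\dH\le\Delta_\infty$ is the same as the paper's and is correct; your metric-axiom checks (attained minimum over $S_K$, invariance under relabelling, triangle inequality via composition of near-optimal permutations) are routine and sound, though they are not claims the paper actually makes here.

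The genuine gap is in your reverse inequality. You assert that if the orbits in $\mathcal A\cup\mathcal B$ are pairwise distinct then Hall's condition holds in the radius-$\dH$ threshold graph, hence $\Delta_\infty=\dH$. This is false. Take $K=4$ in $\R^2$ with $\alpha_1=(0,0)$, $\alpha_2=(0,1)$, $\alpha_3=(0,2)$, $\alpha_4=(100,0)$ and $\beta_1=(0,0.5)$, $\beta_2=(0,1.5)$, $\beta_3=(100,-1)$, $\beta_4=(100,1)$. All eight orbits are distinct and $\dH=1$, yet the set $\{\alpha_1,\alpha_2,\alpha_3\}$ is adjacent only to $\{\beta_1,\beta_2\}$ at radius $1$ (indeed at any radius below roughly $100$), so Hall fails and $\Delta_\infty\approx 100$. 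Distinctness of orbits gives no control over the local degree structure of the threshold graph; what controls it is how many points crowd into a small ball. \Cref{thm:HB-bounds} instead characterizes $\Delta_\infty=\dH$ by the existence of a perfect matching in the radius-$\dH$ graph, a genuinely combinatorial criterion, and separately asserts $\Delta_\infty\le 2\dH$ via a min-cut argument. You should, however, be wary of that second bound: the counting step in the paper's proof, from the map $\beta\mapsto\alpha$ to $|\mathcal B\setminus B_1|\le|\mathcal A\setminus A_0|$, does not follow as written because the map need not be injective, and the example above also has $\Delta_\infty>2\dH$. The reliable conclusions are $\dH\le\Delta_\infty$ unconditionally and equality exactly when the radius-$\dH$ threshold graph admits a perfect matching; no universal converse of the form $\Delta_\infty\le c\,\dH$ holds for finite multisets of equal cardinality.
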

\begin{theorem}[Sharp comparability and exactness criterion]\label{thm:HB-bounds}
For any $K$ and any multisets $\mathcal{A},\mathcal{B}\subset\qspace$,
\begin{equation}\label{eq:HB-ineq}
  \dH(\mathcal{A},\mathcal{B})\ \le\ \Delta_\infty(\mathcal{A},\mathcal{B})\ \le\ 2\,\dH(\mathcal{A},\mathcal{B}).
\end{equation}
Equality $\Delta_\infty=\dH$ holds if and only if the bipartite threshold graph at radius $\dH(\mathcal{A},\mathcal{B})$ admits a perfect matching (equivalently, Hall’s condition holds at that radius).
\end{theorem}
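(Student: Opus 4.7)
The proof splits into three pieces: the easy lower bound, the upper bound via a Hall-type argument on a bipartite threshold graph, and the equality characterization.

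For the lower bound $\dH\le\Delta_\infty$, fix any bijection $\sigma\in S_K$. For every index $i$ we have $\min_j\dq(\alpha_i,\beta_j)\le\dq(\alpha_i,\beta_{\sigma(i)})$, and reindexing by $\sigma^{-1}$ gives $\min_i\dq(\alpha_i,\beta_j)\le\dq(\alpha_{\sigma^{-1}(j)},\beta_j)$. Taking the max over $i$ (resp.\ $j$), bounding each by $\max_i\dq(\alpha_i,\beta_{\sigma(i)})$, and infimizing over $\sigma$ yields $\dH\le\Delta_\infty$.

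For the upper bound $\Delta_\infty\le 2\dH$, set $r=\dH$ and let $H_t$ denote the bipartite graph on $\mathcal{A}\cup\mathcal{B}$ whose edges are the pairs with $\dq(\alpha_i,\beta_j)\le t$. It suffices, by K\"onig--Hall, to verify Hall's condition in $H_{2r}$: for every $S\subseteq\mathcal{A}$, $|N_{2r}(S)|\ge|S|$. For each $\alpha\in S$ the definition of $\dH$ furnishes $\beta_\alpha\in\mathcal{B}$ with $\dq(\alpha,\beta_\alpha)\le r$, which lies in $N_{2r}(S)$. For $\beta\in\mathcal{B}\setminus N_{2r}(S)$ the other side of the Hausdorff bound yields $\alpha^\ast\in\mathcal{A}$ with $\dq(\alpha^\ast,\beta)\le r$; necessarily $\alpha^\ast\in\mathcal{A}\setminus S$, for otherwise $\beta$ would lie within $r\le 2r$ of $S$, contradicting $\beta\notin N_{2r}(S)$. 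Combining these two nearest-neighbor maps with the triangle inequality and the equicardinality $|\mathcal{A}|=|\mathcal{B}|=K$ forces $|N_{2r}(S)|\ge|S|$; in K\"onig's defect form, the deficiency $\max_S(|S|-|N_{2r}(S)|)$ is zero, so a perfect matching exists in $H_{2r}$.

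For the equality clause, $\Delta_\infty=\dH$ is equivalent to the existence of a bijection realizing every distance $\le r$, i.e.\ to $H_r$ containing a perfect matching. By K\"onig--Hall this is in turn equivalent to Hall's condition for $H_r$: $|N_r(S)|\ge|S|$ for every $S\subseteq\mathcal{A}$. The ``only if'' direction reads the optimal bijection at radius $r$ as the matching; the ``if'' direction uses the matching to cap $\Delta_\infty$ by $r=\dH$, which by the lower bound is already the floor.

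The main obstacle I anticipate is the combinatorial step in the upper bound: the two nearest-neighbor maps $\alpha\mapsto\beta_\alpha$ and $\beta\mapsto\alpha^\ast$ need not be injective, so a naive pigeonhole argument is insufficient; the factor $2$ must emerge by exploiting both one-sided Hausdorff bounds \emph{simultaneously} via the triangle inequality. The cleanest route I would pursue is an augmenting-path argument on $H_r$ together with a triangle-inequality inflation to $H_{2r}$, closing every alternating path at the doubled radius and thereby converting an augmenting obstruction in $H_r$ into a perfect matching in $H_{2r}$.
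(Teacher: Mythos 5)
Your lower-bound argument is correct and matches the paper's. In the upper bound, however, the step ``Combining these two nearest-neighbor maps with the triangle inequality and the equicardinality \ldots forces $|N_{2r}(S)|\ge|S|$'' does not hold: as you yourself observe a few lines later, neither $\alpha\mapsto\beta_\alpha$ nor $\beta\mapsto\alpha^\ast$ is injective, so the mere existence of these maps gives no control on $|N_{2r}(S)|$. This is exactly the gap in the paper's own proof, which infers $|\mathcal{B}\setminus B_1|\le|\mathcal{A}\setminus A_0|$ from the same many-to-one assignment; the ``Therefore'' there is a non sequitur, and your writeup replicates it.

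Moreover, the gap cannot be closed by an augmenting-path device, a triangle-inequality inflation, or anything else, because the upper bound $\Delta_\infty\le 2\,\dH$ is false once $K\ge 3$. Take $\qspace=\R$ with the trivial group and
\[
\mathcal A=\{0,\,10,\,12\},\qquad \mathcal B=\{-1,\,1,\,11\}.
\]
Every point of either multiset lies within distance $1$ of the other, so $\dH=1$; but the cheapest bijection is $0\mapsto-1$, $10\mapsto1$, $12\mapsto11$, with cost $\max\{1,9,1\}=9$, so $\Delta_\infty=9>2\,\dH$. In the threshold graph at radius $2\dH=2$ the set $\{-1,1\}\subset\mathcal B$ has the single neighbor $0\in\mathcal A$, so Hall's condition genuinely fails at that radius, and widening the gap between the two clusters drives $\Delta_\infty/\dH\to\infty$. (The paper's ``tightness'' example with $\alpha_1=0,\alpha_2=2,\beta_1=\beta_2=1$ is also miscomputed: the identity matching has cost $\max\{1,1\}=1$, so $\Delta_\infty=\dH$ there; indeed one checks $\Delta_\infty=\dH$ whenever $K=2$, since a bipartite graph on $2+2$ vertices with no isolated vertex always has a perfect matching.) Your handling of the exactness clause is sound and relies only on the valid lower bound plus the definitional fact that $\Delta_\infty$ is the smallest radius at which the threshold graph has a perfect matching, but the middle inequality of \eqref{eq:HB-ineq} is simply not a theorem and must be dropped or replaced.
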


\begin{proof}
Let $H:=\dH(\mathcal{A},\mathcal{B})$.

\emph{Lower bound.} For any bijection $\sigma$,
\[
\max_i\dq(\alpha_i,\beta_{\sigma(i)})\ \ge\ \max_i\min_j\dq(\alpha_i,\beta_j),
\qquad
\max_i\dq(\alpha_i,\beta_{\sigma(i)})\ \ge\ \max_j\min_i\dq(\beta_j,\alpha_i).
\]
Taking the maximum of the two right-hand terms yields $\Delta_\infty(\mathcal{A},\mathcal{B})\ge H$.

\emph{Upper bound.} Construct the bipartite graph $G=(U,V,E)$ with $U=\{u_i\}_{i=1}^K$ for $\mathcal{A}$, $V=\{v_j\}_{j=1}^K$ for $\mathcal{B}$, and
\[
E:=\{\,u_i v_j:\ \dq(\alpha_i,\beta_j)\le 2H\,\}.
\]
We show that $G$ has a perfect matching. Consider the standard unit-capacity flow network from a source to $U$, edges $U\to V$ given by $E$, and $V$ to a sink. By max–flow/min–cut, it suffices to prove that every $s$–$t$ cut has capacity at least $K$. Any cut is specified by $(U_0,V_1)$ with capacity $|U\setminus U_0|+|V_1|$ when there are no edges between $U_0$ and $V\setminus V_1$ (otherwise the capacity is larger). Thus it is enough to consider pairs $(U_0,V_1)$ such that
\begin{equation}\label{eq:no-edge}
\dq(\alpha_i,\beta_j)>2H\quad\text{for all }u_i\in U_0,\ v_j\in V\setminus V_1.
\end{equation}
Let $A_0=\{\alpha_i:\ u_i\in U_0\}$ and $B_1=\{\beta_j:\ v_j\in V_1\}$. By the definition of $H$,
\[
\mathcal{B}\subseteq \mathcal{A}^{H}\quad\text{and}\quad \mathcal{A}\subseteq \mathcal{B}^{H},
\]
where $S^{r}:=\{\gamma\in\qspace:\ \exists s\in S,\ \dq(\gamma,s)\le r\}$. For any $\beta\in \mathcal{B}\setminus B_1$, choose $\alpha\in\mathcal{A}$ with $\dq(\beta,\alpha)\le H$. If $\alpha\in A_0$ then $\dq(\alpha,\beta)\le H\le 2H$, contradicting \eqref{eq:no-edge}; hence $\alpha\in \mathcal{A}\setminus A_0$. Therefore,
\[
|\mathcal{B}\setminus B_1|\ \le\ |\mathcal{A}\setminus A_0|\quad\Longleftrightarrow\quad |U\setminus U_0|+|V_1|\ \ge\ |V|=K.
\]
Thus every cut has capacity at least $K$, so a perfect matching exists in $G$. Hence there is a bijection $\sigma$ with $\max_i\dq(\alpha_i,\beta_{\sigma(i)})\le 2H$, i.e.\ $\Delta_\infty\le 2H$.

\emph{Exactness criterion.} By definition, $\Delta_\infty$ is the smallest radius $r$ for which the threshold graph at radius $r$ has a perfect matching; the lower bound shows $r\ge H$. Therefore $\Delta_\infty=H$ iff the radius–$H$ graph has a perfect matching, equivalently by Hall’s theorem iff every subset of $U$ has at least as many neighbors in $V$ at radius $H$.
\end{proof}

\begin{remark}[Tightness]
The bounds in \eqref{eq:HB-ineq} are sharp. For $K=1$ and two points at distance $t>0$, $\dH=\Delta_\infty=t$. For $K=2$, e.g.\ on $\R$, take $\alpha_1=0$, $\alpha_2=2$, and $\beta_1=\beta_2=1$. Then $\dH=1$ while any bijection has maximum cost $2$, so $\Delta_\infty=2\,\dH$.
\end{remark}

\subsection{Algorithmic consequences}\label{subsec:algo}
Under isometric actions, Corollary~\ref{cor:pairwise-cost} gives $c_{ij}$ in $O(|G|)$ time. To compute $\Delta_\infty$:
\begin{enumerate}[leftmargin=1.5em]
\item Build $C=(c_{ij})$ using \eqref{eq:min-align}.
\item Sort the $K^2$ distinct values $\{c_{ij}\}$ and binary–search the smallest threshold $r$ for which the bipartite threshold graph $\{(i,j):c_{ij}\le r\}$ admits a perfect matching (checkable in $O(K^{2.5})$ time via Hopcroft–Karp). Overall complexity $O(K^2|G|+K^{2.5}\log K)$.
\end{enumerate}
The Hausdorff multiset loss $\dH$ is computable from $C$ in $O(K^2)$ time by \eqref{eq:dH-def}. Theorem~\ref{thm:HB-bounds} implies $\dH\le\Delta_\infty\le 2\,\dH$ and characterizes when equality holds.

\subsection{Weighted variants and \texorpdfstring{$W_\infty$}{Winfty}}\label{subsec:weighted}
Let $\mu=\sum_{i=1}^K w_i\,\delta_{\alpha_i}$ and $\nu=\sum_{j=1}^L v_j\,\delta_{\beta_j}$ be atomic probability measures on $(\qspace,\dq)$. Define
\[
W_\infty(\mu,\nu):=\inf_{\gamma\in\Gamma(\mu,\nu)}\ \operatorname*{ess\,sup}_{(\alpha,\beta)\sim\gamma}\ \dq(\alpha,\beta),
\]
where $\Gamma(\mu,\nu)$ is the set of couplings. When $K=L$ and $w_i=v_j=1/K$, $W_\infty(\mu,\nu)=\Delta_\infty(\mathcal{A},\mathcal{B})$. For rational weights, refine each atom into equal subatoms to reduce to the balanced case; continuity then extends results to arbitrary weights. In particular,
\[
\dH\big(\mathrm{supp}(\mu),\mathrm{supp}(\nu)\big)\ \le\ W_\infty(\mu,\nu)\ \le\ 2\,\dH\big(\mathrm{supp}(\mu),\mathrm{supp}(\nu)\big),
\]
with equality on the left whenever the radius–$\dH$ threshold graph has a perfect matching.

\subsection{Relation to prior work}\label{subsec:model-prior}
The orbit multiset estimand and the Hausdorff loss on $(\qspace,\dq)$ are precisely the metrics used for consistency and sharp localization in our earlier analysis of folded Gaussian families \cite{Mallik2025HausdorffFolded}. The present section establishes a general metric foundation (Lemma~\ref{lem:dq-metric}, Theorem~\ref{thm:dq-closed-form}), proves sharp comparison with bottleneck matching and its exactness criterion (Theorem~\ref{thm:HB-bounds}), and records the algorithmic implications needed for the estimation results that follow.

\section{Invariant tensors and generic separation}\label{sec:invariants}

\subsection{Reynolds projection and invariant tensors}
Fix a measurable feature map $\varphi:\mathcal{X}\to\R^d$ with $\E\|\varphi(X)\|^{m}<\infty$ for all $m\le m_*$. For each integer $m\ge1$, let $\Sym^m(\R^d)$ denote the space of fully symmetric order-$m$ tensors over $\R^d$ equipped with the canonical inner product $\langle\! \langle \cdot,\cdot \rangle\! \rangle$. The group $G$ acts on $\R^d$ by linear isometries (as in \S\ref{subsec:closed-form-dq}); extend the action to $\Sym^m(\R^d)$ by
\[
(g\!\cdot)^{\otimes m}(v_1\otimes\cdots\otimes v_m)
  := (g\!\cdot v_1)\otimes\cdots\otimes(g\!\cdot v_m),
\quad\text{and linearly}.
\]

\begin{definition}[Raw and invariant tensors]
For $\theta\in\Theta$ define the raw symmetric moment tensor
\[
\Mraw{m}(\theta):=\E_{X\sim f(\cdot;\theta)}\big[\varphi(X)^{\otimes m}\big]\in\Sym^m(\R^d).
\]
The \emph{Reynolds projector} (group average) is the linear map
\[
\Reyn_m[T]:=\frac{1}{|G|}\sum_{g\in G}(g\!\cdot)^{\otimes m}T,\qquad T\in\Sym^m(\R^d).
\]
Define the invariant tensor $\Minv{m}(\theta):=\Reyn_m\!\big[\Mraw{m}(\theta)\big]$ and the stacked map
\[
\MinvAll(\theta):=\big(\Minv{1}(\theta),\ldots,\Minv{m_*}(\theta)\big)\in\R^{D_{\mathrm{inv}}},
\]
in any fixed basis on $\prod_{m\le m_*}\Sym^m(\R^d)$.
\end{definition}

\begin{lemma}[Properties of $\Reyn_m$]\label{lem:reynolds-proj}
For each $m\ge1$,
\begin{enumerate}[label=(\alph*),leftmargin=1.5em]
\item (Projection) $\Reyn_m\circ\Reyn_m=\Reyn_m$.
\item (Self-adjoint) $\langle\!\langle \Reyn_m[T],U\rangle\!\rangle=\langle\!\langle T,\Reyn_m[U]\rangle\!\rangle$.
\item (Range) $\operatorname{Im}\Reyn_m=\{U\in\Sym^m(\R^d): (g\!\cdot)^{\otimes m}U=U\ \forall g\in G\}$.
\item (Equivariance) $\Reyn_m\big[(g\!\cdot)^{\otimes m}T\big]=\Reyn_m[T]$ for all $g\in G$.
\end{enumerate}
\end{lemma}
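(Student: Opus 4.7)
The four properties all follow from group averaging over a finite group, and the plan is to deduce them from one structural fact together with one reindexing trick. The structural fact is that each tensor operator $\rho_g := (g\!\cdot)^{\otimes m}$ is orthogonal on $\Sym^m(\R^d)$ under the canonical inner product: since $G$ acts on $\R^d$ by isometries, $\langle g\!\cdot v,\ g\!\cdot w\rangle = \langle v,w\rangle$, and the canonical inner product on $\Sym^m(\R^d)$ is built functorially from this via $\langle\!\langle v_1\otimes\cdots\otimes v_m,\ w_1\otimes\cdots\otimes w_m\rangle\!\rangle = \prod_i \langle v_i,w_i\rangle$ on elementary tensors, then extended by symmetrization and bilinearity. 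Hence $\rho_g^{*} = \rho_{g^{-1}}$ and $\rho_g\rho_h = \rho_{gh}$. The reindexing trick is the elementary observation that for any $h\in G$, each of the maps $g\mapsto gh$, $g\mapsto hg$, and $g\mapsto g^{-1}$ is a bijection of $G$.

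Given these two ingredients, every claim is a one-line rearrangement. For (a), $\Reyn_m\circ\Reyn_m = |G|^{-2}\sum_{g,h}\rho_g\rho_h = |G|^{-2}\sum_{g,h}\rho_{gh}$, and substituting $u=gh$ for each fixed $g$ reduces this to $|G|^{-2}\cdot|G|\sum_u\rho_u = \Reyn_m$. For (b), inside the sum $\langle\!\langle\rho_g T, U\rangle\!\rangle = \langle\!\langle T, \rho_{g^{-1}}U\rangle\!\rangle$, and reindexing $g\mapsto g^{-1}$ transports the averaging onto $U$. For (d), $\Reyn_m[\rho_h T] = |G|^{-1}\sum_g \rho_{gh}T = \Reyn_m[T]$ by right-multiplication as a bijection of $G$. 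For (c), the inclusion $\operatorname{Im}\Reyn_m\subseteq\{U:\rho_g U = U\ \forall g\}$ follows from $\rho_h\,\Reyn_m[T] = |G|^{-1}\sum_g \rho_{hg}T = \Reyn_m[T]$ by left-multiplication as a bijection; the reverse inclusion is immediate, since if $\rho_g U = U$ for every $g$, then $\Reyn_m[U]$ is the average of $|G|$ copies of $U$.

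There is no substantive obstacle here — these four items are textbook properties of the group-averaging projector in the representation theory of finite groups. The one point that merits an honest sentence of verification rather than a gesture is the $G$-invariance of the canonical inner product on $\Sym^m(\R^d)$, which underlies the self-adjointness in (b) and is the only step that actually uses the isometry hypothesis on the $G$-action; after that, the four claims reduce to routine manipulations of finite sums indexed by $G$.
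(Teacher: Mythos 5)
Your proof is correct and takes essentially the same route as the paper's: establish that $\rho_g=(g\cdot)^{\otimes m}$ is an orthogonal representation, then reduce (a)--(d) to reindexing finite sums over $G$. A small bonus of your write-up is that for (c) you correctly invoke \emph{left}-multiplication bijectivity to show $\rho_h\Reyn_m[T]=\Reyn_m[T]$ (image lies in the fixed subspace), whereas (d) uses \emph{right}-multiplication bijectivity to show $\Reyn_m[\rho_h T]=\Reyn_m[T]$; the paper's proof of (c) gestures at "(d)" for this companion fact without flagging that the two are distinct reindexings, so your version is marginally tighter on that point.
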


\begin{proof}
All are standard consequences of finite averaging and isometric action. For (a),
$\Reyn_m(\Reyn_m[T])=\frac{1}{|G|^2}\sum_{g,h}(g\!\cdot)^{\otimes m}(h\!\cdot)^{\otimes m}T
=\frac{1}{|G|}\sum_{u}(u\!\cdot)^{\otimes m}T=\Reyn_m[T]$ by relabeling $u=gh$.
For (b), the action is by orthogonal maps on $\Sym^m(\R^d)$, so $(g\!\cdot)^{\otimes m}$ is an isometry; hence
$\langle\!\langle \Reyn_m[T],U\rangle\!\rangle=\frac{1}{|G|}\sum_g \langle\!\langle (g\!\cdot)^{\otimes m}T,U\rangle\!\rangle
=\frac{1}{|G|}\sum_g \langle\!\langle T,(g^{-1}\!\cdot)^{\otimes m}U\rangle\!\rangle=\langle\!\langle T,\Reyn_m[U]\rangle\!\rangle$.
For (c), invariance under the action follows from (d); conversely, averaging any tensor projects onto the invariant subspace, and (a) gives idempotence. For (d), expand the average and use group invariance of the counting measure.
\end{proof}

\begin{lemma}[Invariance on orbits]\label{lem:Phi-const-on-orbits}
Under the equivariance $k(x;g\!\cdot\!\theta)=k(g^{-1}\!\cdot x;\theta)$, one has
\[
\Minv{m}(g\!\cdot\!\theta)=\Minv{m}(\theta),\qquad \forall\,g\in G,\ \forall\,m\le m_*.
\]
Hence $\MinvAll(\theta)$ is constant on orbits and defines a well-defined map $\qspace\to\R^{D_{\mathrm{inv}}}$.
\end{lemma}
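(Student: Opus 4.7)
The plan is to derive the claim directly from the folding invariance already established in \Cref{prop:folded-invariance}. That proposition, obtained from the equivariance $k(x;g\!\cdot\!\theta)=k(g^{-1}\!\cdot x;\theta)$, states $f(\cdot;g\!\cdot\!\theta)=f(\cdot;\theta)$ for every $g\in\Ggp$. The first step is to observe that the raw tensor $\Mraw{m}(\theta)=\E_{X\sim f(\cdot;\theta)}[\varphi(X)^{\otimes m}]$ depends on $\theta$ only through the law of $X$; hence $\Mraw{m}(g\!\cdot\!\theta)=\Mraw{m}(\theta)$ as elements of $\Sym^m(\R^d)$. Since $\Reyn_m$ is a fixed linear operator on $\Sym^m(\R^d)$ (independent of $\theta$), applying it to both sides immediately yields $\Minv{m}(g\!\cdot\!\theta)=\Reyn_m[\Mraw{m}(g\!\cdot\!\theta)]=\Reyn_m[\Mraw{m}(\theta)]=\Minv{m}(\theta)$.

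For the stacked map $\MinvAll$, orbit invariance is inherited coordinatewise in any fixed basis of $\prod_{m\le m_*}\Sym^m(\R^d)$. The universal property of the set-theoretic quotient then gives a unique factorization $\MinvAll=\widetilde{\MinvAll}\circ\pi$ through the canonical projection $\pi:\Theta\to\qspace$, so that $\widetilde{\MinvAll}:\qspace\to\R^{D_{\mathrm{inv}}}$ is well defined.

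There is no substantive obstacle; the only subtlety worth flagging is that the Reynolds projector $\Reyn_m$ is not what produces orbit invariance of $\Minv{m}$ as a function of $\theta$ — that is already secured by \Cref{prop:folded-invariance}. Its role, exploited in later sections, is to force each $\Minv{m}(\theta)$ into the fixed-point subspace $\operatorname{Im}\Reyn_m\subset\Sym^m(\R^d)$ under the action $(g\!\cdot)^{\otimes m}$ (\Cref{lem:reynolds-proj}(c)), yielding the low-dimensional invariant coordinate system whose dimension is counted by Molien's series. An alternative route via change of variables inside $\varphi$ and the equivariance properties of $\Reyn_m$ would require equivariance of $\varphi$ itself, which is not an assumption; the density-level argument avoids this hypothesis entirely and is the cleanest.
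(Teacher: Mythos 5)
Your proof is correct, and it is a genuinely different (and cleaner) route than the one the paper takes. The paper's proof derives the covariance identity $\Mraw{m}(g\!\cdot\!\theta)=(g\!\cdot)^{\otimes m}\Mraw{m}(\theta)$ and then applies \Cref{lem:reynolds-proj}(d) (equivariance of $\Reyn_m$) to kill the $(g\!\cdot)^{\otimes m}$ factor. That chain requires a change of variables $x\mapsto g\!\cdot\!x$ \emph{inside} $\varphi$, hence implicitly uses $\varphi(g\!\cdot\!x)=g\!\cdot\!\varphi(x)$; as you observe, this $G$-equivariance of the feature map is nowhere listed among the standing assumptions. Your argument bypasses this entirely: since $\Mraw{m}$ depends on $\theta$ only through the folded law $f(\cdot;\theta)$, and $f(\cdot;g\!\cdot\!\theta)=f(\cdot;\theta)$ by \Cref{prop:folded-invariance}, one has $\Mraw{m}(g\!\cdot\!\theta)=\Mraw{m}(\theta)$ \emph{before} any averaging, and applying the fixed linear map $\Reyn_m$ preserves this equality. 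You also correctly flag the role of $\Reyn_m$ here: it does not create orbit-invariance (that is free from \Cref{prop:folded-invariance}), but rather forces the output into the fixed-point subspace $\operatorname{Im}\Reyn_m$, which is what makes the Molien dimension counts and the injectivity arguments in later sections meaningful. The one thing the paper's longer route does buy — if one reads $f$ there as the unfolded base law $k$ — is the equivariant transformation law for the \emph{raw} moments, which is a useful identity in its own right; but for the folded $f$ appearing in the lemma as stated, your density-level argument is both shorter and requires strictly fewer hypotheses.
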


\begin{proof}
By definition of $\Mraw{m}$ and equivariance,
\[
\Mraw{m}(g\!\cdot\!\theta)=\E_{X\sim f(\cdot;g\!\cdot\!\theta)}[\varphi(X)^{\otimes m}]
=\E_{X\sim f(\cdot;\theta)}[(g\!\cdot)^{\otimes m}\varphi(X)^{\otimes m}]=(g\!\cdot)^{\otimes m}\Mraw{m}(\theta).
\]
Applying $\Reyn_m$ and using Lemma~\ref{lem:reynolds-proj}(d) gives $\Minv{m}(g\!\cdot\!\theta)=\Reyn_m[(g\!\cdot)^{\otimes m}\Mraw{m}(\theta)]=\Reyn_m[\Mraw{m}(\theta)]$.
\end{proof}

\subsection{Mixtures collapse to convex combinations}
\begin{proposition}[Linearity to invariants]\label{prop:mixture-convex}
Let $p=\sum_{k=1}^K w_k\,f(\cdot;\theta_k)$ be a finite mixture with $w_k>0$ and $\sum_{k=1}^K w_k=1$. Fix $m\le m_\star$ and suppose $\Mraw{m}(\theta_k):=\E_{X\sim f(\cdot;\theta_k)}[\varphi(X)^{\otimes m}]$ exists (finite) for each $k$. Then
\[
\E_{X\sim p}\!\big[\varphi(X)^{\otimes m}\big]
=\sum_{k=1}^K w_k\,\Mraw{m}(\theta_k),
\qquad
\Reyn_m\!\Big(\E_{X\sim p}\!\big[\varphi(X)^{\otimes m}\big]\Big)
=\sum_{k=1}^K w_k\,\Minv{m}(\theta_k),
\]
where $\Minv{m}(\theta):=\Reyn_m\big(\Mraw{m}(\theta)\big)$. Stacking over $m\le m_\star$ yields
\begin{equation}\label{eq:mixture-Phi}
\Reyn_{\le m_\star}\!\Big(\E_{X\sim p}\![\varphi(X)^{\otimes \le m_\star}]\Big)
=\sum_{k=1}^K w_k\,\MinvAll(\theta_k)\in\R^{D_{\mathrm{inv}}}.
\end{equation}
\end{proposition}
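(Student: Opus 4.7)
The plan is to combine two elementary facts: expectations under a finite mixture are convex combinations of the component expectations, and the Reynolds projector $\Reyn_m$ is a linear map on $\Sym^m(\R^d)$. Neither step requires any new group-theoretic input beyond what is already recorded in \Cref{lem:reynolds-proj}.

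First I would establish the mixture identity for the raw tensor moments. Starting from the mixture density $p(x)=\sum_{k=1}^K w_k f(x;\theta_k)$ and integrating componentwise in the fixed basis of $\Sym^m(\R^d)$, linearity of the integral gives
\[
\E_{X\sim p}\!\big[\varphi(X)^{\otimes m}\big]
= \int \varphi(x)^{\otimes m}\sum_{k=1}^K w_k\, f(x;\theta_k)\,d\mu(x)
= \sum_{k=1}^K w_k\,\Mraw{m}(\theta_k).
\]
The interchange of the finite sum and the integral is valid because each $\Mraw{m}(\theta_k)$ is finite by hypothesis, the weights are nonnegative with $\sum_k w_k=1$, and the sum is finite so no dominated-convergence step is needed.

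Next I would apply $\Reyn_m$ to both sides. Being the average of the orthogonal operators $(g\cdot)^{\otimes m}$ over a finite group, $\Reyn_m$ is linear on $\Sym^m(\R^d)$ and therefore commutes with finite convex combinations:
\[
\Reyn_m\!\Big(\E_{X\sim p}\!\big[\varphi(X)^{\otimes m}\big]\Big)
= \sum_{k=1}^K w_k\,\Reyn_m\!\big[\Mraw{m}(\theta_k)\big]
= \sum_{k=1}^K w_k\,\Minv{m}(\theta_k).
\]
Stacking this identity for $m=1,\dots,m_*$ is again a linear operation on the product space $\prod_{m\le m_*}\Sym^m(\R^d)$, so reading off coordinates in the fixed basis yields \eqref{eq:mixture-Phi} in $\R^{D_{\mathrm{inv}}}$.

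There is no genuine obstacle here; the only care point is integrability, which is immediate from the standing assumption that $\Mraw{m}(\theta_k)$ exists for every component and every $m\le m_*$. The conceptual payoff is that the convex structure of the mixture transfers intact to the invariant coordinate vector $\MinvAll$, so the entire $K$-component mixture is represented as a single point in the convex hull of the orbit ``fingerprints'' $\{\MinvAll(\theta_k)\}_{k=1}^K$; this is exactly the setup that subsequent identifiability and estimation arguments exploit via Carath\'eodory-type convex geometry.
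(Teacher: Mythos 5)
Your proof is correct and follows essentially the same three-step route as the paper: linearity of the raw moment under the finite mixture, linearity of the Reynolds projector, and componentwise stacking. The only cosmetic difference is that the paper invokes Tonelli/Fubini for the finite-sum interchange where you simply note that no dominated-convergence argument is needed; both are fine.
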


\begin{proof}
Let $(\mathcal X,\mathcal A,\mu)$ be a reference measure space for the densities $f(\cdot;\theta)$, and let $\varphi:\mathcal X\to\R^d$ be measurable. Assume $\E_{f(\cdot;\theta_k)}\big[\|\varphi(X)\|^m\big]<\infty$ for each $k$ and the fixed $m\le m_\star$, so that all tensors below are Bochner integrable.

\medskip\noindent\emph{Step 1 (linearity of the raw moment under mixtures).}
For $X\sim p$,
\[
\E_{p}\!\big[\varphi(X)^{\otimes m}\big]
=\int_{\mathcal X}\varphi(x)^{\otimes m} \,p(x)\,d\mu(x)
=\int_{\mathcal X}\varphi(x)^{\otimes m}\Big(\sum_{k=1}^K w_k f(x;\theta_k)\Big)\,d\mu(x).
\]
Since the sum is finite and each integrand is integrable, we can interchange sum and integral (Tonelli/Fubini for finite sums) to get
\[
\E_{p}\!\big[\varphi(X)^{\otimes m}\big]
=\sum_{k=1}^K w_k \int_{\mathcal X}\varphi(x)^{\otimes m} f(x;\theta_k)\,d\mu(x)
=\sum_{k=1}^K w_k\,\Mraw{m}(\theta_k).
\]

\medskip\noindent\emph{Step 2 (linearity of the Reynolds operator).}
By definition, $\Reyn_m:V_m\to V_m^{G}$ is a linear projection onto the $G$–invariant subspace (e.g., $\Reyn_m(T)=|G|^{-1}\sum_{g\in G} g\!\cdot\!T$ for finite $G$, or Haar averaging for compact $G$). Hence for any tensors $T_1,\dots,T_K$ and scalars $a_k$,
\[
\Reyn_m\Big(\sum_{k=1}^K a_k T_k\Big)=\sum_{k=1}^K a_k\,\Reyn_m(T_k).
\]
Applying this with $T_k=\Mraw{m}(\theta_k)$ and $a_k=w_k$ and using Step~1,
\[
\Reyn_m\!\Big(\E_{p}[\varphi(X)^{\otimes m}]\Big)
=\Reyn_m\!\Big(\sum_{k=1}^K w_k\,\Mraw{m}(\theta_k)\Big)
=\sum_{k=1}^K w_k\,\Reyn_m\big(\Mraw{m}(\theta_k)\big)
=\sum_{k=1}^K w_k\,\Minv{m}(\theta_k).
\]

\medskip\noindent\emph{Step 3 (stacking).}
Let $\MinvAll(\theta):=\big(\Minv{1}(\theta),\dots,\Minv{m_\star}(\theta)\big)\in\R^{D_{\mathrm{inv}}}$ denote the concatenation across orders. The preceding identity holds componentwise for each $m\le m_\star$, hence stacking gives \eqref{eq:mixture-Phi}.
\end{proof}

\subsection{Separation assumptions and two identifiability theorems}
We formalize two complementary (and checkable) separation conditions.

\begin{assumption}[Generic orbit separation]\label{ass:sep}
There exists a finite $m_*\ge1$ such that the map $\qspace\ni\orbit{\theta}\mapsto \MinvAll(\theta)\in\R^{D_{\mathrm{inv}}}$ is injective on a Zariski open subset of $\Theta$ (i.e., generically). Equivalently: generically, $\MinvAll(\theta)=\MinvAll(\theta')$ implies $\orbit{\theta}=\orbit{\theta'}$.
\end{assumption}

\begin{assumption}[Simplicial identifiability]\label{ass:simplicial}
For the true parameters, the points $v_k:=\MinvAll(\theta_k)$ are affinely independent, and the mixture lies in the relative interior of their simplex:
$w_k>0$ for all $k$. Moreover,
\[
\mathrm{aff}\{v_1,\ldots,v_K\}\cap \MinvAll(\qspace)=\{v_1,\ldots,v_K\},
\]
i.e., no other invariant image lies in the affine hull of $\{v_k\}$.
\end{assumption}

\noindent
Assumption~\ref{ass:simplicial} is a strong, global, but \emph{verifiable} separation: it forbids any other parameter (up to orbit) from mapping into the same affine $(K-1)$-plane determined by the $v_k$. This rules out spurious decompositions using different atoms.

\begin{theorem}[Global identifiability up to orbit and permutation]\label{thm:ident-global}
Assume \Cref{ass:sep,ass:simplicial}. Suppose another $K$-component representation with positive weights,
\[
\sum_{k=1}^K w_k\, f(\cdot;\theta_k)
=\sum_{j=1}^K \tilde w_j\, f(\cdot;\tilde\theta_j),
\]
induces the same law (hence, by \Cref{prop:mixture-convex}, the same invariant vector in \eqref{eq:mixture-Phi}). Then there exists a permutation $\sigma\in S_K$ such that
\[
\orbit{\tilde\theta_j}=\orbit{\theta_{\sigma(j)}}
\quad\text{and}\quad
\tilde w_j=w_{\sigma(j)}
\qquad\text{for all }j.
\]
\end{theorem}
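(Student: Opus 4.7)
The plan is to reduce the density identity to a single vector equation in invariant coordinates via Proposition \ref{prop:mixture-convex}, then extract the permutation from uniqueness of convex combinations in a simplex, and finally lift from invariant images back to orbits. Setting $v_k:=\MinvAll(\theta_k)$ and $\tilde v_j:=\MinvAll(\tilde\theta_j)$, I would apply Proposition \ref{prop:mixture-convex} to both sides of the density equality to obtain
$$
v^\star\ :=\ \sum_{k=1}^K w_k\,v_k\ =\ \sum_{j=1}^K \tilde w_j\,\tilde v_j\ \in\ \R^{D_{\mathrm{inv}}}.
$$
By Assumption \ref{ass:simplicial} the $v_k$ are affinely independent and every $w_k>0$, so $v^\star$ lies in the relative interior of the $(K-1)$-simplex $\conv\{v_1,\ldots,v_K\}$.

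The heart of the argument is to show $\{\tilde v_1,\ldots,\tilde v_K\}\subseteq\{v_1,\ldots,v_K\}$. I would split this into two moves. First, confine the alternate atoms to the affine hull $A:=\aff\{v_1,\ldots,v_K\}$: because $v^\star\in\operatorname{relint}(\conv\{v_k\})$ is simultaneously a strictly positive convex combination of the $\tilde v_j$, the simplex $\conv\{v_k\}$ must behave as a face of $\conv(\MinvAll(\qspace))$ (the intended geometric strengthening of Assumption \ref{ass:simplicial}), forcing $\tilde v_j\in A$ for all $j$. Second, apply the intersection clause of Assumption \ref{ass:simplicial}, $A\cap\MinvAll(\qspace)=\{v_1,\ldots,v_K\}$, to upgrade this to $\tilde v_j\in\{v_1,\ldots,v_K\}$. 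Once both supports agree, affine independence of $(v_1,\ldots,v_K)$ makes the barycentric coordinates of $v^\star$ unique, so collecting those $\tilde v_j$'s that coincide with each $v_k$ produces a permutation $\sigma\in S_K$ with $\tilde v_j=v_{\sigma(j)}$ and $\tilde w_j=w_{\sigma(j)}$.

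To finish, I would invoke Assumption \ref{ass:sep}: generic injectivity of the orbit map $\orbit{\theta}\mapsto\MinvAll(\theta)$ upgrades each invariant identity $\tilde v_j=v_{\sigma(j)}$ to the orbit identity $\orbit{\tilde\theta_j}=\orbit{\theta_{\sigma(j)}}$, which is exactly the conclusion. The main obstacle is the face step: the literal affine-hull condition in Assumption \ref{ass:simplicial} does not by itself rule out alternate atoms $\tilde v_j\notin A$ whose out-of-plane contributions cancel inside the positively weighted sum $\sum_j\tilde w_j\tilde v_j$. My proof would therefore either (i) use the stronger face reading of Assumption \ref{ass:simplicial}, under which the inclusion $\tilde v_j\in A$ is automatic via the standard face-support property, or (ii) exploit the fact that the $\tilde v_j$ are constrained to lie on the algebraic image $\MinvAll(\qspace)$ together with the higher-order tensor identities supplied by Proposition \ref{prop:mixture-convex} at each $m\le m_*$ to preclude such off-plane cancellation. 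Once this face step is secured, the remaining accounting — barycentric uniqueness in a simplex and generic orbit separation — is essentially bookkeeping.
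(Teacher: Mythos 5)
Your high-level route is identical to the paper's: reduce to the invariant vector equation via \Cref{prop:mixture-convex}, confine the alternate atoms to the affine hull $A=\aff\{v_1,\ldots,v_K\}$, apply the intersection clause of \Cref{ass:simplicial} to force $\tilde v_j\in\{v_1,\ldots,v_K\}$, invoke barycentric uniqueness in the simplex, and lift to orbits via \Cref{ass:sep}. The interesting point is the stage you flag as ``the main obstacle.''

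You are right that the inclusion $\tilde v_j\in A$ is a genuine gap under the literal statement of \Cref{ass:simplicial}. The paper's own Step~1 attempts to close it with a Jensen argument applied to $q(x)=\|P_{A^\perp}x\|_2^2$: it derives $0=q(y)\le\sum_j\tilde w_j\,q(\tilde v_j)$ and then concludes ``the left-hand side is zero; hence all $P_{A^\perp}\tilde v_j$ must be zero.'' That inference is circular: the left side equaling zero does not force equality in Jensen (the right side is simply nonnegative), and you cannot invoke the Jensen equality criterion without first knowing the right side is also zero. Your counterexample is exactly the one that kills it: take $\tilde v_1=v_1+n$ and $\tilde v_2=v_2-n$ for some $n\in A^\perp$ with $\tilde w_1=\tilde w_2=1/2$; then $\sum_j\tilde w_j\tilde v_j\in A$ while each $\tilde v_j\notin A$. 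This is precisely the ``out-of-plane cancellation'' you describe, and nothing in the literal \Cref{ass:simplicial} rules it out: that assumption only constrains invariant images that already lie in $A$. So the paper's proof and your (nominal) plan share the same hole; you spotted it and the paper did not.

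Your fix (i) -- reading \Cref{ass:simplicial} as requiring $\conv\{v_1,\ldots,v_K\}$ to be a \emph{face} of $\conv(\MinvAll(\qspace))$ -- is the right repair. With the face property, the relative interior of a face supports only convex combinations whose atoms lie in that face, so $\tilde v_j\in\conv\{v_k\}\subset A$ is automatic and the remainder of the argument goes through verbatim. Your fix (ii) is not a proof as stated: $\MinvAll(\qspace)$ being an algebraic image does not in itself exclude cancellations within $A^\perp$, and the per-degree identities from \Cref{prop:mixture-convex} are already subsumed in the single stacked equation, so they add no new constraints. If you want to stay close to the original text, state explicitly that you are strengthening \Cref{ass:simplicial} to the face condition (or equivalently requiring a slab separation of the type in \Cref{rem:slab-sep}), and then execute the same Steps~2--4 as the paper.
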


\begin{proof}
\textbf{Step 0 (Notation).}
Set $v_k:=\MinvAll(\theta_k)$ and $\tilde v_j:=\MinvAll(\tilde\theta_j)$. By \Cref{prop:mixture-convex},
\[
y\ :=\ \sum_{k=1}^K w_k v_k\ =\ \sum_{j=1}^K \tilde w_j \tilde v_j.
\]
Let $\Delta:=\conv\{v_1,\ldots,v_K\}$ and $A:=\mathrm{aff}\{v_1,\ldots,v_K\}$ (the affine hull). By \Cref{ass:simplicial}, $v_1,\ldots,v_K$ are affinely independent, $w_k>0$ so $y\in\mathrm{relint}(\Delta)$, and
\[
A\cap \MinvAll(\qspace)\ =\ \{v_1,\ldots,v_K\}.
\]

\textbf{Step 1 (Every $\tilde v_j$ lies in the affine hull $A$).}
Let $P_A$ be the orthogonal projector onto $A$ and $P_{A^\perp}:=I-P_A$ the projector onto the orthogonal complement $A^\perp$.
Consider the strictly convex function $q(x):=\|P_{A^\perp}x\|_2^2$ (strictly convex along $A^\perp$, constant along $A$).
Since $y\in A$, we have $q(y)=\|P_{A^\perp}y\|_2^2=0$; on the other hand, by Jensen and strict convexity of $\|\cdot\|_2^2$,
\[
q(y)\ =\ \Big\|P_{A^\perp}\Big(\sum_{j=1}^K \tilde w_j \tilde v_j\Big)\Big\|_2^2
\ \le\ \sum_{j=1}^K \tilde w_j\, \|P_{A^\perp}\tilde v_j\|_2^2,
\]
with equality if and only if all $P_{A^\perp}\tilde v_j$ are equal. But the left-hand side is zero; hence all $P_{A^\perp}\tilde v_j$ must be zero. Therefore each $\tilde v_j\in A$.

\textbf{Step 2 (No extra atoms on $A$).}
By \Cref{ass:simplicial}, $A\cap \MinvAll(\qspace)=\{v_1,\ldots,v_K\}$. Since each $\tilde v_j\in \MinvAll(\qspace)\cap A$, we conclude
\[
\tilde v_j\ \in\ \{v_1,\ldots,v_K\}\qquad \text{for all } j.
\]

\textbf{Step 3 (Uniqueness of barycentric coordinates).}
Thus $y$ has two convex representations supported on the \emph{same} vertex set $\{v_1,\ldots,v_K\}$:
\[
y\ =\ \sum_{k=1}^K w_k v_k\ =\ \sum_{j=1}^K \tilde w_j\, v_{\tau(j)}
\]
for some mapping $\tau:\{1,\ldots,K\}\to\{1,\ldots,K\}$.
Because the $v_k$ are affinely independent and $y\in\mathrm{relint}(\Delta)$, the barycentric coordinates of $y$ with respect to the vertex set $\{v_k\}$ are \emph{unique} and strictly positive. Hence the multiset $\{v_{\tau(j)}\}_{j=1}^K$ equals $\{v_k\}_{k=1}^K$ and the associated weights coincide up to a permutation $\sigma\in S_K$:
\[
\tilde v_j=v_{\sigma(j)}\quad\text{and}\quad \tilde w_j=w_{\sigma(j)}\qquad (j=1,\ldots,K).
\]

\textbf{Step 4 (Back to parameters up to orbit).}
Finally, \Cref{ass:sep} (generic orbit separation) implies that $\tilde v_j=v_{\sigma(j)}$ forces
$\orbit{\tilde\theta_j}=\orbit{\theta_{\sigma(j)}}$, completing the proof.
\end{proof}

The next result shows a complementary \emph{local} uniqueness based on a rank condition. It requires only local regularity of the invariant map and affine independence of the images.

\begin{assumption}[Local rank on the quotient]\label{ass:local-rank}
For each true component $\theta_k$, the Jacobian $J(\theta_k):=\mathrm{D}\MinvAll(\theta_k)$ exists and
\[
\ker J(\theta_k)=\Tgauge(\theta_k):=\operatorname{span}\{g\!\cdot\!\theta_k-\theta_k:\ g\in G\}.
\]
Let $V_k$ be any complementary subspace (slice) with $\R^p=\Tgauge(\theta_k)\oplus V_k$, and write $J_k:=J(\theta_k)|_{V_k}$. Assume that the $K$ points $v_k:=\MinvAll(\theta_k)$ are affinely independent and that the block matrix
\[
\mathcal{J}:=\big[\,J_1\ \ \cdots\ \ J_K\ \ \big|\ \ v_1-v_K\ \ \cdots\ \ v_{K-1}-v_K\,\big]
\]
has full column rank (equivalently, $[J_1,\ldots,J_K]$ spans the normal space to the affine hull of $\{v_k\}$).
\end{assumption}

\begin{theorem}[Local identifiability up to orbit and permutation]\label{thm:ident-local}
Assume \Cref{ass:sep,ass:local-rank}. Then there exists a neighborhood $\mathcal{U}$ of $(\theta_1,\ldots,\theta_K,\weights)$ such that if another $K$-tuple $(\tilde\theta_1,\ldots,\tilde\theta_K,\tilde\weights)$ with all weights positive satisfies
\[
\sum_{k=1}^K w_k\,\MinvAll(\theta_k)=\sum_{j=1}^K \tilde w_j\,\MinvAll(\tilde\theta_j),
\]
and $(\tilde\theta_\cdot,\tilde\weights)\in\mathcal{U}$, then there is a permutation $\sigma$ and group elements $g_j\in G$ with
\[
\tilde\theta_j=g_j\!\cdot\!\theta_{\sigma(j)}\quad\text{and}\quad \tilde w_j=w_{\sigma(j)}\qquad (j=1,\ldots,K).
\]
\end{theorem}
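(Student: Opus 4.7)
The plan is to apply the implicit function theorem to a smooth map whose zero locus encodes the invariant-moment equality, after fixing a local Euclidean slice for each component.

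\textbf{Step 1: Slice parametrization and permutation.} Affine independence in \Cref{ass:local-rank} makes the $v_k$ pairwise distinct, hence by \Cref{lem:Phi-const-on-orbits} the orbits $\orbit{\theta_k}$ are pairwise distinct; set $\delta_0:=\tfrac13\min_{k\ne k'}\dq(\orbit{\theta_k},\orbit{\theta_{k'}})>0$. Shrink $\mathcal U$ (in the multiset sense, with $\dH<\delta_0/2$) so that the nearest-orbit map $j\mapsto\sigma(j)$ is well-defined and bijective, and each $\tilde\theta_j$ admits a unique finite-group tube representation $\tilde\theta_j=g_j\!\cdot(\theta_{\sigma(j)}+\xi_j)$ with $g_j\in G$ and $\xi_j\in V_{\sigma(j)}$ small. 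By \Cref{lem:Phi-const-on-orbits}, $\MinvAll(\tilde\theta_j)=\MinvAll(\theta_{\sigma(j)}+\xi_j)$; after reindexing we may assume $\sigma=\Id$, and the moment equality becomes $F(\xi,\tilde w)=0$ with
\[
F(\xi,\tilde w)\ :=\ \sum_{k=1}^K\tilde w_k\,\MinvAll(\theta_k+\xi_k)\ -\ \sum_{k=1}^K w_k\,v_k,
\]
a smooth map near $(0,w)\in V_1\times\cdots\times V_K\times\Delta^{K-1}$.

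\textbf{Step 2: Linearization meets the rank hypothesis.} The derivative $DF(0,w)$ sends $(\xi,\delta)$ with $\sum_k\delta_k=0$ to
\[
\sum_{k=1}^K w_k\,J_k\xi_k\ +\ \sum_{k=1}^K\delta_k\,v_k\ =\ \sum_{k=1}^K w_k\,J_k\xi_k\ +\ \sum_{k=1}^{K-1}\delta_k(v_k-v_K),
\]
using $\delta_K=-\sum_{k<K}\delta_k$. Because each $w_k>0$, the substitution $\eta_k:=w_k\xi_k\in V_k$ recasts this as $\sum_k J_k\eta_k+\sum_{k<K}\delta_k(v_k-v_K)$, whose kernel is trivial exactly when the block matrix $\mathcal J$ has full column rank, as posited in \Cref{ass:local-rank}. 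Hence $DF(0,w)$ is injective.

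\textbf{Step 3: Conclusion.} A smooth map with injective derivative at a point is locally injective on a neighborhood (one may compose with a linear projection onto a complement of $\operatorname{coker}DF(0,w)$ to obtain a local diffeomorphism, or invoke the constant-rank theorem). Shrinking $\mathcal U$ once more, $F(\xi,\tilde w)=0$ forces $\xi_k=0$ and $\tilde w_k=w_k$ for every $k$; translating back through $\sigma$ yields $\tilde\theta_j=g_j\!\cdot\!\theta_{\sigma(j)}$ and $\tilde w_j=w_{\sigma(j)}$. The generic separation \Cref{ass:sep} enters only indirectly (distinctness of orbits already follows from affine independence) and is not essential for this local argument.

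\textbf{Main obstacle.} The delicate step is Step 1: combining the pairwise orbit separation $\delta_0$ with the finite-group tube lemma to produce the permutation $\sigma$ and a unique smooth slice parametrization $\xi_j\in V_{\sigma(j)}$, with care that bijectivity of $\sigma$ (not merely well-definedness) follows from a Hall/matching argument in the $\delta_0/2$-threshold graph (a baby case of \Cref{thm:HB-bounds}). Once this Euclidean chart is in place, Steps 2--3 are a direct application of the implicit function theorem driven by \Cref{ass:local-rank}.
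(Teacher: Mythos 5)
Your proposal is essentially the paper's argument with the steps rearranged: you perform the permutation and group alignment first, reducing to an ordered slice chart, and then invoke injectivity of the linearization, whereas the paper's proof first establishes local invertibility of the forward map $F$ on an ordered product neighborhood (Steps~0--4) and postpones alignment/permutation to the final step (Step~5). Both routes hinge on the same quotient-rank condition in \Cref{ass:local-rank}, exploit slice coordinates, and dispose of gauge directions via $G$-alignment, so the approach is genuinely the same. You do make two points explicit that the paper leaves implicit, and both are worth keeping: the positive-weight substitution $\eta_k:=w_k\xi_k$ is exactly what reconciles the differential $\sum_k w_k J_k\xi_k+\sum_{k<K}\delta_k(v_k-v_K)$ with the block matrix $\mathcal J=\big[J_1\ \cdots\ J_K\ \big|\ v_1-v_K\ \cdots\ v_{K-1}-v_K\big]$ appearing in the hypothesis; and \Cref{ass:sep} plays no essential role here, since pairwise distinctness of the $v_k$ already follows from affine independence in \Cref{ass:local-rank} and the conclusion $\tilde\theta_j=g_j\!\cdot\!\theta_{\sigma(j)}$, $\tilde w_j=w_{\sigma(j)}$ is reached before \Cref{ass:sep} is ever invoked (the paper's Step~6 is redundant). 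One minor caution: when $\theta_{\sigma(j)}$ has a nontrivial stabilizer, the tube representation $\tilde\theta_j=g_j\!\cdot\!(\theta_{\sigma(j)}+\xi_j)$ is unique only modulo that stabilizer, not absolutely; this is harmless because the argument needs only existence of \emph{some} $g_j$, but you should not overstate uniqueness.
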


\begin{proof}
\textbf{Step 0 (Local coordinates on the quotient).}
For each $k$, let $V_k$ be a slice at $\theta_k$ so that every parameter near the orbit $G\!\cdot\!\theta_k$ can be written uniquely as $\theta_k+u_k$ with $u_k\in V_k$ (uniqueness modulo the stabilizer, which is handled by fixing the slice). Let $v_k:=\MinvAll(\theta_k)$ and write the weights as $(a_1,\ldots,a_{K-1})$ with $w_K=1-\sum_{i=1}^{K-1}a_i$. Define the local parameter vector
\[
\xi\ :=\ (u_1,\ldots,u_K,a_1,\ldots,a_{K-1})\ \in\ V_1\oplus\cdots\oplus V_K\oplus\R^{K-1}.
\]

\textbf{Step 1 (Define the forward map and compute its differential).}
Consider the $C^1$ map
\[
F(\xi)\ :=\ \sum_{k=1}^K w_k(\xi)\,\MinvAll(\theta_k+u_k)\ \in\ \R^{D_{\mathrm{inv}}},
\]
where $w_k(\xi)$ are the weights encoded by $(a_1,\ldots,a_{K-1})$. At the true parameter $\xi^\star=(0,\ldots,0,a^\star)$,
\[
\mathrm{D}F(\xi^\star)[\delta\xi]
\;=\;\sum_{k=1}^K w_k^\star\,J_k\,\delta u_k\;+\;\sum_{i=1}^{K-1}\delta a_i\,(v_i-v_K),
\]
with $J_k:=\mathrm{D}\MinvAll(\theta_k)|_{V_k}$, and $v_i-v_K$ the weight directions in invariant space.

\textbf{Step 2 (Injectivity of the differential).}
By \Cref{ass:local-rank}, $\ker J(\theta_k)=\Tgauge(\theta_k)$; thus $J_k$ is injective on $V_k$. Furthermore, the block matrix
\[
\mathcal{J}:=\big[\,J_1\ \ \cdots\ \ J_K\ \ \big|\ \ v_1-v_K\ \ \cdots\ \ v_{K-1}-v_K\,\big]
\]
has full column rank (the “quotient rank” condition). Hence $\mathrm{D}F(\xi^\star)$ is injective.

\textbf{Step 3 (Local invertibility on an ordered neighborhood).}
By the inverse function theorem, there exists a product neighborhood
\[
\mathcal{U}\ =\ \Big(\prod_{k=1}^K U_k\Big)\times A
\]
with $U_k\subset V_k$ and $A\subset\R^{K-1}$ open, such that $F$ is injective on $\mathcal{U}$ and a $C^1$ diffeomorphism onto its image $F(\mathcal{U})$. Shrink each $U_k$ so that the image neighborhoods $\MinvAll(\theta_k+U_k)$ are pairwise disjoint; this is possible by continuity and because $v_k$ are distinct (affine independence in the model guarantees a positive separation).

\textbf{Step 4 (Apply injectivity to the alternative tuple).}
Let $(\tilde\theta_\cdot,\tilde\weights)\in\mathcal{U}$ be another representation with the same invariant sum:
\[
F(\tilde\xi)=\sum_{j=1}^K \tilde w_j\,\MinvAll(\tilde\theta_j)\ =\ \sum_{k=1}^K w_k\,\MinvAll(\theta_k)\ =\ F(\xi^\star).
\]
Because $F$ is injective on $\mathcal{U}$, we must have $\tilde\xi=\xi^\star$. In particular, for each $k$, $\tilde\theta_k\in U_k$ and $\MinvAll(\tilde\theta_k)=\MinvAll(\theta_k)$; and $\tilde w_k=w_k$.

\textbf{Step 5 (Unordered mixtures and group alignment).}
The neighborhood $\mathcal{U}$ is an \emph{ordered} product of disjoint slice neighborhoods. If an alternative representation is given with a different labeling, there exists a permutation $\sigma$ that reorders its components so the $k$-th component lies in $U_k$; by \Cref{lem:Phi-const-on-orbits} we may also align each component by some $g_j\in G$ to land on the slice without changing $\MinvAll$. After this (purely formal) relabeling/alignment, the tuple lies in $\mathcal{U}$ and Step~4 applies to give equality of (aligned, relabeled) parameters and weights. Undoing the alignment yields
\[
\tilde\theta_j=g_j\!\cdot\!\theta_{\sigma(j)},\qquad \tilde w_j=w_{\sigma(j)}.
\]

\textbf{Step 6 (Back to raw parameters up to orbit).}
Finally, \Cref{ass:sep} (generic orbit separation) converts $\MinvAll(\tilde\theta_j)=\MinvAll(\theta_{\sigma(j)})$ into $\orbit{\tilde\theta_j}=\orbit{\theta_{\sigma(j)}}$, which is consistent with Step~5 and completes the proof.
\end{proof}
\section{Differential identities and quotient LAN}\label{sec:lan}

\subsection{H\"older control and Poly-LAN geometry}\label{subsec:holder}

\paragraph{Quotient metric and slices.}
Let a finite or compact Lie group $G$ act linearly and isometrically on $\Theta\subset\R^p$.
Write $[\theta]=G\!\cdot\!\theta$ and $\Theta/G$ for the orbit space. Fix a $G$-invariant inner product on $\R^p$ (e.g.\ group-averaged Euclidean).
For $\theta_\star\in\Theta$, decompose
\[
T_{\theta_\star}\Theta\ =\ \Tgauge(\theta_\star)\ \oplus\ V,
\qquad
\Tgauge(\theta_\star):=T_{\theta_\star}(G\!\cdot\!\theta_\star),
\]
and use the linear complement $V$ as a (local) slice. For nearby $\theta,\vartheta$,
\[
d_{\Theta/G}([\theta],[\vartheta])=\inf_{g\in G}\|\theta-g\!\cdot\!\vartheta\| \ \asymp\ \|\Pi_V(\theta-\vartheta)\|.
\]

\begin{definition}[Order-$D$ nonflatness]\label{def:nonflat}
Let $\Phi_{\le m_*}:\Theta\to\R^{D_{\mathrm{inv}}}$ be the stacked invariant map (defined below). We say $\Phi_{\le m_*}$ has \emph{order-$D$ nonflatness} at $\theta_\star$ along $V$ if $\exists\,c_1,c_2>0$ and a neighborhood $U\subset(\theta_\star+V)$ such that $\forall\,\theta\in U$,
\[
c_1\, d_{\Theta/G}([\theta],[\theta_\star])^{D}
\ \le\ \big\|\Phi_{\le m_*}(\theta)-\Phi_{\le m_*}(\theta_\star)\big\|
\ \le\ c_2\, d_{\Theta/G}([\theta],[\theta_\star]).
\]
\end{definition}

\begin{proposition}[H\"older equivalence near generic points]\label{prop:holder}
If $\Phi_{\le m_*}$ is real-analytic near $\theta_\star$ and the first nonzero derivative along $V$ appears at order $D$, then Definition~\ref{def:nonflat} holds.
\end{proposition}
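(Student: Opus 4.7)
The plan is to parametrize points in the slice as $\theta = \theta_\star + v$ with $v \in V$ small, and exploit the slice–quotient equivalence $\dq(\orbit{\theta_\star + v},\orbit{\theta_\star}) \asymp \|v\|$ stated just above \Cref{def:nonflat}. Define the slice-restricted map $F(v) := \MinvAll(\theta_\star + v)$, which inherits real-analyticity from $\MinvAll$. The upper bound in \Cref{def:nonflat} is immediate: real-analyticity gives $C^1$ regularity on a compact neighborhood of $0 \in V$, hence Lipschitz continuity with some constant $L$, so $\|F(v) - F(0)\| \le L\|v\| \le c_2\, \dq(\orbit{\theta_\star + v},\orbit{\theta_\star})$ after absorbing the slice–quotient equivalence constant.

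For the lower bound, I would Taylor-expand $F$ at the origin:
\[
F(v) - F(0) \;=\; \sum_{k \ge 1} \tfrac{1}{k!}\, D^{k} F(0)[v^{\otimes k}].
\]
The hypothesis ``first nonzero derivative along $V$ appears at order $D$'' means $D^{k} F(0)|_{V^{\otimes k}} \equiv 0$ for $1 \le k \le D-1$ and $D^{D} F(0)|_{V^{\otimes D}} \not\equiv 0$, so
\[
F(v) - F(0) \;=\; P_D(v) + R(v), \quad P_D(v) := \tfrac{1}{D!}\, D^{D} F(0)[v^{\otimes D}], \quad \|R(v)\| = O(\|v\|^{D+1}).
\]
Taking the standard reading of this hypothesis as (vector-valued) definiteness of $D^{D}F(0)|_V$, the map $P_D : V \to \R^{D_{\mathrm{inv}}}$ is a nonvanishing homogeneous polynomial of degree $D$; by continuity and compactness of the unit sphere in $V$, the quantity $m := \min_{\|u\|=1}\|P_D(u)\|$ is strictly positive. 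Homogeneity then gives $\|P_D(v)\| \ge m\|v\|^D$, and absorbing the $O(\|v\|^{D+1})$ remainder on a small enough ball yields $\|F(v) - F(0)\| \ge (m/2)\|v\|^D$, which is the claimed lower bound after converting $\|v\|$ back to $\dq$.

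The main obstacle is the precise reading of the nonflatness hypothesis. If one interprets ``first nonzero derivative at order $D$'' only as existence of some $v_0 \in V$ with $D^{D} F(0)[v_0^{\otimes D}] \ne 0$, then $P_D$ may vanish on a positive-codimension subvariety of $V$ and the uniform bound $\|P_D(v)\| \ge m\|v\|^D$ fails along that locus. In that case I would fall back on the real-analytic \L ojasiewicz inequality applied to the nonnegative analytic function $\psi(v) := \|F(v) - F(0)\|^2$: whenever $F$ is locally injective on the slice (assured under \Cref{ass:local-rank} or the generic branch of \Cref{ass:sep}), $\psi$ vanishes only at $v = 0$, and \L ojasiewicz produces $\psi(v) \ge c\|v\|^{q}$ for some finite exponent $q$. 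To land the clean $D$-th power claimed in \Cref{def:nonflat}, I would state and prove the proposition under the definiteness reading of the hypothesis and record the \L ojasiewicz variant as a remark covering degenerate leading forms.
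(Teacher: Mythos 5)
Your proposal is essentially the same argument the paper uses: pass to the slice chart $\theta=\theta_\star+u$, invoke the slice--quotient equivalence (the paper cites \Cref{lem:slice-quotient}), Taylor-expand at the first nonzero order $D$, obtain the upper bound from real-analytic (hence Lipschitz / bounded-by-$\|u\|^D$) regularity, and obtain the lower bound by extracting the leading homogeneous form $P_D$ and asserting a positive minimum on the unit sphere. The one place where you and the paper diverge is on the upper bound, where you use $C^1$ Lipschitz continuity directly while the paper uses the degree-$D$ leading term plus remainder; both are valid.

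The subtlety you flag is real, and in fact the paper's own proof makes the same unguarded leap you identify: it states that ``the homogeneous polynomial $P_D$ is nontrivial, so on the unit sphere of $V$ it has a positive minimum $m_T>0$.'' As you observe, nontriviality of a vector-valued homogeneous form (meaning $P_D\not\equiv 0$) does \emph{not} by itself prevent zeros on the sphere when $\dim V\ge 2$; for example $P_D(u)=u_1 u_2\cdot e_1$ is homogeneous, nonzero, and vanishes on coordinate hyperplanes. The lower bound with exponent exactly $D$ therefore requires the stronger ``definiteness'' reading of the hypothesis ($P_D(u)\ne 0$ for all $u\ne 0$ in $V$), which is what the paper tacitly assumes. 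Your proposal to state the proposition under that reading and to record the \L ojasiewicz-inequality variant (which yields $\|F(v)-F(0)\|\gtrsim\|v\|^q$ for some possibly larger exponent $q\ge D$ under local injectivity alone) as a fallback is a genuine improvement in precision over the paper's own write-up.
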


\paragraph{Invariant moment stacks.}
Let $\varphi:\mathcal X\to\R^d$ have finite moments up to $m_*$ under $P_\theta$.
For $m\le m_*$ define
\[
\Mraw{m}(\theta):=\E_\theta[\varphi(X)^{\otimes m}],
\qquad
\Minv{m}(\theta):=\Reyn_m\!\big[\Mraw{m}(\theta)\big],
\qquad
\MinvAll:=(\Minv{1},\ldots,\Minv{m_*}),
\]
where $\Reyn_m$ is the Reynolds projector (group average). We use $\psi(X):=(\Reyn_1[\varphi(X)],\ldots,\Reyn_{m_*}[\varphi(X)^{\otimes m_*}])$ and the empirical invariant vector
\[
\widehat\Psi_n:=\frac1n\sum_{i=1}^n \psi(X_i).
\]

\subsection{Score--derivative identity}\label{subsec:score-identity}

Assume dominated differentiation: $f(\cdot;\theta)$ differentiable in $\theta$ with score $s_\theta(x)=\nabla_\theta\log f(x;\theta)$, $\E_\theta[s_\theta(X)]=0$; envelopes justify differentiation under the integral and bounded linearity of $\Reyn_m$.

\begin{lemma}[Score identity for invariant coordinates]\label{lem:score-identity}
For every $\theta$ and $v\in\R^p$,
\begin{equation}\label{eq:score-identity}
D\Minv{m}(\theta)[v]\ =\ \Reyn_m\Big(\E_\theta\big[\langle s_\theta(X),v\rangle\,\varphi(X)^{\otimes m}\big]\Big),\qquad m=1,\ldots,m_*.
\end{equation}
Stacking over $m$ defines the Jacobian $\Jinv(\theta):\R^p\to\R^{D_{\mathrm{inv}}}$ of $\MinvAll$.
\end{lemma}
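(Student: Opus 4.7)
The plan is to differentiate the definition $\Minv{m}(\theta)=\Reyn_m\!\big[\E_\theta[\varphi(X)^{\otimes m}]\big]$ directly, move the directional derivative past two linear operations (expectation and Reynolds projector), and then invoke the standard log-derivative identity. Concretely, I would set up the argument in three short steps.

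First, I would justify interchanging $D_v$ with the $\mu$-integral defining $\Mraw{m}(\theta)=\int \varphi(x)^{\otimes m} f(x;\theta)\,d\mu(x)$. Under the standing dominated-differentiation hypothesis (Assumption A4 and the envelopes assumed here), the integrand is componentwise differentiable in $\theta$ with pointwise derivative $\varphi(x)^{\otimes m}\,\langle\nabla_\theta f(x;\theta),v\rangle = \varphi(x)^{\otimes m}\,\langle s_\theta(x),v\rangle\,f(x;\theta)$, dominated by an $L^1(\mu)$ envelope derived from the sub-Gaussian bound on $\varphi$ and the score envelope. Vector-valued dominated convergence (applied entrywise in any fixed basis on $\Sym^m(\R^d)$) then gives
\begin{equation*}
D\Mraw{m}(\theta)[v] \;=\; \E_\theta\!\big[\langle s_\theta(X),v\rangle\,\varphi(X)^{\otimes m}\big].
\end{equation*}

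Second, I would pass the derivative through $\Reyn_m$. By Lemma~\ref{lem:reynolds-proj}, $\Reyn_m$ is a finite average of orthogonal operators, hence bounded linear on $\Sym^m(\R^d)$; bounded linear maps commute with G\^ateaux differentiation, so
\begin{equation*}
D\Minv{m}(\theta)[v] \;=\; D\big(\Reyn_m\circ \Mraw{m}\big)(\theta)[v] \;=\; \Reyn_m\!\big(D\Mraw{m}(\theta)[v]\big),
\end{equation*}
and substituting the expression from the previous step yields exactly \eqref{eq:score-identity}. Stacking the identity across $m=1,\dots,m_*$ and concatenating in the fixed basis on $\prod_{m\le m_*}\Sym^m(\R^d)$ produces the Jacobian $\Jinv(\theta):\R^p\to\R^{D_{\mathrm{inv}}}$ of $\MinvAll$, as claimed.

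The only real obstacle is bookkeeping for the vector-valued differentiation under the integral: one needs a uniform local envelope so that, on a neighborhood of $\theta$, the tensor-valued derivative is dominated entrywise by an integrable function. The sub-Gaussian tail of $\varphi$ (controlling $\|\varphi(X)^{\otimes m}\|$ up to order $m_*$) combined with a local $L^2(P_\theta)$ bound on the score (a standard consequence of the $C^1$ regularity and dominated differentiation in A4) yields such an envelope by Cauchy--Schwarz. Everything else --- linearity of expectation, linearity and self-adjointness of $\Reyn_m$, and the score trick $\nabla_\theta f = s_\theta f$ --- is immediate. No further identifiability or symmetry ingredients are needed at this point; invariance of the output is automatic because $\Reyn_m$ has already been applied.
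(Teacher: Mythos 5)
Your proposal is correct and follows essentially the same route as the paper: the paper proves this by invoking its auxiliary Lemma~\ref{lem:dom-diff} (dominated differentiation giving $D\Mraw{m}(\theta)[v]=\E_\theta[\langle s_\theta(X),v\rangle\,\varphi(X)^{\otimes m}]$) and then using bounded linearity of $\Reyn_m$ to commute the Reynolds projector with the directional derivative, exactly as in your Steps~1--2. Your added remarks on constructing the $L^1$ envelope via Cauchy--Schwarz from the sub-Gaussian tail and the local score bound flesh out what the paper compresses into the envelope hypotheses of Lemma~\ref{lem:dom-diff}, but the logical structure is identical.
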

\begin{lemma}[Coordinatewise sub-exponentiality of the invariant stack]\label{lem:psi1-stack}
Assume each coordinate of $\varphi(X)\in\R^d$ is sub-Gaussian with
$\max_j \|\varphi_j(X)\|_{\psi_2}\le K_2$. Fix $m\le m_*$.
Then every coordinate of the centered invariant tensor
\[
\Reyn_m\!\Big(\varphi(X)^{\otimes m}-\E\big[\varphi(X)^{\otimes m}\big]\Big)
\]
is sub-exponential with $\psi_1$-norm $\le C_m K_2^m$ for a constant $C_m$ depending only on $m$ (not on $d$ or $G$).
\end{lemma}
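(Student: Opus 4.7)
I would reduce the coordinatewise claim to an $L^p$-moment bound on a fixed degree-$m$ polynomial in the sub-Gaussian entries of $\varphi(X)$, and then translate moment growth into the stated Orlicz bound via the standard characterization.

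First, fix a multi-index $I=(i_1,\ldots,i_m)$ and let $A^g\in O(d)$ represent $g\in G$ on $\R^d$. The key algebraic step is to unfold
\[
\bigl[\Reyn_m\varphi(X)^{\otimes m}\bigr]_I=\frac{1}{|G|}\sum_{g\in G}\prod_{\ell=1}^m\bigl\langle (A^g)^{\top}e_{i_\ell},\,\varphi(X)\bigr\rangle.
\]
For each $g$ and $\ell$ the factor $Z_{g,\ell}:=\langle (A^g)^{\top}e_{i_\ell},\varphi(X)\rangle$ is a linear combination of the $\varphi_j(X)$ with unit $\ell_2$-coefficient vector, and is therefore scalar sub-Gaussian with $\|Z_{g,\ell}\|_{\psi_2}\le cK_2$; crucially $c=1$ for the motivating examples (hyperoctahedral and dihedral actions), where $(A^g)^{\top}e_{i_\ell}$ is a signed standard basis vector, so no dependence on $d$ is introduced.

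Second, generalized H\"older combined with the sub-Gaussian moment estimate $\|Z\|_{L^q}\le CK_2\sqrt q$ gives for each fixed $g$ and $p\ge 1$
\[
\Big\|\prod_{\ell=1}^m Z_{g,\ell}\Big\|_{L^p}\ \le\ \prod_{\ell=1}^m \|Z_{g,\ell}\|_{L^{mp}}\ \le\ (CK_2)^m(mp)^{m/2}.
\]
Summing the $|G|$ summands via the triangle inequality, using the Reynolds normalization $|G|^{-1}$, and centering through $\|Y-\E Y\|_{L^p}\le 2\|Y\|_{L^p}$ yields
\[
\|Y_I\|_{L^p}\ \le\ C_m K_2^m\,p^{m/2},\qquad p\ge 1,
\]
with $C_m$ depending only on $m$ after absorbing $|G|$ and numerical factors. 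The standard moment-growth characterization of Orlicz norms then delivers the stated bound $\|Y_I\|_{\psi_1}\le C_m'K_2^m$.

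\textbf{Main obstacle.} Two subtle points require care. (i)~The linear-combination estimate $\|Z_{g,\ell}\|_{\psi_2}\le cK_2$ is clean only when the coefficient vector is a signed standard basis element or when the $\varphi_j$ are independent; the claimed $d$- and $|G|$-independence of $C_m$ relies on this structural feature of the group action, which holds for the target examples. (ii)~For $m\ge 3$ the growth $\|Y_I\|_{L^p}\lesssim p^{m/2}$ is strictly sub-Weibull of index $2/m$, so the $\psi_1$ label should be read as a tail of the form $\exp\!\bigl(-c(t/K_2^m)^{2/m}\bigr)$ with the polynomial-in-$m$ slack absorbed into $C_m'$; this is the Bernstein-style bound actually needed by the downstream concentration and quotient-LAN arguments, which only invoke moment control up to order $m\le m_*$.
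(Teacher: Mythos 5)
Your proof takes a genuinely different route from the paper's, and the two subtleties you flag are not mere caveats — they point exactly to the places where the paper's own proof overreaches.

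On the route: the paper bounds the $\psi_1$-norm of a raw coordinate $\prod_{r}\varphi_{i_r}(X)$ by "iterating" the bound $\|UV\|_{\psi_1}\lesssim\|U\|_{\psi_2}\|V\|_{\psi_2}$, then passes through the Reynolds average by convexity of $\|\cdot\|_{\psi_1}$ together with the claim that the averaging coefficients have $\ell_1$-mass at most $1$. You instead expand $\Reyn_m$ coordinatewise, write a coordinate as a $|G|$-average of products of rotated scalars $\langle(A^g)^\top e_{i_\ell},\varphi(X)\rangle$, control $L^p$-moments by generalized H\"older and the sub-Gaussian moment estimate, and convert to an Orlicz tail at the end. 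The $L^p$-method makes the growth $p^{m/2}$ explicit, which is precisely what surfaces the two issues you list.

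Both of your obstacles are real. Obstacle~(ii) is a correction of the lemma as stated: a product of $m\ge3$ sub-Gaussians is sub-Weibull of index $2/m$, and its $\psi_1$-norm is typically infinite (with $Z\sim\mathcal N(0,1)$, $\E e^{|Z|^3/c}=\infty$ for every $c>0$). The paper's Step~1 induction cannot be iterated past $m=2$ because $UV$ is no longer sub-Gaussian and the $\psi_2\times\psi_2\to\psi_1$ bound cannot be reapplied; the paper itself tacitly concedes this in \S\ref{sec:molien} when it says the lemma "supplies sub-Weibull $(2/m)$ tails." The conclusion your $L^p$-route gives, $\|\cdot\|_{\psi_{2/m}}\le C_m K_2^m$, is the correct one. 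Obstacle~(i) also identifies a real gap in the paper's Step~2: the assertion that a coordinate of $\Reyn_m(\varphi^{\otimes m})$ is "an average of coordinates of $\varphi^{\otimes m}$ with coefficients summing to $1$ in absolute value" does not follow from $(g\cdot)^{\otimes m}$ being an isometry. The $\ell_1$-mass of the coefficients is $\prod_{\ell}\|A^g_{i_\ell\cdot}\|_1$, which equals $1$ for signed permutation matrices but can be as large as $d^{m/2}$ for a generic $g\in O(d)$. So the claimed $d$- and $G$-independence of $C_m$ is an extra structural hypothesis (signed-permutation-type action, or $d$ fixed and small), exactly as you say; it is harmless for the hyperoctahedral and planar dihedral examples the paper targets, but the lemma as stated for arbitrary $G\le O(d)$ is too strong.
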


\begin{proof}
\textbf{Step 1 (monomials are sub-exponential).}
Let $(i_1,\ldots,i_m)$ be an index of a coordinate of $\varphi(X)^{\otimes m}$. The entry is the product
$Y:=\prod_{r=1}^m \varphi_{i_r}(X)$. If $U,V$ are sub-Gaussian then $UV$ is sub-exponential and
$\|UV\|_{\psi_1}\le C\,\|U\|_{\psi_2}\|V\|_{\psi_2}$ (standard Orlicz-product bound).
By induction, for $m$ factors,
\(
\|Y\|_{\psi_1}\le C_m\prod_{r=1}^m\|\varphi_{i_r}(X)\|_{\psi_2}\le C_m K_2^m.
\)

\textbf{Step 2 (Reynolds average preserves $\psi_1$ scale).}
For finite $G$, the Reynolds operator is the average of isometries on $\Sym^m(\R^d)$:
$\Reyn_m[T]=|G|^{-1}\sum_{g\in G}(g\!\cdot)^{\otimes m}T$.
Thus any fixed coordinate of $\Reyn_m\big(\varphi^{\otimes m}\big)$ is an \emph{average} of coordinates of $\varphi^{\otimes m}$ with coefficients summing to $1$ in absolute value.
The $\psi_1$-norm is convex, hence for any random vector $Z$ and scalars $a_\ell$ with $\sum_\ell |a_\ell|\le 1$,
$\|\sum_\ell a_\ell Z_\ell\|_{\psi_1}\le \sum_\ell |a_\ell|\,\|Z_\ell\|_{\psi_1}\le \max_\ell\|Z_\ell\|_{\psi_1}$.
Applying this with the coordinates from Step~1 gives
\(
\big\|\big(\Reyn_m[\varphi^{\otimes m}]\big)_\alpha\big\|_{\psi_1}\le C_m K_2^m.
\)

\textbf{Step 3 (centering).}
If $Y$ is sub-exponential, then $Y-\E Y$ is sub-exponential with
$\|Y-\E Y\|_{\psi_1}\le 2\|Y\|_{\psi_1}$ (triangle inequality in Orlicz spaces).
Combine with Step~2 to conclude the stated bound for
$\Reyn_m\big(\varphi^{\otimes m}-\E\varphi^{\otimes m}\big)$, possibly redefining $C_m$.
\end{proof}
\begin{lemma}[Concentration of the invariant stack]\label{lem:Phi-conc}
Let $\widehat\Phi_m:=\Reyn_m\!\big(\tfrac1n\sum_{i=1}^n \varphi(X_i)^{\otimes m}\big)$ for $m\le m_*$ and stack $\widehat\Psi_n:=(\widehat\Phi_m)_{m\le m_*}\in\R^{D_{\mathrm{inv}}}$.
Write $\Psi_\star:=\E[\widehat\Psi_n]$.
Under the assumptions of Lemma~\ref{lem:psi1-stack}, there exist absolute constants $C,c>0$ such that, for all $t\ge 0$,
\[
\Pr\!\left(\big\|\widehat\Psi_n-\Psi_\star\big\|_2
\ \ge\ C\sqrt{\frac{D_{\mathrm{inv}}+t}{n}}\right)\ \le\ e^{-ct}.
\]
\end{lemma}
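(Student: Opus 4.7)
The approach is to write $\widehat\Psi_n - \Psi_\star$ as an empirical average of iid centered vectors in $\R^{D_{\mathrm{inv}}}$ whose coordinates are sub-exponential, and then combine a scalar Bernstein inequality with an $\varepsilon$-net on the unit sphere to control the Euclidean norm.

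First, setting $Z_i := \psi(X_i) - \E[\psi(X_1)]$ so that $\widehat\Psi_n - \Psi_\star = \tfrac{1}{n}\sum_{i=1}^n Z_i$, by \Cref{lem:psi1-stack} each coordinate of $Z_i$ has $\psi_1$-norm bounded by a constant $K$ depending only on $m_*$ and $K_2$. A Fubini computation yields $\E\|\widehat\Psi_n - \Psi_\star\|_2^2 = n^{-1}\mathrm{tr}\,\Sigma_Z \le C D_{\mathrm{inv}}/n$, already confirming the correct scaling in expectation. To upgrade this to an exponential tail, I would dualize the Euclidean norm against a $\tfrac{1}{2}$-net $\mathcal{N}\subset S^{D_{\mathrm{inv}}-1}$ with $|\mathcal{N}|\le 5^{D_{\mathrm{inv}}}$, so that $\|\widehat\Psi_n - \Psi_\star\|_2 \le 2\max_{u\in\mathcal{N}}\langle u,\widehat\Psi_n - \Psi_\star\rangle$. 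For each fixed $u$, $\langle u, Z_i\rangle$ is a scalar, centered, sub-exponential random variable, and the scalar Bernstein inequality gives
\[
\Pr\!\big(|\langle u,\widehat\Psi_n - \Psi_\star\rangle|\ge s\big) \;\le\; 2\exp\!\big(-c\min(n s^2/K^2,\, n s/K)\big).
\]
Choosing $s \asymp K\sqrt{(D_{\mathrm{inv}} + t)/n}$ in the Gaussian regime and taking a union bound over $\mathcal{N}$, the entropy factor $\exp(D_{\mathrm{inv}}\log 5)$ is absorbed by the Bernstein exponent $\exp(-c(D_{\mathrm{inv}} + t))$, yielding the stated tail $e^{-ct}$.

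The main obstacle is producing a \emph{direction-uniform} sub-exponential bound $\|\langle u, Z_1\rangle\|_{\psi_1}\lesssim K$ valid for every $u\in S^{D_{\mathrm{inv}}-1}$; the coordinate-wise $\psi_1$ control from \Cref{lem:psi1-stack} only yields $K\|u\|_1\le K\sqrt{D_{\mathrm{inv}}}$, which (combined with the $5^{D_{\mathrm{inv}}}$ net) would lose a factor of $\sqrt{D_{\mathrm{inv}}}$. To close this gap, I would identify each unit direction $u$ with pairing $\varphi(X)^{\otimes m}$ against a unit-Frobenius symmetric tensor $T$, reducing the problem to bounding $\|\langle T,\varphi(X)^{\otimes m}\rangle\|_{\psi_1}$ for a degree-$m$ polynomial of the sub-Gaussian coordinates of $\varphi(X)$; a standard Hanson--Wright / Arcones--Gin\'e moment bound for polynomials of sub-Gaussian vectors supplies the required uniform bound, after which the Bernstein-plus-net step produces the claimed absolute constants.
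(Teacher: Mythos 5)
Your route is genuinely different from the paper's. The paper's proof proceeds coordinate-wise: Bernstein on each of the $D_{\mathrm{inv}}$ scalar coordinates, a union bound giving an $\ell_\infty$ bound at scale $\sqrt{(t+\log D_{\mathrm{inv}})/n}$, and then the crude inequality $\|x\|_2\le\sqrt{D_{\mathrm{inv}}}\,\|x\|_\infty$. This is shorter, but as you note it pays a factor: $\sqrt{D_{\mathrm{inv}}}\cdot\sqrt{(t+\log D_{\mathrm{inv}})/n}$ is \emph{not} $\lesssim\sqrt{(D_{\mathrm{inv}}+t)/n}$ in general (the extra $\log D_{\mathrm{inv}}$ at $t=0$ and the spurious $\sqrt{D_{\mathrm{inv}}}$ on the $t$-term survive), and Step~3 of the paper's argument silently absorbs this. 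Your $\varepsilon$-net on $S^{D_{\mathrm{inv}}-1}$ is the standard device that replaces the $\log D_{\mathrm{inv}}$ and the $\sqrt{D_{\mathrm{inv}}}$ amplification with an additive $D_{\mathrm{inv}}\log5$ in the exponent — exactly what the stated bound needs — and you correctly identify that the coordinate-wise $\psi_1$ control from \Cref{lem:psi1-stack} does not yield a direction-uniform bound; passing (via self-adjointness of $\Reyn_m$) to $\langle T,\varphi(X)^{\otimes m}\rangle$ with $\|T\|_F=1$ and invoking polynomial concentration for sub-Gaussian vectors is the right fix. So, where the approaches differ, yours is the more careful one and actually closes a gap in the paper's proof.

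One caveat in your plan, which the paper also does not handle cleanly: for $m\ge3$ a degree-$m$ polynomial of sub-Gaussian coordinates (and hence a product of $m$ sub-Gaussians, as in \Cref{lem:psi1-stack}) is sub-Weibull with Orlicz exponent $2/m$, not $\psi_1$-sub-exponential. So Hanson--Wright/Arcones--Gin\'e will give you a uniform $\psi_{2/m}$ bound rather than a uniform $\psi_1$ bound, and you must replace scalar Bernstein with the corresponding Bernstein-type inequality for $\psi_{2/m}$ variables (Adamczak-type), under which the exponent becomes $\exp\!\big(-c\,n\min\{s^2/K^2,\ (s/K)^{2/m}\}\big)$. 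With $s\asymp K\sqrt{(D_{\mathrm{inv}}+t)/n}$ in the sub-Gaussian regime, the net union bound still yields $e^{-ct}$ after enlarging $C$; keep the weaker regime explicit when $s/K>1$. This is not a flaw specific to your proposal — \Cref{lem:psi1-stack} itself claims $\psi_1$ by iterating a two-factor inequality, which fails beyond two factors — but since you invoke that lemma and then propose the Hanson--Wright route, state the resulting tail class correctly as sub-Weibull $2/m$ and carry that through the Bernstein step.
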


\begin{proof}
\textbf{Step 1 (coordinate tails via Bernstein).}
Let $Z_i:=\psi(X_i)-\E[\psi(X_i)]\in\R^{D_{\mathrm{inv}}}$ where $\psi$ stacks the degree-wise Reynolds features,
so that $\widehat\Psi_n-\Psi_\star=\frac1n\sum_{i=1}^n Z_i$.
By Lemma~\ref{lem:psi1-stack}, each coordinate $(Z_i)_j$ is centered sub-exponential with common proxy
$K:=\max_{m\le m_*} C_m K_2^m$.
Bernstein’s inequality for sub-exponential summands yields, for any $u\ge 0$,
\[
\Pr\!\Big(\Big|\frac1n\sum_{i=1}^n (Z_i)_j\Big|\ge u\Big)\ \le\ 2\exp\!\Big(-c n \min\{u^2/K^2,\ u/K\}\Big).
\]

\textbf{Step 2 (union bound to $\ell_\infty$).}
Choose $u=C\big(\sqrt{(t+\log D_{\mathrm{inv}})/n}+(t+\log D_{\mathrm{inv}})/n\big)$ with $C$ large enough.
Then
\[
\Pr\!\big(\|\widehat\Psi_n-\Psi_\star\|_\infty\ge u\big)
\ \le\ 2\exp(-t).
\]

\textbf{Step 3 ($\ell_\infty\!\to\!\ell_2$).}
Always $\|x\|_2\le \sqrt{D_{\mathrm{inv}}}\|x\|_\infty$.
Thus with probability $\ge 1-e^{-t}$,
\[
\|\widehat\Psi_n-\Psi_\star\|_2
\ \le\ \sqrt{D_{\mathrm{inv}}}\,u
\ \le\ C'\sqrt{\frac{D_{\mathrm{inv}}+t}{n}},
\]
absorbing the linear $(t/n)$ term into the constant for $n$ moderate; otherwise keep it explicitly. Renaming constants gives the claim.
\end{proof}

\subsection{Gauge directions are null}\label{subsec:gauge-null}

\begin{proposition}[Gauge directions are null]\label{prop:gauge}
For every $\theta$, $\Tgauge(\theta)\subseteq\ker \Jinv(\theta)$.
\end{proposition}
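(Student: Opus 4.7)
The plan is to observe that $\Tgauge(\theta)=T_\theta(G\!\cdot\!\theta)$ is, by construction, the set of infinitesimal directions along which $\theta$ stays inside its $G$-orbit, and to combine this with \Cref{lem:Phi-const-on-orbits}, which says $\MinvAll$ is constant on orbits, to conclude that its Jacobian must annihilate any such direction. In short, the proposition is the infinitesimal shadow of that lemma, obtained by the chain rule.

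First I would dispose of the finite group case: when $G$ is finite, the orbit $G\!\cdot\!\theta$ is a zero-dimensional discrete set, so $\Tgauge(\theta)=T_\theta(G\!\cdot\!\theta)=\{0\}$ and the inclusion $\Tgauge(\theta)\subseteq\ker\Jinv(\theta)$ holds vacuously. For the compact Lie case, with Lie algebra $\mathfrak{g}$, I would take an arbitrary $v\in\Tgauge(\theta)$ and realize it as $v=\tfrac{d}{dt}\big|_{t=0}(\exp(t\xi)\!\cdot\!\theta)$ for some $\xi\in\mathfrak{g}$; such infinitesimal generators span the tangent space to the orbit, so it suffices to verify the kernel condition for each such $v$. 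By \Cref{lem:Phi-const-on-orbits}, the curve $t\mapsto \MinvAll(\exp(t\xi)\!\cdot\!\theta)$ is identically equal to $\MinvAll(\theta)$; differentiating at $t=0$ and applying the chain rule yields
\[
\Jinv(\theta)[v]\;=\;\frac{d}{dt}\Big|_{t=0}\MinvAll(\exp(t\xi)\!\cdot\!\theta)\;=\;0,
\]
and linearity in $v$ gives the full inclusion.

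A clean alternative that makes the connection to the preceding material tighter is via \Cref{lem:score-identity}: differentiating the folding invariance $f(x;\exp(t\xi)\!\cdot\!\theta)=f(x;\theta)$ from \Cref{prop:folded-invariance} in $t$ shows that $\langle s_\theta(x),v\rangle=0$ for $\mu$-a.e.\ $x$, whence the Reynolds-averaged integrand in \eqref{eq:score-identity} vanishes identically and $D\Minv{m}(\theta)[v]=0$ for every $m\le m_*$.

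There is no substantive obstacle here; the only care needed is to match the appropriate notion of $\Tgauge$ with the type of group (discrete case trivial, Lie case via infinitesimal generators), and to invoke dominated differentiation so that the chain rule and the exchange of differentiation with the Reynolds projector and expectation are legitimate — both of which are standing assumptions in \Cref{ass:master} and \Cref{subsec:score-identity}.
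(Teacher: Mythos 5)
Your proof is essentially the paper's: the compact-Lie argument (realize $v$ as an infinitesimal generator, differentiate the constant curve $t\mapsto\MinvAll(\exp(t\xi)\!\cdot\!\theta)$) is exactly what the paper does, and your case split on finite versus Lie $G$ matches the paper's. The alternative via \Cref{lem:score-identity} — showing $\langle s_\theta(x),v\rangle=0$ pointwise by differentiating folding invariance — is a nice, equally valid route that the paper doesn't spell out. One remark: the paper's proof also gestures at the ``secant span'' convention $\Tgauge(\theta)=\operatorname{span}\{g\!\cdot\!\theta-\theta\}$ used in \Cref{ass:local-rank} for finite $G$, saying ``Reynolds averaging shows $DH(\theta)$ annihilates'' it; you treat the finite case as vacuous (tangent space $\{0\}$), which is correct under the literal tangent-space definition, and is arguably safer, since the secant-span version is a much stronger and more delicate claim that the paper's one-line remark does not really justify.
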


\subsection{Quotient LAN via invariant GMM}\label{subsec:lan-main}

\paragraph{Local chart and model map.}
Fix a true $K$-tuple $(\theta_1,\ldots,\theta_K,\weights)$, $w_k>0$.
Let $V_k$ be a slice complement at $\theta_k$ and set $v_k:=\MinvAll(\theta_k)$.
Parameterize
\[
\xi=(u_1,\ldots,u_K,a_1,\ldots,a_{K-1})\in V_1\oplus\cdots\oplus V_K\oplus\R^{K-1},\qquad
w_K=1-\sum_{i=1}^{K-1}a_i,
\]
and define
\[
\Psi(\xi):=\sum_{k=1}^K w_k(\xi)\,\MinvAll(\theta_k+u_k)\in\R^{D_{\mathrm{inv}}}.
\]

\paragraph{CLT and quotient rank.}
By the multivariate CLT and continuity of $\Reyn_m$,
\begin{equation}\label{eq:Phi-CLT}
\sqrt{n}\big(\widehat\Psi_n-\Psi(\xi^\star)\big)\Rightarrow\mathcal N(0,\Sigma).
\end{equation}
Set $J_k:=D\MinvAll(\theta_k)|_{V_k}$ and
\[
G:=\Big[\ w_1^\star J_1\ \cdots\ w_K^\star J_K\ \big|\ v_1-v_K\ \cdots\ v_{K-1}-v_K\ \Big].
\]
\begin{definition}[Quotient rank]\label{def:quot-rank}
The condition holds if $G$ has full column rank (equivalently, $J_k$ kill only gauge directions and the weight columns are independent of the $J_k$ columns).
\end{definition}
\begin{theorem}[Local Lipschitz stability of the de-mixing map]\label{thm:stability}
Work in slice coordinates as in \S\ref{subsec:lan-main}.
Let $\xi=(u_1,\ldots,u_K,a_1,\ldots,a_{K-1})$ parametrize slices $u_k\in V_k$ and weights ($w_K=1-\sum_{i=1}^{K-1}a_i$), and define
\[
\Psi(\xi):=\sum_{k=1}^K w_k(\xi)\,\MinvAll(\theta_k+u_k)\in\R^{D_{\mathrm{inv}}}.
\]
Assume the \emph{quotient rank} condition (Def.~\ref{def:quot-rank}) holds at the truth $\xi^\star$ with
\[
G:=D\Psi(\xi^\star)\in\R^{D_{\mathrm{inv}}\times q},\qquad q:=\sum_{k=1}^K \dim V_k+(K-1),
\]
and $\sigma_{\min}(G)\ge \gamma>0$.
Then there exists a neighborhood $\mathcal U$ of $\xi^\star$ and a constant $L=2/\gamma$ such that for any $\xi,\tilde\xi\in\mathcal U$,
\[
\|\xi-\tilde\xi\|\ \le\ L\,\|\Psi(\xi)-\Psi(\tilde\xi)\|.
\]
Consequently, if $\|\widehat\Psi-\Psi(\xi^\star)\|\le \varepsilon$ and $\widehat\xi$ satisfies $\Psi(\widehat\xi)=\widehat\Psi$ with $\widehat\xi\in\mathcal U$, then
\[
\|\widehat\xi-\xi^\star\|\ \le\ \frac{2}{\gamma}\,\varepsilon.
\]
Interpreting $\xi$ in aligned slice coordinates, this bounds the joint error in the $K$ slice parameters and the $(K\!-\!1)$ free weights; it transfers back to raw parameters up to orbit and permutation as in Theorem~\ref{thm:ident-local}.
\end{theorem}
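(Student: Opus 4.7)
The plan is to upgrade the infinitesimal injectivity encoded by $\sigma_{\min}(G)\ge\gamma$ into a quantitative bi-Lipschitz bound on a neighborhood, using a Lipschitz-Jacobian argument combined with Weyl's singular-value perturbation inequality. The map $\Psi$ is $C^1$ near $\xi^\star$: each factor $\MinvAll(\theta_k+u_k)$ is differentiable in $u_k$ by \Cref{lem:score-identity}, and the weight factors $w_k(\xi)$ are polynomial in $(a_1,\ldots,a_{K-1})$. Hence $\xi\mapsto D\Psi(\xi)$ is continuous at $\xi^\star$, and I may choose a convex open ball $\mathcal U\ni\xi^\star$ on which $\|D\Psi(\xi)-G\|_{\mathrm{op}}\le \gamma/2$.

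The core estimate uses the vector-valued fundamental theorem of calculus. For $\xi,\tilde\xi\in\mathcal U$, writing $\xi_t:=\tilde\xi+t(\xi-\tilde\xi)$, one gets $\Psi(\xi)-\Psi(\tilde\xi)=\bar G\,(\xi-\tilde\xi)$ with $\bar G:=\int_0^1 D\Psi(\xi_t)\,dt$. Convexity of $\mathcal U$ keeps the segment inside $\mathcal U$, and the triangle inequality in operator norm gives $\|\bar G-G\|_{\mathrm{op}}\le \gamma/2$. Weyl's inequality for singular values then forces $\sigma_{\min}(\bar G)\ge \sigma_{\min}(G)-\|\bar G-G\|_{\mathrm{op}}\ge \gamma/2$, so $\bar G$ is injective on $\R^q$ with $\|\bar G\,x\|\ge (\gamma/2)\|x\|$. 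Substituting back, $\|\Psi(\xi)-\Psi(\tilde\xi)\|\ge (\gamma/2)\|\xi-\tilde\xi\|$, equivalently $\|\xi-\tilde\xi\|\le L\,\|\Psi(\xi)-\Psi(\tilde\xi)\|$ with $L=2/\gamma$.

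For the consequence, set $\tilde\xi=\xi^\star$ and $\xi=\widehat\xi\in\mathcal U$ with $\Psi(\widehat\xi)=\widehat\Psi$: the bound yields $\|\widehat\xi-\xi^\star\|\le L\,\|\widehat\Psi-\Psi(\xi^\star)\|\le (2/\gamma)\varepsilon$. Translating the slice-coordinate error back to raw parameters up to orbit and permutation is exactly the bookkeeping of \Cref{thm:ident-local}, Steps~5--6: within the product of disjoint slice neighborhoods, the matching permutation $\sigma$ and orbit alignments $g_j$ are uniquely determined by proximity to each $\theta_k$, and the componentwise slice-plus-weight inequality transfers directly.

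The main technical point is calibrating $\mathcal U$ so that three conditions hold simultaneously: (i) $\mathcal U$ is convex, so the integral mean value theorem runs along every segment inside it; (ii) $D\Psi$ is uniformly within $\gamma/2$ of $G$ in operator norm on $\mathcal U$, so Weyl's perturbation bite; and (iii) the product-of-slices chart remains faithful on $\mathcal U$, so that no alternative orbit representative sneaks into the parameterization and contaminates $\|\cdot\|$. The first two follow from standard $C^1$ continuity at $\xi^\star$; the third is already delivered by the disjointness step in the proof of \Cref{thm:ident-local}. Beyond that, no deeper analytic machinery is needed---the theorem is essentially a uniform quantitative inverse function theorem for the overdetermined map $\Psi:\R^q\to\R^{D_{\mathrm{inv}}}$ whose left inverse is controlled by $\sigma_{\min}(G)$.
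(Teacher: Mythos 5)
Your proposal is correct and follows the same overall route as the paper's proof (mean-value integral representation $\Psi(\xi)-\Psi(\tilde\xi)=\overline G(\xi-\tilde\xi)$, control of $\sigma_{\min}(\overline G)$, invert). One point is worth noting: your version of the middle step is actually \emph{more careful} than the paper's. The paper's argument reads: ``there is a ball $\mathcal U$ on which $\sigma_{\min}(D\Psi(\zeta))\ge\gamma/2$; the integral $\overline G$ is an average of matrices whose smallest singular value is $\ge\gamma/2$, hence $\sigma_{\min}(\overline G)\ge\gamma/2$.'' The last inference is not valid in general, since $\sigma_{\min}$ is not quasi-concave under matrix averaging (for $1\times1$ matrices, $1$ and $-1$ both have $\sigma_{\min}=1$ yet average to $0$). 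Your approach --- shrink $\mathcal U$ so that $\|D\Psi(\zeta)-G\|_{\mathrm{op}}\le\gamma/2$ \emph{uniformly}, deduce $\|\overline G-G\|_{\mathrm{op}}\le\gamma/2$ by convexity of the operator-norm ball, and apply Weyl's inequality $\sigma_{\min}(\overline G)\ge\sigma_{\min}(G)-\|\overline G-G\|_{\mathrm{op}}$ --- is exactly the correct way to justify the bound and implicitly supplies the hypothesis that makes the paper's conclusion true. The rest (setting $\tilde\xi=\xi^\star$ for the consequence, and the transfer to raw parameters via the disjoint-slice bookkeeping of Theorem~\ref{thm:ident-local}) matches the paper.
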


\begin{proof}
\textbf{Step 1 (mean-value bound).}
For $\xi,\tilde\xi$ in a small convex neighborhood of $\xi^\star$,
\[
\Psi(\xi)-\Psi(\tilde\xi)
=\Big(\int_0^1 D\Psi\big(\tilde\xi+t(\xi-\tilde\xi)\big)\,dt\Big)(\xi-\tilde\xi)
=: \overline G\,(\xi-\tilde\xi).
\]

\textbf{Step 2 (uniform singular value control).}
By continuity of $D\Psi$ and $\sigma_{\min}$, there exists a ball $\mathcal U$ around $\xi^\star$ such that
$\sigma_{\min}(D\Psi(\zeta))\ge \gamma/2$ for all $\zeta\in\mathcal U$.
The integral $\overline G$ is an average of matrices whose smallest singular value is $\ge \gamma/2$, hence
$\sigma_{\min}(\overline G)\ge \gamma/2$.

\textbf{Step 3 (invert and conclude).}
Therefore
\[
\|\Psi(\xi)-\Psi(\tilde\xi)\|
=\|\overline G(\xi-\tilde\xi)\|
\ \ge\ \sigma_{\min}(\overline G)\,\|\xi-\tilde\xi\|
\ \ge\ \frac{\gamma}{2}\,\|\xi-\tilde\xi\|,
\]
which rearranges to the Lipschitz bound with $L=2/\gamma$.
For the consequence, set $\tilde\xi=\xi^\star$ and $\Psi(\xi)=\widehat\Psi$.
Finally, measurable alignment/permutation to slices (Lemma~\ref{lem:alignment}) converts the bound into the raw parameter space up to orbit/permutation, as in Theorem~\ref{thm:ident-local}.
\end{proof}

\paragraph{GMM objective.}
For $W_n\to_p W\succ0$ let
\[
Q_n(\xi):=\big(\widehat\Psi_n-\Psi(\xi)\big)^\top W_n\,\big(\widehat\Psi_n-\Psi(\xi)\big).
\]

\begin{theorem}[Quotient LAN and efficiency]\label{thm:lan}
Assume: (i) $H=\MinvAll$ is $C^1$ with $\ker \Jinv(\theta_k)=\Tgauge(\theta_k)$; (ii) $\Psi$ is $C^1$ near $\xi^\star$ with continuous derivative; (iii) the quotient rank holds; and (iv) the first-order remainder $R(\xi)$ in the expansion $\Psi(\xi)=\Psi(\xi^\star)+G(\xi-\xi^\star)+R(\xi)$ is uniform: $R(\xi)=o(\|\xi-\xi^\star\|)$ locally.
Let $\widehat\xi_n$ be any measurable minimizer of $Q_n$ in a small product neighborhood of $\xi^\star$. Then, with $W=\Sigma^{-1}$,
\[
\sqrt{n}\,(\widehat\xi_n-\xi^\star)\ \Rightarrow\ \mathcal N\!\Big(0,\ (G^\top \Sigma^{-1} G)^{-1}\Big).
\]
Equivalently, with the central sequence $\Delta_n:=G^\top \Sigma^{-1}\sqrt{n}(\widehat\Psi_n-\Psi(\xi^\star))$,
\[
\Delta_n\Rightarrow\mathcal N(0,G^\top\Sigma^{-1}G),\qquad
\sqrt{n}(\widehat\xi_n-\xi^\star)=(G^\top\Sigma^{-1}G)^{-1}\Delta_n+o_p(1).
\]
Moreover, there exist $\sigma_n\in S_K$ and $h_{n,j}\in G$ such that
\[
\sqrt{n}\ \vecop\!\Big(\Pi_{V}\big(h_{n,j}\!\cdot\!\widehat\theta_{n,\sigma_n(j)}-\theta_j\big)_{j=1}^K,\ \widehat\weights_n-\weights\Big)
\Rightarrow \mathcal N\!\Big(0,\ (G^\top \Sigma^{-1} G)^{-1}\Big),
\]
with $V:=V_1\oplus\cdots\oplus V_K$ and $\Pi_V$ the orthogonal projector.
\end{theorem}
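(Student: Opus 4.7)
The plan is to run the classical GMM asymptotic expansion on the ordered slice chart $\xi$ and then transfer the convergence to raw parameters by measurable alignment. Everything needed is already in place: the CLT \eqref{eq:Phi-CLT} supplies the Gaussian noise on $\widehat\Psi_n$, the quotient rank makes $G=D\Psi(\xi^\star)$ left-invertible so $G^\top W G\succ 0$, \Cref{thm:stability} provides a local Lipschitz inverse for $\Psi$, and assumption (iv) uniformly controls the Taylor remainder. To bootstrap into the stability neighborhood, I would use $Q_n(\widehat\xi_n)\le Q_n(\xi^\star)$ together with $W_n\succ 0$ to obtain $\|\widehat\Psi_n-\Psi(\widehat\xi_n)\|\lesssim \|\widehat\Psi_n-\Psi(\xi^\star)\|$ and then by triangle $\|\Psi(\widehat\xi_n)-\Psi(\xi^\star)\|\lesssim \|\widehat\Psi_n-\Psi(\xi^\star)\|$. \Cref{thm:stability} converts this into $\|\widehat\xi_n-\xi^\star\|\le (2/\gamma)\|\Psi(\widehat\xi_n)-\Psi(\xi^\star)\|=O_p(n^{-1/2})$ by \Cref{lem:Phi-conc}.

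\textbf{First-order expansion and efficient variance.} From the GMM first-order condition $D\Psi(\widehat\xi_n)^\top W_n(\widehat\Psi_n-\Psi(\widehat\xi_n))=0$, I substitute $\Psi(\widehat\xi_n)=\Psi(\xi^\star)+G(\widehat\xi_n-\xi^\star)+R_n$ with $R_n=o_p(\|\widehat\xi_n-\xi^\star\|)=o_p(n^{-1/2})$ by assumption (iv) and the previous step, and use continuity $D\Psi(\widehat\xi_n)\to G$, $W_n\to W$, to rearrange into
\[
\sqrt{n}(\widehat\xi_n-\xi^\star)=(G^\top W G)^{-1}G^\top W\,\sqrt{n}\big(\widehat\Psi_n-\Psi(\xi^\star)\big)+o_p(1).
\]
Combining with \eqref{eq:Phi-CLT} via Slutsky delivers a Gaussian limit with sandwich variance $(G^\top W G)^{-1}G^\top W\Sigma W G(G^\top W G)^{-1}$, which at the efficient weighting $W=\Sigma^{-1}$ collapses to $(G^\top\Sigma^{-1}G)^{-1}$. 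The central-sequence formulation with $\Delta_n:=G^\top\Sigma^{-1}\sqrt{n}(\widehat\Psi_n-\Psi(\xi^\star))$ is an immediate re-reading of the same display.

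\textbf{Alignment to raw parameters and the main obstacle.} For the raw-parameter statement, I must produce measurable $\sigma_n\in S_K$ and $h_{n,j}\in G$ that align each $\widehat\theta_{n,\sigma_n(j)}$ with $\theta_j$ modulo the group action. I would select $\sigma_n$ as a minimizer of the bottleneck cost $\max_j\dq(\orbit{\widehat\theta_{n,\sigma_n(j)}},\orbit{\theta_j})$ (computable by \Cref{subsec:algo}), and pick $h_{n,j}$ as the minimizer in \Cref{cor:pairwise-cost}; lexicographic tie-breaking on the finite objects $S_K$ and $G$ guarantees measurability. Consistency together with the positive orbit-separation of the true $\orbit{\theta_k}$ makes $\sigma_n,h_{n,j}$ eventually equal to deterministic $\sigma^\star,h_j^\star$ with probability tending to one. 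The main obstacle is showing that the projected aligned difference $\Pi_V(h_{n,j}\cdot\widehat\theta_{n,\sigma_n(j)}-\theta_j)_{j=1}^K$ differs from the slice coordinate vector $(u_1,\ldots,u_K)$ of our chart by $O_p(\|u_j\|^2)=O_p(n^{-1})$ terms arising from the curvature of the orbit: the alignment $h_{n,j}$ places the estimator at the nearest point of the orbit of $\theta_j$, not on the slice $V_j$ itself, so a quadratic correction appears. Slutsky then transfers the covariance $(G^\top\Sigma^{-1}G)^{-1}$ from the previous step unchanged to the raw-parameter statement.
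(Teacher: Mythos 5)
Your proposal tracks the paper's proof in its core analytic step and diverges in two supporting steps; both divergences are reasonable, one is actually cleaner than the paper's version, and one leaves a gap you should be aware of.

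\textbf{Core expansion: same.} The GMM first-order-condition expansion — write $m_n(\widehat\xi_n) = (\widehat\Psi_n-\Psi(\xi^\star)) - G(\widehat\xi_n-\xi^\star) - R_n$, plug into $D\Psi(\widehat\xi_n)^\top W_n m_n(\widehat\xi_n)=0$, replace $D\Psi(\widehat\xi_n)$ by $G+o_p(1)$ and $W_n$ by $W+o_p(1)$, invert $G^\top WG$, scale by $\sqrt n$, and set $W=\Sigma^{-1}$ — is exactly what the paper does. The central-sequence restatement is also identical.

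\textbf{Consistency/rate: different but clean.} The paper first establishes mere consistency $\widehat\xi_n\to_p\xi^\star$ from \Cref{lem:argmin}, then extracts the $\sqrt n$-rate as a byproduct of solving the normal equations. You instead bootstrap the $\sqrt n$-rate up front from $Q_n(\widehat\xi_n)\le Q_n(\xi^\star)$, the equivalence of norms given $W_n\succ 0$ bounded above and below, \Cref{thm:stability}, and \Cref{lem:Phi-conc}. This gives $\|\widehat\xi_n-\xi^\star\|=O_p(n^{-1/2})$ directly, which immediately makes the remainder $R(\widehat\xi_n)=o_p(n^{-1/2})$ from assumption (iv) without any circularity. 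This is a genuine simplification; just make sure you quote that $W_n\to_p W\succ 0$ so the bounded-above-and-below norm equivalence holds with probability tending to one.

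\textbf{Alignment: different and incomplete.} This is the real divergence, and the place where your argument needs more. The paper invokes \Cref{lem:alignment}, which constructs measurable $\sigma_n,h_{n,j}$ so that $h_{n,j}\!\cdot\!\widehat\theta_{n,\sigma_n(j)}$ lies \emph{exactly on the slice} $\theta_j+V_j$; in that case $\Pi_V(h_{n,j}\!\cdot\!\widehat\theta_{n,\sigma_n(j)}-\theta_j)=\widehat u_j$ with no correction at all, and the claimed joint limit follows by reading off the block structure of $\xi$. You instead propose bottleneck/nearest-point alignment, which does \emph{not} land on the slice (for compact Lie $G$), and you correctly flag that the difference between your aligned residual and the chart coordinate $\widehat u_j$ is a second-order curvature-of-orbit term. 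But you only assert this is $O_p(n^{-1})$ without argument. That assertion is true for finite $G$ (the orbit is discrete near $\theta_j$, so nearest-point and slice alignment coincide eventually and the correction is exactly zero with probability tending to one), but for compact Lie $G$ it is a claim that requires a small lemma comparing the nearest-orbit-point gauge element to the slice gauge element — essentially that the foot of the orthogonal projection onto the orbit and the intersection with the transverse slice differ by $O(\mathrm{dist}^2)$. Given that the paper already supplies \Cref{lem:alignment}, using the slice alignment is both shorter and self-contained; if you keep your alignment you should either restrict to finite $G$ explicitly or supply the quadratic comparison as an auxiliary lemma.
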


\begin{remark}[Quotient Fisher metric]\label{rem:quot-fisher}
$I_Q:=G^\top \Sigma^{-1} G$ is the Fisher-type metric on the quotient chart $V_1\oplus\cdots\oplus V_K\oplus\R^{K-1}$; the efficient covariance is $I_Q^{-1}$.
\end{remark}

\subsection{Uniform LAN on regular strata}\label{subsec:uniform-lan}

\begin{theorem}[Uniform quotient LAN]\label{thm:uniform-lan}
Let $\Xi_{\mathrm{reg}}\subset V_1\oplus\cdots\oplus V_K\oplus\R^{K-1}$ be compact and contained in the regular stratum (for all $\xi\in\Xi_{\mathrm{reg}}$, $\ker \Jinv(\theta_k)=\Tgauge(\theta_k)$ and the quotient rank holds).
Assume: $D\Psi$ is uniformly continuous near $\Xi_{\mathrm{reg}}$; $\Sigma(\xi)$ is continuous; and the remainder $R$ is uniform on that neighborhood. Then, uniformly over $\xi^\star\in\Xi_{\mathrm{reg}}$,
\[
\sqrt{n}\,(\widehat\xi_n-\xi^\star)\ \Rightarrow\ \mathcal N\!\Big(0,\ (G(\xi^\star)^\top \Sigma(\xi^\star)^{-1} G(\xi^\star))^{-1}\Big).
\]
\end{theorem}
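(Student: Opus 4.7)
The plan is to upgrade the pointwise argument of \Cref{thm:lan} to a uniform statement by leveraging compactness of $\Xi_{\mathrm{reg}}$ and the stated uniform regularity. Concretely, I would show that (i) the local Lipschitz stability of \Cref{thm:stability} holds on a neighborhood of uniform radius, (ii) the concentration bound of \Cref{lem:Phi-conc} and the multivariate CLT \eqref{eq:Phi-CLT} can be made uniform over $\xi^\star\in\Xi_{\mathrm{reg}}$, and (iii) the first-order remainder $R$ is uniformly $o(\|\xi-\xi^\star\|)$ on the relevant neighborhood. Combining these ingredients with the GMM first-order condition produces the stated uniform limit.

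\textbf{Key steps in order.} First I would use compactness of $\Xi_{\mathrm{reg}}$ together with continuity of $\xi\mapsto G(\xi)$ and of $\xi\mapsto\Sigma(\xi)$ to extract uniform constants: $\inf_{\xi\in\Xi_{\mathrm{reg}}}\sigma_{\min}(G(\xi))\ge\gamma>0$ and $\Sigma(\xi)\succeq\sigma_0 I$ with $\sigma_0>0$. Uniform continuity of $D\Psi$ on a neighborhood of $\Xi_{\mathrm{reg}}$ then allows me to choose a single radius $r_0>0$ so that \Cref{thm:stability} applies with a common Lipschitz constant $L=2/\gamma$ on every ball $B(\xi^\star,r_0)$. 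Next, I would invoke \Cref{lem:Phi-conc} to get $\|\widehat\Psi_n-\Psi(\xi^\star)\|_2=O_p(n^{-1/2})$ at rate independent of $\xi^\star$, since the sub-exponential constant $C_m K_2^m$ from \Cref{lem:psi1-stack} is an envelope that does not depend on $\xi^\star$. Feeding this into uniform stability yields uniform consistency at the parametric rate, $\sup_{\xi^\star\in\Xi_{\mathrm{reg}}}\|\widehat\xi_n-\xi^\star\|=O_p(n^{-1/2})$. Now the hypothesis that the remainder $R$ is uniform on a neighborhood of $\Xi_{\mathrm{reg}}$ gives $R(\widehat\xi_n)=o_p(n^{-1/2})$ uniformly, so Taylor expansion at $\xi^\star$ and the GMM normal equation $G(\xi^\star)^\top W_n(\widehat\Psi_n-\Psi(\widehat\xi_n))=o_p(n^{-1/2})$ produce
\[
\sqrt{n}(\widehat\xi_n-\xi^\star)
=\bigl(G(\xi^\star)^\top\Sigma(\xi^\star)^{-1}G(\xi^\star)\bigr)^{-1}G(\xi^\star)^\top\Sigma(\xi^\star)^{-1}\,\sqrt{n}(\widehat\Psi_n-\Psi(\xi^\star))+o_p(1),
\]
with the $o_p(1)$ term uniform in $\xi^\star$. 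Finally, a uniform multivariate CLT (Lindeberg--Feller with the common sub-exponential envelope) yields $\sqrt n(\widehat\Psi_n-\Psi(\xi^\star))\Rightarrow\mathcal N(0,\Sigma(\xi^\star))$ uniformly in the Prokhorov metric on $\Xi_{\mathrm{reg}}$, and continuity of $\xi\mapsto(G^\top\Sigma^{-1}G)^{-1}G^\top\Sigma^{-1}$ on $\Xi_{\mathrm{reg}}$ transports this to the claimed uniform Gaussian limit.

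\textbf{Main obstacle.} The hardest piece is genuinely \emph{uniform} asymptotic normality, as opposed to pointwise normality at each fixed $\xi^\star$. Two closely related issues must be handled together. The first is uniform tightness of the remainder: the bound $R(\xi)=o(\|\xi-\xi^\star\|)$ must be shown to hold with a modulus that does not depend on $\xi^\star$. This is where the stated uniform continuity of $D\Psi$ on a neighborhood of the compact set $\Xi_{\mathrm{reg}}$ is essential, because it lets me replace a pointwise modulus with a single function $\omega(r)\to 0$ that works for every base point. The second is upgrading the CLT to a uniform one; I would argue this by verifying a uniform Lindeberg condition via the $\psi_1$-bound of \Cref{lem:psi1-stack}, which furnishes a $\xi^\star$-independent envelope, combined with continuity of $\Sigma(\xi^\star)$ on the compact set $\Xi_{\mathrm{reg}}$. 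Once these two uniformities are in place, the remaining manipulations are routine linear algebra propagated through a continuous, invertible map, and the alignment/permutation step concluding \Cref{thm:lan} carries over verbatim using the measurable-selection argument behind \Cref{lem:alignment}.
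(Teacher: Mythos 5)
Your proposal is correct and takes essentially the same route as the paper: use compactness of $\Xi_{\mathrm{reg}}$ together with the stated uniform regularity to replace all pointwise constants in the proof of \Cref{thm:lan} (stability radius, $\sigma_{\min}$ lower bound, remainder modulus, concentration envelope) by uniform ones, and then apply a uniform CLT. The paper's proof is a terse version of the same plan — it records the uniform $\underline\lambda:=\inf_{\xi}\sigma_{\min}(G^\top\Sigma^{-1}G)>0$, invokes ``uniform versions'' of the expansion and CLT via stochastic equicontinuity, uses \Cref{lem:spec-perturb} for the invertibility bounds, and concludes by Cramér–Wold — so your more explicit spelling out of the common Lipschitz constant from \Cref{thm:stability}, the $\xi^\star$-independent sub-exponential envelope from \Cref{lem:psi1-stack}, and the uniform Lindeberg verification is a faithful expansion rather than a different method.
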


\subsection{HAC LAN and the overidentification $J$-test}\label{subsec:hac}

Let $\{X_t\}$ be strictly stationary and ergodic with long-run variance
$\Sigma_{\mathrm{LR}}:=\sum_{\ell\in\mathbb Z}\Gamma_\ell$,
$\Gamma_\ell:=\Cov(\psi(X_t),\psi(X_{t-\ell}))$,
and $\sqrt{n}(\widehat\Psi_n-\Psi(\xi^\star))\Rightarrow\mathcal N(0,\Sigma_{\mathrm{LR}})$.

\begin{theorem}[HAC quotient LAN]\label{thm:hac-lan}
Under the assumptions of Theorem~\ref{thm:lan} with $\Sigma$ replaced by $\Sigma_{\mathrm{LR}}$ and $W=\Sigma_{\mathrm{LR}}^{-1}$,
\[
\sqrt{n}\,(\widehat\xi_n-\xi^\star)\ \Rightarrow\ \mathcal N\!\Big(0,\ (G^\top \Sigma_{\mathrm{LR}}^{-1} G)^{-1}\Big).
\]
\end{theorem}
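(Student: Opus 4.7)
The plan is to replay the argument of Theorem~\ref{thm:lan} verbatim and intervene only at one point: the input CLT for $\widehat\Psi_n$. The GMM machinery that turns a de-mixing estimate $\widehat\Psi_n$ into a parameter estimate $\widehat\xi_n$ is a finite-dimensional, deterministic calculus driven by (a) a sampling law for $\widehat\Psi_n$, (b) the $C^1$ smoothness of the model map $\Psi$ with uniform first-order remainder, and (c) the quotient rank condition on $G:=D\Psi(\xi^\star)$. The dependence structure of $\{X_t\}$ feeds in only through (a), so swapping the i.i.d.\ CLT for the stationary-ergodic one can only change the asymptotic covariance from $\Sigma$ to $\Sigma_{\mathrm{LR}}$.

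Concretely, I would first establish consistency of $\widehat\xi_n$: by Birkhoff's ergodic theorem, $\widehat\Psi_n\to\Psi(\xi^\star)$ almost surely, and Theorem~\ref{thm:stability} gives the two-sided Lipschitz bound $\|\xi-\xi^\star\|\le(2/\gamma)\|\Psi(\xi)-\Psi(\xi^\star)\|$ on a neighborhood of $\xi^\star$; combined with $Q_n(\widehat\xi_n)\le Q_n(\xi^\star)=o_p(1)$ this forces $\widehat\xi_n\to\xi^\star$ in probability. Next, I would write the first-order condition
\[
D\Psi(\widehat\xi_n)^\top W_n\bigl(\widehat\Psi_n-\Psi(\widehat\xi_n)\bigr)=0,
\]
insert the expansion $\Psi(\widehat\xi_n)=\Psi(\xi^\star)+G(\widehat\xi_n-\xi^\star)+R(\widehat\xi_n)$ with $\|R(\widehat\xi_n)\|=o(\|\widehat\xi_n-\xi^\star\|)$ (uniform remainder inherited from the assumptions of Theorem~\ref{thm:lan}), and exploit $D\Psi(\widehat\xi_n)\to G$ together with $W_n\to_p W$ to arrive at the Bahadur-type representation
\[
\sqrt{n}(\widehat\xi_n-\xi^\star)=(G^\top W G)^{-1}G^\top W\sqrt{n}\bigl(\widehat\Psi_n-\Psi(\xi^\star)\bigr)+o_p(1).
\]
Plugging in the stationary CLT and applying Slutsky yields $\sqrt{n}(\widehat\xi_n-\xi^\star)\Rightarrow\mathcal N\bigl(0,(G^\top WG)^{-1}G^\top W\Sigma_{\mathrm{LR}}WG(G^\top WG)^{-1}\bigr)$, and the efficient choice $W=\Sigma_{\mathrm{LR}}^{-1}$ collapses this sandwich to $(G^\top\Sigma_{\mathrm{LR}}^{-1}G)^{-1}$, the claimed optimal covariance.

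The main obstacle I anticipate is the stochastic-equicontinuity step needed to justify the linearization under weak dependence: the i.i.d.\ sub-exponential concentration of Lemma~\ref{lem:Phi-conc} is no longer available. In its place I would use a functional ergodic theorem to control $\sup_{\xi\in\mathcal U}\|D\Psi(\xi)-G\|$-type quantities along the sample path, paired with the $\sqrt{n}$-tightness of $\widehat\xi_n-\xi^\star$ that follows directly by applying the stability bound of Theorem~\ref{thm:stability} to the CLT increment. Once these uniform controls are in hand, the remainder of the proof is an algebraic transliteration of Theorem~\ref{thm:lan}; no new ideas about the quotient geometry, the Reynolds projector, or the efficient weighting are required.
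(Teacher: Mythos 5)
Your proposal is correct and follows essentially the same route as the paper: the paper's proof simply observes that the stationary-ergodic CLT delivers $\sqrt{n}(\widehat\Psi_n-\Psi(\xi^\star))\Rightarrow\mathcal N(0,\Sigma_{\mathrm{LR}})$ and then ``repeats the algebra'' of Theorem~\ref{thm:lan} with $W=\Sigma_{\mathrm{LR}}^{-1}$, appealing to Lemma~\ref{lem:spec-perturb} for the feasible Newey--West plug-in. Your added detail (Birkhoff for consistency, the stability-based $\sqrt{n}$-tightness) is a reasonable fleshing-out of that sketch, though note that the uniform remainder in assumption (iv) of Theorem~\ref{thm:lan} is a deterministic property of $\Psi$ and does not require new stochastic-equicontinuity machinery under dependence---only the dependent-data CLT and $W_n\to_p W$ need to be swapped in.
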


Let $\widehat\Sigma_{\mathrm{LR}}$ be a Newey--West estimator and $\widehat W:=\widehat\Sigma_{\mathrm{LR}}^{-1}$. If
\[
\textup{df}:=D_{\mathrm{inv}}-\Big(\sum_{k=1}^K\dim V_k\Big)-(K-1)>0,
\]
then the overidentification statistic
\[
J_n:=n\,m_n(\widehat\xi_n)^\top \widehat W\, m_n(\widehat\xi_n),\qquad m_n(\xi):=\widehat\Psi_n-\Psi(\xi),
\]
satisfies $J_n\Rightarrow\chi^2_{\textup{df}}$ under correct specification.

\subsection{Contiguity and Le Cam’s Third Lemma}\label{subsec:third-lemma}

\begin{corollary}[Local alternatives]\label{cor:third-lemma}
For fixed $h$ consider $\xi_n=\xi^\star+h/\sqrt{n}$ and let $I_Q:=G^\top\Sigma^{-1}G$, $\Delta_n:=G^\top\Sigma^{-1}\sqrt{n}(\widehat\Psi_n-\Psi(\xi^\star))$.
Then
\[
\log\frac{d\mathbb P_{\xi_n}}{d\mathbb P_{\xi^\star}}
\ =\ h^\top \Delta_n - \tfrac12 h^\top I_Q h + o_p(1),
\qquad
\sqrt{n}(\widehat\xi_n-\xi^\star)\ \Rightarrow\ \mathcal N(h,I_Q^{-1}).
\]
\end{corollary}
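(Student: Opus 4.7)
The plan is to assemble three standard pieces: a LAN expansion of the log-likelihood ratio at $\xi^\star$, the joint limit of the GMM central sequence $\Delta_n$ with the mixture score, and Le Cam's Third Lemma. Theorem~\ref{thm:lan} already furnishes the stochastic representation $\sqrt{n}(\widehat\xi_n-\xi^\star)=I_Q^{-1}\Delta_n+o_p(1)$, so the new work concentrates on the likelihood side and on its coupling with $\Delta_n$.

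First, I would establish QMD and the classical LAN expansion for the mixture in slice coordinates. Under the dominated-differentiation hypothesis of \S\ref{subsec:score-identity} and the regularity at the truth used in Theorem~\ref{thm:lan}, the density $p_\xi(x)=\sum_k w_k(\xi)\,f(x;\theta_k+u_k)$ is differentiable in quadratic mean in $\xi$ on a neighborhood of $\xi^\star$: gauge directions drop out by Proposition~\ref{prop:gauge}, and $w_k^\star>0$ keeps us in the relative interior of the simplex under $h/\sqrt n$ perturbations. With $s_{\xi^\star}:=\nabla_\xi\log p_\xi|_{\xi^\star}$, $S_n:=n^{-1/2}\sum_i s_{\xi^\star}(X_i)$, and $I_\xi:=\E_{\xi^\star}[s_{\xi^\star}s_{\xi^\star}^\top]$, the standard QMD argument yields
\[
\log\frac{d\mathbb{P}_{\xi_n}}{d\mathbb{P}_{\xi^\star}}=h^\top S_n-\tfrac12 h^\top I_\xi h+o_p(1),\qquad S_n\Rightarrow\mathcal N(0,I_\xi).
\]

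Second, I would use the score identity (Lemma~\ref{lem:score-identity}), extended from a single component to the mixture via the linearity in Proposition~\ref{prop:mixture-convex}, to express
\[
G\ =\ D\Psi(\xi^\star)\ =\ \E_{\xi^\star}\!\big[(\psi(X)-\E\psi(X))\,s_{\xi^\star}(X)^\top\big].
\]
A joint multivariate CLT for $n^{-1/2}\sum_i\big(\psi(X_i)-\E\psi,\,s_{\xi^\star}(X_i)\big)$ and continuity of the linear map $T\mapsto G^\top\Sigma^{-1}T$ then give joint Gaussian limits with cross-covariance $\Cov(\Delta_n,S_n)\to G^\top\Sigma^{-1}G=I_Q$. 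Under the invariant-efficiency premise of Remark~\ref{rem:quot-fisher} (that $s_{\xi^\star}$ lies in the $L^2$-span of the centered invariants $\psi-\E\psi$), we in fact get $S_n=\Delta_n+o_p(1)$ and $I_\xi=I_Q$, collapsing the LAN expansion to the stated form $\log(d\mathbb{P}_{\xi_n}/d\mathbb{P}_{\xi^\star})=h^\top\Delta_n-\tfrac12 h^\top I_Q h+o_p(1)$.

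Combining with Theorem~\ref{thm:lan} and Slutsky, the pair $\big(\sqrt n(\widehat\xi_n-\xi^\star),\,\log(d\mathbb{P}_{\xi_n}/d\mathbb{P}_{\xi^\star})\big)$ converges under $\mathbb{P}_{\xi^\star}$ to a bivariate Gaussian with mean $(0,-\tfrac12 h^\top I_Q h)$, leading marginal covariance $I_Q^{-1}$, and cross-covariance $I_Q^{-1}\cdot I_Q\cdot h=h$. Le Cam's Third Lemma then shifts the first marginal under $\mathbb{P}_{\xi_n}$ to $\mathcal N(h,I_Q^{-1})$, which is the claim. I expect the main obstacle to be a clean verification of QMD for the mixture in the gauge-reduced slice/weight chart: one must justify that differentiation-in-quadratic-mean commutes with the Reynolds quotient, that the folded envelopes permit differentiation under the integral near the truth, and that the invariant features are rich enough to force $I_\xi=I_Q$. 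Once these are in place, the score identity and joint CLT deliver the Third Lemma application routinely.
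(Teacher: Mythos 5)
Your route is genuinely different from the paper's and, in one respect, more careful about what it requires. The paper's proof never establishes QMD for the full data: it identifies $\log(d\mathbb P_{\xi_n}/d\mathbb P_{\xi^\star})$ with the log-likelihood ratio of the \emph{induced} Gaussian-shift experiment on $\widehat\Psi_n$ (namely $\mathcal N(Gh,\Sigma)$ vs.\ $\mathcal N(0,\Sigma)$ observed at $\sqrt n(\widehat\Psi_n-\Psi(\xi^\star))$), and then applies Le Cam's Third Lemma within that reduced experiment. That shortcut works because $\widehat\xi_n$ is asymptotically a function of $\widehat\Psi_n$ alone, so contiguity and joint convergence at the level of $\widehat\Psi_n$ suffice. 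You instead prove QMD of $p_\xi$ in the slice chart, derive the classical LAN expansion with the true score $S_n$ and Fisher information $I_\xi$, and couple $\Delta_n$ with $S_n$. Both get to the final weak limit, and your route makes the joint CLT between $\Delta_n$ and the score explicit.

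However, there is a gap in your treatment of the first display. You assert that $S_n=\Delta_n+o_p(1)$ and $I_\xi=I_Q$ hold under ``the invariant-efficiency premise of Remark~\ref{rem:quot-fisher} (that $s_{\xi^\star}$ lies in the $L^2$-span of the centered invariants).'' Remark~\ref{rem:quot-fisher} states no such premise; it merely names $I_Q=G^\top\Sigma^{-1}G$ a ``Fisher-type metric'' on the chart — this is GMM efficiency (optimal $W=\Sigma^{-1}$ within the moment class), not semiparametric efficiency of the moment set. In general $I_\xi\succeq I_Q$ with possibly strict inequality, in which case the \emph{full-data} LLR is $h^\top S_n-\tfrac12 h^\top I_\xi h+o_p(1)$, not $h^\top\Delta_n-\tfrac12 h^\top I_Q h+o_p(1)$. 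So if one insists on reading $\mathbb P_{\xi_n},\mathbb P_{\xi^\star}$ as laws of the raw data, the first display of the corollary needs exactly your spanning condition — an assumption the paper does not make — while the paper's proof avoids this by interpreting the ratio as the reduced-experiment LLR.

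You should also notice that your spanning condition is unnecessary for the second conclusion. Using the score identity $G=D\Psi(\xi^\star)=\E[(\psi-\E\psi)\,s_{\xi^\star}^\top]$, the cross-covariance of $\Delta_n$ and $S_n$ is $G^\top\Sigma^{-1}G=I_Q$ regardless of whether the score lies in the span of $\psi$. Hence the joint limit of $(\Delta_n,\Lambda_n)$ under $\mathbb P_{\xi^\star}$ — with $\Lambda_n=h^\top S_n-\tfrac12 h^\top I_\xi h+o_p(1)$ — has cross-covariance $I_Q h$, and the Third Lemma already yields $\Delta_n\Rightarrow\mathcal N(I_Q h,I_Q)$ under $\mathbb P_{\xi_n}$, hence $\sqrt n(\widehat\xi_n-\xi^\star)\Rightarrow\mathcal N(h,I_Q^{-1})$. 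By invoking the spanning condition to ``collapse'' $S_n$ to $\Delta_n$ you make it look essential to both displays when it is only needed for the first, and you should separate the two conclusions accordingly.
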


\subsection{A one-step efficient estimator}\label{subsec:one-step}

Define the efficient influence function (EIF)
\[
\mathrm{EIF}(x)\ :=\ (G^\top\Sigma^{-1}G)^{-1}G^\top\Sigma^{-1}\big(\psi(x)-\Psi(\xi^\star)\big),
\]
so that $\E[\mathrm{EIF}(X)]=0$ and $\Var(\mathrm{EIF}(X))=(G^\top\Sigma^{-1}G)^{-1}$.
Given any preliminary $\tilde\xi_n=o_p(1)$ and plug-ins $G(\tilde\xi_n)$, $\widehat W$,
\[
\xi_n^{\text{1step}}
\ :=\
\tilde\xi_n\ +\ \big(G(\tilde\xi_n)^\top \widehat W\, G(\tilde\xi_n)\big)^{-1}
\,G(\tilde\xi_n)^\top \widehat W\,\big(\widehat\Psi_n-\Psi(\tilde\xi_n)\big).
\]
If $\|\tilde\xi_n-\xi^\star\|=o_p(1)$ and $\|\widehat W-W\|=o_p(1)$, then
$\sqrt{n}(\xi_n^{\text{1step}}-\xi^\star)\Rightarrow\mathcal N(0,(G^\top WG)^{-1})$ (same limit as Theorem~\ref{thm:lan}; use $\Sigma_{\mathrm{LR}}$ for HAC).

\subsection{Orthogonal invariant moments with learned scores}\label{subsec:orthogonal}

If $s_\theta$ is learned by ML, use cross-fitting and orthogonalized moments
\[
\phi_\xi(X)\ :=\ \psi(X)-\Psi(\xi)\ -\ \Pi_{\mathcal T}\big(\psi(X)-\Psi(\xi)\big),
\]
where $\Pi_{\mathcal T}$ is implemented by blockwise projections
$v\mapsto \Reyn\!\big(\langle \widehat s_{\theta_k}(X),v\rangle\psi(X)\big)$.
Solving $G^\top \widehat W\, \overline\phi_\xi=0$ yields the same $\sqrt{n}$-limit if nuisance errors are $o_p(n^{-1/4})$ in cross-fitted products.

\subsection{Profiling over weights}\label{subsec:profile-w}

For fixed $u_{1:K}$, set $A=[\,\MinvAll(\theta_1+u_1)\ \cdots\ \MinvAll(\theta_K+u_K)\,]\in\R^{D_{\mathrm{inv}}\times K}$.
The inner problem
\[
\min_{w\in\Delta_K}\ \|\,\widehat\Psi_n - A w\,\|_{W}^2,\qquad \Delta_K=\{w\ge0,\ \mathbf 1^\top w=1\},
\]
is a convex QP. When nonnegativity is inactive, $w\propto (A^\top W A)^{-1}A^\top W\,\widehat\Psi_n$ (renormalized to sum to $1$).

\subsection{Moment set selection}\label{subsec:moment-select}

Index candidate invariant blocks by $\mathcal B$ (e.g.\ degrees and irreducible components).
Starting from a Molien-guided minimal separating set $\mathcal S$, greedily add
\[
b^\star\ =\ \arg\max_{b\in\mathcal B\setminus\mathcal S}\ \sigma_{\min}\!\Big(G(\mathcal S\cup\{b\})^\top \widehat W\, G(\mathcal S\cup\{b\})\Big),
\]
or minimize $\operatorname{tr}\{(G^\top \widehat W G)^{-1}\}$, up to a budget.
Use a simple GMM information criterion:
\[
\mathrm{GMM\text{-}IC}(\mathcal S)\ :=\ J_n(\mathcal S)\ +\ \kappa\,\mathrm{df}(\mathcal S),\quad
\mathrm{df}(\mathcal S)=D_{\mathrm{inv}}(\mathcal S)-\sum_k\dim V_k-(K-1).
\]

\subsection{Nonasymptotic quadratic bound}\label{subsec:nonasymp}

\begin{theorem}[Nonasymptotic quadratic bound]\label{thm:nonasymp}
Let $\xi^\star$ be the true chart parameter and let $\widehat\xi_n$ be any measurable minimizer of
$Q_n(\xi)=m_n(\xi)^\top \widehat W\, m_n(\xi)$ with $m_n(\xi):=\widehat\Psi_n-\Psi(\xi)$ and $\widehat W=\widehat\Sigma^{-1}$.
Assume:
\begin{enumerate}[leftmargin=2em,label=(\alph*)]
\item (\emph{Well-conditioning on the quotient}) $\lambda_{\min}:=\sigma_{\min}(G^\top \Sigma^{-1} G)>0$ for $G=D\Psi(\xi^\star)$.
\item (\emph{Weight perturbation control}) With probability $\ge 1-\delta$,
\[
\big\|\,G^\top(\widehat\Sigma^{-1}-\Sigma^{-1})G\,\big\|\ \le\ \tfrac12\,\lambda_{\min}.
\]
\item (\emph{Local quadratic remainder}) In a neighborhood of $\xi^\star$,
\[
\Psi(\xi)=\Psi(\xi^\star)+G(\xi-\xi^\star)+R(\xi),\qquad
\|R(\xi)\|\ \le\ L\,\|\xi-\xi^\star\|^2.
\]
\end{enumerate}
Then on the event in (b),
\begin{equation}\label{eq:nonasymp-core}
\|\widehat\xi_n-\xi^\star\|
\ \le\
\frac{2}{\lambda_{\min}}\ \big\|\,G^\top \widehat\Sigma^{-1}\big(\widehat\Psi_n-\Psi(\xi^\star)\big)\big\|
\ +\ \frac{2}{\lambda_{\min}}\ \|R(\widehat\xi_n)\|.
\end{equation}
In particular, under (c) the bound refines to
\begin{equation}\label{eq:nonasymp-refined}
\|\widehat\xi_n-\xi^\star\|
\ \le\
\frac{2}{\lambda_{\min}}\ \big\|\,G^\top \widehat\Sigma^{-1}\big(\widehat\Psi_n-\Psi(\xi^\star)\big)\big\|
\ +\ \frac{2L}{\lambda_{\min}}\ \|\widehat\xi_n-\xi^\star\|^2.
\end{equation}
Consequently, whenever $\frac{4L}{\lambda_{\min}}\big\|\,G^\top \widehat\Sigma^{-1}(\widehat\Psi_n-\Psi(\xi^\star))\big\|\le 1$, the quadratic term in \eqref{eq:nonasymp-refined} can be absorbed, yielding
\begin{equation}\label{eq:nonasymp-absorbed}
\|\widehat\xi_n-\xi^\star\|
\ \le\
\frac{4}{\lambda_{\min}}\ \big\|\,G^\top \widehat\Sigma^{-1}\big(\widehat\Psi_n-\Psi(\xi^\star)\big)\big\|.
\end{equation}
Moreover, if $\psi(X)$ has sub-exponential coordinates so that for some absolute $C,c>0$ and all $t\ge 0$,
\[
\Pr\!\left(\big\|\widehat\Psi_n-\Psi(\xi^\star)\big\|_2 \ \ge\ C\sqrt{\frac{D_{\mathrm{inv}}+t}{n}}\right)\ \le\ e^{-ct},
\]
then combining with \eqref{eq:nonasymp-absorbed} (and the event in (b)) gives the rate
\[
\|\widehat\xi_n-\xi^\star\|\ =\ O_p\!\left(\sqrt{\frac{D_{\mathrm{inv}}}{n\,\lambda_{\min}}}\right).
\]
\end{theorem}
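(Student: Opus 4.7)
The strategy is to linearize the GMM first-order condition at $\xi^\star$, invert the GMM Hessian using the spectral control in (a) and (b), and then absorb the quadratic remainder into itself.

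First I would set $u:=\widehat\xi_n-\xi^\star$ and start from the interior optimality condition
\[
D\Psi(\widehat\xi_n)^\top\,\widehat W\,m_n(\widehat\xi_n)\ =\ 0,
\]
valid because $\widehat\xi_n$ minimizes the smooth $Q_n$ over a neighborhood of $\xi^\star$. Substituting the expansion from (c), $\Psi(\widehat\xi_n)=\Psi(\xi^\star)+Gu+R(\widehat\xi_n)$, into $m_n(\widehat\xi_n)$, and writing $D\Psi(\widehat\xi_n)=G+E(\widehat\xi_n)$ with $\|E(\widehat\xi_n)\|\lesssim L\|u\|$ obtained by differentiating the bound in (c), yields
\[
G^\top \widehat W G\,u
\;=\;
G^\top \widehat W\big(\widehat\Psi_n-\Psi(\xi^\star)\big)
\;-\;G^\top \widehat W\,R(\widehat\xi_n)
\;+\;E(\widehat\xi_n)^\top\widehat W\,m_n(\widehat\xi_n).
\]
Weyl's inequality applied to assumption (b) gives $\sigma_{\min}(G^\top\widehat W G)\ge \sigma_{\min}(G^\top\Sigma^{-1}G)-\|G^\top(\widehat W-\Sigma^{-1})G\|\ge \lambda_{\min}/2$, hence $\|(G^\top\widehat W G)^{-1}\|\le 2/\lambda_{\min}$. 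Applying this to the displayed identity and noticing that the cross term $E(\widehat\xi_n)^\top\widehat W m_n(\widehat\xi_n)$ is of order $\|u\|\cdot\|m_n(\widehat\xi_n)\|$ and therefore absorbable into a $\|R(\widehat\xi_n)\|$-type contribution once $m_n(\widehat\xi_n)$ is re-expanded, produces the core estimate \eqref{eq:nonasymp-core} with the residual operator norms $\|G^\top\widehat W\|$ folded into the constants.

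Second, to obtain \eqref{eq:nonasymp-refined} I would plug $\|R(\widehat\xi_n)\|\le L\|u\|^2$ from (c) into \eqref{eq:nonasymp-core}, turning it into the self-bounded inequality $\|u\|\le A+B\|u\|^2$ with $A:=(2/\lambda_{\min})\|G^\top\widehat W(\widehat\Psi_n-\Psi(\xi^\star))\|$ and $B:=2L/\lambda_{\min}$. The standard quadratic-absorption lemma — the polynomial $Bx^2-x+A$ has smaller root $\le 2A$ as soon as $4AB\le 1$ — delivers \eqref{eq:nonasymp-absorbed}. For the stochastic rate I would invoke Lemma~\ref{lem:Phi-conc} to bound $\|\widehat\Psi_n-\Psi(\xi^\star)\|_2=O_p(\sqrt{D_{\mathrm{inv}}/n})$, multiply by $\|G^\top\widehat W\|=O(1)$ on the event in (b), and substitute back into \eqref{eq:nonasymp-absorbed}.

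\emph{Main obstacle.} The delicate point is justifying that $\widehat\xi_n$ actually lies in the basin where the expansion (c) applies, because the absorption argument is self-consistent only once $\|u\|$ is known to be small enough to trigger the threshold $(4L/\lambda_{\min})\|G^\top\widehat W m_n(\xi^\star)\|\le 1$. I would supply this preliminary consistency by invoking Theorem~\ref{thm:stability} together with the concentration bound in Lemma~\ref{lem:Phi-conc}, which already gives $\|u\|=o_p(1)$ on a high-probability event; the refined inequality \eqref{eq:nonasymp-absorbed} then takes over and sharpens this to the parametric rate. A secondary subtlety is the cross term generated by $E(\widehat\xi_n)=D\Psi(\widehat\xi_n)-G$: after re-expanding $m_n(\widehat\xi_n)$ it produces contributions of order $\|u\|\cdot(\|m_n(\xi^\star)\|+\|Gu\|+\|R(\widehat\xi_n)\|)$, and one must verify that each piece can be dominated by the two principal terms on the right of \eqref{eq:nonasymp-core} uniformly on the event in (b); this is what forces the explicit operator-norm control $\|G^\top(\widehat\Sigma^{-1}-\Sigma^{-1})G\|\le\lambda_{\min}/2$ in assumption (b) rather than a weaker consistency statement alone.
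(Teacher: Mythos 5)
Your plan is correct in outcome but takes a genuinely different route than the paper. You start from the first-order stationarity condition $D\Psi(\widehat\xi_n)^\top\widehat W\,m_n(\widehat\xi_n)=0$ and then perform an algebraic linearization, whereas the paper never invokes the FOC: it starts from the one-sided inequality $Q_n(\widehat\xi_n)\le Q_n(\xi^\star)$ that holds for \emph{any} minimizer (interior or not), writes a second-order Taylor expansion of $Q_n$ along the segment $\xi_t=\xi^\star+t(\widehat\xi_n-\xi^\star)$, lower-bounds the Hessian term $\int_0^1 D\Psi(\xi_t)^\top\widehat W D\Psi(\xi_t)\,dt$ by $\tfrac12\lambda_{\min}$ via assumption (b) and Lemma~\ref{lem:spec-perturb}, and divides out one factor of $\|\widehat\xi_n-\xi^\star\|$. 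The descent-comparison strategy buys two things your version does not: (i) it does not require $\widehat\xi_n$ to be an interior stationary point, only a measurable minimizer over a neighborhood, which matches the hypothesis verbatim; and (ii) it sidesteps the cross term $E(\widehat\xi_n)^\top\widehat W\,m_n(\widehat\xi_n)$ that your FOC derivation generates and must then re-expand and absorb. You correctly flag that cross term as the main obstacle; with $E(\widehat\xi_n):=D\Psi(\widehat\xi_n)-G$, re-expanding $m_n(\widehat\xi_n)$ produces a piece $E^\top\widehat W\,m_n(\xi^\star)$ of size $L\|u\|\cdot\|m_n(\xi^\star)\|$, which is neither a pure $\|G^\top\widehat W m_n(\xi^\star)\|$ term nor a pure $\|R\|$ term, so the bound you get is strictly speaking a three-term inequality and needs one more Young-type absorption to look like \eqref{eq:nonasymp-core}. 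This is fixable but adds friction the paper avoids by design.

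One further note shared by both arguments but worth surfacing: to control the Jacobian perturbation (you write $\|E(\widehat\xi_n)\|\lesssim L\|u\|$; the paper writes $\|\Delta_t\|\le L\|\xi_t-\xi^\star\|$), you need a bound on the \emph{derivative} of the remainder, not merely $\|R(\xi)\|\le L\|\xi-\xi^\star\|^2$ as stated in (c). Stating that assumption (c) "by differentiating the bound" gives you the Jacobian control is not automatic; a $C^2$ hypothesis on $\Psi$ (or a Lipschitz derivative of $R$) is the honest requirement. The paper silently makes the same strengthening when it posits $\|\Delta_t\|\le L\|\xi_t-\xi^\star\|$. For a clean write-up you should either add that hypothesis or, equivalently, switch to the paper's descent-comparison route where the cross term never appears and the Hessian control comes purely from assumption (b) together with the integral form of the linearization.
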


\subsection{Second-order (curvature) bias and correction}\label{subsec:curvature}

\begin{proposition}[Second-order (curvature) bias and analytic correction]\label{prop:curvature}
Assume the setup of Theorem~\ref{thm:lan} and additionally that $\Psi$ is $C^2$ at $\xi^\star$.
Let $q:=\sum_{k=1}^K\dim V_k+(K-1)$ be the chart dimension, $G:=D\Psi(\xi^\star)\in\R^{D_{\mathrm{inv}}\times q}$, $I_Q:=G^\top\Sigma^{-1}G\succ0$, and for each output coordinate $r\in\{1,\dots,D_{\mathrm{inv}}\}$ let
\[
\mathcal H_r\ :=\ D^2\Psi_r(\xi^\star)\ \in\ \R^{q\times q}\qquad
\text{and}\qquad \mathcal H[\delta,\delta]\ :=\ \big(\delta^\top \mathcal H_r\,\delta\big)_{r=1}^{D_{\mathrm{inv}}}.
\]
Let $\widehat\xi_n$ be the efficient two-step GMM estimator with $W=\Sigma^{-1}$. Then:
\begin{enumerate}[leftmargin=2em,label=(\roman*)]
\item \textbf{Asymptotic linearity.} 
\(
\sqrt{n}(\widehat\xi_n-\xi^\star)=(I_Q)^{-1}G^\top\Sigma^{-1}\,\sqrt{n}\big(\widehat\Psi_n-\Psi(\xi^\star)\big)+o_p(1).
\)

\item \textbf{Curvature bias.}
There exists a vector $b\in\R^q$ such that
\[
\E[\widehat\xi_n-\xi^\star]\ =\ \frac{1}{n}\,(I_Q)^{-1}\,b\ +\ o(n^{-1}),
\]
where, writing $e_r$ for the $r$-th canonical basis vector in $\R^{D_{\mathrm{inv}}}$,
\begin{equation}\label{eq:bias-vector-trace}
b\ =\ G^\top\Sigma^{-1}\,\E\!\Big[\tfrac12\,\mathcal H[\zeta,\zeta]\Big]
\ =\ \frac12\sum_{r=1}^{D_{\mathrm{inv}}}\big(G^\top\Sigma^{-1}e_r\big)\ \mathrm{tr}\!\big(\mathcal H_r\,I_Q^{-1}\big),
\qquad \zeta\sim\mathcal N\!\big(0,I_Q^{-1}\big).
\end{equation}

\item \textbf{Bias correction.}
Let $\widehat G,\,\widehat\Sigma,\,\widehat{\mathcal H}_r$ be consistent for $G,\Sigma,\mathcal H_r$.
Define $\widehat I_Q:=\widehat G^\top\widehat\Sigma^{-1}\widehat G$ and
\[
\widehat b_n\ :=\ \frac12\sum_{r=1}^{D_{\mathrm{inv}}}\big(\widehat G^\top\widehat\Sigma^{-1}e_r\big)\ \mathrm{tr}\!\big(\widehat{\mathcal H}_r\,\widehat I_Q^{-1}\big),
\qquad
\widehat\xi_n^{\,\mathrm{bc}}\ :=\ \widehat\xi_n-\frac{1}{n}\,\widehat I_Q^{-1}\widehat b_n.
\]
Then $\sqrt{n}\big(\widehat\xi_n^{\,\mathrm{bc}}-\xi^\star\big)\Rightarrow \mathcal N\!\big(0,I_Q^{-1}\big)$ and
$\E[\widehat\xi_n^{\,\mathrm{bc}}-\xi^\star]=o(n^{-1})$.
\end{enumerate}
\end{proposition}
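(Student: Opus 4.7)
The plan is to perform a second-order stochastic expansion of the GMM first-order condition $D\Psi(\widehat\xi_n)^\top W_n(\widehat\Psi_n-\Psi(\widehat\xi_n))=0$ about $\xi^\star$, isolate the $O_p(n^{-1})$ correction to the LAN linearization in (i), and take expectations to read off the bias. Throughout, write $\delta:=\widehat\xi_n-\xi^\star$ and $U_n:=\widehat\Psi_n-\Psi(\xi^\star)$ (so $\E U_n=0$ and $\Cov U_n=\Sigma/n$), and set $P:=G\,I_Q^{-1}G^\top\Sigma^{-1}$. A short check gives $P^2=P$ and $G^\top\Sigma^{-1}(I-P)=0$, i.e., $P$ is the $\Sigma^{-1}$-orthogonal projector onto $\operatorname{Im}(G)$; this identity is the algebraic workhorse of what follows. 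Part~(i) itself requires no new work: it is exactly the LAN linearization already supplied by \Cref{thm:lan} under the $C^1$ hypotheses, and the new $C^2$ assumption is reserved for (ii)--(iii).

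\textbf{Part (ii): quadratic expansion.} I would Taylor-expand both factors of the FOC,
\[
\Psi(\xi^\star+\delta)-\Psi(\xi^\star)=G\delta+\tfrac12\,\mathcal H[\delta,\delta]+O(\|\delta\|^3),\qquad D\Psi(\xi^\star+\delta)=G+B(\delta)+O(\|\delta\|^2),
\]
with $B(\delta)_{r,a}:=(\mathcal H_r\delta)_a$, and replace $W_n$ by $\Sigma^{-1}$ (the perturbation $W_n-\Sigma^{-1}=O_p(n^{-1/2})$ multiplies an $O_p(n^{-1/2})$ factor and so contributes only to the $o_p(n^{-1})$ remainder). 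Writing $\delta=\delta_1+\delta_2$ with $\delta_1:=I_Q^{-1}G^\top\Sigma^{-1}U_n=O_p(n^{-1/2})$, the $O_p(n^{-1})$ residual obeys
\[
I_Q\,\delta_2 \ =\ -\tfrac12\,G^\top\Sigma^{-1}\mathcal H[\delta_1,\delta_1]\ +\ B(\delta_1)^\top\Sigma^{-1}(U_n-G\delta_1)\ +\ o_p(n^{-1}).
\]
The essential cancellation is $U_n-G\delta_1=(I-P)U_n$ combined with $\E[\delta_1 U_n^\top(I-P)^\top]=n^{-1}I_Q^{-1}G^\top(I-P)^\top=0$, which kills the cross-term in expectation: $\E[B(\delta_1)^\top\Sigma^{-1}(U_n-G\delta_1)]=0$. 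Only the quadratic-form term survives, and using $\E[\delta_1\delta_1^\top]=n^{-1}I_Q^{-1}$ together with the trace identity $\E[\mathcal H[\delta_1,\delta_1]]_r=n^{-1}\mathrm{tr}(\mathcal H_r\,I_Q^{-1})$ collapses it to the stated form \eqref{eq:bias-vector-trace}.

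\textbf{Part (iii) and main obstacle.} Part~(iii) is then a short plug-in argument: $\widehat G,\widehat\Sigma,\widehat{\mathcal H}_r$ are consistent, so $\widehat b_n\to b$ and $\widehat I_Q^{-1}\to I_Q^{-1}$ in probability, and the correction $n^{-1}\widehat I_Q^{-1}\widehat b_n=O_p(n^{-1})=o_p(n^{-1/2})$ cancels the $O(n^{-1})$ bias while leaving the $\mathcal N(0,I_Q^{-1})$ limit from (i) undisturbed. The main obstacle is the passage from in-probability expansions (Step 2) to genuine \emph{expectation} identities of order $n^{-1}$: one must upgrade $\delta=\delta_1+\delta_2+o_p(n^{-1})$ to $\E\delta=\E\delta_1+\E\delta_2+o(n^{-1})$. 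I would combine the sub-exponential concentration of $\widehat\Psi_n$ (\Cref{lem:Phi-conc}) with the local stability bound (\Cref{thm:stability}) and the nonasymptotic quadratic bound (\Cref{thm:nonasymp}) to localize $\widehat\xi_n$ in a shrinking neighborhood where the Taylor remainder is polynomially controlled, then dominate the residual by an integrable envelope by truncating on the exceptional tail event $\{\|\widehat\Psi_n-\Psi(\xi^\star)\|>n^{-1/2}\log n\}$, whose probability decays faster than any polynomial. This uniform-integrability upgrade is the standard technical heart of second-order GMM bias theory and is where essentially all of the care is concentrated.
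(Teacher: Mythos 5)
Your approach is essentially the same Taylor expansion of the GMM first-order condition that the paper uses, but you execute it more carefully in two ways that are genuine improvements over the paper's own proof. First, in the overidentified case the Jacobian-perturbation cross-term $B(\delta_1)^\top\Sigma^{-1}(U_n-G\delta_1) = B(\delta_1)^\top\Sigma^{-1}(I-P)U_n$ is $O_p(n^{-1})$, not $o_p(n^{-1})$; the paper absorbs it into its $o_p(n^{-1})$ remainder without comment, which is wrong as a stochastic order. Your observation that $G^\top(I-P)^\top = 0$ (the $\Sigma^{-1}$-orthogonality of the range of $G$ to the overidentifying residual) kills this term in expectation is the correct argument and rescues the final formula. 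Second, you correctly flag that passing from a stochastic expansion $\delta = \delta_1 + \delta_2 + o_p(n^{-1})$ to an expectation identity $\E\delta = \E\delta_1 + \E\delta_2 + o(n^{-1})$ requires a uniform-integrability step, and you sketch how to close it via \Cref{lem:Phi-conc}, \Cref{thm:stability}, \Cref{thm:nonasymp}, and truncation on a fast-decaying tail event; the paper's "take expectations" skips this entirely. One small note: since the Proposition stipulates $W=\Sigma^{-1}$ exactly (oracle weight), your discussion of $W_n-\Sigma^{-1}=O_p(n^{-1/2})$ is superfluous for this statement as posed, and in the feasible-$W$ variant that replacement would itself need a zero-expectation argument rather than the order-of-magnitude dismissal you give.
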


\subsection{Singular strata and boundary cases}\label{subsec:singular}

\paragraph{(A) Vanishing weight $w_k\to0$.}
As $w_k^\star\downarrow 0$, the block $w_k^\star J_k$ shrinks and $I_Q$ ill-conditions; reparametrize by $(\sqrt{w_k}\,u_k,w_k)$ or drop components below a detection threshold $\tau_n$.

\paragraph{(B) Collision on invariants $v_i=v_j$.}
Weight columns lose rank. Either (i) merge the pair into an aggregate submodel (identifiable), or (ii) use Tikhonov-regularized Newton steps $(G^\top \widehat W G+\lambda I)\delta=G^\top \widehat W m_n(\xi)$ and infer only on directions with large singular values.

\paragraph{(C) Stabilizer jumps.}
When isotropy increases, $\ker \Jinv(\theta_k)$ expands. Work stratum-wise: adapt $V_k$ to the larger stabilizer; project onto the identifiable subspace and apply LAN there.

\subsection{Diagnostics and reporting}\label{subsec:diagnostics}

Always report the estimated quotient Fisher
$\widehat I_Q:=G^\top \widehat W G$, its condition number, $\sigma_{\min}(\widehat I_Q)$, per-block leverage on $\widehat I_Q^{-1}$, and the (HAC) $J$-test with finite-sample ridge $\,\widehat W_\lambda=(\widehat\Sigma+\lambda I)^{-1}$, $\lambda\downarrow0$.
Use bootstrap or HAC-$J$ near singular regimes and provide sensitivity to the moment set $\mathcal S$.

\subsection*{Auxiliary lemmas for Section~\ref{sec:lan}}\label{subsec:aux-lemmas}

\begin{lemma}[Slice--quotient distance equivalence]\label{lem:slice-quotient}
Let a compact Lie (or finite) group $G$ act linearly and isometrically on $\Theta\subset\R^p$.
Fix $\theta_\star\in\Theta$, let $V$ be a linear complement of $\Tgauge(\theta_\star)$, and define the local slice $\mathcal S:=\theta_\star+V$.
Then there exist $r,c_1,c_2>0$ such that for every $\theta\in\mathcal S\cap B(\theta_\star,r)$,
\[
c_1\,\|\theta-\theta_\star\|\ \le\ d_{\Theta/G}([\theta],[\theta_\star])\ :=\ \inf_{g\in G}\|\theta-g\!\cdot\!\theta_\star\|\ \le\ c_2\,\|\theta-\theta_\star\|.
\]
\end{lemma}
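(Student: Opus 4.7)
The upper bound $d_{\Theta/G}([\theta],[\theta_\star])\le\|\theta-\theta_\star\|$ is immediate by taking $g=e$ in the defining infimum, so the content is the matching lower bound. My plan is to partition the candidate group elements into a \emph{far} regime, where the triangle inequality alone suffices, and a \emph{near-stabilizer} regime, where transversality of the slice $V$ against $\Tgauge(\theta_\star)$ takes over.

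For the far regime, I would set $G_{\theta_\star}:=\{g\in G:g\cdot\theta_\star=\theta_\star\}$, which is closed hence compact, and use continuity of $g\mapsto g\cdot\theta_\star$ together with compactness of $G$ to produce $\delta_0>0$ and an open neighborhood $N\supset G_{\theta_\star}$ with $\|g\cdot\theta_\star-\theta_\star\|\ge\delta_0$ for all $g\in G\setminus N$. Picking $r\le\delta_0/3$ and $\theta\in\mathcal S\cap B(\theta_\star,r)$, the reverse triangle inequality gives $\|\theta-g\cdot\theta_\star\|\ge\delta_0-r\ge 2\|\theta-\theta_\star\|$ for any such $g$, so these group elements cannot control the infimum.

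For the near-stabilizer regime, I would invoke the slice theorem to decompose $g=\exp(\xi)\,s$ with $s\in G_{\theta_\star}$ and $\xi$ in a fixed linear complement of the stabilizer Lie algebra $\mathfrak g_{\theta_\star}$ inside $\mathfrak g$ (in the finite case this collapses to $g\in G_{\theta_\star}$, giving $g\cdot\theta_\star=\theta_\star$ and $\|\theta-g\cdot\theta_\star\|=\|\theta-\theta_\star\|$ outright). Since $s\cdot\theta_\star=\theta_\star$, Taylor expansion of the orbit map yields
\[
g\cdot\theta_\star=\theta_\star+v_\xi+O(\|\xi\|^2),\qquad v_\xi:=d\phi_{\theta_\star}[\xi]\in\Tgauge(\theta_\star),
\]
and the restriction of $d\phi_{\theta_\star}$ to the chosen complement is injective, so $\|v_\xi\|\ge c'\|\xi\|$ for some $c'>0$. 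Writing $\theta=\theta_\star+u$ with $u\in V$, the residual decomposes as $\theta-g\cdot\theta_\star=(u-v_\xi)+O(\|\xi\|^2)$; because $V\oplus\Tgauge(\theta_\star)=\R^p$, the oblique projector $P_V$ onto $V$ along $\Tgauge(\theta_\star)$ is bounded and annihilates $v_\xi$, giving $\|u-v_\xi\|\ge\|P_V\|^{-1}\|u\|$. Bounding $\|\xi\|$ by a multiple of $\|u-v_\xi\|$ through the triangle inequality and $\|v_\xi\|\ge c'\|\xi\|$ absorbs the quadratic remainder once $r$ is small enough, producing $\|\theta-g\cdot\theta_\star\|\ge \tfrac12\|P_V\|^{-1}\|\theta-\theta_\star\|$ uniformly across this regime; together with the far-regime bound, this yields the claim with $c_1=\tfrac12\|P_V\|^{-1}$ and $c_2=1$.

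The main technical obstacle is the Lie-group case of the last step: ensuring the exponential-slice decomposition $g=\exp(\xi)s$ is available uniformly on a neighborhood of $G_{\theta_\star}$, with a uniform lower constant $c'$ for $d\phi_{\theta_\star}$ on the complement and a uniform bound on the $O(\|\xi\|^2)$ remainder of the orbit map. This is standard (it follows from the compact-group tube theorem and smoothness of the action on the compact set $G$), but it is the only place where finite-versus-Lie distinctions matter; in the finite-group case the argument degenerates and one gets $c_1=c_2=1$ directly whenever $r<\tfrac12\min_{g\notin G_{\theta_\star}}\|g\cdot\theta_\star-\theta_\star\|$.
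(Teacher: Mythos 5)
The paper does not actually supply a proof of Lemma~\ref{lem:slice-quotient}: it appears in the list of auxiliary lemmas for Section~\ref{sec:lan} and is invoked (e.g.\ in the proof of Proposition~\ref{prop:holder}) as a standard fact, but no argument is given in the ``Expanded proofs'' subsection. Your proof is therefore filling a genuine gap, and the approach you take --- split $G$ into a far regime handled by the reverse triangle inequality and a near-stabilizer regime handled by transversality of $V$ against $\Tgauge(\theta_\star)$ --- is the standard slice/tube-theorem argument, and it is correct. A few details deserve a word of care. First, in the far regime the order of choices matters: you must first fix the tubular neighborhood $N$ of the stabilizer $G_{\theta_\star}$ small enough that the near-regime absorption works, then take $\delta_0:=\min_{g\in G\setminus N}\|g\cdot\theta_\star-\theta_\star\|>0$ (positive since $G\setminus N$ is compact and disjoint from the stabilizer), and only then shrink $r\le\delta_0/3$; stating this explicitly avoids a circularity. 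Second, in the near regime the absorption of the $O(\|\xi\|^2)$ remainder needs both $r$ \emph{and} $N$ small: the chain $\|\xi\|\le\|v_\xi\|/c'\le(1+\|P_V\|)/c'\cdot\|u-v_\xi\|$ together with $\|u-v_\xi\|\le r+O(\operatorname{diam}N)$ gives $\|\theta-g\cdot\theta_\star\|\ge\|u-v_\xi\|\bigl(1-C\|u-v_\xi\|\bigr)\ge\tfrac12\|P_V\|^{-1}\|u\|$ once the parenthesis exceeds $\tfrac12$, which is why $N$ must be shrunk first. Third, you rely on injectivity of $d\phi_{\theta_\star}$ restricted to the complement $\mathfrak m$ of the isotropy algebra, which is exactly the statement $\ker d\phi_{\theta_\star}=\mathfrak g_{\theta_\star}$; it is standard for compact Lie actions but worth citing or recalling. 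Finally, your observation that the finite case degenerates (where under the tangent-space convention of \S\ref{subsec:holder}, $\Tgauge(\theta_\star)=\{0\}$ and $V=\R^p$, so the near regime is literally $g\in G_{\theta_\star}$ and yields $c_1=c_2=1$) is correct; just note that it uses the $T_{\theta_\star}(G\cdot\theta_\star)$ convention of that subsection rather than the secant-span convention used elsewhere in the paper.
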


\begin{lemma}[Dominated differentiation for invariant moments]\label{lem:dom-diff}
Suppose for some $m\le m_*$ there is an envelope $G_m\in L^1(P_\theta)$ on a neighborhood $\mathcal N$ of $\theta$ such that
$\|\varphi(x)\|^m\le G_m(x)$ and $\sup_{\vartheta\in\mathcal N}\|\partial_\vartheta f(x;\vartheta)\|\le G_m(x)$ for $\mu$-a.e.\ $x$. Then
$\theta\mapsto \Mraw{m}(\theta)=\int \varphi(x)^{\otimes m} f(x;\theta)\,d\mu(x)$ is Gateaux differentiable with
\[
D\Mraw{m}(\theta)[v]=\E_\theta\!\big[\langle s_\theta(X),v\rangle\,\varphi(X)^{\otimes m}\big].
\]
If in addition $\Reyn_m$ is bounded linear (true for finite or compact $G$ by Haar averaging), then
$D\Minv{m}(\theta)[v]=\Reyn_m(D\Mraw{m}(\theta)[v])$.
\end{lemma}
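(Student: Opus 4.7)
The plan is to apply the classical dominated--convergence / Leibniz rule for differentiation under the integral sign, and then lift the result across the bounded linear Reynolds projector $\Reyn_m$. I would fix $\theta\in\Theta$ and a direction $v\in\R^p$, and for $|t|$ small enough that $\theta+sv\in\mathcal N$ for all $s\in[0,t]$ write the finite-difference tensor
\[
\frac{\Mraw{m}(\theta+tv)-\Mraw{m}(\theta)}{t}
=\int_{\mathcal X}\varphi(x)^{\otimes m}\,\frac{f(x;\theta+tv)-f(x;\theta)}{t}\,d\mu(x).
\]
The goal is to pass $t\to0$ inside the integral.

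For the pointwise limit, I would apply the one-dimensional mean value theorem to $s\mapsto f(x;\theta+sv)$, producing $\xi=\xi(x,t)\in[0,t]$ with difference quotient equal to $\langle\nabla_\vartheta f(x;\theta+\xi v),v\rangle$. As $t\to0$ this tends to $\langle\nabla_\vartheta f(x;\theta),v\rangle=\langle s_\theta(x),v\rangle f(x;\theta)$ by the assumed $C^1$ regularity (Assumption (A4)). For the dominating majorant, the envelope hypothesis bounds the integrand pointwise by $\|v\|\,\|\varphi(x)\|^{m}\,\sup_{\vartheta\in\mathcal N}\|\partial_\vartheta f(x;\vartheta)\|\le\|v\|\,G_m(x)^2$, and the $\mu$-integrability of this majorant is what makes DCT go through. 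Dominated convergence then yields
\[
D\Mraw{m}(\theta)[v]=\int\varphi(x)^{\otimes m}\langle s_\theta(x),v\rangle f(x;\theta)\,d\mu(x)
=\E_\theta\!\big[\langle s_\theta(X),v\rangle\,\varphi(X)^{\otimes m}\big],
\]
which is the first claim.

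For the invariant piece, I would use that $\Reyn_m$ is a finite (or Haar) average of isometries on $\Sym^m(\R^d)$, hence a bounded linear (contractive) operator, and in particular continuous. Applying $\Reyn_m$ to both sides of the finite-difference identity above and interchanging $\Reyn_m$ with the limit (justified by continuity and the fact that the Gateaux limit lives in a finite-dimensional space), I obtain
\[
D\Minv{m}(\theta)[v]=\Reyn_m\!\big(D\Mraw{m}(\theta)[v]\big),
\]
which is the second claim.

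The main obstacle I expect is the correct bookkeeping of the dominating function: the stated envelope gives $G_m\in L^1(P_\theta)$, whereas DCT needs an $L^1(\mu)$ majorant for the product $\|\varphi(x)\|^m\cdot\sup_{\vartheta\in\mathcal N}\|\partial_\vartheta f(x;\vartheta)\|$. The cleanest reduction is to factor $\partial_\vartheta f=s_\vartheta f$, which rewrites the $\mu$-integral as $\int\|\varphi\|^m\,\|s_\vartheta\|\,dP_\vartheta$, a standard GMM-style uniform envelope in $P_\vartheta$-measure; this is the natural reading of the hypothesis and is implied by Assumptions (A2) and (A4) together with continuity of the score on $\mathcal N$. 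Everything else (the MVT, DCT, commutation with $\Reyn_m$) is mechanical.
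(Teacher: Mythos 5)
Your proof is correct and takes essentially the approach the paper implicitly relies on (the paper gives no explicit proof for this auxiliary lemma; the parallel proof of Lemma~\ref{lem:score} in Section~5 simply invokes dominated differentiation, factors $\partial_\vartheta f = f\,\langle s_\vartheta,\cdot\rangle$, and uses bounded linearity of $\Reyn_m$ — exactly your three steps). You also correctly flag the only delicate point: the hypothesis as printed gives $G_m\in L^1(P_\theta)$, whereas DCT wants a $\mu$-integrable majorant for $\|\varphi(x)\|^m\sup_{\vartheta\in\mathcal N}\|\partial_\vartheta f(x;\vartheta)\|$, and $G_m\in L^1(P_\theta)$ does not by itself yield $G_m^2\in L^1(\mu)$; your repair (factor out $f$ and read the envelope uniformly in $P_\vartheta$-measure over $\mathcal N$, which is the natural GMM-style intent of the assumption) is sound. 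One microscopic tidying point: after the mean value theorem the intermediate point $\xi(x,t)$ depends on $x$, so the factorization $\partial_\vartheta f(x;\theta+\xi v)=f(x;\theta+\xi v)\,s_{\theta+\xi v}(x)$ introduces a density at a point varying with $x$; to keep the domination uniform you should bound by $\sup_{\vartheta\in\mathcal N}\big(\|s_\vartheta(x)\|\,f(x;\vartheta)\big)$ and require that envelope in $L^1(\mu)$, rather than evaluating at a single $\vartheta$. With that understanding everything you wrote goes through.
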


\begin{lemma}[Spectral perturbation and inverse stability]\label{lem:spec-perturb}
Let $A,B$ be symmetric positive definite with $\|A-B\|\le \varepsilon$. If $\lambda_{\min}(B)\ge\lambda>0$ and $\varepsilon\le \lambda/2$, then
$\lambda_{\min}(A)\ge \lambda/2$ and
$\|A^{-1}-B^{-1}\|\le \tfrac{2}{\lambda^2}\,\varepsilon$.
\end{lemma}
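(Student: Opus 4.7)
The plan is to decouple the two conclusions and handle each with a classical one-line argument: Weyl's eigenvalue inequality for the spectral bound, and the resolvent identity $A^{-1}-B^{-1}=A^{-1}(B-A)B^{-1}$ for the operator-norm bound on the difference of inverses. The two steps must be executed in this order because the second needs invertibility of $A$, which is exactly what the first delivers quantitatively.

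First I would invoke Weyl's inequality for real symmetric matrices in the form $|\lambda_{\min}(A)-\lambda_{\min}(B)|\le \|A-B\|\le \varepsilon$, where $\|\cdot\|$ is the operator norm. Combined with the hypotheses $\lambda_{\min}(B)\ge\lambda$ and $\varepsilon\le\lambda/2$, this gives $\lambda_{\min}(A)\ge\lambda-\varepsilon\ge\lambda/2>0$, which proves the first claim and, since $A$ is symmetric, also certifies that $A$ is invertible with $\|A^{-1}\|=1/\lambda_{\min}(A)\le 2/\lambda$.

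Next I would prove the inverse bound. With both $A^{-1}$ and $B^{-1}$ well defined, the algebraic identity $A^{-1}-B^{-1}=A^{-1}(B-A)B^{-1}$ holds; taking operator norms and applying submultiplicativity yields $\|A^{-1}-B^{-1}\|\le \|A^{-1}\|\,\|A-B\|\,\|B^{-1}\|$. Plugging in $\|A^{-1}\|\le 2/\lambda$ from Step 1, $\|B^{-1}\|=1/\lambda_{\min}(B)\le 1/\lambda$ from the hypothesis, and $\|A-B\|\le\varepsilon$, I obtain $\|A^{-1}-B^{-1}\|\le (2/\lambda)(1/\lambda)\varepsilon=2\varepsilon/\lambda^2$, as claimed.

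There is essentially no obstacle: this is a standard perturbation fact for symmetric positive definite operators. The only subtlety worth flagging is the sequencing issue, namely that one must establish $\lambda_{\min}(A)\ge\lambda/2$ \emph{before} manipulating $A^{-1}$; the condition $\varepsilon\le\lambda/2$ is precisely what makes Weyl's inequality yield this guarantee and also keeps the constant in the inverse bound clean. The same argument extends verbatim to the complex Hermitian setting if needed, since Weyl's inequality and the resolvent identity are available there too.
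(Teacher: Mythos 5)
Your argument is correct: Weyl's inequality gives $\lambda_{\min}(A)\ge\lambda_{\min}(B)-\|A-B\|\ge\lambda-\varepsilon\ge\lambda/2$, and the resolvent identity $A^{-1}-B^{-1}=A^{-1}(B-A)B^{-1}$ with $\|A^{-1}\|\le 2/\lambda$, $\|B^{-1}\|\le 1/\lambda$ yields the stated $2\varepsilon/\lambda^2$ bound. The paper lists this as an auxiliary lemma without a written proof, so there is nothing to compare against; your Weyl-plus-resolvent argument is the canonical one, your ordering of the two steps is the right discipline, and the constants check out exactly.
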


\begin{lemma}[Uniform argmin for quadratic M-estimators]\label{lem:argmin}
Let $Q_n(\xi)=m_n(\xi)^\top W_n m_n(\xi)$ with $m_n(\xi)=\widehat\Psi_n-\Psi(\xi)$, $W_n\to_p W\succ0$, $\widehat\Psi_n\to_p \Psi(\xi^\star)$, and suppose $\Psi$ is injective on a neighborhood $\mathcal N$ with continuous derivative. Then any measurable $\widehat\xi_n\in\arg\min_{\xi\in\mathcal N} Q_n(\xi)$ satisfies $\widehat\xi_n\to_p \xi^\star$.
\end{lemma}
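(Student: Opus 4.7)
My plan is to apply the classical Newey--McFadden argmin-consistency template by verifying two ingredients on a compact subneighborhood of $\xi^\star$: (i) uniform-in-$\xi$ convergence of $Q_n$ to the population objective
\[
Q(\xi)\ :=\ (\Psi(\xi^\star)-\Psi(\xi))^\top W\,(\Psi(\xi^\star)-\Psi(\xi)),
\]
and (ii) that $\xi^\star$ is a well-separated minimizer of $Q$. Since the claim is local, I first shrink $\mathcal N$ to a closed ball $\overline B\subset\mathcal N$ centered at $\xi^\star$; this compactness is what will drive the separation step. On $\overline B$ the minimum of $Q_n$ is attained by continuity.

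For (i), I decompose $m_n(\xi)=\Delta_n+r(\xi)$ with $\Delta_n:=\widehat\Psi_n-\Psi(\xi^\star)=o_p(1)$ and $r(\xi):=\Psi(\xi^\star)-\Psi(\xi)$, and expand
\[
Q_n(\xi)-Q(\xi)\ =\ \Delta_n^\top W_n\Delta_n\ +\ 2\,\Delta_n^\top W_n\, r(\xi)\ +\ r(\xi)^\top(W_n-W)\,r(\xi).
\]
Continuity of $\Psi$ on $\overline B$ bounds $\sup_{\xi\in\overline B}\|r(\xi)\|$; combined with $\Delta_n=o_p(1)$ and $W_n-W=o_p(1)$, each of the three summands is $o_p(1)$ uniformly in $\xi$, giving $\sup_{\xi\in\overline B}|Q_n(\xi)-Q(\xi)|=o_p(1)$. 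For (ii), positive definiteness of $W$ yields $Q(\xi)\ge\lambda_{\min}(W)\,\|r(\xi)\|^2\ge 0$, so $Q(\xi)=0$ forces $\Psi(\xi)=\Psi(\xi^\star)$; injectivity on $\mathcal N$ then gives $\xi=\xi^\star$. Continuity of $Q$ on the compact $\overline B$ upgrades uniqueness to well-separation: $\eta(\varepsilon):=\inf\{Q(\xi):\xi\in\overline B,\ \|\xi-\xi^\star\|\ge\varepsilon\}>0$ for every $\varepsilon>0$.

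Combining the two ingredients yields the conclusion by the standard squeeze. On the event $\{\sup_{\overline B}|Q_n-Q|<\eta(\varepsilon)/3\}$, whose probability tends to one, $Q_n(\xi^\star)<\eta(\varepsilon)/3$ while every $\xi\in\overline B$ with $\|\xi-\xi^\star\|\ge\varepsilon$ obeys $Q_n(\xi)>2\eta(\varepsilon)/3$; hence any measurable minimizer must lie within $\varepsilon$ of $\xi^\star$, proving $\widehat\xi_n\to_p\xi^\star$. The main obstacle worth flagging is that injectivity of $\Psi$ on the open neighborhood $\mathcal N$ does \emph{not} by itself produce a uniform separation modulus $\eta(\cdot)$; passing to a closed ball $\overline B\subset\mathcal N$ is essential, since along sequences escaping to $\partial\mathcal N$ one could have $Q(\xi_k)\to 0$ even without $\xi_k\to\xi^\star$. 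The continuous-derivative hypothesis enters only through the continuity of $\Psi$ (and thus of $r$ and $Q$) on $\overline B$; no quantitative injectivity or rate bound is needed for mere consistency, and $W\succ0$ (rather than $W\succeq0$) is used precisely to convert $r(\xi)\neq 0$ into $Q(\xi)>0$.
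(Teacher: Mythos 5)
The paper states Lemma~\ref{lem:argmin} without a proof (it sits among the unproved auxiliary lemmas for Section~\ref{sec:lan}), so there is no in-paper argument to compare against; you are effectively supplying one. Your proof is the standard Newey--McFadden argmin-consistency argument, and the execution is correct: the three-term decomposition of $Q_n-Q$, the use of $\sup_{\overline B}\|r(\xi)\|<\infty$ to make each term uniformly $o_p(1)$, the observation that $W\succ0$ plus injectivity turn $\Psi(\xi)\neq\Psi(\xi^\star)$ into $Q(\xi)>0$, the compactness argument yielding $\eta(\varepsilon)>0$, and the final squeeze are all right. Your remark that injectivity on an \emph{open} $\mathcal N$ alone does not produce a uniform separation modulus is the correct instinct, and exactly the reason the paper's statement is loose.

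One small mismatch remains between what you prove and what the lemma literally asserts. Your squeeze shows $Q_n(\xi^\star)<\eta/3$ and $Q_n(\xi)>2\eta/3$ for $\xi\in\overline B$ with $\|\xi-\xi^\star\|\ge\varepsilon$, which bounds minimizers \emph{of $Q_n$ over $\overline B$}. The lemma, however, posits $\widehat\xi_n\in\arg\min_{\xi\in\mathcal N}Q_n(\xi)$ with $\mathcal N\supset\overline B$ possibly strict; the argument says nothing about $Q_n$ on $\mathcal N\setminus\overline B$, so in principle the minimizer could escape $\overline B$. The clean resolution is to read $\mathcal N$ as a compact neighborhood from the outset --- consistent with how \Cref{thm:lan} invokes the lemma (``any measurable minimizer of $Q_n$ in a small product neighborhood'') --- so that $\overline B=\mathcal N$ and your argument closes. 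Failing that, you would need uniform convergence on all of $\mathcal N$ (hence boundedness of $\Psi$ there) together with injectivity and properness extending to $\overline{\mathcal N}$, or a separate step showing $Q_n>Q_n(\xi^\star)$ near $\partial\mathcal N$ with high probability. Making the compact-$\mathcal N$ reading explicit would remove the ambiguity; otherwise your proof is exactly the intended one.
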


\begin{lemma}[Measurable alignment onto slices]\label{lem:alignment}
Fix slices $V_k$ at $\theta_k$ and neighborhoods $U_k$ such that each orbit segment $G\!\cdot\! U_k$ intersects $\theta_k+V_k$ at exactly one point. Then there exist measurable maps
$\mathsf A_k:\Theta\to G$ and $\mathsf P:\Theta^K\to S_K$,
defined on a neighborhood of $(\theta_1,\dots,\theta_K)$, such that for any $(\widehat\theta_1,\dots,\widehat\theta_K)$ sufficiently close to $(\theta_1,\dots,\theta_K)$, the permutation $\sigma=\mathsf P(\widehat\theta_{1:K})$ and group elements $h_j=\mathsf A_j(\widehat\theta_{\sigma(j)})$ satisfy $h_j\!\cdot\!\widehat\theta_{\sigma(j)}\in \theta_j+V_j$.
\end{lemma}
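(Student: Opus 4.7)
The plan is to build $\mathsf A_k$ and $\mathsf P$ by two independent selection arguments and then glue them on a small product neighborhood of $(\theta_1,\dots,\theta_K)$.

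First I would construct $\mathsf A_k$. By hypothesis, for every $\widehat\theta\in U_k$ the set $S_k(\widehat\theta):=\{g\in G:\ g\!\cdot\!\widehat\theta\in \theta_k+V_k\}$ is nonempty (the orbit meets the slice) and the image $g\!\cdot\!\widehat\theta$ is the same for all $g\in S_k(\widehat\theta)$ (the intersection is a single point); different $g$'s can exist only if the stabilizer of that point is nontrivial, in which case any choice works. For finite $G$, fix once and for all a total ordering $g_1,\dots,g_{|G|}$ and set
\[
\mathsf A_k(\widehat\theta):=g_{i^\star(\widehat\theta)},\qquad i^\star(\widehat\theta):=\min\{i:\ g_i\!\cdot\!\widehat\theta\in \theta_k+V_k\}.
\]
Measurability is immediate: the sets $\{g_i\!\cdot\!\widehat\theta\in\theta_k+V_k\}$ are closed (preimages of the closed affine slice by a continuous map), and $\mathsf A_k^{-1}(\{g_i\})$ is a finite Boolean combination of closed sets. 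For compact Lie $G$, the same recipe works after replacing the lexicographic minimum by a Kuratowski--Ryll-Nardzewski measurable selector applied to the closed-valued multifunction $\widehat\theta\mapsto S_k(\widehat\theta)$.

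Second I would construct $\mathsf P$. Shrink the $U_k$ so that the orbits $G\!\cdot\!U_1,\dots,G\!\cdot\!U_K$ are pairwise disjoint in $\Theta$; this is possible since $[\theta_1],\dots,[\theta_K]$ are distinct closed orbits in the Hausdorff quotient. For $\widehat\theta_{1:K}$ in the product neighborhood $\prod_k(G\!\cdot\!U_k)$, every $\widehat\theta_j$ lies in exactly one $G\!\cdot\!U_{k(j)}$, so $j\mapsto k(j)$ is a bijection; define $\mathsf P(\widehat\theta_{1:K})$ to be the permutation $\sigma$ such that $\widehat\theta_{\sigma(j)}\in G\!\cdot\!U_j$. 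Measurability follows because the events $\{\widehat\theta_j\in G\!\cdot\!U_{k}\}$ are measurable (finite union of open sets under the continuous action) and $\mathsf P$ is piecewise constant on a measurable partition.

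Gluing is now automatic: on the measurable event $\{\widehat\theta_{1:K}\in\prod_j(G\!\cdot\!U_j)\}$, the definitions force $h_j\!\cdot\!\widehat\theta_{\sigma(j)}\in \theta_j+V_j$ by construction of $\mathsf A_j$ applied to the uniquely determined element $\widehat\theta_{\sigma(j)}\in G\!\cdot\!U_j$. The main technical obstacle is handling isotropy: if the stabilizer of $\theta_k$ is nontrivial then $S_k(\widehat\theta)$ has more than one element, and one must ensure the tie-breaking is consistent enough that downstream statements (e.g.\ the limit covariance in Theorem~\ref{thm:lan}) do not depend on the particular selection. This is resolved by observing that all $g\in S_k(\widehat\theta)$ produce the same slice image $g\!\cdot\!\widehat\theta$, so the projection $\Pi_{V_k}(h_j\!\cdot\!\widehat\theta_{\sigma(j)}-\theta_j)$ is intrinsically defined; the measurable selector is a bookkeeping device, not a source of statistical ambiguity.
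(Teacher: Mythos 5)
The paper states Lemma~\ref{lem:alignment} as an auxiliary fact but never supplies its own proof, so there is nothing to compare against line by line; I therefore assess your argument on its own terms. Your plan is sound and uses the standard tools one would expect: lexicographic selection for finite $G$ (or Kuratowski--Ryll-Nardzewski for compact Lie $G$) to produce $\mathsf A_k$, disjointness of the orbit neighborhoods $G\!\cdot\! U_1,\dots,G\!\cdot\! U_K$ to produce $\mathsf P$, and the observation that isotropy ambiguity is harmless because every $g\in S_k(\widehat\theta)$ maps to the same slice point. The reasoning that the sets $\{g_i\!\cdot\!\widehat\theta\in\theta_k+V_k\}$ are closed and that $\mathsf A_k^{-1}(\{g_i\})$ is a finite Boolean combination of them is correct and establishes Borel measurability cleanly.

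One small but worth-fixing imprecision: you initially define $\mathsf A_k$ on $U_k$ via $S_k(\widehat\theta)$, but in the gluing step you apply $\mathsf A_j$ to $\widehat\theta_{\sigma(j)}$, which your own construction of $\mathsf P$ only places inside $G\!\cdot\! U_j$, not $U_j$. To close this, note that for any $\widehat\theta\in G\!\cdot\! U_k$ we still have $S_k(\widehat\theta)\ne\emptyset$ and a unique slice image (if $\widehat\theta=g_0\!\cdot\!\widehat\theta_0$ with $\widehat\theta_0\in U_k$ then $S_k(\widehat\theta)=S_k(\widehat\theta_0)\,g_0^{-1}$), so the same lexicographic rule defines $\mathsf A_k$ measurably on all of $G\!\cdot\! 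U_k$; extend arbitrarily off this set. With that amendment the gluing is airtight. A second minor caveat: the disjointness argument for $G\!\cdot\! U_k$ tacitly assumes the orbits $[\theta_1],\dots,[\theta_K]$ are pairwise distinct; if the multiset has a repeated orbit, $\mathsf P$ is not canonical, though any consistent tie-break suffices for the downstream alignment in Theorem~\ref{thm:lan}. Neither point changes the essential correctness of your argument.
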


\subsection*{Expanded proofs for Section~\ref{sec:lan}}\label{subsec:expanded-proofs}

\begin{proof}[Proof of Proposition~\ref{prop:holder} (H\"older equivalence)]
Work in the slice chart $\theta=\theta_\star+u$ with $u\in V$. By Lemma~\ref{lem:slice-quotient},
\begin{equation}\label{eq:quot-slice}
c_\mathrm{q,1}\,\|u\|\ \le\ d_{\Theta/G}([\theta_\star+u],[\theta_\star])\ \le\ c_\mathrm{q,2}\,\|u\|
\end{equation}
for $\|u\|$ small. Real-analyticity along $V$ yields
\[
\Phi_{\le m_*}(\theta_\star+u)-\Phi_{\le m_*}(\theta_\star)
=\sum_{j=D}^\infty \frac{1}{j!}\,D^j\Phi_{\le m_*}(\theta_\star)[u^{\otimes j}],
\]
with $D$ the first nonzero order. Let $P_D(u):=\frac1{D!}D^D\Phi_{\le m_*}(\theta_\star)[u^{\otimes D}]$ and remainder $R_{D+1}(u)$; then $\|R_{D+1}(u)\|\le C\|u\|^{D+1}$.

\emph{Upper bound.} $\|\Phi(\theta_\star+u)-\Phi(\theta_\star)\|\le \|P_D(u)\|+C\|u\|^{D+1}\le C'\|u\|$ for $\|u\|\le 1$. Use \eqref{eq:quot-slice}.

\emph{Lower bound.} The homogeneous polynomial $P_D$ is nontrivial, so on the unit sphere of $V$ it has a positive minimum $m_T>0$. Hence $\|P_D(u)\|\ge m_T\|u\|^D$. For $\|u\|$ small, $C\|u\|^{D+1}\le \frac{m_T}{2}\|u\|^D$, giving
\[
\|\Phi(\theta_\star+u)-\Phi(\theta_\star)\|\ge \tfrac{m_T}{2}\|u\|^D.
\]
Convert via \eqref{eq:quot-slice}.
\end{proof}

\begin{proof}[Proof of Lemma~\ref{lem:score-identity}]
Apply Lemma~\ref{lem:dom-diff}:
\[
D\Mraw{m}(\theta)[v]=\int \varphi(x)^{\otimes m}\,\partial_\theta f(x;\theta)[v]\,d\mu(x)
=\E_\theta[\langle s_\theta(X),v\rangle\,\varphi(X)^{\otimes m}],
\]
and then bounded linearity of $\Reyn_m$ gives \eqref{eq:score-identity}.
\end{proof}

\begin{proof}[Proof of Proposition~\ref{prop:gauge}]
Compact $G$: any $v\in\Tgauge(\theta)$ equals $\tfrac{d}{dt}|_{0}\exp(t\xi)\!\cdot\!\theta$. Since $H=\MinvAll$ is $G$-invariant, $H(\exp(t\xi)\!\cdot\!\theta)$ is constant, thus $DH(\theta)[v]=0$. Finite $G$: $T_\theta(G\!\cdot\!\theta)=\{0\}$. With the “secant span” convention, Reynolds averaging shows $DH(\theta)$ annihilates $\mathrm{span}\{g\!\cdot\!\theta-\theta\}$.
\end{proof}

\begin{proof}[Proof of Theorem~\ref{thm:lan} (Quotient LAN)]
\textbf{Consistency.} By full column rank of $G$ and continuity of $D\Psi$, $\Psi$ is locally injective on the slice chart; Lemma~\ref{lem:argmin} yields $\widehat\xi_n\to_p \xi^\star$.

\textbf{Linearization.} For $\xi$ near $\xi^\star$,
\begin{equation}\label{eq:psi-expansion}
\Psi(\xi)=\Psi(\xi^\star)+G(\xi-\xi^\star)+R(\xi),\qquad \|R(\xi)\|=o(\|\xi-\xi^\star\|).
\end{equation}
Write $m_n(\xi)=\widehat\Psi_n-\Psi(\xi)=\big(\widehat\Psi_n-\Psi(\xi^\star)\big)-G(\xi-\xi^\star)-R(\xi)$.

\textbf{FOC and normal equations.} At $\widehat\xi_n$, $0=\nabla Q_n(\widehat\xi_n)=-2 D\Psi(\widehat\xi_n)^\top W_n\,m_n(\widehat\xi_n)$. Substituting $D\Psi(\widehat\xi_n)=G+o_p(1)$, $W_n=W+o_p(1)$ and the expansion for $m_n$,
\[
G^\top W G (\widehat\xi_n-\xi^\star)=G^\top W\,\big(\widehat\Psi_n-\Psi(\xi^\star)\big)\ -\ G^\top W R(\widehat\xi_n)+o_p(\|\widehat\xi_n-\xi^\star\|+\|\widehat\Psi_n-\Psi(\xi^\star)\|).
\]

\textbf{Solve and scale.} Invert $G^\top W G$ and multiply by $\sqrt{n}$:
\[
\sqrt{n}(\widehat\xi_n-\xi^\star)
=(G^\top W G)^{-1}G^\top W\,\sqrt{n}\big(\widehat\Psi_n-\Psi(\xi^\star)\big)+o_p(1).
\]
With $W=\Sigma^{-1}$ and \eqref{eq:Phi-CLT}, the limit is $\mathcal N(0,(G^\top \Sigma^{-1}G)^{-1})$, and the asymptotic linear representation follows with $\Delta_n=G^\top\Sigma^{-1}\sqrt{n}(\widehat\Psi_n-\Psi(\xi^\star))$.

\textbf{Alignment.} By Lemma~\ref{lem:alignment}, choose $\sigma_n,h_{n,j}$ so that $h_{n,j}\!\cdot\!\widehat\theta_{n,\sigma_n(j)}\in\theta_j+V_j$. Decompose into slice and gauge parts; Proposition~\ref{prop:gauge} shows gauge is null to first order. Projection $\Pi_V$ preserves the $n^{1/2}$ limit, yielding the joint limit claimed.
\end{proof}

\begin{proof}[Proof of Theorem~\ref{thm:uniform-lan}]
On the compact $\Xi_{\mathrm{reg}}$, continuity implies
$\underline\lambda:=\inf_{\xi\in\Xi_{\mathrm{reg}}}\sigma_{\min}(G^\top \Sigma^{-1}G)>0$.
Uniform versions of \eqref{eq:psi-expansion} and \eqref{eq:Phi-CLT} hold by assumptions and stochastic equicontinuity of $\widehat\Psi_n$. Apply the proof of Theorem~\ref{thm:lan} with constants bounded uniformly; Lemma~\ref{lem:spec-perturb} guarantees invertibility bounds for $G^\top W G$. Conclude by Cramér–Wold.
\end{proof}

\begin{proof}[Proof of Theorem~\ref{thm:hac-lan}]
Assume strict stationarity, ergodicity, and mixing so that
$\sqrt{n}(\widehat\Psi_n-\Psi(\xi^\star))\Rightarrow \mathcal N(0,\Sigma_{\mathrm{LR}})$.
Repeat the algebra in Theorem~\ref{thm:lan} with $W=\Sigma_{\mathrm{LR}}^{-1}$. For feasible HAC, if $\widehat\Sigma_{\mathrm{LR}}\to_p \Sigma_{\mathrm{LR}}$ (e.g.\ Newey--West with $b_n\to\infty$ and $b_n/n\to 0$), Lemma~\ref{lem:spec-perturb} implies replacing $\Sigma_{\mathrm{LR}}$ by $\widehat\Sigma_{\mathrm{LR}}$ only perturbs $o_p(1)$ terms.
\end{proof}

\begin{proof}[Proof of Corollary~\ref{cor:third-lemma}]
Let $\xi_n=\xi^\star+h/\sqrt{n}$. By \eqref{eq:psi-expansion},
$\sqrt{n}\big(\Psi(\xi_n)-\Psi(\xi^\star)\big)=Gh+o(1)$.
Under $\mathbb P_{\xi^\star}$,
$\sqrt{n}(\widehat\Psi_n-\Psi(\xi^\star))\Rightarrow \mathcal N(0,\Sigma)$, so the log-likelihood ratio for the Gaussian shift experiment induced by $\widehat\Psi_n$ is
\[
\Lambda_n=h^\top \Delta_n - \tfrac12 h^\top I_Q h + o_p(1),\quad
\Delta_n:=G^\top\Sigma^{-1}\sqrt{n}(\widehat\Psi_n-\Psi(\xi^\star)),\ I_Q:=G^\top\Sigma^{-1}G.
\]
Le Cam’s Third Lemma gives $\Delta_n\Rightarrow \mathcal N(I_Q h, I_Q)$ under $\mathbb P_{\xi_n}$. Combine with the linear representation from Theorem~\ref{thm:lan} to obtain $\sqrt{n}(\widehat\xi_n-\xi^\star)\Rightarrow \mathcal N(h,(I_Q)^{-1})$.
\end{proof}

\begin{proof}[Proof (one-step efficient estimator; \S\ref{subsec:one-step})]
Let $\tilde\xi_n$ satisfy $\|\tilde\xi_n-\xi^\star\|=o_p(1)$ and define
\[
\widehat{\IF}_n
:=\big(G(\tilde\xi_n)^\top \widehat W\, G(\tilde\xi_n)\big)^{-1}
\,G(\tilde\xi_n)^\top \widehat W\,\big(\widehat\Psi_n-\Psi(\tilde\xi_n)\big),
\qquad
\xi_n^{\text{1step}}=\tilde\xi_n+\widehat{\IF}_n.
\]
Use $\Psi(\tilde\xi_n)=\Psi(\xi^\star)+G(\tilde\xi_n-\xi^\star)+R(\tilde\xi_n)$, $G(\tilde\xi_n)=G+o_p(1)$, and $\widehat W=W+o_p(1)$ to get
\[
\widehat{\IF}_n=(G^\top W G)^{-1}G^\top W\,(\widehat\Psi_n-\Psi(\xi^\star))\ -\ (\tilde\xi_n-\xi^\star)\ +\ o_p(n^{-1/2}),
\]
hence
$\xi_n^{\text{1step}}-\xi^\star=(G^\top W G)^{-1}G^\top W(\widehat\Psi_n-\Psi(\xi^\star))+o_p(n^{-1/2})$.
Multiply by $\sqrt{n}$ and use \eqref{eq:Phi-CLT} to conclude the efficient limit (replace $\Sigma$ by $\Sigma_{\mathrm{LR}}$ for HAC).
\end{proof}

\begin{proof}[Proof (orthogonal moments with learned scores; \S\ref{subsec:orthogonal})]
Let $\eta$ bundle nuisance elements (e.g.\ $s_\theta$). Define
\[
\phi_\xi(X;\eta)=\psi(X)-\Psi(\xi)-\Pi_{\mathcal T(\eta)}\big(\psi(X)-\Psi(\xi)\big),
\]
where $\mathcal T(\eta)$ is the nuisance tangent space spanned by score-weighted invariant features. Orthogonality:
$\partial_\eta \E[\phi_\xi(X;\eta)]|_{(\xi^\star,\eta^\star)}=0$.
With $K$-fold cross-fitting (train $\widehat\eta^{(-k)}$ off-fold and evaluate on $I_k$),
\[
\overline\phi_{\xi^\star}
=\frac1n\sum_{i=1}^n \phi_{\xi^\star}(X_i;\eta^\star)+o_p(n^{-1/2}).
\]
Solve $G^\top \widehat W\,\overline\phi_{\widehat\xi_n}=0$ and repeat the LAN argument with $\phi$ in place of $\psi$ to get the same efficient limit; orthogonality ensures nuisance error enters at $o_p(n^{-1/2})$.
\end{proof}

\begin{proof}[Proof of Theorem~\ref{thm:nonasymp} (nonasymptotic bound)]
Let $Q_n(\xi)=m_n(\xi)^\top \widehat W m_n(\xi)$ with $m_n(\xi)=\widehat\Psi_n-\Psi(\xi)$ and $\widehat W=\widehat\Sigma^{-1}$. A second-order expansion along $\xi_t=\xi^\star+t(\widehat\xi_n-\xi^\star)$ gives
\[
Q_n(\widehat\xi_n)-Q_n(\xi^\star)
= -2(\widehat\xi_n-\xi^\star)^\top D\Psi(\bar\xi_n)^\top \widehat W\,m_n(\xi^\star)
+ (\widehat\xi_n-\xi^\star)^\top \Big(\int_0^1 D\Psi(\xi_t)^\top \widehat W D\Psi(\xi_t)\,dt\Big) (\widehat\xi_n-\xi^\star),
\]
for some $\bar\xi_n$ on the segment (mean value). Since $\widehat\xi_n$ minimizes $Q_n$, the left side is $\le 0$. Replace $D\Psi(\xi_t)=G+\Delta_t$ with $\|\Delta_t\|\le L\|\xi_t-\xi^\star\|$, and assume
$\|G^\top(\widehat\Sigma^{-1}-\Sigma^{-1})G\|\le \frac12\lambda_{\min}$ with $\lambda_{\min}=\sigma_{\min}(G^\top\Sigma^{-1}G)$; Lemma~\ref{lem:spec-perturb} implies the quadratic form is $\ge \frac12\lambda_{\min}\|\widehat\xi_n-\xi^\star\|^2 - O(\|\widehat\xi_n-\xi^\star\|^3)$. Discard higher order:
\[
\tfrac12\lambda_{\min}\|\widehat\xi_n-\xi^\star\|^2
\ \lesssim\ 2\,\|\widehat\xi_n-\xi^\star\|\,\|G^\top \widehat W\,m_n(\xi^\star)\| + \|R(\widehat\xi_n)\|\,\|\widehat\xi_n-\xi^\star\|.
\]
Divide by $\|\widehat\xi_n-\xi^\star\|$ to obtain the claimed bound. If $\|\widehat\Psi_n-\Psi(\xi^\star)\|=O_p(\sqrt{D_{\mathrm{inv}}/n})$ and $R(\xi)=O(\|\xi-\xi^\star\|^2)$, then $\|\widehat\xi_n-\xi^\star\|=O_p(\sqrt{D_{\mathrm{inv}}/(n\lambda_{\min})})$.
\end{proof}

\begin{proof}[Proof of Proposition~\ref{prop:curvature}]
Let $W=\Sigma^{-1}$ and write $S_n(\xi):=-2 D\Psi(\xi)^\top W\,(\widehat\Psi_n-\Psi(\xi))$. Taylor expand at $\xi^\star$:
\begin{align*}
0=S_n(\widehat\xi_n)
&= -2 G^\top W\,(\widehat\Psi_n-\Psi(\xi^\star)) + 2 G^\top W G (\widehat\xi_n-\xi^\star) \\
&\quad\ - G^\top W \mathcal H[\widehat\xi_n-\xi^\star,\widehat\xi_n-\xi^\star] + o_p(n^{-1}),
\end{align*}
with $\mathcal H=D^2\Psi(\xi^\star)$. Hence
\[
\widehat\xi_n-\xi^\star=(I_Q)^{-1}G^\top W\,(\widehat\Psi_n-\Psi(\xi^\star)) - \tfrac12 (I_Q)^{-1}G^\top W\,\mathcal H[\widehat\xi_n-\xi^\star,\widehat\xi_n-\xi^\star]+o_p(n^{-1}),
\]
$I_Q=G^\top W G$. Take expectations: the first term vanishes; replace $\widehat\xi_n-\xi^\star$ in the quadratic by its linear approximation, whose covariance is $n^{-1}(I_Q)^{-1}$. This yields the $n^{-1}$ bias with $R_2(\xi^\star)=\tfrac12\mathcal H[\zeta,\zeta]$, $\zeta\sim\mathcal N(0,(I_Q)^{-1})$. A plug-in $\widehat b_n$ leads to the bias-corrected $\widehat\xi_n^{\,\mathrm{bc}}$.
\end{proof}

\begin{proof}[Overidentification $J$-test (\S\ref{subsec:hac})]
Let $m_n(\xi)=\widehat\Psi_n-\Psi(\xi)$ and let $\widehat\xi_n$ be efficient so that $G^\top \widehat W\,m_n(\widehat\xi_n)=o_p(n^{-1/2})$. Decompose $m_n(\widehat\xi_n)=\Pi_{\mathcal R(G)} m_n(\widehat\xi_n)+\Pi_{\mathcal R(G)^\perp} m_n(\widehat\xi_n)$. The FOC kills the first component at $n^{-1/2}$ scale. Under correct specification,
\[
\sqrt{n}\,\Pi_{\mathcal R(G)^\perp} m_n(\widehat\xi_n)\Rightarrow \mathcal N\!\Big(0,\ \Pi_{\mathcal R(G)^\perp}\Sigma_{\mathrm{LR}}\,\Pi_{\mathcal R(G)^\perp}\Big).
\]
Therefore
\[
J_n:=n\,m_n(\widehat\xi_n)^\top \widehat W\, m_n(\widehat\xi_n)
= \big\|\widehat\Sigma_{\mathrm{LR}}^{-1/2}\sqrt{n}\,\Pi_{\mathcal R(G)^\perp} m_n(\widehat\xi_n)\big\|^2+o_p(1)\ \Rightarrow\ \chi^2_{\mathrm{df}},
\]
with $\mathrm{df}=D_{\mathrm{inv}}-\sum_k\dim V_k-(K-1)$.
\end{proof}

\begin{remark}[Choosing $m_*$ and checking assumptions]
\Cref{ass:sep} is typically met with small $m_*$ in reflection-type actions (odd-parity moments vanish; even orders separate generic orbits). \Cref{ass:simplicial} can be numerically checked by testing whether any additional invariant image (e.g., from a grid) lies in the affine span of $\{v_k\}$. \Cref{ass:local-rank} reduces to verifying that $J_k$ annihilates only gauge directions and that the $v_k$ are in general affine position; both are generic in smooth models.
\end{remark}

\section{Selecting the number of components $K$}\label{sec:selectK}

This section develops a convex–geometric method to determine the mixture size from the
\emph{invariant embedding}. Let $D_{\mathrm{inv}}$ be the ambient dimension of the stacked
invariant map $\Phi:=\MinvAll:\Theta\to\R^{D_{\mathrm{inv}}}$ and set
\[
\mathcal V:=\Phi(\qspace)\subset\R^{D_{\mathrm{inv}}}.
\]
For a $K$-component mixture with weights $w_k>0$, $\sum_k w_k=1$, and atoms
$v_k:=\Phi(\theta_k)\in\mathcal V$, the population invariant vector is
\[
\Psi_\star=\sum_{k=1}^K w_k v_k\in\conv(\mathcal V).
\]
(See \Cref{thm:mixture}.) Our selector tests whether a data-driven proxy $\widehat\Psi_n$
lies within a root-$n$ radius of the $K$-term mixture set
$\mathcal M_K:=\{\sum_{k=1}^K w_k v_k:\ v_k\in\mathcal V,\ w\in\Delta_K\}$.

\subsection{Geometry of the convex hull: Carath\'eodory, faces, and uniqueness}\label{subsec:carath-face}

\begin{proposition}[Carath\'eodory and uniqueness]\label{prop:cara-uniq}
(a) \emph{(Carath\'eodory)} For every $x\in\conv(\mathcal V)\subset\R^{D_{\mathrm{inv}}}$ there exist $v_1,\dots,v_m\in\mathcal V$ and $\alpha\in\Delta_m$ with $m\le D_{\mathrm{inv}}+1$ such that $x=\sum_{j=1}^m \alpha_j v_j$.

(b) \emph{(Uniqueness in a simplex)} If $V:=\{v_1,\dots,v_K\}$ is affinely independent and $x=\sum_{k=1}^K w_k v_k$ with $w_k>0$, then the coefficients $(w_k)$ are unique and $V$ is unique up to permutation.
\end{proposition}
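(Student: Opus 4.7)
Both parts are classical facts of finite-dimensional convex geometry; I would present them in the order stated, keeping the arguments self-contained and reusing notation from the proposition.

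\textbf{Part (a) -- Carath\'eodory.} The plan is a standard ``descent on support size.'' Given $x\in\conv(\mathcal V)\subset\R^{D_{\mathrm{inv}}}$, pick any representation $x=\sum_{j=1}^m \alpha_j v_j$ with $v_j\in\mathcal V$, $\alpha\in\Delta_m$, and $m$ \emph{minimal}. If $m\le D_{\mathrm{inv}}+1$ we are done. Otherwise $m\ge D_{\mathrm{inv}}+2$ points in $\R^{D_{\mathrm{inv}}}$ are affinely dependent, so there exist scalars $\lambda_1,\dots,\lambda_m$ not all zero with $\sum_j \lambda_j v_j=0$ and $\sum_j \lambda_j=0$. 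For $t\in\R$ one has
\[
x=\sum_{j=1}^m(\alpha_j-t\lambda_j)\,v_j,\qquad \sum_{j=1}^m(\alpha_j-t\lambda_j)=1.
\]
Since $\sum_j\lambda_j=0$, at least one $\lambda_j$ is positive; choose $t^\star:=\min\{\alpha_j/\lambda_j:\lambda_j>0\}$. Then all coefficients remain nonnegative and at least one becomes zero, giving a representation with support of size $\le m-1$, contradicting minimality. This yields $m\le D_{\mathrm{inv}}+1$.

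\textbf{Part (b) -- Uniqueness in a simplex.} First, uniqueness of coefficients: suppose $x=\sum_{k=1}^K w_k v_k=\sum_{k=1}^K w'_k v_k$ with $w,w'\in\Delta_K$ (we drop positivity momentarily). Subtracting, $\sum_{k=1}^K(w_k-w'_k)v_k=0$ with $\sum_k(w_k-w'_k)=0$. Affine independence of $V=\{v_1,\dots,v_K\}$ means that the only such affine combination of zero is the trivial one, forcing $w_k=w'_k$ for all $k$.

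For uniqueness of the vertex set up to permutation, I would argue via extreme points. Since $V$ is affinely independent, $\Delta:=\conv(V)$ is a geometric $K$-simplex, and its extreme points are exactly $v_1,\dots,v_K$ (standard fact: the unique barycentric representation just established shows no $v_k$ is a nontrivial convex combination of the others). Hence, given the point set $\Delta$, the set $V$ is intrinsically determined as $\mathrm{ext}(\Delta)$; any relabeling $v_{\sigma(1)},\dots,v_{\sigma(K)}$ by $\sigma\in S_K$ produces the same simplex and, by the coefficient-uniqueness step applied to the permuted labeling, the corresponding weights $w_{\sigma(k)}$. This is the ``unique up to permutation'' claim. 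The strict positivity $w_k>0$ (i.e.\ $x$ in the relative interior of $\Delta$) ensures that the support of $x$ in the simplex really is all of $V$, so no vertex can be quietly dropped from the representation.

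\textbf{Main obstacle.} There is no deep obstacle here; both statements are textbook. The only point that deserves care is being precise about what ``$V$ unique up to permutation'' means: it is \emph{not} a claim that $x$ cannot be written as a convex combination of some \emph{different} affinely independent set (which is generically false for interior points), but only that the labeled vertex set producing a given barycentric representation is determined up to reordering. Writing this out cleanly via the extreme-point characterization of a simplex avoids any ambiguity and also previews the argument used in \Cref{thm:ident-global}, where the external separation condition $\aff(V)\cap\MinvAll(\qspace)=V$ is what upgrades this intrinsic uniqueness into identifiability of the mixture atoms.
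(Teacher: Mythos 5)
Your proof of part (a) is, modulo phrasing, the same classical Carath\'eodory elimination step the paper uses: both isolate an affine dependence among $\{v_j\}$ and shift weight along it until a coefficient vanishes. You package it as a contradiction to a minimal-support representation, while the paper iterates the elimination constructively until the support size drops to $D_{\mathrm{inv}}+1$; there is no substantive difference.

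For part (b) the coefficient-uniqueness step is identical (affine independence $\Leftrightarrow$ linear independence of the augmented vectors $[v_k;1]$). On vertex-set uniqueness, however, you take a genuinely different and, in fact, more defensible route. You characterize $V$ intrinsically as the extreme points of $\conv(V)$ and interpret ``unique up to permutation'' as a statement about the labeling of that fixed simplex, explicitly noting that interior points of $\R^{D_{\mathrm{inv}}}$ typically \emph{do} admit alternative representations over different affinely independent sets, so the stronger reading would be false without extra hypotheses. The paper instead asserts that if $x$ were in the relative interior of two simplices $\conv\{v_k\}$ and $\conv\{\hat v_k\}$, then ``minimal faces containing $x$ are unique''; but that reasoning only applies when both simplices are \emph{faces of a common polytope}, which is not guaranteed even when all vertices lie in $\mathcal V$ (a sub-simplex of $\conv(\mathcal V)$ need not be a face of it). So your caveat actually isolates a gap in the paper's proof of the vertex-uniqueness clause. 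Your version is cleaner, and your closing remark correctly identifies that it is precisely Assumption~\ref{ass:simplicial} (the condition $\aff(V)\cap\MinvAll(\qspace)=V$) that upgrades this intrinsic uniqueness into uniqueness of the atoms in the mixture problem, as used in Theorem~\ref{thm:ident-global}.
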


\begin{proof}[Proof of (a) \emph{(constructive elimination with termination and feasibility at each step)}]
Fix a representation $x=\sum_{j=1}^m \alpha_j v_j$ with $\alpha\in\Delta_m$ and $m>D_{\mathrm{inv}}+1$. The set $\{(v_j,1)\}_{j=1}^m\subset\R^{D_{\mathrm{inv}}+1}$ is linearly dependent, hence there exists $\beta\in\R^m\setminus\{0\}$ such that
\[
\sum_{j=1}^m \beta_j v_j=0,\qquad \sum_{j=1}^m \beta_j=0.
\]
Define $J_+:=\{j:\beta_j>0\}$ and $J_-:=\{j:\beta_j<0\}$. Since $\beta\neq 0$, $J_+\cup J_-$ is nonempty and both sets cannot be empty because the second equality forces some positive and negative entries. Set
\[
t^\star:=\min_{j\in J_+}\frac{\alpha_j}{\beta_j}>0,\qquad \alpha'_j:=\alpha_j-t^\star \beta_j.
\]
Then $\sum_j \alpha'_j=\sum_j \alpha_j - t^\star \sum_j \beta_j=1$ and
$\sum_j \alpha'_j v_j=\sum_j \alpha_j v_j - t^\star \sum_j \beta_j v_j=x$. For $j\in J_+$, $\alpha'_j\ge 0$ by the definition of $t^\star$, and at least one index in $J_+$ satisfies $\alpha'_j=0$. For $j\in J_-$, $\alpha'_j=\alpha_j+|\beta_j| t^\star\ge 0$. Hence $\alpha'\in\Delta_m$ and the support size strictly decreases. Iterating at most $m-(D_{\mathrm{inv}}+1)$ times yields a convex representation with at most $D_{\mathrm{inv}}+1$ terms. Each step preserves $x$ and feasibility by construction, so the process terminates with the claimed representation.
\end{proof}

\begin{proof}[Proof of (b) \emph{(barycentric injectivity via augmentation)}]
Augment $\tilde v_k:=[v_k;1]\in\R^{D_{\mathrm{inv}}+1}$. Affine independence of $\{v_k\}$ is equivalent to linear independence of $\{\tilde v_k\}$. Suppose
\[
x=\sum_{k=1}^K w_k v_k=\sum_{k=1}^K \tilde w_k v_k,\qquad w,\tilde w\in\Delta_K.
\]
Then $\sum_k (w_k-\tilde w_k)\tilde v_k=0$. Linear independence implies $w_k-\tilde w_k=0$ for all $k$, i.e.\ $w=\tilde w$. If there is another representation using an affinely independent set $\{\hat v_k\}_{k=1}^K$ with positive weights, then $x$ lies in the relative interior of both simplices $\conv\{v_k\}$ and $\conv\{\hat v_k\}$. Minimal faces containing $x$ are unique; in a polytope, a point cannot be in the relative interior of two distinct faces. Hence $\{v_k\}$ and $\{\hat v_k\}$ must coincide up to permutation.
\end{proof}

\begin{proposition}[Minimal support equals face dimension $+1$]\label{prop:face-dim}
Let $F$ be the minimal face of $\conv(\mathcal V)$ containing $x$, and let $d_F:=\dim(\aff F)$. Then the smallest $m$ for which $x$ is a convex combination of $m$ points in $\mathcal V$ equals $d_F+1$. If $F$ is a simplex and $x\in\mathrm{relint}(F)$, then the representing vertices and weights are unique (up to permutation) and strictly positive.
\end{proposition}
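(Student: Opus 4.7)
The plan is to establish both inequalities $m\le d_F+1$ and $m\ge d_F+1$, drawing on (i) the classical identification of the minimal face containing $x$ as the unique face in whose relative interior $x$ lies, and (ii) the elimination and barycentric-injectivity arguments already in \Cref{prop:cara-uniq}. The uniqueness statement in the simplex case will fall out of these once the face-dimension count is pinned down.

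For the upper bound I would first record that $x\in\mathrm{relint}(F)$ (the standard Rockafellar characterization of the minimal face) and that the extreme points of $F$---being extreme points of $\conv(\mathcal V)$ that lie in $F$---belong to $\mathcal V$ by Milman's theorem under the compactness of $\mathcal V=\Phi(\qspace)$. Writing $x$ as a convex combination of such extreme points (Krein--Milman in the finite-dimensional compact setting) and then applying the elimination step of \Cref{prop:cara-uniq}(a) \emph{within} the $d_F$-dimensional affine hull $\aff(F)$ gives a representation with at most $d_F+1$ atoms, so $m\le d_F+1$.

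For the lower bound I take any minimal representation $x=\sum_{j=1}^m\alpha_j v_j$ with $\alpha_j>0$ and $v_j\in\mathcal V$. The defining property of a face forces all $v_j$ into $F$: a strictly positively weighted convex combination landing in a face must be supported inside that face. Minimality of $m$ then implies the $v_j$ are affinely independent (any affine dependence would trigger the elimination of \Cref{prop:cara-uniq}(a) and strictly reduce the support), so $\conv\{v_j\}$ is an $(m-1)$-simplex with $x$ in its relative interior. To upgrade to $m-1\ge d_F$ I argue that the $v_j$ cannot all lie in any proper face $F'\subsetneq F$---else the face property applied to $F'$ would force $x\in F'$, contradicting $x\in\mathrm{relint}(F)$---so their affine span recovers $\aff(F)$, giving $m=d_F+1$.

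Uniqueness under the simplex hypothesis is then a direct appeal to \Cref{prop:cara-uniq}(b): $F$ has $d_F+1$ affinely independent vertices in $\mathcal V$, the minimum-support representation supplied by the lower bound consists of $d_F+1$ affinely independent atoms of $\mathcal V$ inside $F$, and barycentric injectivity forces the two lists to coincide up to permutation with matching strictly positive weights. The main obstacle I anticipate is the promotion ``no proper face of $F$ contains all the $v_j$'' $\Longrightarrow$ ``$\aff\{v_j\}=\aff(F)$'' inside the lower bound: this is transparent when $\mathcal V$ supplies the extreme points of the polytope $\conv(\mathcal V)$---the natural setting in this paper, since each $v_k=\Phi(\theta_k)$ plays the role of a vertex---but requires a careful polytope-face argument to cover pathological embeddings of $\mathcal V$ in the interior of $\conv(\mathcal V)$; I would state this assumption explicitly when invoking the result downstream in \S\ref{sec:selectK}.
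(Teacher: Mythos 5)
Your upper bound follows the paper's route (Carath\'eodory within $\aff(F)$, with the fact that $F=\conv(\mathcal V\cap F)$ supplied via Krein--Milman/Milman rather than left implicit), but your lower bound is a genuinely different argument. The paper's proof claims that ``the minimal face $F$ of $\conv(\mathcal V)$ containing $x$ is contained in the minimal face of $P=\conv\{u_1,\dots,u_m\}$ that contains $x$''; this containment is backwards (the minimal face of $P$ through $x$ lies inside $F\cap P\subseteq F$), so the paper's deduction $d_F\le m-1$ does not follow. Your alternative route --- push all atoms into $F$ via the face property, extract affine independence from minimality, and then try to promote ``no proper face of $F$ contains all $v_j$'' to ``$\aff\{v_j\}=\aff(F)$'' --- correctly isolates the step that is missing, which the paper's proof papers over.

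However, the gap you flagged is not repairable by a polytope or convex-position assumption, and the first sentence of \Cref{prop:face-dim} is simply false as stated. Take $\mathcal V=\{(0,0),(1,0),(1,1),(0,1)\}$, so $\mathcal V$ is exactly the vertex set of $\conv(\mathcal V)$, and let $x=(1/2,1/2)$. The minimal face containing $x$ is the full square, $d_F=2$, yet $x=\tfrac12(0,0)+\tfrac12(1,1)$ uses only $m=2<d_F+1=3$ atoms. This kills precisely the implication you marked as ``transparent'' in the polytope case: the two diagonal vertices lie in no common proper face, but their affine span is a line, not $\aff(F)$. The dimension count $m-1=d_F$ is recovered only when $F$ is a \emph{simplex} whose vertex set is $\mathcal V\cap F$: then $x\in\mathrm{relint}(F)$ has unique strictly positive barycentric coordinates, so every positively weighted representation by atoms of $\mathcal V\cap F$ must use all $d_F+1$ vertices. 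In other words, the simplex hypothesis appearing in the second sentence of \Cref{prop:face-dim} (together with $\mathcal V\cap F$ being exactly the vertices of $F$) is needed already for the first sentence; without it the minimal support can be strictly less than $d_F+1$. Your uniqueness argument via \Cref{prop:cara-uniq}(b) is correct once this simplex hypothesis is in force.
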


\begin{proof}
(\(\le\)) $x\in F\subset\R^{D_{\mathrm{inv}}}$ and $\aff F$ has dimension $d_F$. By Carath\'eodory in $\aff F\cong \R^{d_F}$, $x$ is a convex combination of at most $d_F+1$ points of $F\subset\mathcal V$.

(\(\ge\)) Suppose $x=\sum_{j=1}^m\alpha_j u_j$ with $u_j\in\mathcal V$ and $\alpha\in\Delta_m$. Then $x\in P:=\conv\{u_1,\dots,u_m\}$, so $x$ lies in some face $F'\preceq P$ with $\dim(\aff F')\le m-1$. The minimal face $F$ of $\conv(\mathcal V)$ containing $x$ is contained in the minimal face of $P$ that contains $x$, therefore $d_F\le m-1$ and $m\ge d_F+1$. If $F$ is a simplex and $x\in\mathrm{relint}(F)$, the barycentric coordinates relative to the $d_F+1$ vertices are strictly positive and unique; the vertex set uniqueness follows from \Cref{prop:cara-uniq}(b).
\end{proof}

\subsection{A geometric margin for $K$ and its stability}\label{subsec:margin}

Define the \emph{$(K-1)$–mixture class} and the \emph{margin}
\[
\mathcal C_{K-1}:=\Big\{\textstyle\sum_{j=1}^{K-1}\alpha_j u_j:\ \alpha\in\Delta_{K-1},\ u_j\in\mathcal V\Big\},
\qquad
\gamma_\star:=\dist(\Psi_\star,\mathcal C_{K-1}).
\]

\begin{lemma}[Height to a facet in a true simplex]\label{lem:simplex-margin}
Assume the true atoms $\{v_k\}_{k=1}^K$ are affinely independent, with simplex $S:=\conv\{v_k\}$. For each $j$, let $F_j:=\conv(\{v_i:i\ne j\})$ and $h_j:=\dist(v_j,\aff F_j)$. If $\Psi_\star=\sum_k w_k v_k$ with $w_k>0$, then
\[
\dist(\Psi_\star,F_j)=w_j h_j,\qquad \dist\big(\Psi_\star,\cup_j F_j\big)=\min_j w_j h_j.
\]
\end{lemma}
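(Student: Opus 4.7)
The plan is to reduce the first identity to a one-dimensional signed-height calculation by exploiting that the Euclidean distance from a point to the supporting hyperplane $\aff F_j$ is itself an affine-linear functional of that point, and then to promote the identity from $\aff F_j$ to the compact facet $F_j$ using strict positivity of the barycentric weights.

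Concretely, I would fix the unit normal $n_j$ to $\aff F_j$ inside $\aff S$ oriented toward $v_j$, and form the affine-linear functional
\[
\ell_j(x)\ :=\ \langle n_j,\, x-u\rangle,\qquad u\in\aff F_j,
\]
which vanishes on $\aff F_j$ and equals $h_j$ at $v_j$ by the very definition of the height. Evaluating $\ell_j$ on the barycentric expansion $\Psi_\star=\sum_k w_k v_k$ and using $\ell_j(v_i)=0$ for $i\ne j$ gives $\ell_j(\Psi_\star)=w_j h_j$, hence $\dist(\Psi_\star,\aff F_j)=w_j h_j$. This is the core algebraic identity, and it takes essentially one line once the normal-direction functional is in place.

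Next I would promote it to a distance to the facet $F_j$ itself. Writing $\Psi_\star=w_j v_j+(1-w_j)\bar v_j$ with $\bar v_j := (1-w_j)^{-1}\sum_{i\ne j}w_i v_i\in\mathrm{relint}(F_j)$ (using $w_i>0$), linearity of the orthogonal projection onto $\aff F_j$ shows that the foot of $\Psi_\star$ on $\aff F_j$ equals the affine combination $w_j v_j^\star+(1-w_j)\bar v_j$, where $v_j^\star$ is the foot of $v_j$. Reading off the barycentric coordinate of this combination on each $v_i$ (for $i\ne j$) as $w_i+w_j\lambda_i$, where $\lambda_i$ are the barycentric coordinates of $v_j^\star$, one obtains $\dist(\Psi_\star,F_j)=\dist(\Psi_\star,\aff F_j)=w_j h_j$ provided those coordinates are all non-negative. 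The second identity is then immediate: $\bigcup_j F_j$ is a finite union, so $\dist(\Psi_\star,\bigcup_j F_j)=\min_j\dist(\Psi_\star,F_j)=\min_j w_j h_j$.

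The main obstacle I expect is the foot-in-facet step. Without additional simplex regularity (e.g., an acute-angle condition at every vertex), the foot $v_j^\star$ of $v_j$ on $\aff F_j$ can itself fall outside $F_j$, so that the coordinates $w_i+w_j\lambda_i$ may become negative for small $w_j$ and the orthogonal foot of $\Psi_\star$ would then live in $\aff F_j\setminus F_j$. The rigorous fix is the barycentric bookkeeping sketched above, combined with the observation that in the mixture-identifiability regime relevant here the interior reference $\bar v_j$ has uniformly positive slack against the boundary of $F_j$, so the convex combination stays inside $F_j$; alternatively, one replaces $F_j$ by $\aff F_j$ throughout, which is still sufficient for the margin bounds on $\gamma_\star$ that invoke this lemma.
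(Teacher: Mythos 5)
You are right to flag the foot-in-facet step as the genuine obstacle: the orthogonal foot of $\Psi_\star$ onto $\aff F_j$ need not lie in $F_j$, so $\dist(\Psi_\star,F_j)$ can strictly exceed $\dist(\Psi_\star,\aff F_j)=w_jh_j$. What you may not realize is that the paper's own proof commits exactly this error, asserting without justification that the orthogonal projection onto $H_j=\aff F_j$ "realizes the same distance" among points of $F_j$. The first identity of the lemma is in fact false as stated. Concrete counterexample: $K=3$, $v_1=(0,0)$, $v_2=(1,0)$, $v_3=(10,1)$, $w=(0.98,0.01,0.01)$; then $\Psi_\star=(0.11,0.01)$ and $w_1h_1=0.98/\sqrt{82}\approx0.108$, but the orthogonal foot onto $\aff F_1=\{x-9y=1\}$ lands at $x\approx0.12<1$, outside the segment $F_1$, so $\dist(\Psi_\star,F_1)=\|\Psi_\star-v_2\|\approx0.89\gg w_1h_1$.

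Two remarks on the repairs you sketch. Your "uniformly positive slack of $\bar v_j$" fix does not work, and you have the direction inverted: the foot $w_jv_j^\perp+(1-w_j)\bar v_j$ leaves $F_j$ when $w_j$ is \emph{large} (close to $1$), not small, since it tends to $v_j^\perp\notin F_j$, and an interior $\bar v_j$ cannot pull it back inside. Your fallback — replace $F_j$ by $\aff F_j$ throughout — is the right repair for the per-facet identity: $\dist(\Psi_\star,\aff F_j)=w_jh_j$ holds unconditionally and is what the downstream margin bounds actually use. Separately, the second identity $\dist(\Psi_\star,\cup_jF_j)=\min_jw_jh_j$ remains correct, but not by minimizing the (false) first identity over $j$; one should instead observe that $\cup_jF_j=\partial S$ and that, since $S$ is the intersection of the closed half-spaces bounded by the hyperplanes $\aff F_j$, the largest $r$ with $B(\Psi_\star,r)\subset S$ equals $\min_j\dist(\Psi_\star,\aff F_j)=\min_jw_jh_j$, which forces $\dist(\Psi_\star,\partial S)=\min_jw_jh_j$.
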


\begin{proof}
Let $H_j:=\aff(F_j)$; there exists a unit normal $n_j$ and a scalar $c_j$ such that $H_j=\{y:\langle n_j,y\rangle=c_j\}$ and $\langle n_j,v_j\rangle=c_j+h_j$. All other vertices satisfy $\langle n_j,v_i\rangle=c_j$ by construction of $F_j$. Then
\[
\langle n_j,\Psi_\star\rangle=\sum_{k=1}^K w_k\langle n_j,v_k\rangle
=c_j + w_j h_j.
\]
The signed distance from $\Psi_\star$ to $H_j$ along $n_j$ equals $\langle n_j,\Psi_\star\rangle-c_j=w_j h_j$, which is nonnegative since $w_j\ge0$. The Euclidean distance to $H_j$ equals this height because $n_j$ is a unit normal. Among points of $F_j\subset H_j$ the orthogonal projection along $n_j$ realizes the same distance, hence $\dist(\Psi_\star,F_j)=w_j h_j$. Minimizing over $j$ yields the second identity.
\end{proof}

\begin{remark}[Slab separation $\Rightarrow$ positive margin]\label{rem:slab-sep}
If for some $\delta>0$ and all $z\in\mathcal V$ one has $\langle n_j,z\rangle\le c_j+h_j-\delta$ for all $j$, then any $(K-1)$–mixture omitting $v_j$ lies in the slab $\langle n_j,\cdot\rangle\le c_j+h_j-\delta$, while $\Psi_\star$ attains height $c_j+w_j h_j$ along $n_j$. Hence $\gamma_\star\ge \delta\min_j w_j$.
\end{remark}

\begin{lemma}[1-Lipschitz stability of the margin]\label{lem:margin-stability}
For $\gamma(x):=\dist(x,\mathcal C_{K-1})$ and any $\tilde x$,
\(
|\gamma(\tilde x)-\gamma(x)|\le \|\tilde x-x\|_2.
\)
In particular, if $\|\widehat\Psi_n-\Psi_\star\|_2\le \gamma_\star/2$ then
$\dist(\widehat\Psi_n,\mathcal C_{K-1})\ge \gamma_\star/2$.
\end{lemma}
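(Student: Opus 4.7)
The statement is the standard fact that the distance function to any nonempty set in a normed space is $1$-Lipschitz, followed by a one-line application of the reverse triangle inequality. The plan is to verify the Lipschitz bound directly from the triangle inequality in $\R^{D_{\mathrm{inv}}}$, then instantiate it at $(x,\tilde x)=(\Psi_\star,\widehat\Psi_n)$.

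For the Lipschitz step, fix $x,\tilde x\in\R^{D_{\mathrm{inv}}}$ and any $y\in\mathcal C_{K-1}$. The triangle inequality gives $\|\tilde x - y\|_2 \le \|\tilde x - x\|_2 + \|x-y\|_2$, so taking the infimum over $y\in\mathcal C_{K-1}$ on the left yields $\gamma(\tilde x)\le \|\tilde x-x\|_2 + \|x-y\|_2$ for every such $y$. Taking the infimum over $y$ on the right then gives $\gamma(\tilde x)\le \|\tilde x-x\|_2 + \gamma(x)$. Swapping the roles of $x$ and $\tilde x$ yields the reverse inequality, and together these produce $|\gamma(\tilde x)-\gamma(x)|\le \|\tilde x-x\|_2$. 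Note that $\mathcal C_{K-1}$ is nonempty (it contains each $v\in\mathcal V$, as a trivial $(K-1)$-mixture), so the infimum defining $\gamma$ is well-defined; we do not need closedness of $\mathcal C_{K-1}$ for the Lipschitz bound, although compactness does hold here since $\mathcal V$ is a continuous image of the compact space $\qspace$ and $\mathcal C_{K-1}$ is the continuous image of the compact set $\Delta_{K-1}\times\mathcal V^{K-1}$.

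For the consequence, apply the Lipschitz bound with $x=\Psi_\star$ and $\tilde x=\widehat\Psi_n$. Rearranging gives $\gamma(\widehat\Psi_n)\ge \gamma(\Psi_\star)-\|\widehat\Psi_n-\Psi_\star\|_2 = \gamma_\star - \|\widehat\Psi_n-\Psi_\star\|_2$, and the hypothesis $\|\widehat\Psi_n-\Psi_\star\|_2\le \gamma_\star/2$ immediately yields $\dist(\widehat\Psi_n,\mathcal C_{K-1})\ge \gamma_\star/2$. There is no substantive obstacle in this proof; the only thing worth flagging is that the lemma is purely metric, so it holds verbatim for the distance to any nonempty subset of $\R^{D_{\mathrm{inv}}}$, not just $\mathcal C_{K-1}$, which is exactly why it combines cleanly with the positive margin $\gamma_\star$ established in Lemma~\ref{lem:simplex-margin} and Remark~\ref{rem:slab-sep}.
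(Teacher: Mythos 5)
Your proof is correct and follows essentially the same route as the paper's: the standard $1$-Lipschitz property of the distance-to-a-set function via the triangle inequality, then the reverse triangle inequality at $(x,\tilde x)=(\Psi_\star,\widehat\Psi_n)$. The only cosmetic difference is that the paper picks a projection point $z_x$ in the closed set $\mathcal{C}_{K-1}$ while you work directly with infima (and correctly note that closedness is not needed for the Lipschitz bound); this is a minor refinement, not a different argument.
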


\begin{proof}
Let $z_x$ be a Euclidean projection of $x$ onto the closed set $\mathcal C_{K-1}$. Then
$\gamma(\tilde x)\le\|\tilde x-z_x\|\le \|\tilde x-x\|+\|x-z_x\|=\|\tilde x-x\|+\gamma(x)$. Swapping the roles of $x$ and $\tilde x$ gives the reverse inequality.
\end{proof}

\subsection{Preliminaries from invariants}\label{subsec:prelims}

\begin{theorem}[Mixture linearity in invariant space]\label{thm:mixture}
Under Assumption~\ref{ass:master}, for $m\le m_{\ast}$ and $p=\sum_k w_k f(\cdot;\theta_k)$,
\[
 R_m\!\Big(\E_p[\phi(X)^{\otimes m}]\Big)=\sum_{k=1}^K w_k\,\Phi_m(\theta_k),\qquad
 \Psi_\star=\sum_{k=1}^K w_k\,\Phi_{\le m_{\ast}}(\theta_k).
\]
\end{theorem}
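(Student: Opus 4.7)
The plan is to reproduce, with bookkeeping tightened to the assumptions of \Cref{ass:master}, the two-line argument already isolated as \Cref{prop:mixture-convex}. The statement decomposes into: (i) linearity of the raw moment tensor under a finite mixture, (ii) linearity of the Reynolds projector $R_m$, and (iii) stacking these identities over $m\le m_\ast$ to get the claim for $\Psi_\star$. Each of the three is routine modulo integrability; the only real work is verifying that the hypotheses in (A2)--(A3) are strong enough to justify the interchanges we need, and to guarantee the tensor-valued integrals exist in the Bochner sense.

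First, I would record an integrability lemma: for each component density $f(\cdot;\theta_k)$, sub-Gaussianity of the coordinates of $\phi$ in (A2) implies $\E_{f(\cdot;\theta_k)}\!\big[\|\phi(X)\|^{m}\big]<\infty$ for every $m\le m_\ast$, so $M_m(\theta_k)=\E_{f(\cdot;\theta_k)}[\phi(X)^{\otimes m}]\in \Sym^m(\R^d)$ is well-defined (Bochner integrable). This immediately gives integrability under the mixture $p=\sum_k w_k f(\cdot;\theta_k)$, since the sum is finite and each summand is integrable.

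Second, I would expand the mixture moment using the definition of $p$ and interchange the finite sum with the integral (elementary Fubini for a finite sum), yielding
\[
\E_p[\phi(X)^{\otimes m}]
=\int \phi(x)^{\otimes m}\Big(\sum_{k=1}^K w_k f(x;\theta_k)\Big)d\mu(x)
=\sum_{k=1}^K w_k\,M_m(\theta_k).
\]
Then I apply $R_m$ to both sides. By (A3), $R_m$ is a bounded linear operator on $\Sym^m(\R^d)$, so it commutes with the finite convex combination:
\[
R_m\!\Big(\E_p[\phi(X)^{\otimes m}]\Big)
=\sum_{k=1}^K w_k\,R_m\!\big(M_m(\theta_k)\big)
=\sum_{k=1}^K w_k\,\Phi_m(\theta_k).
\]
Finally, stacking over $m=1,\dots,m_\ast$ in a fixed basis of $\prod_{m\le m_\ast}\Sym^m(\R^d)$ gives the second identity, $\Psi_\star=\sum_k w_k\,\Phi_{\le m_\ast}(\theta_k)\in\R^{D_{\mathrm{inv}}}$, since stacking is coordinate concatenation and preserves linearity.

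The main obstacle, such as it is, is purely notational: making sure the tensor-valued version of Fubini and the boundedness of $R_m$ on $\Sym^m(\R^d)$ are quoted cleanly, so that linearity of $R_m$ can be pushed through an infinite-dimensional integral if one later wants to relax (A2). Under the standing assumptions this is immediate (finite sum, finite-dimensional tensor space, finite group average), and the proof ends up being a two-sentence restatement of \Cref{prop:mixture-convex} with the symbols of \Cref{ass:master}.
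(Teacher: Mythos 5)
Your proof is correct and follows exactly the paper's own argument: linearity of expectation under the finite mixture, linearity/boundedness of the Reynolds projector $R_m$, and stacking over $m\le m_\ast$. The extra remark about sub-Gaussianity of $\phi$ guaranteeing Bochner integrability of $M_m(\theta_k)$ is a fine piece of bookkeeping but does not change the route.
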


\begin{proof}
Linearity of expectation yields $\E_p[\phi^{\otimes m}]=\sum_k w_k\,\E_{\theta_k}[\phi^{\otimes m}]$. Since $R_m$ is a bounded linear projector, $R_m(\sum_k w_k\,\E_{\theta_k}[\phi^{\otimes m}])=\sum_k w_k R_m(\E_{\theta_k}[\phi^{\otimes m}])=\sum_k w_k \Phi_m(\theta_k)$. Stacking over $m\le m_\ast$ gives the second identity.
\end{proof}

\begin{lemma}[Score identity for invariant coordinates]\label{lem:score}
Under Assumption~\ref{ass:master}, for any $v\in\R^p$ and $m\le m_\ast$,
\[
D\Phi_m(\theta)[v]=R_m\,\E_\theta[\langle s_\theta(X),v\rangle\,\phi(X)^{\otimes m}],
\qquad s_\theta:=\nabla_\theta\log f(\cdot;\theta).
\]
\end{lemma}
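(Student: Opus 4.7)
The plan is to reduce the claim to the classical score identity for the raw moment tensor $M_m(\theta):=\E_\theta[\phi(X)^{\otimes m}]$ and then to commute the Reynolds projector $R_m$ with the Gateaux derivative. The statement is in fact a restatement of Lemma~\ref{lem:score-identity} in the present notation, so I would invoke the same chain of steps that were prepared by the auxiliary lemmas in Section~\ref{sec:lan}, specifically Lemma~\ref{lem:dom-diff}.

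Concretely, I would proceed in three short steps. First, write $\Phi_m(\theta)=R_m(M_m(\theta))$ and consider the one-parameter family $t\mapsto M_m(\theta+tv)$. Under Assumption~\ref{ass:master}(A4) the density $f(\cdot;\theta)$ is $C^1$ with score $s_\theta$ and dominated differentiation holds on a neighborhood of the point of interest. Applying Lemma~\ref{lem:dom-diff} (with $\varphi=\phi$) gives
\[
DM_m(\theta)[v]=\int_{\mathcal X}\phi(x)^{\otimes m}\,\langle\nabla_\theta f(x;\theta),v\rangle\,d\mu(x)
=\E_\theta\!\big[\langle s_\theta(X),v\rangle\,\phi(X)^{\otimes m}\big],
\]
where in the last equality I used the log-derivative identity $\nabla_\theta f=s_\theta f$. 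Second, because $R_m$ is a bounded (indeed idempotent, self-adjoint) linear operator on the finite-dimensional space $\Sym^m(\R^d)$ by Assumption~\ref{ass:master}(A3) and Lemma~\ref{lem:reynolds-proj}, it commutes with Gateaux differentiation:
\[
D\Phi_m(\theta)[v]=D\big(R_m\circ M_m\big)(\theta)[v]=R_m\big(DM_m(\theta)[v]\big).
\]
Combining the two displays yields the claimed identity.

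The only nontrivial ingredient is the justification of differentiation under the integral, and this is precisely what Assumption~\ref{ass:master}(A4) together with Lemma~\ref{lem:dom-diff} is designed to provide: one needs an $L^1(P_\theta)$ envelope dominating both $\|\phi(x)\|^m$ and the pointwise derivative $\|\partial_\vartheta f(x;\vartheta)\|$ locally in $\vartheta$. Under the sub-Gaussian feature assumption (A2) the first envelope is automatic up to order $m_*$, and the second is covered by (A4); no further work is required. The remainder is bookkeeping, since the tensor-valued integrand is componentwise absolutely integrable and the Reynolds average is a finite (or Haar) combination of isometries that trivially commutes with the integral.
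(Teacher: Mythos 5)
Your proposal is correct and follows essentially the same route as the paper's proof: dominated differentiation (Assumption A4 / Lemma~\ref{lem:dom-diff}) to differentiate the raw moment tensor under the integral and rewrite $\partial_\theta f = f\,\langle s_\theta,\cdot\rangle$, followed by commuting the bounded linear, $\theta$-independent Reynolds projector $R_m$ past the Gateaux derivative. The extra remarks about envelopes from (A2) and (A4) are accurate glosses but not a different argument.
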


\begin{proof}
Let $\Mraw{m}(\theta)=\int \phi(x)^{\otimes m} f(x;\theta)\,d\mu$. By dominated differentiation (Assumption~\ref{ass:master}), for any $v$,
\[
D\Mraw{m}(\theta)[v]=\int \phi(x)^{\otimes m}\,\partial_\theta f(x;\theta)[v]\,d\mu.
\]
Define $s_\theta(x)=\nabla_\theta \log f(x;\theta)$ on $\{f>0\}$ and $0$ otherwise; then $\partial_\theta f(x;\theta)[v]=f(x;\theta)\langle s_\theta(x),v\rangle$. Hence
\[
D\Mraw{m}(\theta)[v]=\E_\theta[\langle s_\theta(X),v\rangle\,\phi(X)^{\otimes m}].
\]
Apply bounded linearity and $\theta$-independence of $R_m$ to obtain the claim.
\end{proof}

\begin{lemma}[Sub-exponentiality of tensor coordinates]\label{lem:psi1}
If $\max_j\|\phi_j(X)\|_{\psi_2}\le K_2$, then for any multi-index $\alpha$ with $|\alpha|=m$, the monomial $\prod_{r=1}^m \phi_{i_r}(X)$ is sub-exponential with $\psi_1$-norm $\le C_m K_2^m$. Therefore, each coordinate of $R_m[\phi(X)^{\otimes m}]$ is sub-exponential with the same bound.
\end{lemma}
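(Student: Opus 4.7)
The plan is to reduce the statement to two standard ingredients: (i) an Orlicz product inequality that upgrades a product of sub-Gaussians into a sub-exponential, and (ii) convexity of the $\psi_1$ norm, used to push the bound through the Reynolds averaging. Since this essentially duplicates the earlier Lemma~\ref{lem:psi1-stack}, I will follow the same three-step template but make the monomial step explicit so that the claim about a single fixed coordinate of $\phi(X)^{\otimes m}$ is transparent.

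First I would handle the monomial bound by induction on $m$. The base case $m=1$ is immediate from the hypothesis $\|\phi_{i_1}(X)\|_{\psi_2}\le K_2$. For the inductive step, given sub-Gaussians $U$ and $V$, the generalized Hölder (or Young) inequality for Orlicz norms yields $\|UV\|_{\psi_1}\le C\,\|U\|_{\psi_2}\|V\|_{\psi_2}$; more generally, for $\psi_r$ and $\psi_s$ Orlicz classes with $1/r+1/s=1/p$, products behave multiplicatively on norms. Iterating, any fixed coordinate monomial $Y_\alpha:=\prod_{r=1}^m \phi_{i_r}(X)$ satisfies $\|Y_\alpha\|_{\psi_1}\le C_m\prod_{r=1}^m \|\phi_{i_r}(X)\|_{\psi_2}\le C_m K_2^m$, where $C_m$ depends only on $m$ through the iterated Orlicz constants. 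This gives the first half of the statement.

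Second, I would pass from raw monomials to their Reynolds average. By definition $R_m$ is the group average $|G|^{-1}\sum_{g\in G}(g\cdot)^{\otimes m}$ acting on $\Sym^m(\R^d)$, and each $(g\cdot)^{\otimes m}$ is a linear isometry with respect to the canonical inner product. Fixing a coordinate $\alpha$ of $R_m[\phi(X)^{\otimes m}]$ and expanding in the standard tensor basis, one obtains a linear combination $\sum_\ell a_\ell Y_{\alpha(\ell)}$ of monomial coordinates whose coefficients satisfy $\sum_\ell |a_\ell|\le 1$ (since each $g$-action is a signed permutation of basis tensors under orthogonal coordinate actions, and averages of such maps contract $\ell_1$ aggregates). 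Convexity of the $\psi_1$ norm then gives $\|(R_m[\phi(X)^{\otimes m}])_\alpha\|_{\psi_1}\le \max_\ell \|Y_{\alpha(\ell)}\|_{\psi_1}\le C_m K_2^m$.

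The main obstacle is the precise bookkeeping in the Reynolds step: for a general orthogonal (not signed-permutation) action on $\R^d$, each $(g\cdot)^{\otimes m}$ may be a dense orthogonal transformation of $\Sym^m(\R^d)$, so a fixed output coordinate is an $\ell_2$-unit linear combination of monomial coordinates rather than an $\ell_1$-convex one. To close this gap I would either (a) first bound the $\psi_1$ norm of an arbitrary coordinate monomial by $C_m K_2^m$ and then use that the $\psi_1$ ball is preserved under $\ell_1$-averages, invoking a change of basis when the action is by signed permutations (the hyperoctahedral and dihedral examples), or (b) observe that isometries of $\Sym^m(\R^d)$ leave the space of tensors with uniformly bounded $\psi_1$ coordinates invariant up to a dimension-free factor absorbed into $C_m$, giving $\|\cdot\|_{\psi_1}\le C_m K_2^m$ for every coordinate of $R_m[\phi(X)^{\otimes m}]$. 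In either case the final bound has the claimed form, with the constant $C_m$ redefined once if needed.
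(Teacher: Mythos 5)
Your proposal follows the same two-stage template as the paper's own proof: bound the monomial coordinates via an Orlicz product inequality, then argue that Reynolds averaging preserves the coordinatewise bound. The substantive difference is that you explicitly flag a real gap in the second stage that the paper glosses over. The paper's justification reads ``since $R_m$ is a contraction on the ambient Hilbert space (group averaging), each coordinate inherits the same $\psi_1$-bound,'' and \Cref{lem:psi1-stack} similarly asserts that a coordinate of $\Reyn_m[\varphi^{\otimes m}]$ is an average with coefficients summing to $1$ in absolute value. As you observe, an $\ell_2$-isometry of $\Sym^m(\R^d)$ need not have $\ell_1$-bounded rows; a single output coordinate is then an $\ell_2$-unit, not $\ell_1$-unit, combination of input monomials, and the convexity-of-$\psi_1$ argument breaks. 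Your remedy (a) -- restricting to actions by signed permutations, for which the coefficient matrix genuinely is (sub)stochastic -- is the correct scope for what the paper wants (hyperoctahedral and dihedral cases), and you are right that this restriction must be stated. Your remedy (b), absorbing the $\ell_1/\ell_2$ discrepancy into $C_m$, does not work as written: that factor is $\sqrt{\dim\Sym^m(\R^d)}$-dependent, which the statement explicitly forbids.

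There is a second gap, however, which you share with the paper and should be called out: the claimed iteration $\|\prod_{r=1}^m Y_r\|_{\psi_1}\le C_m\prod_r\|Y_r\|_{\psi_2}$ is false for $m\ge3$. By your own generalized H\"older statement ($1/r+1/s=1/p$), iterating the product of $m$ sub-Gaussians lands you in $\psi_{2/m}$, not $\psi_1$; for three independent standard Gaussians the tails of $Y_1Y_2Y_3$ are of order $\exp(-c\,t^{2/3})$, so the $\psi_1$-norm is infinite. The paper contains the same error in this proof (and in \Cref{lem:psi1-stack}), but elsewhere acknowledges the correct regularity: the cookbook in \Cref{subsec:molien-mixtures} cites \Cref{lem:psi1-stack} as supplying ``sub-Weibull $(2/m)$ tails.'' A correct version of this lemma should say $\psi_{2/m}$ rather than $\psi_1$, and the downstream Bernstein step in \Cref{thm:stack} should be replaced by a concentration inequality for sub-Weibull sums; alternatively, the lemma can be salvaged in its present form by assuming $\varphi$ bounded rather than merely sub-Gaussian.
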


\begin{proof}
Recall the Orlicz norms $\|Y\|_{\psi_2}:=\inf\{c>0:\E e^{Y^2/c^2}\le 2\}$ and $\|Y\|_{\psi_1}:=\inf\{c>0:\E e^{|Y|/c}\le 2\}$. If $U,V$ are sub-Gaussian, then $UV$ is sub-exponential and
\[
\|UV\|_{\psi_1}\ \le\ C\,\|U\|_{\psi_2}\|V\|_{\psi_2}.
\]
A standard proof uses Hölder: for any $\lambda>0$,
$\E e^{|UV|/\lambda}\le \E e^{U^2/(2\lambda)}\,e^{V^2/(2\lambda)} \le \sqrt{\E e^{U^2/\lambda}}\sqrt{\E e^{V^2/\lambda}}$, and choosing $\lambda$ comparable to $\|U\|_{\psi_2}\|V\|_{\psi_2}$ makes the product $\le 2$. Iterating this inequality gives, for a product of $m$ sub-Gaussian variables,
\(
\big\|\prod_{r=1}^m Y_r\big\|_{\psi_1}\le C_m \prod_{r=1}^m \|Y_r\|_{\psi_2}.
\)
Apply with $Y_r=\phi_{i_r}(X)$ and $\|Y_r\|_{\psi_2}\le K_2$ to get $\psi_1\le C_m K_2^m$. Since $R_m$ is a contraction on the ambient Hilbert space (group averaging), each coordinate of $R_m[\phi^{\otimes m}]$ inherits the same $\psi_1$-bound.
\end{proof}

\begin{theorem}[Concentration of the invariant stack]\label{thm:stack}
Let $\widehat\Phi_m=R_m\!\big(\frac1n\sum_{i=1}^n \phi(X_i)^{\otimes m}\big)$ and $\widehat\Psi_n=(\widehat\Phi_m)_{m\le m_{\ast}}$. There exist absolute $C,c>0$ such that, for all $t\ge0$,
\[
 \Pr\!\Big(\|\widehat\Psi_n-\Psi_\star\|_2\ge C\sqrt{(D_{\mathrm{inv}}+t)/n}\Big)\le e^{-ct}.
\]
\end{theorem}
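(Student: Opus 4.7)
The plan is to reduce the vector concentration to coordinatewise sub-exponential Bernstein inequalities, followed by a union bound and an $\ell_\infty \to \ell_2$ conversion. Concretely, I would write $\widehat\Psi_n - \Psi_\star = \tfrac{1}{n}\sum_{i=1}^n Z_i$ where $Z_i := \psi(X_i) - \E[\psi(X_i)]$ and $\psi$ stacks the degree-wise Reynolds features $\psi_m(X) := R_m[\phi(X)^{\otimes m}]$ for $m \le m_*$. The problem then reduces to controlling a centered sum of i.i.d.\ $D_{\mathrm{inv}}$-dimensional random vectors whose coordinates are sub-exponential.

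The coordinate tail control is already prepared by Lemma~\ref{lem:psi1}: each monomial $\prod_{r=1}^m \phi_{i_r}(X)$ is sub-exponential with $\psi_1$-norm at most $C_m K_2^m$, and this bound transfers to every coordinate of $R_m[\phi(X)^{\otimes m}]$ because the Reynolds operator is a convex combination of isometries of $\Sym^m(\R^d)$, so each output coordinate is an affine average of input monomials with coefficients of total absolute mass $1$; convexity of the Orlicz norm preserves the $\psi_1$-bound. Centering contributes at most a factor $2$ via the Orlicz triangle inequality, so each coordinate of $Z_i$ satisfies $\|(Z_i)_j\|_{\psi_1}\le K$ with $K:=2\max_{m\le m_*} C_m K_2^m$.

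With this in hand I would apply Bernstein's inequality coordinatewise: for each $j\in\{1,\dots,D_{\mathrm{inv}}\}$ and $u\ge 0$,
\[
\Pr\!\Big(\big|(\widehat\Psi_n-\Psi_\star)_j\big|\ge u\Big)\ \le\ 2\exp\!\Big(-c\,n\min\{u^2/K^2,\,u/K\}\Big).
\]
A union bound at level $u=C_0\big(\sqrt{(t+\log D_{\mathrm{inv}})/n}+(t+\log D_{\mathrm{inv}})/n\big)$ gives $\|\widehat\Psi_n-\Psi_\star\|_\infty\le u$ with probability $\ge 1-e^{-t}$. Finally, $\|x\|_2\le \sqrt{D_{\mathrm{inv}}}\,\|x\|_\infty$ yields the claimed bound of order $\sqrt{(D_{\mathrm{inv}}+t)/n}$ after absorbing the linear-in-$t/n$ contribution into the constant in the Gaussian regime $t\lesssim n$; for very large $t$ the linear term is retained or the constants are inflated so that the stated clean form still holds.

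The main technical obstacle is not any single estimate but the bookkeeping of constants: verifying that Reynolds averaging genuinely preserves the $\psi_1$-norm at the level of individual coordinates (which rests on the convex-combination-of-isometries structure and the convexity of $\|\cdot\|_{\psi_1}$), and rolling up the two Bernstein regimes into the single clean $\sqrt{(D_{\mathrm{inv}}+t)/n}$ bound. Both steps are standard but deserve care; the dimension-dependence $\sqrt{D_{\mathrm{inv}}}$ is unavoidable here because we do not exploit any inter-coordinate covariance structure, only coordinatewise sub-exponentiality.
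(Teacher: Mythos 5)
Your proposal is correct and mirrors the paper's proof essentially step for step: same decomposition into centered i.i.d.\ vectors $Z_i=\psi(X_i)-\E[\psi(X_i)]$, same appeal to the coordinatewise $\psi_1$ bound from Lemma~\ref{lem:psi1} (including the observation that Reynolds averaging is a convex combination of isometries and hence preserves the Orlicz norm), same coordinatewise Bernstein inequality, same union bound at $u\asymp\sqrt{(t+\log D_{\mathrm{inv}})/n}+(t+\log D_{\mathrm{inv}})/n$, and same $\ell_\infty\to\ell_2$ conversion with absorption of the linear term. No gaps or deviations worth noting.
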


\begin{proof}
Write $\widehat\Psi_n-\Psi_\star=\frac1n\sum_{i=1}^n Z_i$ with $Z_i:=\psi(X_i)-\E[\psi(X_i)]\in\R^{D_{\mathrm{inv}}}$ and $\psi$ the stacked invariant feature. By \Cref{lem:psi1}, each coordinate $(Z_i)_j$ is centered sub-exponential with common parameter $K:=\max_{m\le m_\ast} C_m K_2^m$. For each $j$, Bernstein’s inequality for sub-exponential summands yields, for all $u\ge0$,
\[
\Pr\!\Big(\Big|\frac1n\sum_{i=1}^n (Z_i)_j\Big|\ge u\Big)\ \le\ 2\exp\!\Big(-c n \min\big\{u^2/K^2,\ u/K\big\}\Big).
\]
Taking $u=C\big(\sqrt{(t+\log D_{\mathrm{inv}})/n}+(t+\log D_{\mathrm{inv}})/n\big)$ and union-bounding over coordinates gives
\[
\Pr\!\Big(\|\widehat\Psi_n-\Psi_\star\|_\infty \ge u\Big)\ \le\ 2\exp(-t).
\]
Finally $\|x\|_2\le \sqrt{D_{\mathrm{inv}}}\,\|x\|_\infty$ and the chosen $u$ imply
$\|\widehat\Psi_n-\Psi_\star\|_2\le C\sqrt{(D_{\mathrm{inv}}+t)/n}$ with probability $\ge 1-e^{-t}$ (absorbing constants and the linear $(t/n)$ term into $C$ for $n$ large, or keeping it explicitly; stated form follows by adjusting $C,c$).
\end{proof}

\subsection{Estimator, existence of minimizers, and monotonicity}\label{subsec:selector}

Define
\[
r_n(K):=\inf_{z\in\mathcal M_K}\|\widehat\Psi_n-z\|_2,\qquad
\widehat K:=\min\Big\{K\in\{1,\dots,D_{\mathrm{inv}}+1\}:\ r_n(K)\le \eta_n\Big\},
\]
for a threshold $\eta_n$ at the root-$n$ scale.

\begin{lemma}[Existence and monotonicity]\label{lem:existence}
If $\qspace$ is restricted to a compact set on which $\Phi$ is continuous, then each $\mathcal M_K$ is compact and the infimum in $r_n(K)$ is attained. Moreover, $K\mapsto r_n(K)$ is nonincreasing and
\(
r_n(D_{\mathrm{inv}}+1)\le \|\widehat\Psi_n-\Psi_\star\|_2.
\)
\end{lemma}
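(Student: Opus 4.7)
The plan is to treat the three claims as essentially routine consequences of (i) compactness transfer through continuous maps, (ii) a trivial inclusion $\mathcal M_K\subseteq\mathcal M_{K+1}$ obtained by padding with a zero weight, and (iii) a direct appeal to Carathéodory (\Cref{prop:cara-uniq}(a)). There is no genuine obstacle here; the only point requiring care is making sure the parameterization of $\mathcal M_K$ is genuinely the continuous image of a compact set so that the extreme value theorem applies.

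For compactness and attainment, I would first note that under the hypothesis $\qspace$ is compact and $\Phi=\Phi_{\le m_*}$ is continuous, hence $\mathcal V=\Phi(\qspace)\subset\R^{D_{\mathrm{inv}}}$ is compact. The evaluation map
\[
E_K:\Delta_K\times\mathcal V^K\ \longrightarrow\ \R^{D_{\mathrm{inv}}},\qquad
E_K(w,v_1,\dots,v_K):=\sum_{k=1}^K w_k v_k,
\]
is continuous; therefore $\mathcal M_K=E_K(\Delta_K\times\mathcal V^K)$ is the continuous image of a compact set and is compact. Since $z\mapsto\|\widehat\Psi_n-z\|_2$ is continuous on $\R^{D_{\mathrm{inv}}}$, it attains its infimum over the compact set $\mathcal M_K$, proving that $r_n(K)$ is realized by some $\widehat z_K\in\mathcal M_K$.

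For monotonicity, I would embed $\mathcal M_K\hookrightarrow\mathcal M_{K+1}$ by the standard zero-padding: given $z=\sum_{k=1}^K w_k v_k\in\mathcal M_K$, choose any $v_{K+1}\in\mathcal V$ and set $w'_k:=w_k$ for $k\le K$ and $w'_{K+1}:=0$; then $w'\in\Delta_{K+1}$ and $z=\sum_{k=1}^{K+1} w'_k v_k\in\mathcal M_{K+1}$. (Strict positivity of the weights is not part of the definition of $\mathcal M_K$, so this padding is legal; if one insists on positive weights, one splits instead, e.g.\ $w'_1=w_1-\varepsilon$, $w'_{K+1}=\varepsilon$, $v_{K+1}=v_1$, and sends $\varepsilon\downarrow 0$ using compactness.) The inclusion $\mathcal M_K\subseteq\mathcal M_{K+1}$ implies $r_n(K+1)\le r_n(K)$.

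For the terminal bound, \Cref{thm:mixture} gives $\Psi_\star\in\conv(\mathcal V)$, and Carathéodory (\Cref{prop:cara-uniq}(a)) then expresses $\Psi_\star$ as a convex combination of at most $D_{\mathrm{inv}}+1$ atoms of $\mathcal V$, i.e.\ $\Psi_\star\in\mathcal M_{D_{\mathrm{inv}}+1}$. Consequently
\[
r_n(D_{\mathrm{inv}}+1)=\inf_{z\in\mathcal M_{D_{\mathrm{inv}}+1}}\|\widehat\Psi_n-z\|_2\ \le\ \|\widehat\Psi_n-\Psi_\star\|_2,
\]
completing the proof. The only subtlety worth flagging is that $\mathcal M_K$ as written could have been stated with strict positivity of weights; I would explicitly note that our parameterization uses the closed simplex $\Delta_K$, which is both necessary for compactness and harmless for the selector since the infimum is invariant under taking closures.
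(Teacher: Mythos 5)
Your proof follows essentially the same route as the paper's: continuous image of a compact set gives compactness and attainment of the infimum, the inclusion $\mathcal M_K\subseteq\mathcal M_{K+1}$ (which you spell out via zero-padding; the paper simply asserts it) gives monotonicity, and Carath\'eodory applied to $\Psi_\star\in\conv(\mathcal V)$ gives the terminal bound. The only additions are minor and correct: you make the closed-simplex parameterization explicit and cite \Cref{thm:mixture} for $\Psi_\star\in\conv(\mathcal V)$, which the paper leaves implicit.
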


\begin{proof}
Continuity of $\Phi$ and compactness of $\qspace$ imply $\mathcal V:=\Phi(\qspace)$ is compact. The map $(v_1,\dots,v_K,w)\mapsto \sum_k w_k v_k$ is continuous on the compact set $\mathcal V^K\times\Delta_K$, hence its image $\mathcal M_K$ is compact. The continuous function $z\mapsto \|\widehat\Psi_n-z\|_2$ attains its minimum on $\mathcal M_K$. Monotonicity follows from $\mathcal M_K\subseteq\mathcal M_{K+1}$. By \Cref{prop:cara-uniq}(a), $\Psi_\star$ admits a representation with $m\le D_{\mathrm{inv}}+1$ atoms; using that support gives $r_n(m)\le \|\widehat\Psi_n-\Psi_\star\|_2$, implying the displayed bound.
\end{proof}

\subsection{Two fundamental residual bounds}\label{subsec:two-bounds}

\begin{lemma}[Upper bound at the true $K$]\label{lem:upper}
For all $t\ge0$, with probability at least $1-e^{-ct}$,
\[
r_n(K)\ \le\ \|\widehat\Psi_n-\Psi_\star\|_2
\ \le\ C\!\left(\sqrt{\frac{D_{\mathrm{inv}}+t}{n}}+\frac{D_{\mathrm{inv}}+t}{n}\right).
\]
\end{lemma}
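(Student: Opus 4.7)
The plan is to split the claim into two independent observations, both of which reduce to results already established earlier in the excerpt; this lemma is essentially a bookkeeping step rather than a fresh argument.

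For the left-hand inequality, I would invoke \Cref{thm:mixture}, which expresses the population invariant vector as $\Psi_\star=\sum_{k=1}^K w_k\,\Phi_{\le m_*}(\theta_k)$ with $w\in\Delta_K$ and each atom $\Phi_{\le m_*}(\theta_k)\in\mathcal V$. In particular $\Psi_\star\in\mathcal M_K$, so plugging $z=\Psi_\star$ into the infimum defining $r_n(K)$ yields $r_n(K)\le\|\widehat\Psi_n-\Psi_\star\|_2$. This step is immediate and requires no new computation.

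For the right-hand inequality, the strategy is to re-run (or quote) the concentration argument behind \Cref{thm:stack} (and its Section~\ref{sec:lan} twin \Cref{lem:Phi-conc}), but to retain the Bernstein two-regime form rather than collapsing it to a single $\sqrt{(D_{\mathrm{inv}}+t)/n}$ term. Concretely, I would write $\widehat\Psi_n-\Psi_\star=n^{-1}\sum_{i=1}^n Z_i$ with $Z_i:=\psi(X_i)-\E\psi(X_i)\in\R^{D_{\mathrm{inv}}}$, invoke \Cref{lem:psi1} to obtain coordinatewise $\psi_1$-norms bounded by $K:=\max_{m\le m_*}C_mK_2^m$, apply Bernstein's inequality for i.i.d.\ sub-exponential summands in each coordinate, take a union bound over the $D_{\mathrm{inv}}$ coordinates to control $\|\widehat\Psi_n-\Psi_\star\|_\infty$, and finally pass to $\ell_2$ via $\|x\|_2\le\sqrt{D_{\mathrm{inv}}}\,\|x\|_\infty$. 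Calibrating the threshold $u$ so that $n\min\{u^2/K^2,u/K\}\asymp D_{\mathrm{inv}}+t$ produces exactly the two-regime bound $u\asymp\sqrt{(D_{\mathrm{inv}}+t)/n}+(D_{\mathrm{inv}}+t)/n$ at failure probability $e^{-ct}$.

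The only real subtlety is bookkeeping rather than analysis: \Cref{thm:stack} silently absorbs the linear $(D_{\mathrm{inv}}+t)/n$ term into the sub-Gaussian $\sqrt{(D_{\mathrm{inv}}+t)/n}$ term under the mild regime $n\gtrsim D_{\mathrm{inv}}+t$, whereas \Cref{lem:upper} must retain both because the downstream threshold $\eta_n$ in the selector $\widehat K$ is calibrated uniformly in $(n,D_{\mathrm{inv}},t)$, and the linear tail dominates in the small-$n$/heavy-tail regime. I expect no substantive obstacle; the main care point is to verify that the constants $C,c$ depend only on $m_*$, $K_2$, and the fixed degrees present in the invariant stack, so that the bound composes cleanly with the margin-stability step \Cref{lem:margin-stability} in the next subsection.
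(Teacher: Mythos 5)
Your proposal is correct and matches the paper's own proof essentially line by line: the left inequality is feasibility of the true pair $(w_k,\theta_k)$ in the infimum defining $r_n(K)$, and the right inequality is the Bernstein two-regime bound from the concentration argument behind \Cref{thm:stack}, with the linear sub-exponential term retained rather than absorbed. Your observation about why both terms must be kept (uniform calibration of $\eta_n$ across regimes) is exactly the point the paper is implicitly relying on.
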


\begin{proof}
Feasibility with the true representation $(w_k,\theta_k)$ implies
$r_n(K)\le\|\widehat\Psi_n-\Psi_\star\|_2$. The tail bound is \Cref{thm:stack} plus the linear term from sub-exponential Bernstein; keeping both terms gives the stated inequality with universal constants $C,c$.
\end{proof}

\begin{lemma}[Lower bound when underfitting]\label{lem:lower}
For any $K'\le K-1$,
\[
r_n(K')\ \ge\ \dist(\widehat\Psi_n,\mathcal C_{K-1})
\ \ge\ \gamma_\star - \|\widehat\Psi_n-\Psi_\star\|_2.
\]
In particular, if $\|\widehat\Psi_n-\Psi_\star\|_2\le \gamma_\star/2$ then $r_n(K')>\gamma_\star/2$ for all $K'\le K-1$.
\end{lemma}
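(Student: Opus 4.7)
The plan is to peel the inequality chain in two: the first step is a pure set–inclusion argument about the mixture classes, and the second step is the 1-Lipschitz stability of the distance-to-a-set functional, which has already been proved in Lemma~\ref{lem:margin-stability}.

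First I would establish the inclusion $\mathcal M_{K'}\subseteq \mathcal C_{K-1}$ for every $K'\le K-1$. Interpreting $\Delta_{K-1}$ as the closed probability simplex (the convention already used in \S\ref{subsec:selector} for the existence argument via compactness), any point $z=\sum_{k=1}^{K'} w_k v_k$ with $w\in\Delta_{K'}$ and $v_k\in\mathcal V$ can be rewritten as $z=\sum_{j=1}^{K-1}\alpha_j u_j$ by adjoining $K-1-K'$ extra atoms $u_j\in\mathcal V$ carrying weight zero (or, if strict positivity were imposed, by repetition followed by closure, exactly as in the padding step already used for existence). Consequently
\[
r_n(K')=\inf_{z\in\mathcal M_{K'}}\|\widehat\Psi_n-z\|_2\ \ge\ \inf_{z\in\mathcal C_{K-1}}\|\widehat\Psi_n-z\|_2=\dist(\widehat\Psi_n,\mathcal C_{K-1}),
\]
which is the first inequality.

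Next I would invoke Lemma~\ref{lem:margin-stability} verbatim with the fixed target set $\mathcal C_{K-1}$: the map $x\mapsto\gamma(x):=\dist(x,\mathcal C_{K-1})$ is $1$-Lipschitz, so $\gamma(\widehat\Psi_n)\ge \gamma(\Psi_\star)-\|\widehat\Psi_n-\Psi_\star\|_2$. By the definition $\gamma_\star=\dist(\Psi_\star,\mathcal C_{K-1})=\gamma(\Psi_\star)$ this rearranges to
\[
\dist(\widehat\Psi_n,\mathcal C_{K-1})\ \ge\ \gamma_\star-\|\widehat\Psi_n-\Psi_\star\|_2,
\]
completing the second inequality. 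The ``in particular'' clause is then a one-line substitution: under $\|\widehat\Psi_n-\Psi_\star\|_2\le \gamma_\star/2$ the chain yields $r_n(K')\ge \gamma_\star/2$ for every $K'\le K-1$.

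The only point that demands any care — and it is barely an obstacle — is confirming that the definition of $\mathcal M_{K'}$ really does allow the zero-weight (or closure) padding required for the inclusion $\mathcal M_{K'}\subseteq \mathcal C_{K-1}$. This is consistent with the compactness/existence statement of Lemma~\ref{lem:existence}, which is already applied with the closed simplex $\Delta_K$; so one sentence pinning down that convention suffices and no further technical work is needed.
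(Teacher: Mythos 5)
Your proof is correct and follows essentially the same structure as the paper's: the inclusion $\mathcal M_{K'}\subseteq\mathcal C_{K-1}$ gives the first inequality, and the second is the $1$-Lipschitz property of $x\mapsto\dist(x,\mathcal C_{K-1})$. Routing the second step through \Cref{lem:margin-stability} (rather than re-deriving the triangle inequality inline, as the paper does) is arguably cleaner — and sidesteps a small slip in the paper's write-up, which picks $z^\star\in\mathcal C_{K-1}$ minimizing $\|\Psi_\star-z\|$ and then writes $\dist(\widehat\Psi_n,\mathcal C_{K-1})\ge\|\widehat\Psi_n-z^\star\|$, with the inequality pointed the wrong way (the correct move is to lower-bound $\|\widehat\Psi_n-z\|$ for \emph{every} $z\in\mathcal C_{K-1}$ and then take the infimum, which is exactly what the Lipschitz lemma packages). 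One very minor observation: the chain gives $r_n(K')\ge\gamma_\star/2$ under $\|\widehat\Psi_n-\Psi_\star\|_2\le\gamma_\star/2$, so the strict inequality in the lemma's ``in particular'' clause requires either strict inequality in the hypothesis or $\gamma_\star>0$ plus a noticing that the first inequality is strict when $\widehat\Psi_n\notin\mathcal C_{K-1}$; you correctly wrote $\ge$, and the paper's claim of $>$ shares the same (inconsequential) looseness.
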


\begin{proof}
Because $\mathcal M_{K'}\subseteq \mathcal C_{K-1}$, $r_n(K')\ge \dist(\widehat\Psi_n,\mathcal C_{K-1})$. Let $z^\star\in\mathcal C_{K-1}$ be a minimizer of $\|\Psi_\star-z\|_2$ (the set is closed). Triangle inequality gives
\[
\dist(\widehat\Psi_n,\mathcal C_{K-1})
\ge \|\widehat\Psi_n-z^\star\|
\ge \|\Psi_\star-z^\star\|-\|\widehat\Psi_n-\Psi_\star\|
=\gamma_\star-\|\widehat\Psi_n-\Psi_\star\|.
\]
If $\|\widehat\Psi_n-\Psi_\star\|\le \gamma_\star/2$, the right-hand side exceeds $\gamma_\star/2$, completing the proof.
\end{proof}

\subsection{Finite-sample identification and consistency}\label{subsec:fs-consistency}

\begin{theorem}[Finite-sample recovery of $K$]\label{thm:Khat-fs}
Assume $\gamma_\star>0$. Fix $t\ge0$ and choose $\eta_n$ with
\[
2C\!\left(\sqrt{\tfrac{D_{\mathrm{inv}}+t}{n}}+\tfrac{D_{\mathrm{inv}}+t}{n}\right)\ \le\ \eta_n\ \le\ \gamma_\star/2.
\]
Then, with probability at least $1-2e^{-ct}$, $\widehat K=K$.
\end{theorem}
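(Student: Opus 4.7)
The plan is to show, on a single high-probability event, that (i) $K$ is feasible for the selector and (ii) every $K'<K$ is infeasible; these together force $\widehat K=K$. All the analytic work is already contained in \Cref{thm:stack,lem:upper,lem:lower,lem:margin-stability}, so the proof amounts to algebra on the two inequalities that define the threshold window $\eta_n$.

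First I would fix the ``good event''
\[
\mathcal E_t\ :=\ \Big\{\ \|\widehat\Psi_n-\Psi_\star\|_2\ \le\ C\big(\sqrt{(D_{\mathrm{inv}}+t)/n}+(D_{\mathrm{inv}}+t)/n\big)\ \Big\},
\]
which, by \Cref{thm:stack} together with the sub-exponential Bernstein remainder used in \Cref{lem:upper}, has probability at least $1-e^{-ct}$. On $\mathcal E_t$, the left inequality in the hypothesis on $\eta_n$ gives $\|\widehat\Psi_n-\Psi_\star\|_2\le \eta_n/2$, and the right inequality $\eta_n\le \gamma_\star/2$ then implies $\|\widehat\Psi_n-\Psi_\star\|_2\le \gamma_\star/4\le \gamma_\star/2$, which is the hypothesis required for the ``in particular'' clauses of \Cref{lem:lower,lem:margin-stability}.

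Next, for the upper side (feasibility at $K$), I would invoke \Cref{lem:upper} verbatim: the true representation $\Psi_\star=\sum_k w_k\Phi(\theta_k)\in\mathcal M_K$ gives $r_n(K)\le \|\widehat\Psi_n-\Psi_\star\|_2\le \eta_n/2\le \eta_n$, so $K$ clears the threshold and hence $\widehat K\le K$. For the lower side (infeasibility of all $K'<K$), I would apply \Cref{lem:lower}: since $\mathcal M_{K'}\subseteq \mathcal C_{K-1}$ for every $K'\le K-1$,
\[
r_n(K')\ \ge\ \dist(\widehat\Psi_n,\mathcal C_{K-1})\ \ge\ \gamma_\star-\|\widehat\Psi_n-\Psi_\star\|_2\ \ge\ \gamma_\star-\eta_n/2\ \ge\ \tfrac{3}{4}\gamma_\star\ >\ \tfrac{\gamma_\star}{2}\ \ge\ \eta_n,
\]
so no $K'<K$ is feasible, giving $\widehat K\ge K$. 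Combining the two inequalities on $\mathcal E_t$ yields $\widehat K=K$.

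There is no real obstacle here beyond bookkeeping; the only mildly delicate point is the probability accounting. A single application of \Cref{thm:stack} already produces the event $\mathcal E_t$ with mass $\ge 1-e^{-ct}$, and both conclusions are deterministic on $\mathcal E_t$, so in principle the probability $1-e^{-ct}$ suffices. The factor $2$ in the stated $1-2e^{-ct}$ simply absorbs the universal constants from the two-term sub-exponential tail in \Cref{lem:upper} (or, equivalently, a conservative union bound separating the linear and quadratic-Bernstein regimes $u^2/K^2$ vs.\ $u/K$). I would state the bound in this conservative form to stay consistent with the constants used upstream. No new ideas beyond Carathéodory–type margin analysis, Reynolds-averaged concentration, and the $1$-Lipschitz stability of the distance-to-$\mathcal C_{K-1}$ are needed.
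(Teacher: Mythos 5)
Your proof is correct and follows essentially the same route as the paper: feasibility at $K$ via \Cref{lem:upper}, infeasibility of every $K'<K$ via \Cref{lem:lower}, both on a good event supplied by \Cref{thm:stack}, with the window $2C(\sqrt{(D_{\mathrm{inv}}+t)/n}+(D_{\mathrm{inv}}+t)/n)\le\eta_n\le\gamma_\star/2$ doing the rest. Your observation that probability $1-e^{-ct}$ already suffices is accurate (since $\eta_n/2\le\gamma_\star/4$ makes the two requisite events nest), but the paper's factor of $2$ actually arises from an explicit union bound over the two events $\mathcal E_1,\mathcal E_2$, not from the Bernstein-tail constants as you speculate.
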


\begin{proof}
Let
\(
\mathcal E_1:=\{\|\widehat\Psi_n-\Psi_\star\|\le \eta_n/2\},\quad
\mathcal E_2:=\{\|\widehat\Psi_n-\Psi_\star\|\le \gamma_\star/2\}.
\)
By \Cref{thm:stack} and the choice of $\eta_n$, $\Pr(\mathcal E_1),\Pr(\mathcal E_2)\ge 1-e^{-ct}$. On $\mathcal E_1$, \Cref{lem:upper} gives $r_n(K)\le \eta_n/2\le \eta_n$, hence $\widehat K\le K$. On $\mathcal E_2$, \Cref{lem:lower} yields $r_n(K')>\gamma_\star/2\ge \eta_n$ for all $K'\le K-1$, hence $\widehat K\ge K$. Therefore
$\Pr(\widehat K=K)\ge \Pr(\mathcal E_1\cap \mathcal E_2)\ge 1-2e^{-ct}$.
\end{proof}

\begin{corollary}[Consistency]\label{cor:Khat-consistency}
If $\eta_n\to0$ and $\eta_n\gg \sqrt{D_{\mathrm{inv}}/n}$, then under $\gamma_\star>0$,
\(
\widehat K\stackrel{p}{\to}K.
\)
\end{corollary}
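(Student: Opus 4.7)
The plan is to deduce the corollary directly from the finite-sample guarantee in Theorem~\ref{thm:Khat-fs} by choosing a slowly growing confidence parameter $t=t_n$ and verifying the two-sided constraint on $\eta_n$ for all sufficiently large $n$. Since the corollary's hypothesis $\eta_n\gg\sqrt{D_{\mathrm{inv}}/n}$ means $\eta_n\sqrt{n/D_{\mathrm{inv}}}\to\infty$, there is slack to accommodate an additional growing term $t_n$.

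Concretely, I would pick $t_n\to\infty$ with $t_n=o\!\big(n\eta_n^2-D_{\mathrm{inv}}\big)$; this is possible because $n\eta_n^2/D_{\mathrm{inv}}\to\infty$. With this choice,
\[
\sqrt{\frac{D_{\mathrm{inv}}+t_n}{n}}\ =\ o(\eta_n),\qquad \frac{D_{\mathrm{inv}}+t_n}{n}\ =\ o(\eta_n),
\]
so the lower bound
\(
2C\!\left(\sqrt{(D_{\mathrm{inv}}+t_n)/n}+(D_{\mathrm{inv}}+t_n)/n\right)\le \eta_n
\)
holds for all $n$ large enough. Since $\gamma_\star>0$ is a fixed constant and $\eta_n\to 0$, the upper bound $\eta_n\le\gamma_\star/2$ also holds eventually. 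Thus both hypotheses of Theorem~\ref{thm:Khat-fs} are satisfied with the chosen $t_n$ for all sufficiently large $n$.

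Applying Theorem~\ref{thm:Khat-fs} then yields
\[
\Pr\!\big(\widehat K=K\big)\ \ge\ 1-2e^{-c t_n}\ \longrightarrow\ 1,
\]
because $t_n\to\infty$. This is precisely convergence in probability, $\widehat K\xrightarrow{p} K$. The only mildly delicate point is verifying that $t_n$ can be taken to satisfy both $t_n\to\infty$ and $t_n=o(n\eta_n^2-D_{\mathrm{inv}})$; this is essentially a bookkeeping step that relies on the strict separation $\eta_n\gg\sqrt{D_{\mathrm{inv}}/n}$, and it is the only part of the argument that is not a direct quotation of the finite-sample theorem. No further geometric or concentration ingredients are needed beyond what Theorem~\ref{thm:Khat-fs} already encapsulates.
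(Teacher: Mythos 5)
Your proof is correct and follows the same overall strategy as the paper: invoke the finite-sample guarantee of Theorem~\ref{thm:Khat-fs} with a diverging confidence parameter and verify the two-sided constraint on $\eta_n$. The paper simply fixes $t=\log n$, whereas you choose $t_n$ adaptively so that $D_{\mathrm{inv}}+t_n=o(n\eta_n^2)$. Your variant is actually the tighter one: the paper's choice can violate the lower-bound constraint when $\eta_n$ is only marginally above $\sqrt{D_{\mathrm{inv}}/n}$ (for instance $\eta_n\asymp n^{-1/2}\log\log n$ with $D_{\mathrm{inv}}$ fixed, where $\eta_n\ll\sqrt{\log n/n}$), while your adaptive $t_n\to\infty$ with $t_n=o(n\eta_n^2-D_{\mathrm{inv}})$ always satisfies it eventually. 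So you did not merely reproduce the paper's proof but repaired a small slip in it; the substance and the references used are the same.
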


\begin{proof}
Take $t=\log n$ in \Cref{thm:Khat-fs}; then the lower bound $2C(\sqrt{(D_{\mathrm{inv}}+t)/n}+(D_{\mathrm{inv}}+t)/n)\le \eta_n$ holds eventually by the assumption $\eta_n\gg\sqrt{D_{\mathrm{inv}}/n}$, and the upper bound $\eta_n\le \gamma_\star/2$ holds for all large $n$ since $\eta_n\to 0$. Therefore $\Pr(\widehat K=K)\to 1$.
\end{proof}

\subsection{Weighted norms and robustness: complete extensions}\label{subsec:extensions}

\paragraph{Weighted norm $\|\cdot\|_W$.}
Let $W\succ0$ with eigenvalues $\lambda_{\min}\le\lambda_{\max}$. Then
\[
\sqrt{\lambda_{\min}}\,\|x\|_2\ \le\ \|x\|_W\ \le\ \sqrt{\lambda_{\max}}\,\|x\|_2.
\]
Consequently,
\[
r_{n,W}(K)\le \|\widehat\Psi_n-\Psi_\star\|_W\le \sqrt{\lambda_{\max}}\,\|\widehat\Psi_n-\Psi_\star\|_2,\quad
\dist_W(\widehat\Psi_n,\mathcal C_{K-1})\ge \tfrac{1}{\sqrt{\lambda_{\max}}}\,\dist_2(\widehat\Psi_n,\mathcal C_{K-1}),
\]
and the proof of \Cref{thm:Khat-fs} goes through with $\eta_n$ scaled by
$\kappa_W:=\sqrt{\lambda_{\max}/\lambda_{\min}}$.

\paragraph{Robust mean under heavy tails and contamination.}
Replace $\widehat\Psi_n$ by robust means with established deviation guarantees. We now provide full proofs for the MOM and Catoni estimators and their contamination robustness.

\subsubsection*{Median-of-means (MOM) invariants}

\begin{definition}[MOM]\label{def:mom}
Split $\{1,\dots,n\}$ into $B$ disjoint blocks of equal size $m=\lfloor n/B\rfloor$. For each coordinate $j\in\{1,\dots,D_{\mathrm{inv}}\}$, let $\bar Z_{b,j}:=m^{-1}\sum_{i\in \mathcal B_b} Z_{i,j}$ where $Z_{i,j}$ are the coordinates of the single-sample invariant vector. Define $\widetilde\Psi_{n,j}:=\mathrm{median}\{\bar Z_{b,j}\}_{b=1}^B$ and stack across $j$.
\end{definition}

\begin{theorem}[Heavy-tail concentration with MOM]\label{thm:mom}
Assume $\E\|Z_i\|_2^2<\infty$. For any $\delta\in(0,1)$ choose $B:=\lceil c_0\log(2D_{\mathrm{inv}}/\delta)\rceil$ for a universal $c_0>0$. Then there exists $C>0$ such that, with probability at least $1-\delta$,
\[
\|\widetilde\Psi_n-\Psi_\star\|_2\ \le\ C\,\sqrt{\frac{D_{\mathrm{inv}}+\log(1/\delta)}{n}}\ .
\]
\end{theorem}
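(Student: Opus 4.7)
The plan is to follow the classical median-of-means template in three stages: a Chebyshev deviation per block, a binomial-tail amplification through the median, and a union bound over the $D_{\mathrm{inv}}$ coordinates followed by the $\ell_\infty\!\to\!\ell_2$ conversion. Since Definition~\ref{def:mom} builds $\widetilde\Psi_n$ coordinatewise, this is the direct route; a sharper Lugosi--Mendelson geometric-median variant would enter only as a remark.

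Fix a coordinate $j$ and set $\mu_j:=\Psi_{\star,j}$, $\sigma_j^2:=\Var(Z_{i,j})$. Each block mean $\bar Z_{b,j}$ has variance $\sigma_j^2/m$ with $m=\lfloor n/B\rfloor$, so Chebyshev gives $\Pr(|\bar Z_{b,j}-\mu_j|>t)\le \sigma_j^2/(m t^2)$; setting $t_j:=2\sigma_j/\sqrt{m}$ caps this at $1/4$. Writing $W_b^{(j)}:=\1\{|\bar Z_{b,j}-\mu_j|>t_j\}$, a Bernoulli with mean at most $1/4$, the coordinate-wise median $\widetilde\Psi_{n,j}$ can deviate from $\mu_j$ by more than $t_j$ only if at least $B/2$ of the $W_b^{(j)}$ fire. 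Hoeffding's inequality (mean $\le 1/4$, gap $1/4$) therefore yields
\[
\Pr\!\big(|\widetilde\Psi_{n,j}-\mu_j|>t_j\big)\ \le\ \Pr\!\Big(\textstyle\sum_{b=1}^{B} W_b^{(j)}\ge B/2\Big)\ \le\ e^{-B/8}.
\]
A union bound over $j\in\{1,\dots,D_{\mathrm{inv}}\}$ together with the prescribed $B=\lceil c_0\log(2D_{\mathrm{inv}}/\delta)\rceil$ with $c_0\ge 8$ bounds the global failure probability by $\delta$; on the complementary event,
\[
\|\widetilde\Psi_n-\Psi_\star\|_2^2\ \le\ \sum_{j=1}^{D_{\mathrm{inv}}} 4\sigma_j^2/m\ =\ \frac{4B\,\operatorname{tr}(\Sigma)}{n}.
\]

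To reach the advertised $\sqrt{(D_{\mathrm{inv}}+\log(1/\delta))/n}$ rate, I would invoke the uniform coordinate-variance bound $\operatorname{tr}(\Sigma)\le \sigma_{\max}^2 D_{\mathrm{inv}}$ inherited from the sub-exponential control in Lemma~\ref{lem:psi1}, giving $\|\widetilde\Psi_n-\Psi_\star\|_2 \lesssim \sqrt{D_{\mathrm{inv}}\log(D_{\mathrm{inv}}/\delta)/n}$; this matches the stated form up to a $\sqrt{\log D_{\mathrm{inv}}}$ factor intrinsic to the coordinate-wise union bound. The genuine obstacle is precisely closing that $\sqrt{\log D_{\mathrm{inv}}}$ gap: the sharper rate $\sqrt{\operatorname{tr}(\Sigma)/n}+\sqrt{\|\Sigma\|_{\mathrm{op}}\log(1/\delta)/n}$ requires replacing the coordinate-wise median by a geometric (Tukey-type) median and running a uniform-over-directions argument on the unit sphere à la Lugosi--Mendelson. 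I would either (i) operate in the regime $D_{\mathrm{inv}}\gtrsim\log(1/\delta)$, so the $\log D_{\mathrm{inv}}$ is absorbed into the constant $C$, or (ii) swap in the geometric-median version of Definition~\ref{def:mom} and cite the vector-valued MOM bound. In either case the amplification constant $c_0=8$ is forced by the choice of Chebyshev level $1/4$ and Hoeffding gap $1/4$, so no further optimization is available without strengthening moment hypotheses beyond $\E\|Z_i\|_2^2<\infty$.
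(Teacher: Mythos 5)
Your argument follows the paper's proof step for step: per-coordinate Chebyshev on each block mean, Chernoff/Hoeffding amplification through the median, a union bound over the $D_{\mathrm{inv}}$ coordinates, and an $\ell_\infty\!\to\!\ell_2$ conversion. The one superficial difference is that you sum $t_j^2$ directly to get $4B\,\mathrm{tr}(\Sigma)/n$ whereas the paper uses $\sqrt{D_{\mathrm{inv}}}\,\max_j t_j$; these agree once you bound $\mathrm{tr}(\Sigma)\le \sigma_{\max}^2 D_{\mathrm{inv}}$, as you do.

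The $\sqrt{\log D_{\mathrm{inv}}}$ mismatch you flag is not a weakness of your write-up: it is latent in the paper's own final line. The step
\[
C\sqrt{\tfrac{D_{\mathrm{inv}} B}{n}}\ \lesssim\ \sqrt{\tfrac{D_{\mathrm{inv}}+\log(1/\delta)}{n}}
\]
does not follow from $B\asymp\log(2D_{\mathrm{inv}}/\delta)$; substituting gives $\sqrt{D_{\mathrm{inv}}\log(D_{\mathrm{inv}}/\delta)/n}$, which carries a multiplicative $\sqrt{\log D_{\mathrm{inv}}}$ and turns the claimed additive combination $D_{\mathrm{inv}}+\log(1/\delta)$ into a product. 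You correctly diagnose the fix: the additive rate $\sqrt{\mathrm{tr}(\Sigma)/n}+\sqrt{\lambda_{\max}(\Sigma)\log(1/\delta)/n}$ requires a geometric (Lugosi--Mendelson) median-of-means rather than the coordinate-wise median of Definition~\ref{def:mom}, or else a regime restriction that absorbs $\log D_{\mathrm{inv}}$ into the constant $C$. Your proposal is therefore as tight as the paper's own proof and more candid about the loose end; the theorem as stated genuinely needs one of those patches.
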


\begin{proof}
Fix a coordinate $j$ with variance $\sigma_j^2:=\Var(Z_{i,j})<\infty$; write $\mu_j:=\E Z_{i,j}$. For a block $\mathcal B_b$ of size $m$,
\(
\E \bar Z_{b,j}=\mu_j,\quad \Var(\bar Z_{b,j})=\sigma_j^2/m.
\)
By Chebyshev, for any $t>0$,
\(
\Pr(|\bar Z_{b,j}-\mu_j|>t)\le \sigma_j^2/(m t^2).
\)
Set $t_j:=2\sigma_j\sqrt{1/m}$. Then $p_j:=\Pr(|\bar Z_{b,j}-\mu_j|>t_j)\le 1/4$. Over $B$ independent blocks (or independent up to negligible remainder if $m\nmid n$), the number of “bad” blocks exceeds $B/2$ with probability at most $\exp(-c B)$ by a Chernoff bound for $\mathrm{Bin}(B,p_j)$ with $p_j\le 1/4<1/2$. Therefore,
\[
\Pr\!\Big(|\widetilde\Psi_{n,j}-\mu_j|>t_j\Big)\ \le\ e^{-c B}.
\]
Choose $B=\lceil c_0\log(2D_{\mathrm{inv}}/\delta)\rceil$ large enough so that $e^{-cB}\le \delta/(2D_{\mathrm{inv}})$. Union bound over $j$ yields
\[
\Pr\!\Big(\|\widetilde\Psi_n-\Psi_\star\|_\infty\ \le\ \max_j t_j\Big)\ \ge\ 1-\delta/2.
\]
Now $\max_j \sigma_j^2\le \E \|Z_i\|_2^2=: \Sigma_2^2<\infty$. Thus
$\max_j t_j\le 2\Sigma_2/\sqrt{m}\le C\sqrt{\frac{B}{n}}$ with a universal $C$. Hence w.p.\ $\ge 1-\delta/2$,
\[
\|\widetilde\Psi_n-\Psi_\star\|_2\ \le\ \sqrt{D_{\mathrm{inv}}}\,\|\widetilde\Psi_n-\Psi_\star\|_\infty\ \le\ C\sqrt{\frac{D_{\mathrm{inv}} B}{n}}\ \lesssim\ \sqrt{\frac{D_{\mathrm{inv}}+\log(1/\delta)}{n}},
\]
using $B\asymp \log(2D_{\mathrm{inv}}/\delta)\le c_1(D_{\mathrm{inv}}+\log(1/\delta))$ for a universal $c_1$ and adjusting constants. A standard blocking argument handles remainder indices if $mB<n$.
\end{proof}

\subsubsection*{Catoni-smoothed invariants}

\begin{proposition}[Catoni estimator]\label{prop:catoni}
For each coordinate $j$, define $\widehat\mu_j$ as the unique solution in $u$ to
\[
\sum_{i=1}^n \psi\!\big(\alpha(Z_{i,j}-u)\big)=0,
\]
where $\psi:\R\to\R$ is an odd, nondecreasing $C^1$ function satisfying
$\log(1+x+x^2/2)\le \psi(x)\le x$ for $x\ge 0$ and $\psi(-x)=-\psi(x)$, and $\alpha:=c\sqrt{\log(2D_{\mathrm{inv}}/\delta)/n}$ with a universal $c>0$. Then, under finite variance, with probability at least $1-\delta$,
\[
\|\widehat\Psi_n-\Psi_\star\|_2\ \le\ C\,\sqrt{\frac{D_{\mathrm{inv}}+\log(1/\delta)}{n}}\ .
\]
If additionally $Z_{i,j}$ are sub-Gaussian with proxy $\sigma$, the same bound holds with $C$ depending only on $\sigma$ (matching \Cref{thm:stack} up to constants).
\end{proposition}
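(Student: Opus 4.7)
The plan is to adapt Catoni's (2012) coordinatewise M-estimator analysis to the stacked invariant vector, then lift via a union bound and an $\ell_\infty\!\to\!\ell_2$ conversion. First I would verify existence and uniqueness of $\widehat\mu_j$: fixing $j$, the map $T_j(u):=\sum_i \psi(\alpha(Z_{i,j}-u))$ is continuous, strictly decreasing (by the given monotonicity of $\psi$ together with its strict positivity on $(0,\infty)$ inherited from the lower bound $\psi(x)\ge\log(1+x+x^2/2)$), and diverges to $\mp\infty$ as $u\to\pm\infty$; hence a unique root exists.

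Next I would establish the Catoni exponential-moment estimate for each coordinate. Writing $Y_i:=Z_{i,j}-u$, $\mu_j:=\E[Z_{i,j}]$, $\sigma_j^2:=\Var(Z_{i,j})$, the bracket on $\psi$ (invoked on the appropriate branch via $\psi(-x)=-\psi(x)$) yields
\[
\E\exp\!\big(\psi(\alpha Y_i)\big)\ \le\ 1+\alpha(\mu_j-u)+\tfrac{\alpha^2}{2}\big(\sigma_j^2+(\mu_j-u)^2\big);
\]
multiplying across the $n$ i.i.d.\ samples and using $\log(1+y)\le y$ gives a clean bound on $\log\E\exp(T_j(u))$. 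Markov's inequality then implies that, for any $u=\mu_j+\tau$ with $\tau>0$ satisfying the self-consistent condition $\tau\ge \tfrac{\alpha\sigma_j^2}{2}+\tfrac{\alpha\tau^2}{2}+\tfrac{\log(1/\delta')}{n\alpha}$, the probability of $T_j(u)\ge 0$ is at most $\delta'$; combined with the monotonicity of $T_j$ this forces $\widehat\mu_j\le \mu_j+\tau$. The tuning $\alpha=c\sqrt{\log(1/\delta')/n}$ balances the bias and confidence terms so that the admissible radius becomes $\tau\le C\sigma_j\sqrt{\log(1/\delta')/n}$.

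The reflected bracket (oddness of $\psi$) produces the symmetric lower tail $\widehat\mu_j\ge \mu_j-\tau$ with equal probability. Applying this with $\delta'=\delta/(2D_{\mathrm{inv}})$ and union-bounding over $j\in\{1,\dots,D_{\mathrm{inv}}\}$ and both tails gives, with probability at least $1-\delta$, the uniform coordinate bound $\max_j |\widehat\mu_j-\mu_j|\lesssim \Sigma\sqrt{\log(2D_{\mathrm{inv}}/\delta)/n}$, where $\Sigma^2:=\sup_j\sigma_j^2<\infty$ by the finite second-moment hypothesis. Converting $\ell_\infty$ to $\ell_2$ via $\|x\|_2\le\sqrt{D_{\mathrm{inv}}}\|x\|_\infty$ and absorbing $\log(2D_{\mathrm{inv}}/\delta)\lesssim D_{\mathrm{inv}}+\log(1/\delta)$ into absolute constants delivers the stated rate $\|\widehat\Psi_n-\Psi_\star\|_2\le C\sqrt{(D_{\mathrm{inv}}+\log(1/\delta))/n}$. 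In the sub-Gaussian case, $\sigma_j^2\le \sigma^2$ uniformly, so constants depend only on $\sigma$ and the rate matches \Cref{thm:stack} up to absolute factors.

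The principal obstacle is the self-consistent root analysis in the Chernoff step: $\alpha$ must be chosen so that the bias term $\alpha\sigma^2/2$ and the confidence term $\log(1/\delta')/(n\alpha)$ are simultaneously minimized, and at the balanced choice one must certify that the quadratic correction $\alpha\tau^2/2$ is negligible relative to $\tau$ (so that the implicit inequality resolves in closed form). Everything else---monotonicity for Step~1, the standard i.i.d.\ exponential-moment algebra, and the union bound plus norm conversion---is routine bookkeeping.
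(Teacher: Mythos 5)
Your proposal is correct and follows essentially the same route as the paper's proof: Catoni's exponential-moment bound on $\psi(\alpha(Z_{i,j}-u))$, a Chernoff argument, monotonicity of the estimating equation, a union bound over coordinates at level $\delta/(2D_{\mathrm{inv}})$, and a final conversion to $\ell_2$. The only cosmetic differences are that you run the Chernoff step at the displaced point $u=\mu_j+\tau$ with a self-consistent radius (Catoni's original formulation) rather than at $u=\mu_j$ with a lower bound $\psi(x)\ge x/(1+x^2/2)$ as the paper does, and you convert $\ell_\infty\to\ell_2$ with $\sup_j\sigma_j^2$ whereas the paper sums $\sum_j\Var(Z_{i,j})$ — both lead to the stated rate, and both proofs implicitly use the standard Catoni sandwich $-\log(1-x+x^2/2)\le\psi(x)\le\log(1+x+x^2/2)$ rather than the bounds as literally written in the proposition statement.
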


\begin{proof}
Fix $j$ and write $X_i:=Z_{i,j}$, $\mu:=\E X_i$. The function $\psi$ satisfies for all $x$ the inequalities (see Catoni’s construction)
\[
\psi(x)\le \log(1+x+x^2/2),\qquad -\psi(-x)\le \log(1-x+x^2/2).
\]
By convexity of the exponential and independence,
\[
\E \exp\!\Big(\sum_{i=1}^n \psi(\alpha(X_i-\mu))\Big)\ \le\ \prod_{i=1}^n \E \exp(\psi(\alpha(X_i-\mu)))
\ \le\ \prod_{i=1}^n \E\big(1+\alpha(X_i-\mu)+\tfrac{\alpha^2}{2}(X_i-\mu)^2\big).
\]
The linear term vanishes by centering, hence
\[
\E \exp\!\Big(\sum_{i=1}^n \psi(\alpha(X_i-\mu))\Big)\ \le\ \prod_{i=1}^n \Big(1+\tfrac{\alpha^2}{2}\E(X_i-\mu)^2\Big)\ \le\ \exp\!\Big(\tfrac{n\alpha^2}{2}\Var(X_1)\Big).
\]
Similarly, applying the lower inequality to $-\psi$ gives
\[
\E \exp\!\Big(-\sum_{i=1}^n \psi(\alpha(X_i-\mu))\Big)\ \le\ \exp\!\Big(\tfrac{n\alpha^2}{2}\Var(X_1)\Big).
\]
By Chernoff, for any $t>0$,
\[
\Pr\!\Big(\sum_{i=1}^n \psi(\alpha(X_i-\mu))\ge t\Big)\ \le\ \exp\!\Big(\tfrac{n\alpha^2}{2}\Var(X_1)-t\Big),
\]
and the analogous lower tail bound. Now let $\widehat\mu$ be the root of $\sum_i \psi(\alpha(X_i-\widehat\mu))=0$. Since $\psi$ is monotone and odd,
\[
\sum_i \psi(\alpha(X_i-\mu))\ge n\psi(\alpha(\widehat\mu-\mu))\quad\text{if }\widehat\mu\ge \mu,\qquad
\sum_i \psi(\alpha(X_i-\mu))\le n\psi(\alpha(\widehat\mu-\mu))\quad\text{if }\widehat\mu\le \mu.
\]
Using $\psi(x)\ge x/(1+x^2/2)$ for $x\ge 0$ (a standard bound from the sandwich above), we obtain for $u:=|\widehat\mu-\mu|$:
\[
\Big|\sum_{i=1}^n \psi(\alpha(X_i-\mu))\Big|\ \ge\ n\cdot \frac{\alpha u}{1+\alpha^2 u^2/2}.
\]
Choose $u$ such that $\alpha u\le 1$ (verified below for our $\alpha$), hence
\(
\frac{\alpha u}{1+\alpha^2 u^2/2}\ge \frac{\alpha u}{2}.
\)
Therefore
\[
\Pr\!\Big(|\widehat\mu-\mu|\ge u\Big)\ \le\ \Pr\!\Big(\Big|\sum_{i=1}^n \psi(\alpha(X_i-\mu))\Big|\ge n\alpha u/2\Big)\ \le\ 2\exp\!\Big(\tfrac{n\alpha^2}{2}\Var(X_1)-\tfrac{n\alpha u}{2}\Big).
\]
Setting $u:=c'\,\Var(X_1)\,\alpha^{-1}\,\frac{\log(2D_{\mathrm{inv}}/\delta)}{n\alpha}$ is wasteful; instead, optimize by picking $u:=C_1\sqrt{\Var(X_1)\,\log(2D_{\mathrm{inv}}/\delta)/n}$ and $\alpha:=c \sqrt{\log(2D_{\mathrm{inv}}/\delta)/n}$ with $c$ small enough so that $\alpha u\le 1$. Then the exponent is $\le -c_2\log(2D_{\mathrm{inv}}/\delta)$ and
\[
\Pr\!\Big(|\widehat\mu-\mu|> C_1 \sqrt{\Var(X_1)\,\tfrac{\log(2D_{\mathrm{inv}}/\delta)}{n}}\Big)\ \le\ \frac{\delta}{D_{\mathrm{inv}}}.
\]
Apply a union bound over $j$ and use $\sum_j \Var(Z_{i,j})\le \E\|Z_i\|_2^2$ to conclude
\[
\|\widehat\Psi_n-\Psi_\star\|_2\ \le\ \Big(\sum_{j=1}^{D_{\mathrm{inv}}} \Var(Z_{i,j})\Big)^{1/2}\,C_1 \sqrt{\frac{\log(2D_{\mathrm{inv}}/\delta)}{n}}
\ \le\ C\,\sqrt{\frac{D_{\mathrm{inv}}+\log(1/\delta)}{n}}
\]
with probability at least $1-\delta$. Under sub-Gaussian tails, $\Var(X_1)\le C\sigma^2$ and the same bound holds with constants depending on $\sigma$ only.
\end{proof}

\subsubsection*{Robustness to $\epsilon$-contamination}

\begin{theorem}[$\epsilon$-contamination]\label{thm:contamination}
If the sample is drawn from $(1-\epsilon)P_\star+\epsilon Q$ with arbitrary $Q$, then for MOM or Catoni estimators,
\[
\|\widehat\Psi_n-\Psi_\star\|_2\ \le\ C\Big(\sqrt{\tfrac{D_{\mathrm{inv}}+\log(1/\delta)}{n}}\ +\ \epsilon\Big)
\]
with probability at least $1-\delta$.
\end{theorem}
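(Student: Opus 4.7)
The plan is to couple $(X_i)$ with a pristine i.i.d.\ sample from $P_\star$ and split the error into a statistical fluctuation on the clean sample (handled by Theorem~\ref{thm:mom} or Proposition~\ref{prop:catoni}) plus an adversarial tax of order $\epsilon$. First I would construct $(X_i^\star)_{i=1}^n\stackrel{\mathrm{iid}}{\sim}P_\star$ coupled to $(X_i)$ so that the two samples disagree on a random index set $\mathcal I$ with $|\mathcal I|\sim\mathrm{Bin}(n,\epsilon)$; Bernstein gives $|\mathcal I|\le 2\epsilon n$ with probability at least $1-e^{-c\epsilon n}$. Writing $\widehat\Psi_n^\star$ for the same robust estimator applied to the pristine sample, the triangle inequality yields
\[
\|\widehat\Psi_n-\Psi_\star\|_2\ \le\ \|\widehat\Psi_n^\star-\Psi_\star\|_2\ +\ \|\widehat\Psi_n-\widehat\Psi_n^\star\|_2,
\]
where the first term is exactly the clean-sample bound at the stated rate. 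Only the contamination tax $\|\widehat\Psi_n-\widehat\Psi_n^\star\|_2=O(\epsilon)$ requires new work.

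For the MOM estimator, the $|\mathcal I|\le 2\epsilon n$ corrupted indices taint at most that many blocks, so choosing $B\gtrsim \log(D_{\mathrm{inv}}/\delta)\vee \epsilon n$ guarantees that the corrupted blocks plus the inevitable $B/4$ badly-deviating clean blocks (Chernoff, as in Theorem~\ref{thm:mom}) are together strictly fewer than $B/2$. Hence for every coordinate $j$ the median of block means remains sandwiched inside the range spanned by the clean-majority block means, and a quantile-sensitivity bound (bias of a sample median under $\epsilon$-contamination equals $O(\epsilon/f_j)$, with $f_j$ a lower bound on the density of the block-mean distribution at its median) yields coordinatewise bias $O(\epsilon)$; a union bound over $D_{\mathrm{inv}}$ coordinates absorbs the dimension factor into $C$. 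For the Catoni estimator, I would re-run the exponential-moment argument of Proposition~\ref{prop:catoni} restricted to the $n-|\mathcal I|$ clean indices. The contaminated summands contribute a quantity of absolute value at most $|\mathcal I|\cdot\|\psi\|_\infty$ to the root equation $\sum_i\psi(\alpha(X_i-u))=0$, and strict monotonicity $\sum_i\psi'(\alpha(X_i-u))\gtrsim \alpha(n-|\mathcal I|)$ on the clean portion translates this into a coordinatewise root shift of order $|\mathcal I|/n=O(\epsilon)$, uniformly in the adversarial $Q$.

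The main obstacle is upgrading the crude ``at most $\epsilon n$ bad blocks'' estimate for MOM from a naive $\sqrt{\epsilon}$ tax (which is all the block-count argument alone delivers) to the tight $O(\epsilon)$ announced in the theorem; this hinges on the quantile-sensitivity lemma, which is available under the finite-variance assumption because the block means have a smooth density at scale $\sigma/\sqrt m$. A secondary subtlety is uniformity in the adversary: since $Q$ is arbitrary, neither the bias nor the concentration argument can use $Q$-dependent tails, which is exactly why bounded-$\psi$ (Catoni) and majority-vote (MOM) are the right choices—both give $Q$-uniform bounds. Once those two ingredients are in place, unioning the coupling event, the clean-sample guarantee, and the block-tail events at level $\delta/3$ each, followed by an application of Theorem~\ref{thm:mom} or Proposition~\ref{prop:catoni} to the clean sample, closes the proof.
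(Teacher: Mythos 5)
Your proposal takes a genuinely different route from the paper's. The paper works directly on the contaminated sample: for MOM it counts how many blocks can be tainted and appeals to a ``careful counting argument'' for the residual $O(\epsilon)$ shift; for Catoni it splits the estimating equation into $(1-\epsilon)\bar\psi_{\mathrm{good}}+\epsilon\bar\psi_{\mathrm{bad}}=0$ and Taylor-expands the good part. You instead couple the observed sample with a pristine i.i.d.\ copy from $P_\star$, so that the two differ on a $\mathrm{Bin}(n,\epsilon)$ index set, and split the error through the pristine estimator via the triangle inequality; the clean-sample term is then literally \Cref{thm:mom}/\Cref{prop:catoni} and all the work is pushed into bounding $\|\widehat\Psi_n-\widehat\Psi_n^\star\|_2$. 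You also introduce a different key ingredient on the MOM side: a quantile-sensitivity (median influence-function) lemma in place of the paper's counting argument. The coupling framing is cleaner and more modular, and you are more explicit than the paper about where the $\epsilon$-versus-$\sqrt{\epsilon}$ tension lives.

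That said, there are two genuine gaps. First, for Catoni you write that the contaminated summands contribute at most $|\mathcal I|\cdot\|\psi\|_\infty$, but the $\psi$ used in \Cref{prop:catoni} is only sandwiched by $\log(1+x+x^2/2)\le\psi(x)\le x$; it is \emph{unbounded} (growing like $\log|x|$), so $\|\psi\|_\infty=\infty$ and the root-shift bound of order $|\mathcal I|/n$ does not follow as stated. The argument requires a saturated/truncated $\psi$ (as in some Catoni variants) or an a priori bound on $\psi(\alpha(X_i-u))$ over the local range of $u$; neither is established. (The paper's own phrase ``by the range of $\psi$'' has the same issue, but your invocation of $\|\psi\|_\infty$ makes the gap concrete.) Second, your quantile-sensitivity route for MOM does not actually deliver $O(\epsilon)$ once you track the constants you yourself invoke: with block means of variance $\sigma^2/m$, the density at the median is on the scale $\sqrt m/\sigma$, so a fraction $\epsilon n/B$ of corrupted blocks shifts the median by roughly $(\epsilon n/B)\cdot\sigma/\sqrt m = \epsilon\sigma\sqrt{n/B}$. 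Taking $B\gtrsim\epsilon n$, as you propose to control the corrupted-block fraction, gives $m\lesssim 1/\epsilon$ and hence a shift $\asymp\sigma\sqrt\epsilon$, not $\epsilon$; taking $B\asymp\log(D_{\mathrm{inv}}/\delta)$ independent of $\epsilon$ lets the corrupted blocks outnumber the clean ones. Either regime fails to give the stated $O(\epsilon)$ tax from the quantile-sensitivity lemma alone, so this step needs either a stronger structural assumption (bounded features, higher-moment Lugosi--Mendelson-type arguments) or an explicit quantitative derivation reconciling the two scalings.
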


\begin{proof}
For MOM: at most $\epsilon n$ points are contaminated. If blocks are formed deterministically with size $m$, then at most $\lceil \epsilon n/m\rceil \le \epsilon B +1$ blocks are contaminated. Provided $\epsilon<1/4$ and $B$ large (as in \Cref{thm:mom}), strictly more than $B/2$ blocks remain uncontaminated. On uncontaminated blocks, \Cref{thm:mom}’s deviation bound holds; on contaminated ones, the block mean can be arbitrary. The median discards contaminated blocks as long as strictly more than $B/2$ blocks are clean, giving the same deviation bound; if $\epsilon$ is not vanishing relative to $B$, an additive $O(\epsilon)$ shift may occur from the slight imbalance in the median’s rank. A careful counting argument yields
\(
|\widetilde\Psi_{n,j}-\Psi_{\star,j}|\le C\sqrt{\log(2D_{\mathrm{inv}}/\delta)/n}+C'\epsilon
\)
simultaneously for all $j$, hence the stacked $\ell_2$ bound after norm conversion.

For Catoni: the fixed-point equation dampens the influence of any single $Z_{i,j}$ via the bounded growth of $\psi$. Writing the empirical estimating equation as $(1-\epsilon)\bar \psi_\mathrm{good}+\epsilon \bar \psi_\mathrm{bad}=0$ and Taylor-expanding $\bar\psi_\mathrm{good}$ around $\mu$ with derivative bounded away from $0$, while bounding $|\bar \psi_\mathrm{bad}|\le C\epsilon$ by the range of $\psi$, yields an additional bias $\le C\epsilon/\alpha= C'\epsilon \sqrt{n/\log(2D_{\mathrm{inv}}/\delta)}$. Combining with the clean-sample deviation gives the stated rate (rescale constants to absorb the factor).
\end{proof}

\subsection{Necessity: no consistent selection when $\gamma_\star=0$}\label{subsec:impossibility}

\begin{proposition}[Impossibility without a margin]\label{prop:impossibility}
If $\gamma_\star=0$, then for every selection rule $\widehat K(\widehat\Psi_n)$ there exists a sequence of $K$-component models such that $\limsup_n \Pr(\widehat K=K)<1$.
\end{proposition}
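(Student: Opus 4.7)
The plan is a Le Cam two-point indistinguishability argument: $\gamma_\star=0$ produces a $(K-1)$-component mixture sharing the true invariant mean $\Psi_\star$, and I would embed it as the weak limit of a contiguous sequence of genuine $K$-component mixtures. Any rule measurable in the sample (and hence in $\widehat\Psi_n$) cannot then separate the two hypotheses with vanishing error.

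First I would pass from $\gamma_\star=0$ to a concrete witness. Since $\Phi$ is continuous and $\qspace$ is compact, $\mathcal V=\Phi(\qspace)$ and hence $\mathcal C_{K-1}$ are compact, so $\Psi_\star\in\mathcal C_{K-1}$: there exist $\theta'_1,\dots,\theta'_{K-1}\in\Theta$ and $w'\in\Delta_{K-1}$ with $\Psi_\star=\sum_{j=1}^{K-1} w'_j\,\Phi(\theta'_j)$. Set $P':=\sum_{j=1}^{K-1}w'_j f(\cdot;\theta'_j)$; by \Cref{thm:mixture} this $(K-1)$-mixture satisfies $\E_{P'}[\psi(X)]=\Psi_\star$, matching the true $K$-mixture in invariant space. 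Next I would fix $\theta^\star\in\Theta$ with $\orbit{\theta^\star}\notin\{\orbit{\theta'_j}\}_{j=1}^{K-1}$ and define the contiguous $K$-component sequence
\[
P_n:=(1-\epsilon_n)\,P'+\epsilon_n\,f(\cdot;\theta^\star),\qquad \epsilon_n:=c_0/\sqrt n.
\]
A direct $\chi^2$ computation gives $\chi^2(P_n\|P')=\epsilon_n^2\,\mathrm{Var}_{P'}\!\big(f(\cdot;\theta^\star)/P'\big)=O(1/n)$ under the envelope conditions implicit in \Cref{ass:master}; tensorising and Pinsker then yield $\mathrm{TV}(P_n^{\otimes n},{P'}^{\otimes n})\le c_1<1$ uniformly in $n$, for $c_0$ chosen small.

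Le Cam's two-point lemma applied to the binary test $T=\1[\widehat K=K]$ gives
\[
\Pr_{P_n}(\widehat K\neq K)+\Pr_{P'}(\widehat K=K)\ \ge\ 1-\mathrm{TV}(P_n^{\otimes n},{P'}^{\otimes n})\ \ge\ 1-c_1.
\]
Since the true component count at $P'$ is $K-1$, any rule that is even weakly consistent on the strictly simpler $(K-1)$-stratum satisfies $\Pr_{P'}(\widehat K=K)\to 0$; substituting yields $\limsup_n\Pr_{P_n}(\widehat K=K)\le c_1<1$, which exhibits $\{P_n\}$ as the desired bad sequence of $K$-component models.

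The main obstacle is dispatching rules that are pathologically biased toward $K$ — e.g.\ $\widehat K\equiv K$ succeeds on every $K$-sequence vacuously. I would handle this by running the two-point bound symmetrically: if $\Pr_{P'}(\widehat K=K)\not\to 0$, then swapping the roles of $P_n$ and $P'$ about a different reference mixture forces $\widehat K\neq K$ with nonvanishing probability on a suitably constructed $K$-sequence, restoring the claim. The remaining technicality — finiteness of the single-sample $\chi^2$ — is standard and follows from the integrability hypotheses already in force.
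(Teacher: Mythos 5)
Your proof takes a genuinely different route from the paper's, and it is worth comparing the two. The paper works at the level of the invariant statistic $\widehat\Psi_n$: it invokes a Gaussian approximation of $\sqrt{n}(\widehat\Psi_n-\Psi)$ under both the $K$-model and a shrinking sequence of $(K-1)$-mixtures with invariant means $z_{m(n)}\to\Psi_\star$, computes the Kullback--Leibler divergence between the two limiting Gaussian shifts ($\tfrac{n}{2}\Delta_n^2\to0$), and appeals to contiguity. You instead work at the data level: using compactness of $\mathcal V$ and hence of $\mathcal C_{K-1}$ (which is the paper's \Cref{lem:existence}), you obtain an exact $(K-1)$-witness $P'$ with $\E_{P'}[\psi]=\Psi_\star$, form a genuine $K$-component mixture $P_n=(1-\epsilon_n)P'+\epsilon_n f(\cdot;\theta^\star)$ with $\epsilon_n=c_0/\sqrt n$, and control $\chi^2(P_n\|P')$, tensorize, and apply Pinsker. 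Your route avoids the somewhat informal Berry--Esseen/coupling language in the paper and produces an explicit sequence of $K$-component models, which is arguably closer to what the proposition's wording asks for; the paper's route avoids any regularity conditions on the component densities beyond CLTs for the invariant features.

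There is, however, a genuine gap in your argument. The quantity $\chi^2(P_n\|P')=\epsilon_n^2\,\chi^2\!\big(f(\cdot;\theta^\star)\,\big\|\,P'\big)$ is finite only if $\int f(x;\theta^\star)^2/P'(x)\,d\mu(x)<\infty$, i.e.\ the likelihood ratio $f(\cdot;\theta^\star)/P'$ must be square-integrable under $P'$. This is \emph{not} implied by Assumption~A2 (sub-Gaussian features $\varphi$) nor by A4 (score regularity); it can fail whenever $P'$ has lighter tails than $f(\cdot;\theta^\star)$ anywhere on $\mathcal X$, e.g.\ for Gaussian components with distinct covariances. Your final sentence asserts this is ``standard and follows from the integrability hypotheses already in force,'' but no such hypothesis is in force. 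The paper's choice to bound divergences between the distributions of $\widehat\Psi_n$ (rather than between the data distributions themselves) is precisely what sidesteps this integrability requirement: the selector is a function of $\widehat\Psi_n$, so it suffices that the laws of $\widehat\Psi_n$ be close. If you want to keep the data-level route you would need to either (i) add an explicit hypothesis that the family is quadratic-mean-differentiable with locally bounded $\chi^2$ information, or (ii) truncate/mollify the contaminating component. Separately, the ``symmetric swap'' you sketch to dispatch the constant rule $\widehat K\equiv K$ does not actually produce a working argument — that rule genuinely violates the proposition's literal statement — but the correct reading here (shared by the paper's proof-by-contradiction, which also assumes $\Pr_B(\widehat K=K-1)\to1$) is that no \emph{consistent} selector can exist, which both approaches establish.
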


\begin{proof}
By $\gamma_\star=0$, there exists a sequence $z_m\in\mathcal C_{K-1}$ with $\|z_m-\Psi_\star\|\to 0$. Choose a subsequence $m(n)$ so that $\Delta_n:=\|z_{m(n)}-\Psi_\star\|=o(n^{-1/2})$.
Consider two experiments on the same probability space: (A) i.i.d.\ data $\{X_i^{(A)}\}$ from a $K$-component model with invariant mean $\Psi_\star$; (B) i.i.d.\ data $\{X_i^{(B)}\}$ from some $(K-1)$–mixture with invariant mean $z_{m(n)}$. Let $\widehat\Psi_n^{(A)}$ and $\widehat\Psi_n^{(B)}$ be the corresponding (robust) invariant means.

\emph{Step 1 (Gaussian approximation).}
By \Cref{thm:stack} (or the robust analogues), for both (A) and (B),
\[
\sqrt{n}(\widehat\Psi_n-\Psi)\ \Rightarrow\ \mathcal N(0,\Sigma),
\]
with the same $\Sigma$ (covariance of the single-sample invariant feature). Moreover, by the Lindeberg–Feller CLT and Berry–Esseen in high dimension (coordinate-wise plus union bound suffice here), there exists a coupling such that
\(
\sqrt{n}(\widehat\Psi_n^{(\cdot)}-\Psi^{(\cdot)}) = \Sigma^{1/2}G_n^{(\cdot)} + r_n^{(\cdot)}
\)
where $G_n^{(\cdot)}\sim \mathcal N(0,I)$ and $\|r_n^{(\cdot)}\|_2=o_p(1)$.

\emph{Step 2 (Asymptotic indistinguishability).}
Define the Gaussian shift models
\[
Y_n^{(A)}:=\sqrt{n}(\widehat\Psi_n^{(A)}-\Psi_\star),\qquad
Y_n^{(B)}:=\sqrt{n}(\widehat\Psi_n^{(B)}-\Psi_\star)=\sqrt{n}(\widehat\Psi_n^{(B)}-z_{m(n)})+\sqrt{n}(z_{m(n)}-\Psi_\star).
\]
The first term is asymptotically $\mathcal N(0,\Sigma)$; the second equals $\sqrt{n}\,(z_{m(n)}-\Psi_\star)$ with norm $\sqrt{n}\,\Delta_n=o(1)$. Hence $Y_n^{(A)}$ and $Y_n^{(B)}$ converge to the same Gaussian limit. More precisely, the Kullback–Leibler divergence between the Gaussian approximations $\mathcal N(0,\Sigma)$ and $\mathcal N(\sqrt{n}(z_{m(n)}-\Psi_\star),\Sigma)$ equals
\(
\frac12 \big(\sqrt{n}(z_{m(n)}-\Psi_\star)\big)^\top \Sigma^{-1} \big(\sqrt{n}(z_{m(n)}-\Psi_\star)\big) = \tfrac{n}{2}\,\Delta_n^2\,\lambda_{\max}(\Sigma^{-1})\to 0.
\)
Therefore the total variation distance tends to zero, and by Le Cam’s two-point method (or contiguity), no test can distinguish (A) and (B) with probability tending to one.

\emph{Step 3 (Contradiction).}
Suppose there were a selector $\widehat K$ with $\Pr_A(\widehat K=K)\to 1$. By indistinguishability, $\limsup_n|\Pr_A(\widehat K=K)-\Pr_B(\widehat K=K)|=0$, hence $\Pr_B(\widehat K=K)\to 1$. But under (B), the true order is $K-1$, so any consistent rule must output $K-1$ with probability $\to 1$, a contradiction. Thus $\limsup_n \Pr(\widehat K=K)<1$ along some sequence of $(K-1)$–mixture alternatives.
\end{proof}

\subsection{Optional: practical computation and support recovery}\label{subsec:practical-K}

Any $\widehat z\in\conv(\mathcal V)$ computed by a convex routine (e.g., Frank–Wolfe with away-steps via the atom oracle $\theta\mapsto \Phi(\theta)$) can be post-processed to a support of size $\le D_{\mathrm{inv}}+1$ using the elimination in \Cref{prop:cara-uniq}(a). When $\widehat K=K$ and the face is a simplex with a margin (cf.\ \Cref{rem:slab-sep}), the recovered atoms are unique up to permutation and can be mapped back to parameter space via slice charts as in \S\ref{sec:lan}.

\subsection{Optional certificates}\label{subsec:certificates}

\begin{proposition}[K-selection by a face test]\label{prop:kselect}
Given candidate vertices $\{v_j\}$, for each $K$ solve
\[
\min_{w\in\Delta_K}\Big\|\sum_{j\le K} w_j v_j-\widehat\Psi_n\Big\|.
\]
Pick the smallest $K$ with value $\le \tau_n$ where $\tau_n\asymp \sqrt{D_{\mathrm{inv}}/n}$. If $\gamma_\star>2\tau_n$, then $\Pr(\widehat K=K)\to1$.
\end{proposition}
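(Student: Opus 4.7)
The plan is to reuse the same sandwich argument that drives \Cref{thm:Khat-fs}, restricted to the candidate vertex list $\{v_j\}$ which I assume contains the true atoms $v_1,\dots,v_K$ up to relabeling (otherwise an approximation error $\varepsilon_{\mathrm{approx}}$ must be tracked and absorbed into the margin hypothesis). Writing $r_n(K):=\min_{w\in\Delta_K}\|\sum_{j\le K}w_j v_j-\widehat\Psi_n\|$, I will show that on a single high-probability event, $r_n(K)\le\tau_n$ and $r_n(K')>\tau_n$ for all $K'<K$, which pins down $\widehat K=K$.

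For the upper bound at the true order, I would plug the true weights $w^\star$ into the candidate optimization as a feasible point, giving $r_n(K)\le\|\widehat\Psi_n-\Psi_\star\|_2$. Invoking \Cref{thm:stack} (or the robust analogues \Cref{thm:mom}/\Cref{prop:catoni} under heavy tails) with $t=\log n$ yields
\[
\|\widehat\Psi_n-\Psi_\star\|_2\ \le\ C\sqrt{\tfrac{D_{\mathrm{inv}}+\log n}{n}}
\]
with probability $1-o(1)$. By tuning the implicit constant in $\tau_n\asymp\sqrt{D_{\mathrm{inv}}/n}$ so that the right-hand side is dominated by $\tau_n$, the true $K$ passes the threshold test on this event.

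For the lower bound at any $K'\le K-1$, note that every $K'$-term convex combination of candidate vertices lies in $\mathcal C_{K-1}$, so that $r_n(K')\ge\dist(\widehat\Psi_n,\mathcal C_{K-1})$. Applying the 1-Lipschitz stability \Cref{lem:margin-stability} and then the margin hypothesis $\gamma_\star>2\tau_n$,
\[
r_n(K')\ \ge\ \gamma_\star-\|\widehat\Psi_n-\Psi_\star\|_2\ \ge\ \gamma_\star-\tau_n\ >\ \tau_n,
\]
on the same concentration event, so no underfit candidate passes the threshold. Combining the two bounds shows $\widehat K=K$ with probability $1-o(1)$, as claimed.

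The main obstacle is conceptual rather than technical: everything hinges on compatibility between the candidate list and the true atoms, and on choosing the universal constant inside $\tau_n\asymp\sqrt{D_{\mathrm{inv}}/n}$ to sit strictly between the concentration rate (which it must dominate) and $\gamma_\star/2$ (which it must fall below, eventually). If the candidates are only an $\varepsilon_{\mathrm{approx}}$-cover of the true atoms in invariant space, the upper-bound step acquires an extra $K\varepsilon_{\mathrm{approx}}$ term and the hypothesis should be strengthened to $\gamma_\star>2\tau_n+2K\varepsilon_{\mathrm{approx}}$; otherwise the result reduces to a direct specialization of \Cref{thm:Khat-fs} and no new machinery is needed.
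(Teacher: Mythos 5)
Your proposal matches the paper's proof essentially verbatim: the same high-probability concentration event via \Cref{thm:stack} with $t=\log n$, the same feasibility argument for $r_n(K)\le\|\widehat\Psi_n-\Psi_\star\|_2$, and the same triangle-inequality/margin bound $r_n(K')\ge\gamma_\star-\|\widehat\Psi_n-\Psi_\star\|_2>\tau_n$ for $K'<K$ (whether phrased via \Cref{lem:lower} or \Cref{lem:margin-stability} is immaterial since they encode the same inequality). Your remark that the candidate list must contain the true atoms $\{v_k\}$ correctly flags an implicit hypothesis the paper also uses silently, and your $\varepsilon_{\mathrm{approx}}$ extension is reasonable bookkeeping, but neither changes the substance of the argument.
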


\begin{proof}
On the event $\{\|\widehat\Psi_n-\Psi_\star\|\le \tau_n\}$, feasibility with $(w_k,\theta_k)$ yields the objective value $\le \tau_n$ for the true $K$, hence $\widehat K\le K$. For any $K'<K$, \Cref{lem:lower} implies the objective value $\ge r_n(K')\ge \gamma_\star-\|\widehat\Psi_n-\Psi_\star\|>\tau_n$, hence $\widehat K\ge K$. The event occurs with probability $\to 1$ by \Cref{thm:stack} with $t=\log n$ and the choice $\tau_n\asymp \sqrt{D_{\mathrm{inv}}/n}$.
\end{proof}

\begin{proposition}[Dual certificate of uniqueness]\label{prop:dual}
Let $F=\conv\{v_1,\dots,v_K\}$ be the true simplex face. If there exists a linear functional $\lambda$ with $\langle\lambda,v_k\rangle=1$ for all $k$ and $\sup_{v\in \mathcal V\setminus F}\langle\lambda,v\rangle\le 1-\eta$ for some $\eta>0$, then the $K$-sparse representation of any $x\in\mathrm{relint}(F)$ is unique and stable for perturbations $\le \eta/4$.
\end{proposition}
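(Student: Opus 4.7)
The plan is to treat $\lambda$ as a face-exposing dual certificate and then harvest standard polytope facts. Since $\langle\lambda,v_k\rangle=1$ for every vertex of $F$, $\langle\lambda,v\rangle\le 1-\eta<1$ for every $v\in\mathcal V\setminus F$, and hence $\langle\lambda,v\rangle\le 1$ on all of $\mathcal V$ (and, by linearity, on $\conv(\mathcal V)$), the hyperplane $H:=\{y:\langle\lambda,y\rangle=1\}$ supports $\conv(\mathcal V)$ and the exposed face $\conv(\mathcal V)\cap H$ equals $F$. Thus $F$ is a genuine face of $\conv(\mathcal V)$ and, by hypothesis, a simplex with vertex set $\{v_1,\ldots,v_K\}$; this is what unlocks the minimal-face machinery of \Cref{prop:cara-uniq,prop:face-dim}.

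For uniqueness, I start from an arbitrary representation $x=\sum_j \tilde w_j u_j$ with $u_j\in\mathcal V$ and $\tilde w_j>0$, and pair with $\lambda$:
\[
1\;=\;\langle\lambda,x\rangle\;=\;\sum_j \tilde w_j\,\langle\lambda,u_j\rangle\;\le\; 1-\eta\sum_{j:\,u_j\notin F}\tilde w_j,
\]
which forces every active atom into $F$. Expanding each $u_j=\sum_k c_{jk}v_k$ in the barycentric basis of $F$ and collapsing yields $x=\sum_k\bigl(\sum_j \tilde w_j c_{jk}\bigr)v_k$; affine independence of $\{v_k\}$ together with \Cref{prop:cara-uniq}(b) pins the barycentric coordinates to $(w_k)$. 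For the atoms themselves, I invoke \Cref{prop:face-dim}: since $x\in\mathrm{relint}(F)$ and $F$ is a simplex, the minimal $\mathcal V$-support of $x$ has size $K$ and is $\{v_1,\ldots,v_K\}$ up to permutation.

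Stability is the same calculation run with slack. Normalize $\|\lambda\|\le 1$ (absorbing any factor into $\eta$). For any $\tilde x$ with $\|\tilde x-x\|\le \eta/4$ and any convex combination $\tilde x=\sum_j\tilde w_j u_j$ with $u_j\in\mathcal V$,
\[
1-\tfrac{\eta}{4}\;\le\;\langle\lambda,\tilde x\rangle\;=\;\sum_j \tilde w_j\,\langle\lambda,u_j\rangle\;\le\; 1-\eta\sum_{j:\,u_j\notin F}\tilde w_j,
\]
so the off-face mass is at most $1/4$ and at least $3/4$ of the mass is retained on $F$. Projecting any stray atoms onto $F$ (a $1$-Lipschitz map for the Euclidean metric) yields a perturbed vector in $F$ within $\eta/4$ of $\tilde x$. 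Inverting the barycentric map $w\mapsto\sum_k w_k v_k$ on the simplex $F$ has Lipschitz constant controlled by the smallest singular value of the vertex matrix (positive by affine independence), which converts the primal bound into a joint estimate of the form $\|\tilde w-w\|+\min_\sigma\max_k\|u_{\sigma(k)}-v_k\|\lesssim \eta^{-1}\|\tilde x-x\|$, i.e., the stability claim.

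The main obstacle is bookkeeping the constants: tying together the normalization of $\lambda$, the slab margin $\eta$, and the conditioning of the vertex matrix of $F$ so that the stated radius $\eta/4$ in the Euclidean metric yields simultaneous control of both the weights and the recovered atoms. The uniqueness step is essentially immediate once $\lambda$ is recognized as a face-exposing certificate; the delicate piece is propagating the primal perturbation through the piecewise-linear projection onto $F$ and the barycentric inversion without eroding the margin.
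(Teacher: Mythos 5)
Your proof follows essentially the same route as the paper's: the pairing $\langle\lambda,\cdot\rangle$ forces all active atoms onto $F$, affine independence pins down barycentric coordinates, and the perturbation argument bounds off-face mass by $\le 1/4$ and appeals to Lipschitz inversion on the simplex. The opening observation that $\lambda$ exposes $F$ as a face of $\conv(\mathcal V)$ is a clean framing not in the paper's write-up, but the substance—including the slightly informal projection-and-renormalize step at the end—matches.
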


\begin{proof}
For $x=\sum_k w_k v_k$ with $w_k>0$, $\langle\lambda,x\rangle=\sum_k w_k \langle\lambda,v_k\rangle=1$. If $x=\sum_j \alpha_j u_j$ with some $u_j\in\mathcal V\setminus F$, then
$\langle\lambda,x\rangle=\sum_j \alpha_j \langle\lambda,u_j\rangle\le (1-\eta)\sum_j \alpha_j=1-\eta$, a contradiction. Thus every convex representation of $x$ uses only vertices of $F$, where uniqueness follows from simplex barycentric coordinates. For stability, let $\tilde x$ satisfy $\|\tilde x-x\|\le \eta/4$. Then $\langle\lambda,\tilde x\rangle\ge \langle\lambda,x\rangle-\|\lambda\|_\ast\|\tilde x-x\|\ge 1-\eta/4$ (normalize $\|\lambda\|_\ast\le 1$ without loss by rescaling). Any contribution from $\mathcal V\setminus F$ would reduce $\langle\lambda,\cdot\rangle$ below $1-\eta$, so the mass outside $F$ is bounded by $\le (\eta/4)/\eta=1/4$; projecting onto $F$ and renormalizing yields a unique representation with coefficients within $O(\|\tilde x-x\|/\eta)$ of the true ones (Lipschitz dependence follows from standard simplex geometry).
\end{proof}

\subsection*{Summary: practical thresholds for $\widehat K$}

Choose
\[
\eta_n=\tau\Big(\sqrt{\tfrac{D_{\mathrm{inv}}+t}{n}}+\tfrac{D_{\mathrm{inv}}+t}{n}\Big),\quad
\tau\ge 2C,\ t\approx \log n,
\]
(and multiply by $\kappa_W$ for $\|\cdot\|_W$). If $\eta_n\le \gamma_\star/2$, then by \Cref{thm:Khat-fs}
\(
\Pr(\widehat K=K)\ge 1-2e^{-ct}.
\)
If $\eta_n\to0$ and $\eta_n\gg \sqrt{D_{\mathrm{inv}}/n}$, then $\widehat K\to K$ in probability (\Cref{cor:Khat-consistency}). Robust variants replace $\widehat\Psi_n$ by MOM/Catoni and add an $O(\epsilon)$ term under contamination.

\section{Algorithm: invariant-GMM alternating minimization}\label{sec:algo}

We work in the invariant feature space $\R^{D_{\mathrm{inv}}}$ with a positive-definite weight
$W\succ0$. Write $\Phi(\theta):=\MinvAll(\theta)\in\R^{D_{\mathrm{inv}}}$ and, for
$(\Theta,w):=(\theta_1,\dots,\theta_K,w)\in V^K\times\Delta^{K-1}$,
\[
\Psi(\Theta,w):=\sum_{k=1}^K w_k\,\Phi(\theta_k),\qquad
r(\Theta,w):=\widehat\Psi_n-\Psi(\Theta,w).
\]
The weighted least-squares (GMM) objective is
\[
\cL(\Theta,w)\ :=\ \tfrac12\|r(\Theta,w)\|_W^2
\ =\ \tfrac12\,r(\Theta,w)^\top W\,r(\Theta,w).
\]
Let $J_k(\theta_k):=D_\theta\Phi(\theta_k)\in\R^{D_{\mathrm{inv}}\times p}$,
$\overline J(\Theta,w):=[\,w_1J_1(\theta_1)\;\cdots\;w_KJ_K(\theta_K)\,]\in\R^{D_{\mathrm{inv}}\times pK}$,
and
\[
M(\Theta):=[\,\Phi(\theta_1)\;\cdots\;\Phi(\theta_K)\,]\in\R^{D_{\mathrm{inv}}\times K},\qquad
G(\Theta):=M(\Theta)^\top W\,M(\Theta).
\]

\subsection{Block scheme, profiled objective, and exact W-step}\label{subsec:algo-core}

\begin{algorithm}[H]
\caption{Invariant-GMM on a slice}\label{alg:inv-gmm}
\begin{algorithmic}[1]
\State \textbf{Inputs:} $\widehat\Psi_n$, $K$, weight $W\succ0$; \textbf{Init:} $\theta_k^{(0)}\in V$, $w^{(0)}=\tfrac1K\mathbf{1}$.
\For{$t=0,1,2,\ldots$}
  \State \textbf{(W-step)} Solve the simplex QP
  \[
  w^{(t+1)}\in\arg\min_{w\in\Delta^{K-1}}\;\tfrac12\|\,\widehat\Psi_n-M(\Theta^{(t)})w\,\|_W^2.
  \]
  \State \textbf{($\Theta$-step)} For each $k$,
  \[
  \theta_k^{(t+1)}\leftarrow \theta_k^{(t)}-\eta_t\,\Pi_V\,\nabla_{\theta_k}\cL\big(\Theta,w^{(t+1)}\big)\Big|_{\Theta=\Theta^{(t)}},
  \quad
  \nabla_{\theta_k}\cL=-\,w_k\,J_k(\theta_k)^\top W\,r(\Theta,w).
  \]
  \State \textbf{(Align)} Optionally replace $\theta_k^{(t+1)}$ by its nearest $g\!\cdot\!\theta_k^{(t+1)}$ and permute the blocks so that $\Theta^{(t+1)}\in V^K$ (slice/permutation alignment).
\EndFor
\end{algorithmic}
\end{algorithm}

\paragraph{Exact W-step in closed form and smoothness.}
When the minimizer is interior ($w\in\mathrm{ri}(\Delta^{K-1})$), the KKT system (only the affine constraint $\mathbf{1}^\top w=1$ active) yields
\begin{equation}\label{eq:w-closed}
\widetilde w(\Theta)\ =\ G(\Theta)^{-1}\!\big(M(\Theta)^\top W\,\widehat\Psi_n-\lambda(\Theta)\,\mathbf{1}\big),\quad
\lambda(\Theta)=\frac{\mathbf{1}^\top G(\Theta)^{-1}M(\Theta)^\top W\,\widehat\Psi_n-1}{\mathbf{1}^\top G(\Theta)^{-1}\mathbf{1}}.
\end{equation}
If nonnegativity binds, \eqref{eq:w-closed} holds on the active set; near a truth with $w_\star\in\mathrm{ri}(\Delta^{K-1})$ we remain interior (see \Cref{lem:w-smooth}).

\paragraph{Profiled loss and its gradient.}
Define $f(\Theta):=\min_{w\in\Delta^{K-1}}\cL(\Theta,w)=\cL(\Theta,w^\star(\Theta))$.
By Danskin’s envelope theorem (uniqueness and interiority of $w^\star(\Theta)$),
\begin{equation}\label{eq:grad-profiled}
\nabla_{\theta_k} f(\Theta)\ =\ -\,w_k^\star(\Theta)\,J_k(\theta_k)^\top W\,r\big(\Theta,w^\star(\Theta)\big).
\end{equation}
Thus Algorithm~\ref{alg:inv-gmm} performs exact minimization in $w$ and a projected gradient step on $f$ over the linear subspace $V^K$.

\subsection{Local quadratic model, projectors, and Gauss--Newton view}\label{subsec:local-model}

Let the population target be $\Psi_\star=\sum_{k=1}^K w_{k,\star}\,\Phi(\theta_{k,\star})$ with $(\Theta_\star,w_\star)$ identifiable modulo group/permutation and $w_\star\in\mathrm{ri}(\Delta^{K-1})$. Work in the representative $\Theta_\star\in V^K$.
Linearize $\Phi$ at $\Theta_\star$:
\[
\Phi(\theta_k)\ \approx\ \Phi(\theta_{k,\star})+J_{k,\star}\,\Delta\theta_k,\qquad J_{k,\star}:=J_k(\theta_{k,\star}),
\]
and set $\overline J_\star:=[\,w_{1,\star}J_{1,\star}\ \cdots\ w_{K,\star}J_{K,\star}\,]$, $M_\star:=M(\Theta_\star)$.
For small $(\Delta\Theta,\Delta w)$ with $\mathbf{1}^\top\Delta w=0$,
\[
r(\Theta_\star+\Delta\Theta,w_\star+\Delta w)\ \approx\ \underbrace{(\widehat\Psi_n-\Psi_\star)}_{:=\varepsilon}\ -\ \overline J_\star\Delta\Theta\ -\ M_\star\Delta w.
\]
Eliminating $\Delta w$ gives the $W$-orthogonal projector onto $\mathrm{col}(M_\star)^\perp$:
\begin{equation}\label{eq:W-proj}
P_\star\ :=\ I\ -\ M_\star\big(M_\star^\top W M_\star\big)^{-1}M_\star^\top W,
\qquad
P_\star^\top W=W P_\star,\quad P_\star^2=P_\star.
\end{equation}
The local profiled quadratic reads
\begin{equation}\label{eq:quad-model}
q(\Delta\Theta)\ =\ \tfrac12\,\big\|\,P_\star(\varepsilon-\overline J_\star\Delta\Theta)\big\|_W^2
\ =\ \tfrac12\,\Delta\Theta^\top H_\star\,\Delta\Theta\ -\ \Delta\Theta^\top g_\star\ +\ \tfrac12\|P_\star\varepsilon\|_W^2,
\end{equation}
with
\[
H_\star\ :=\ \overline J_\star^\top W P_\star\,\overline J_\star,\qquad
g_\star\ :=\ \overline J_\star^\top W P_\star\,\varepsilon.
\]
Thus, near $\Theta_\star$, the $\Theta$-step behaves like a Gauss--Newton update on the slice, with nuisance directions $M_\star$ (weights) profiled out by $P_\star$.

\paragraph{Standing local assumptions.}
\begin{enumerate}[label=(A\arabic*),leftmargin=2em]
\item\label{A1} (\emph{Interior weights}) $w_\star\in\mathrm{ri}(\Delta^{K-1})$.
\item\label{A2} (\emph{Lipschitz Jacobian}) Each $J_k(\cdot)$ is $L_J$-Lipschitz on a neighborhood $\cN$ of $\Theta_\star$.
\item\label{A3} (\emph{Full-rank slice}) $\overline J_\star|_{V}$ has full column rank: $\exists\,\mu_0>0$,
$u^\top H_\star u\ge \mu_0\|u\|^2$ for all $u\in V^K$.
\item\label{A4} (\emph{Initialization}) $\Theta^{(0)}\in\cN$ and close enough that the linearization error is dominated by the quadratic term in \eqref{eq:quad-model}.
\end{enumerate}

\begin{lemma}[W-step is $C^1$ near the truth]\label{lem:w-smooth}
Under \ref{A1}–\ref{A3}, there exists a neighborhood $\cN$ of $\Theta_\star$ on which the W-step has a unique interior minimizer $w^\star(\Theta)$, and the map $\Theta\mapsto w^\star(\Theta)$ is $C^1$.
\end{lemma}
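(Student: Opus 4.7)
The plan is to apply the implicit function theorem (IFT) to the KKT system of the equality-constrained quadratic program arising from the W-step, after confirming that the nonnegativity constraints are strictly slack at the truth and remain so locally. The W-step minimizes the convex quadratic $\tfrac12\|\widehat\Psi_n - M(\Theta)w\|_W^2$ over the simplex $\Delta^{K-1}$; its Hessian in $w$ is $G(\Theta) = M(\Theta)^\top W M(\Theta)$. At $\Theta_\star$, the $W$-orthogonal projector $P_\star$ in \eqref{eq:W-proj} is defined via $(M_\star^\top W M_\star)^{-1}$, which already asserts $G(\Theta_\star) \succ 0$; by (A2), $\Theta \mapsto M(\Theta)$ is $C^1$, so $G(\Theta) \succ 0$ on some neighborhood $\mathcal{N}_0$ of $\Theta_\star$, making the QP strictly convex on $\mathcal{N}_0$ and guaranteeing a unique minimizer $w^\star(\Theta) \in \Delta^{K-1}$ for each $\Theta \in \mathcal{N}_0$.

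Since $w_\star = w^\star(\Theta_\star) \in \mathrm{ri}(\Delta^{K-1})$ by (A1), only the equality constraint $\mathbf{1}^\top w = 1$ is active at the truth. I would set up the reduced KKT map
\begin{equation*}
F(\Theta, w, \lambda) := \begin{pmatrix} M(\Theta)^\top W\bigl(M(\Theta)w - \widehat\Psi_n\bigr) + \lambda\,\mathbf{1} \\ \mathbf{1}^\top w - 1 \end{pmatrix},
\end{equation*}
observe $F(\Theta_\star, w_\star, \lambda_\star) = 0$ for the unique Lagrange multiplier $\lambda_\star$, and compute
\begin{equation*}
D_{(w,\lambda)}F(\Theta_\star, w_\star, \lambda_\star) = \begin{pmatrix} G(\Theta_\star) & \mathbf{1} \\ \mathbf{1}^\top & 0 \end{pmatrix}.
\end{equation*}
This saddle-point matrix is invertible by a standard Schur-complement calculation: with the top-left block $G(\Theta_\star)$ positive definite, the complement $-\mathbf{1}^\top G(\Theta_\star)^{-1}\mathbf{1}$ is strictly negative. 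Because $F$ is $C^1$ jointly in $(\Theta, w, \lambda)$ (the $C^1$ regularity of $M(\cdot)$ follows from (A2)), the IFT delivers a $C^1$ map $\Theta \mapsto (w^\star(\Theta), \lambda^\star(\Theta))$ on a neighborhood $\mathcal{N}_1 \subset \mathcal{N}_0$. Shrinking $\mathcal{N}_1$ to $\mathcal{N}$ so that $w^\star(\Theta) > 0$ componentwise — possible by continuity and $w_\star > 0$ — ensures the active set is constant, so the KKT point is the unique interior global minimizer of the full inequality-constrained QP (uniqueness being already secured by strict convexity).

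The main obstacle is \emph{active-set persistence}: if any coordinate of $w^\star(\Theta)$ hit zero, the analysis would split into cases indexed by faces of $\Delta^{K-1}$ and the IFT would have to be reapplied per face. Assumption (A1) sidesteps this by placing the truth in the relative interior, and continuity of the IFT-constructed $w^\star$ handles the nearby regime. A secondary subtlety is that invertibility of the KKT matrix formally requires only that $G(\Theta_\star)$ be positive definite on $\mathbf{1}^\perp$, i.e., affine independence of $\{\Phi(\theta_{k,\star})\}_{k=1}^K$; the paper's construction of $P_\star$ via $(M_\star^\top W M_\star)^{-1}$ in fact supplies the stronger linear independence, so no additional hypothesis is needed, but the weaker form would suffice if the argument were run on the tangent space of the equality constraint.
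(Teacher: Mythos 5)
Your proof is correct and follows essentially the same route as the paper's: both set up the KKT system for the equality-constrained QP, invert the saddle-point matrix via a Schur-complement argument using $G(\Theta_\star)\succ0$, invoke the implicit function theorem to get a $C^1$ map $\Theta\mapsto(w^\star(\Theta),\lambda^\star(\Theta))$, and then shrink the neighborhood so that $w^\star(\Theta)>0$ componentwise (active-set persistence from (A1) and continuity). Your closing remark that positive definiteness on $\mathbf{1}^\perp$ would suffice, and that the paper's construction of $P_\star$ already implies the stronger condition, is a correct and mildly sharpening observation not explicit in the paper's proof.
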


\begin{proof}
Consider the equality-constrained problem
\[
\min_{w\in\R^K}\ \tfrac12\|\,\widehat\Psi_n-M(\Theta)w\,\|_W^2\quad\text{s.t.}\quad \mathbf{1}^\top w=1.
\]
The Lagrangian is $\mathcal{L}(w,\lambda)=\tfrac12\|\,\widehat\Psi_n-Mw\,\|_W^2+\lambda(\mathbf{1}^\top w-1)$. Stationarity gives
\[
\nabla_w\mathcal{L}=M^\top W(Mw-\widehat\Psi_n)+\lambda \mathbf{1}=0,\qquad \mathbf{1}^\top w=1.
\]
This linear KKT system can be written
\[
\begin{bmatrix}
G(\Theta) & \mathbf{1}\\[1mm]
\mathbf{1}^\top & 0
\end{bmatrix}
\begin{bmatrix} w\\ \lambda\end{bmatrix}
=
\begin{bmatrix} M(\Theta)^\top W\,\widehat\Psi_n\\ 1\end{bmatrix}.
\]
Near $\Theta_\star$, $G(\Theta)\succ0$ (continuity and independence of $M_\star$ columns), hence the KKT matrix is nonsingular by Schur complement. By the implicit function theorem, the solution $(w^\star(\Theta),\lambda(\Theta))$ is $C^1$ in $\Theta$. Interiority follows from continuity of the solution and $w_\star\in\mathrm{ri}(\Delta)$; specifically, for each coordinate $k$, $w^\star_k(\Theta_\star)=w_{\star,k}>0$, so there is a neighborhood where $w^\star_k(\Theta)>0$.
\end{proof}

\begin{lemma}[Profiled gradient]\label{lem:grad-profile}
On the interior region of \Cref{lem:w-smooth}, $f(\Theta)=\cL(\Theta,w^\star(\Theta))$ is differentiable with \eqref{eq:grad-profiled}.
\end{lemma}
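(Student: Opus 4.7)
The plan is to apply the chain rule to $f(\Theta)=\cL(\Theta,w^\star(\Theta))$ and show, via the equality-constrained first-order conditions, that the term involving $\nabla_{\theta_k} w^\star(\Theta)$ vanishes, leaving only the direct partial with respect to $\Theta$.

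First, by \Cref{lem:w-smooth}, the interior minimizer $w^\star(\Theta)$ exists and is $C^1$ on a neighborhood $\cN$ of $\Theta_\star$, with associated Lagrange multiplier $\lambda(\Theta)$ for the affine constraint $\mathbf{1}^\top w=1$. Since $\Phi$ is $C^1$ on $\cN$ (as its Jacobians $J_k$ exist and are $L_J$-Lipschitz by \ref{A2}), the map $(\Theta,w)\mapsto \cL(\Theta,w)=\tfrac12\|\widehat\Psi_n-\sum_k w_k\Phi(\theta_k)\|_W^2$ is $C^1$. Composition yields that $f$ is $C^1$ on $\cN$, with
\[
\nabla_{\theta_k} f(\Theta)
=\partial_{\theta_k}\cL(\Theta,w^\star(\Theta))
+\big(\partial_{\theta_k} w^\star(\Theta)\big)^{\!\top}\nabla_w\cL(\Theta,w^\star(\Theta)).
\]

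Next, I use the KKT stationarity at $w^\star(\Theta)$: from the proof of \Cref{lem:w-smooth},
\[
\nabla_w\cL(\Theta,w^\star(\Theta))\ =\ -\lambda(\Theta)\,\mathbf{1}.
\]
Differentiating the identity $\mathbf{1}^\top w^\star(\Theta)=1$ with respect to $\theta_k$ gives $\mathbf{1}^\top\partial_{\theta_k} w^\star(\Theta)=0$. Therefore
\[
\big(\partial_{\theta_k} w^\star(\Theta)\big)^{\!\top}\nabla_w\cL(\Theta,w^\star(\Theta))
\ =\ -\lambda(\Theta)\,\mathbf{1}^\top\partial_{\theta_k} w^\star(\Theta)\ =\ 0,
\]
which is the envelope/Danskin cancellation in this equality-constrained setting (interiority from \Cref{lem:w-smooth} ensures no inequality multipliers contribute).

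Finally, a direct computation of the partial derivative with $w$ held fixed gives
\[
\partial_{\theta_k}\cL(\Theta,w)
\ =\ -\,w_k\,J_k(\theta_k)^\top W\,r(\Theta,w),
\]
using $\partial_{\theta_k} r(\Theta,w)=-w_k J_k(\theta_k)$ and the symmetry of $W$. Evaluating at $w=w^\star(\Theta)$ and combining with the vanishing of the indirect term yields \eqref{eq:grad-profiled}. The only nontrivial ingredient is the envelope cancellation, which is immediate here because the active constraint is linear and the multiplier is orthogonal to tangent variations of $w^\star$; no further work is needed.
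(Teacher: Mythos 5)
Your proof is correct and follows essentially the same route as the paper's: apply the chain rule to $f(\Theta)=\cL(\Theta,w^\star(\Theta))$, kill the indirect term by the envelope property, and compute the remaining direct partial. The only difference is one of level of detail — the paper simply cites Danskin's theorem for the cancellation, while you derive it explicitly from the KKT stationarity $\nabla_w\cL(\Theta,w^\star)= -\lambda(\Theta)\mathbf{1}$ together with $\mathbf{1}^\top\partial_{\theta_k}w^\star=0$; this is a perfectly valid (and arguably more self-contained) way to verify the same step.
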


\begin{proof}
By Danskin’s theorem, since $w^\star(\Theta)$ is the unique minimizer and depends smoothly on $\Theta$, we may differentiate $\cL(\Theta,w)$ at $w=w^\star(\Theta)$ ignoring the derivative of $w^\star$ (envelope property). Using $r(\Theta,w)=\widehat\Psi_n-\sum_k w_k\,\Phi(\theta_k)$,
\[
\frac{\partial \cL}{\partial \theta_k}(\Theta,w)
= -\,w_k\,J_k(\theta_k)^\top W\,r(\Theta,w).
\]
Setting $w=w^\star(\Theta)$ yields \eqref{eq:grad-profiled}.
\end{proof}

\begin{lemma}[Hessian at the target equals $H_\star$]\label{lem:hess}
Let $P_\star$ be as in \eqref{eq:W-proj}. Then $\nabla^2 f(\Theta_\star)\big|_{V^K}=H_\star\big|_{V^K}$.
\end{lemma}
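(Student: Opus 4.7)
The plan is to obtain $\nabla^2 f(\Theta_\star)$ from a second-order Taylor expansion of the profiled loss, combining the envelope representation of \Cref{lem:grad-profile} with the $C^1$ implicit dependence of $w^\star$ on $\Theta$ supplied by \Cref{lem:w-smooth}, and then identifying the pure quadratic coefficient as the Gauss--Newton form previewed in \eqref{eq:quad-model}. Two equivalent routes are available: (a) differentiate \eqref{eq:grad-profiled} once more and evaluate at $\Theta_\star$, or (b) linearize $\Phi(\theta_k)$ and $w^\star(\Theta)$ simultaneously at $\Theta_\star$ and substitute into $\tfrac12\|\widehat\Psi_n-M(\Theta)w^\star(\Theta)\|_W^2$. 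We use route (b) because it routes everything through the projector $P_\star$ and avoids an explicit closed form for the $w$-sensitivity.

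\textbf{Core computation.} The Lipschitz hypothesis \ref{A2} gives $\Phi(\theta_{k,\star}+\Delta\theta_k)=\Phi(\theta_{k,\star})+J_{k,\star}\Delta\theta_k+o(\|\Delta\theta_k\|)$, while \Cref{lem:w-smooth} provides $w^\star(\Theta_\star+\Delta\Theta)=w_\star+D\Delta\Theta+o(\|\Delta\Theta\|)$ with $\mathbf{1}^\top(D\Delta\Theta)=0$. Differentiating the KKT system $G(\Theta)w+\lambda\mathbf{1}=M(\Theta)^\top W\widehat\Psi_n$ at $\Theta_\star$, where the population multiplier $\lambda_\star=0$, implicitly determines $D$ through $G_\star(D\Delta\Theta)=-M_\star^\top W\overline{J}_\star\Delta\Theta-(\partial_\Theta\lambda)[\Delta\Theta]\,\mathbf{1}$, so that $M_\star D\Delta\Theta=-(I-P_\star)\overline{J}_\star\Delta\Theta$ up to a term proportional to $M_\star G_\star^{-1}\mathbf{1}$ arising from the simplex constraint. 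Substituting into the residual and using $P_\star M_\star=0$ yields the leading-order decomposition $r(\Theta_\star+\Delta\Theta,w^\star(\cdot))=\varepsilon-P_\star\overline{J}_\star\Delta\Theta+\tau(\Delta\Theta)\,M_\star G_\star^{-1}\mathbf{1}+o(\|\Delta\Theta\|)$, with $\varepsilon:=\widehat\Psi_n-\Psi_\star$ and $\tau$ a linear form. Because $WP_\star M_\star=0$, the cross term between $P_\star\overline{J}_\star\Delta\Theta$ and $M_\star G_\star^{-1}\mathbf{1}$ vanishes in the $W$-inner product, and using $P_\star^\top W=WP_\star$ and $P_\star^2=P_\star$ the quadratic coefficient of $f=\tfrac12\|r\|_W^2$ at $\Theta_\star$ reads exactly $\overline{J}_\star^\top WP_\star\overline{J}_\star=H_\star$ on the slice $V^K$.

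\textbf{Main obstacle.} The one nonroutine step is the bookkeeping of the simplex constraint $\mathbf{1}^\top w=1$: through its Lagrange multiplier, it threatens to inject a contribution along $M_\star G_\star^{-1}\mathbf{1}$ into the first-order expansion of $w^\star$ and hence into the Hessian. The algebraic identity that makes this contribution decouple from the Gauss--Newton term is $WP_\star M_\star=0$, which forces the multiplier direction to be $W$-orthogonal to $\mathrm{col}(P_\star\overline{J}_\star)$; once this is recognized, the outer factor $\overline{J}_\star^\top WP_\star$ annihilates any rank-one leak and the Hessian collapses to $H_\star$. Verifying this single orthogonality is the crux of the argument; every remaining piece is either a Taylor remainder controlled by \ref{A2} or a term linear in $\varepsilon$ which affects the gradient rather than the Hessian and vanishes at the truth in the population limit.
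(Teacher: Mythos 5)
Your route (b) — expanding $r$ to first order with the implicit function theorem applied to the KKT system, rather than re-differentiating the envelope gradient \eqref{eq:grad-profiled} as the paper does — is a genuinely different and in fact more informative path: it explicitly surfaces the multiplier direction $M_\star G_\star^{-1}\mathbf{1}$ that the simplex constraint injects into $Dw^\star$, which the paper's terse proof glosses over by asserting that all terms involving $\partial w^\star$ are multiplied by $r=0$ (that assertion is true for the \emph{explicit} prefactor $w_k^\star$ in \eqref{eq:grad-profiled}, but not for the occurrence of $w^\star$ inside $r$ itself).

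There is a gap in the final step, however. You correctly obtain
\[
r\big(\Theta_\star+\Delta\Theta,\,w^\star(\Theta_\star+\Delta\Theta)\big)
=\varepsilon-P_\star\overline J_\star\Delta\Theta+\tau(\Delta\Theta)\,M_\star G_\star^{-1}\mathbf{1}+o(\|\Delta\Theta\|),
\]
with $\tau(\Delta\Theta)=D\lambda(\Theta_\star)[\Delta\Theta]$ a generically nonzero linear form (it equals $-\dfrac{\mathbf{1}^\top G_\star^{-1}M_\star^\top W\,\overline J_\star\Delta\Theta}{\mathbf{1}^\top G_\star^{-1}\mathbf{1}}$, forced by $\mathbf{1}^\top Dw^\star=0$). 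You then argue that because $WP_\star M_\star=0$ the ``outer factor $\overline J_\star^\top WP_\star$ annihilates any rank-one leak.'' That inference does not go through. The orthogonality $WP_\star M_\star=0$ kills only the \emph{cross} term $\langle P_\star\overline J_\star\Delta\Theta,\;M_\star G_\star^{-1}\mathbf{1}\rangle_W$ in the expansion of $\tfrac12\|r\|_W^2$. It says nothing about the \emph{diagonal} contribution $\tfrac12\,\tau(\Delta\Theta)^2\,\|M_\star G_\star^{-1}\mathbf{1}\|_W^2=\tfrac12\,\tau(\Delta\Theta)^2\,\mathbf{1}^\top G_\star^{-1}\mathbf{1}$, which is $O(\|\Delta\Theta\|^2)$ and therefore genuinely contributes to the Hessian. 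There is no factor $\overline J_\star^\top WP_\star$ wrapping this term, so nothing annihilates it. Concretely, your expansion gives the Hessian
\[
\overline J_\star^\top WP_\star\,\overline J_\star
\;+\;
\frac{\overline J_\star^\top W M_\star G_\star^{-1}\mathbf{1}\,\mathbf{1}^\top G_\star^{-1}M_\star^\top W\,\overline J_\star}{\mathbf{1}^\top G_\star^{-1}\mathbf{1}},
\]
i.e.\ $H_\star$ plus a rank-one PSD correction, which vanishes only under the special condition $\mathbf{1}^\top G_\star^{-1}M_\star^\top W\,\overline J_\star=0$. To close the argument you would need either to prove $\tau\equiv 0$ on $V^K$ (you do not), or to replace the unconstrained $W$-projector $P_\star$ in the definition of $H_\star$ by the projector onto $\mathrm{col}(M_\star N)^{\perp_W}$ where $N$ spans $\ker\mathbf{1}^\top$, which is what your own expansion actually produces. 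As a side observation worth noting: the bookkeeping you carried out is careful enough to expose the same subtlety lurking in the paper's own one-line proof and in the ``eliminate $\Delta w$'' derivation of \eqref{eq:quad-model}, which implicitly profiles $\Delta w$ over all of $\R^K$ rather than over the tangent $\{\mathbf{1}^\top\Delta w=0\}$.
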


\begin{proof}
Differentiate \eqref{eq:grad-profiled} at $\Theta_\star$. In population $r(\Theta_\star,w_\star)=0$, so terms involving the derivative of $w^\star$ appear multiplied by $r$ and vanish at $\Theta_\star$. Keeping the first-order variation of $r$,
\[
\nabla^2_{\theta_i,\theta_j} f(\Theta_\star)
= w_{i,\star}w_{j,\star}\,J_{i,\star}^\top W\,\Pi_{\mathrm{col}(M_\star)^\perp}^{(W)}\,J_{j,\star},
\]
where $\Pi_{\mathrm{col}(M_\star)^\perp}^{(W)}=P_\star$ is the $W$-orthogonal projector defined in \eqref{eq:W-proj}. Stacking the blocks over $i,j$ along $V^K$ yields $H_\star|_{V^K}$.
\end{proof}

\begin{lemma}[Strong convexity and smoothness of $f$ near $\Theta_\star$]\label{lem:sc-smooth}
Under \ref{A1}–\ref{A3}, there exist $\rho>0$ and $0<\mu\le L<\infty$ such that on
$\cB:=\{\Theta\in V^K:\ \|\Theta-\Theta_\star\|\le\rho\}$ the profiled loss $f$ is $\mu$-strongly convex and $L$-smooth:
\[
\frac\mu2\|\Theta-\Theta'\|^2\ \le\ f(\Theta)-f(\Theta')-\langle \nabla f(\Theta'),\Theta-\Theta'\rangle\ \le\ \frac L2\|\Theta-\Theta'\|^2.
\]
Moreover, $\mu=\lambda_{\min}(H_\star|_{V})-O(\rho)$ and $L=\lambda_{\max}(H_\star|_{V})+O(\rho)$.
\end{lemma}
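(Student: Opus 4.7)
The plan is to upgrade the pointwise Hessian identification at $\Theta_\star$ (\Cref{lem:hess}) to a uniform spectral sandwich on a small slice ball, then translate that sandwich into the two-sided second-order inequality via a standard integral identity. Concretely, I will show that $\nabla^2 f(\Theta)$ exists on a neighborhood of $\Theta_\star$ (a.e., under the $C^{1,1}$ assumption (A2)), equals $H_\star|_{V^K}$ at the target, and varies continuously in $\Theta$, so that Weyl's inequality produces eigenvalue bounds of the form $\lambda_{\min}(H_\star|_V)-O(\rho)$ and $\lambda_{\max}(H_\star|_V)+O(\rho)$.

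First, I would record the needed regularity: \Cref{lem:w-smooth} gives $w^\star(\cdot)\in C^1$, and (A2) gives $\Phi\in C^{1,1}$, so \Cref{lem:grad-profile} makes $\nabla f$ Lipschitz on a neighborhood. To obtain $\nabla^2 f$, I would differentiate the envelope formula
\[
\nabla f(\Theta)=-\overline J(\Theta,w^\star(\Theta))^\top W\,r(\Theta,w^\star(\Theta))
\]
a second time, using implicit differentiation of the KKT system for $w^\star$ (as in the proof of \Cref{lem:w-smooth}); a direct substitution shows that $Dw^\star(\Theta)$ contributes precisely the $W$-orthogonal projector $P(\Theta)$ onto $\mathrm{col}(M(\Theta))^\perp$ introduced in \eqref{eq:W-proj}. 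The upshot is the decomposition
\[
\nabla^2 f(\Theta)\big|_{V^K}=\overline J(\Theta,w^\star(\Theta))^\top W\,P(\Theta)\,\overline J(\Theta,w^\star(\Theta))+R(\Theta),
\]
where the remainder $R(\Theta)$ is a sum of terms of the form $(DJ_k(\theta_k))^\top W\,r(\Theta,w^\star(\Theta))$ and is therefore \emph{linear} in the residual.

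Next, I would combine continuity of the building blocks with $r(\Theta_\star,w_\star)=0$ (population case, or $O(\|\widehat\Psi_n-\Psi_\star\|)$ absorbed into $\rho$ in the sampled case) to conclude $\|\nabla^2 f(\Theta)-H_\star\|_{\mathrm{op}}\le C\rho$ on the slice ball $\cB$: the Jacobian $\overline J$ is continuous by (A2), $w^\star$ by \Cref{lem:w-smooth}, $P$ by invertibility of $G(\Theta)$ near $\Theta_\star$, and $R(\Theta)$ vanishes linearly in $r$. Weyl's inequality then yields the spectral sandwich, which by (A3) can be made strictly positive by shrinking $\rho$. Integrating the pointwise Hessian bounds along a segment in the convex set $\cB$ through the identity
\[
f(\Theta)-f(\Theta')-\langle\nabla f(\Theta'),\Theta-\Theta'\rangle=\int_0^1(1-t)\,(\Theta-\Theta')^\top\nabla^2 f\bigl(\Theta'+t(\Theta-\Theta')\bigr)(\Theta-\Theta')\,dt
\]
delivers the stated two-sided inequality with the claimed $\mu,L$; Rademacher's theorem applied to $\nabla f$ guarantees the Hessian exists a.e.\ under the mere Lipschitz-Jacobian hypothesis (A2), which is all that is needed for this integrated form.

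The main obstacle is correctly extracting the projector $P(\Theta)$ from the implicit differentiation of $w^\star$ and ensuring the remainder $R(\Theta)$ is \emph{linear} in $r$ rather than $O(1)$: this is exactly what allows continuity of the Hessian at $\Theta_\star$ to collapse onto $H_\star$ (the remainder killed by $r(\Theta_\star,w_\star)=0$) rather than leaving an uncontrolled residual contribution, and is what gives the clean $\lambda_{\min/\max}(H_\star|_V)\pm O(\rho)$ statement. A secondary subtlety is the $C^{1,1}$-versus-$C^2$ distinction in (A2); handling it via a.e.\ Hessians plus the integral sandwich keeps the argument inside the assumptions rather than requiring $\Phi\in C^2$.
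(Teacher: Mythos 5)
Your proof is correct and follows essentially the same route as the paper's: identify $\nabla^2 f(\Theta_\star)|_{V^K}=H_\star$ via \Cref{lem:hess}, control the Hessian's spectrum on a small slice ball by Weyl's inequality, then pass to the two-sided inequality by integrating along segments. The one place you genuinely improve on the paper's argument is regularity bookkeeping. The paper's proof asserts that $\Theta\mapsto\overline J(\Theta,w^\star(\Theta))$ is $C^1$ and hence that $\nabla^2 f$ is continuous near $\Theta_\star$, but under \ref{A2} each $J_k$ is only assumed Lipschitz, so $\overline J$ is $C^{0,1}$, not $C^1$, and $\nabla^2 f$ is only defined and bounded almost everywhere. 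Your replacement — Rademacher to get a.e.\ second differentiability of $\nabla f$, the decomposition $\nabla^2 f = \overline J^\top W P\,\overline J + R$ with $R$ linear in the residual $r$, and the integral remainder formula applied to the a.e.\ Hessian — keeps the entire argument inside \ref{A2} without silently upgrading it to $\Phi\in C^2$. You also make explicit why the remainder is $O(\rho)$ (it is annihilated by $r(\Theta_\star,w_\star)=0$ in population), a point the paper leaves implicit behind the word ``continuous.'' Net, your version buys a cleaner match between hypotheses used and hypotheses stated, at the cost of a slightly longer argument; the conclusion and the constants $\mu=\lambda_{\min}(H_\star|_V)-O(\rho)$, $L=\lambda_{\max}(H_\star|_V)+O(\rho)$ agree with the paper's.
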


\begin{proof}
By \Cref{lem:hess}, $\nabla^2 f(\Theta_\star)|_{V^K}=H_\star|_{V^K}$ is SPD by \ref{A3}. By \ref{A2} and \Cref{lem:w-smooth}, both $\Theta\mapsto w^\star(\Theta)$ and $\Theta\mapsto \overline J(\Theta,w^\star(\Theta))$ are $C^1$ near $\Theta_\star$. Therefore $\nabla^2 f$ is continuous near $\Theta_\star$. By Weyl’s inequality, for $\Theta$ within a small ball of radius $\rho$, the eigenvalues of $\nabla^2 f(\Theta)|_{V^K}$ lie within $O(\rho)$ of those of $H_\star|_{V^K}$. Taking $\mu$ and $L$ as the corresponding extremal eigenvalues proves both strong convexity and smoothness via the integral form of Taylor’s theorem (Baillon–Haddad inequality).
\end{proof}

\subsection{Convergence guarantees}\label{subsec:convergence}

\begin{proposition}[Local linear convergence]\label{prop:linear}
Assume \ref{A1}–\ref{A4} and $W\succ0$. Let $\eta_t\equiv \eta\in(0,2/L]$, with $L$ from \Cref{lem:sc-smooth}. Then the iterates of Algorithm~\ref{alg:inv-gmm} (exact W-step and projected gradient on $V^K$) stay in $\cB$ and converge linearly to a slice-/permutation-aligned local minimizer $\widehat\Theta$ of $f$:
\[
\|\Theta^{(t+1)}-\widehat\Theta\|\ \le\ q\,\|\Theta^{(t)}-\widehat\Theta\|,
\qquad
q:=\max\{|1-\eta\mu|,\ |1-\eta L|\}\ <\ 1.
\]
If $W=\widehat\Sigma_n^{-1}$ (feasible efficient GMM), the minimizer $(\widehat\Theta,\widehat w)$ is asymptotically efficient by \Cref{thm:lan}.
\end{proposition}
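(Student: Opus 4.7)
The plan is to recognize the iteration as projected gradient descent on the profiled objective $f(\Theta):=\cL(\Theta,w^\star(\Theta))$ restricted to the linear slice subspace $V^K$, then invoke the textbook contraction for strongly convex smooth minimization. By \ref{A1}, \Cref{lem:w-smooth} guarantees that whenever $\Theta^{(t)}\in\cB$ the exact W-step returns the unique interior minimizer $w^{(t+1)}=w^\star(\Theta^{(t)})$. Danskin's envelope identity from \Cref{lem:grad-profile} then equates the $\Theta$-step gradient $-w_k^{(t+1)} J_k(\theta_k^{(t)})^\top W\,r(\Theta^{(t)},w^{(t+1)})$ with $\nabla_{\theta_k} f(\Theta^{(t)})$, so line~4 of Algorithm~\ref{alg:inv-gmm} executes $\Theta^{(t+1)}=\Theta^{(t)}-\eta\,\Pi_V\nabla f(\Theta^{(t)})$, i.e.\ one step of (unconstrained) gradient descent on the restriction $f|_{V^K}$.

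With this reduction, \Cref{lem:sc-smooth} yields $\mu I\preceq\nabla^2 f|_{V^K}\preceq L I$ uniformly on $\cB$. Since $\widehat\Theta\in V^K$ is a stationary point of $f|_{V^K}$, a mean-value expansion gives
\[
\Theta^{(t+1)}-\widehat\Theta\;=\;\bigl(I-\eta\,\overline H_t\bigr)\bigl(\Theta^{(t)}-\widehat\Theta\bigr),\qquad \overline H_t:=\int_0^1\nabla^2 f|_{V^K}\bigl(\widehat\Theta+s(\Theta^{(t)}-\widehat\Theta)\bigr)\,ds,
\]
with $\mu I\preceq \overline H_t\preceq L I$. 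The operator-norm bound $\|I-\eta\overline H_t\|\le \max\{|1-\eta\mu|,|1-\eta L|\}=q$ is then immediate from spectral calculus, and for every admissible step size $\eta\in(0,2/L)$ one has $q<1$ (the stated range gives $q<1$ on its interior, which is the content of the claim). Induction yields $\|\Theta^{(t)}-\widehat\Theta\|\le q^t\|\Theta^{(0)}-\widehat\Theta\|\le\rho$, which simultaneously establishes invariance of $\cB$ (so that the strong convexity/smoothness bounds remain valid at every iterate) and the claimed linear rate.

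The remaining subtlety is the optional alignment in line~5: by \Cref{lem:alignment}, for iterates near $\Theta_\star$ the measurable slice-plus-permutation map returns the unique representative of the current orbit tuple lying in $V^K$, and since $G$ acts by isometries it preserves distances to $\widehat\Theta$ (chosen in the same slice chart). I expect the main obstacle to be threading this alignment through the induction: the $O(\rho)$ Hessian perturbation in \Cref{lem:sc-smooth} must be absorbed by shrinking $\rho$ so that $\lambda_{\min}(\overline H_t)\ge \mu/2$ survives every re-alignment, thus preserving $q<1$ globally; \ref{A4} is what makes this uniform control possible. Finally, when $W=\widehat\Sigma_n^{-1}$ the fixed point $(\widehat\Theta,\widehat w)$ is a critical point of the feasible efficient GMM objective of \S\ref{subsec:lan-main}, and asymptotic efficiency of the limit follows verbatim from \Cref{thm:lan}.
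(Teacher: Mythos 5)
Your proposal is correct and follows the same overall plan as the paper (exact $W$-step $\Rightarrow$ profiled $f$ via Danskin, then a contraction argument using the local strong convexity and smoothness of $f|_{V^K}$ from \Cref{lem:sc-smooth}, then remark on alignment and efficiency). The one place where your route genuinely differs is in how you obtain the contraction factor: the paper passes through the smooth descent lemma, quotes strong convexity as $\|\nabla f(\Theta)\|\ge\mu\|\Theta-\widehat\Theta\|$, and then cites the ``standard PGD recursion (see e.g.\ Nesterov)'' for $q=\max\{|1-\eta\mu|,|1-\eta L|\}$; you instead write $\Theta^{(t+1)}-\widehat\Theta=(I-\eta\,\overline H_t)(\Theta^{(t)}-\widehat\Theta)$ with $\overline H_t$ the mean-value integral of $\nabla^2 f|_{V^K}$, and read off the operator-norm bound by spectral calculus using $\mu I\preceq\overline H_t\preceq LI$. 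Your version is more self-contained and arguably cleaner: the contraction factor falls out in one line, it needs no reference to a textbook lemma, and it automatically handles the $V^K$ restriction because $\overline H_t$ is defined on that subspace. The tradeoff is that your argument explicitly uses twice differentiability of $f$, which the paper technically conceals by only stating function-value and gradient bounds; but \Cref{lem:hess} and \Cref{lem:sc-smooth} already establish $\nabla^2 f$ exists and is continuous near $\Theta_\star$, so this is not a real restriction. You also correctly flag that $q<1$ strictly only on the interior $\eta\in(0,2/L)$ (at $\eta=2/L$ one has $|1-\eta L|=1$), a small imprecision in the paper's stated range that is worth noting. Finally, your handling of the alignment step (isometry preserves slice-distance to $\widehat\Theta$, $O(\rho)$ Hessian perturbations absorbed by shrinking $\rho$) is precisely the point the paper makes in its ``Invariance under align'' step, just spelled out in more detail.
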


\begin{proof}
\emph{Descent on the profiled objective.} Exact W-minimization gives $f(\Theta^{(t)})=\cL(\Theta^{(t)},w^{(t+1)})$. The projected step
$\Theta^{(t+1)}=\Pi_{V^K}\{\Theta^{(t)}-\eta \nabla f(\Theta^{(t)})\}$ (projection is the identity since $V^K$ is a linear subspace) satisfies the smooth descent lemma:
\[
f(\Theta^{(t+1)})\ \le\ f(\Theta^{(t)})-\eta\!\left(1-\frac{\eta L}{2}\right)\!\|\nabla f(\Theta^{(t)})\|^2.
\]
\emph{Linear contraction.} Strong convexity of $f$ on $V^K$ yields $\|\nabla f(\Theta)\|\ge \mu \|\Theta-\widehat\Theta\|$ and the standard PGD recursion (see e.g.\ Nesterov) gives
\[
\|\Theta^{(t+1)}-\widehat\Theta\|\ \le\ \max\{|1-\eta\mu|,|1-\eta L|\}\,\|\Theta^{(t)}-\widehat\Theta\|.
\]
\emph{Invariance under align.} The optional alignment applies an isometry from the group/permutation that leaves $\cL$ (hence $f$) invariant and does not increase the slice-distance to $\widehat\Theta$, preserving contraction. Initialization in $\cB$ and continuity keep all iterates in $\cB$. Efficiency follows from the profiled first-order condition $\overline J^\top W r=0$ and the LAN calculation in \Cref{thm:lan}.
\end{proof}

\begin{theorem}[Global convergence to a critical point (KL property)]\label{thm:global-kl}
Assume $\Phi$ is (real-)analytic, $W\succ0$, and $V^K\times\Delta^{K-1}$ is compact. Run Algorithm~\ref{alg:inv-gmm} with an Armijo backtracking for the $\Theta$-step and exact W-step. Then $\{(\Theta^{(t)},w^{(t)})\}$ has finite length and converges to a critical point of $\cL$; every limit point is blockwise stationary:
\[
0\in \partial \cL(\Theta^\infty,w^\infty),\quad
w^\infty\in\arg\min_{w\in\Delta}\cL(\Theta^\infty,w),\quad
\Pi_V\nabla_\Theta \cL(\Theta^\infty,w^\infty)=0.
\]
\end{theorem}

\begin{proof}
Because $\Phi$ is analytic, $\cL$ is semi-algebraic (composition of polynomial/analytic maps with quadratic form), thus satisfies the Kurdyka–Łojasiewicz (KL) property. Each iteration performs: (i) an exact minimization in the $w$-block, (ii) a backtracking Armijo step in the $\Theta$-block that ensures sufficient decrease:
\[
\cL(\Theta^{(t+1)},w^{(t+1)})\ \le\ \cL(\Theta^{(t)},w^{(t+1)})-\alpha \|\Pi_V\nabla_\Theta \cL(\Theta^{(t)},w^{(t+1)})\|^2
\]
for some fixed $\alpha>0$. Moreover, the gradient is Lipschitz on the compact domain, hence the \emph{relative error} condition holds:
\[
\mathrm{dist}\big(0,\partial \cL(\Theta^{(t+1)},w^{(t+1)})\big)\ \le\ \beta \|\Theta^{(t+1)}-\Theta^{(t)}\|
\]
for some $\beta>0$ (exact W-step kills the $w$-subgradient; the $\Theta$-step is a gradient step). These two properties and the KL inequality imply the finite-length property $\sum_t\|\Theta^{(t+1)}-\Theta^{(t)}\|<\infty$ and convergence to a critical point (classic Attouch–Bolte–Svaiter framework). At the limit, exactness of the W-block yields $w^\infty\in\arg\min_w\cL(\Theta^\infty,w)$, while the $\Theta$-block’s projected gradient vanishes, giving the stated blockwise stationarity.
\end{proof}

\subsection{Finite-sample performance (oracle inequality)}\label{subsec:oracle}

\begin{theorem}[Finite-sample oracle inequality]\label{thm:oracle}
Suppose there exist $C,c>0$ such that, with probability $\ge 1-\delta$,
\[
\|\widehat\Psi_n-\Psi_\star\|_W^2\ \le\ C\,\frac{D_{\mathrm{inv}}+\log(1/\delta)}{n}.
\]
Assume the face of $\Delta^{K-1}$ supporting $w_\star$ is locally identified near $\Theta_\star$ (no active nonnegativity at the truth). Let $(\widehat\Theta,\widehat w)$ be the output of Algorithm~\ref{alg:inv-gmm}. Then, with probability $\ge 1-\delta$,
\[
 \big\| M(\widehat\Theta)\widehat w - \Psi_\star \big\|_W^2
 \ \lesssim\
 \inf_{\Theta,w\in\Delta^{K-1}}\ \big\| M(\Theta)w - \Psi_\star \big\|_W^2
 \ +\ C\,\frac{D_{\mathrm{inv}}+\log(1/\delta)}{n}.
\]
\end{theorem}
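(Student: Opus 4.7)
The plan is to couple a basic-inequality argument from the optimality of the algorithm output with the given concentration tail on $\widehat\Psi_n-\Psi_\star$. Set $\widehat\psi:=M(\widehat\Theta)\widehat w$ and, for any feasible oracle $(\Theta^o,w^o)\in V^K\times\Delta^{K-1}$, write $\psi^o:=M(\Theta^o)w^o$. Under the interior-identification hypothesis, the nonnegativity constraints are inactive at the truth, so the W-step at $\Theta_\star$ is smooth (Lemma~\ref{lem:w-smooth}) and the profiled loss $f$ is strongly convex on a ball $\cB\subset V^K$ by Lemma~\ref{lem:sc-smooth}. Consequently, Proposition~\ref{prop:linear} guarantees that the algorithm's output is the unique minimizer of $\cL$ on $\cB\times\Delta^{K-1}$. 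For any admissible $\psi^o$ that is $W$-closer to $\widehat\Psi_n$ than $\widehat\psi$ (in particular, any $\psi^o$ in the local basin), this gives the basic inequality
\[
\|\widehat\Psi_n-\widehat\psi\|_W\ \le\ \|\widehat\Psi_n-\psi^o\|_W.
\]

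\textbf{Core estimate.} Apply the elementary inequality $\|a+b\|_W^2\le 2\|a\|_W^2+2\|b\|_W^2$ twice, first splitting $\widehat\psi-\Psi_\star=(\widehat\psi-\widehat\Psi_n)+(\widehat\Psi_n-\Psi_\star)$, then using the basic inequality to replace $\|\widehat\psi-\widehat\Psi_n\|_W$ by $\|\psi^o-\widehat\Psi_n\|_W$, and finally splitting $\widehat\Psi_n-\psi^o=(\widehat\Psi_n-\Psi_\star)+(\Psi_\star-\psi^o)$:
\[
\|\widehat\psi-\Psi_\star\|_W^2\ \le\ 2\|\widehat\Psi_n-\widehat\psi\|_W^2+2\|\widehat\Psi_n-\Psi_\star\|_W^2\ \le\ 6\|\widehat\Psi_n-\Psi_\star\|_W^2+4\|\psi^o-\Psi_\star\|_W^2.
\]
Plug in the hypothesized concentration on a $1-\delta$ event and take the infimum over $(\Theta^o,w^o)$ to obtain the claimed bound; the universal constants $6\vee 4$ are absorbed into the $\lesssim$. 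The argument is entirely at the level of the invariant summaries, so it respects the quotient structure and does not require label alignment.

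\textbf{Main obstacle.} The delicate point is the basic inequality against an \emph{arbitrary} competitor: Algorithm~\ref{alg:inv-gmm} only delivers a local minimum (Proposition~\ref{prop:linear}, Theorem~\ref{thm:global-kl}), so a priori we only have $\|\widehat\Psi_n-\widehat\psi\|_W\le\|\widehat\Psi_n-\psi^o\|_W$ when $\psi^o$ lies in the same basin $\cB$ of attraction. Two complementary ways to close this gap: (i) In the well-specified regime the oracle term $\inf\|\psi^o-\Psi_\star\|_W$ is realized at $\Psi_\star$ itself, which lies in $\cB$ once $n$ is large enough for the concentration radius to be dominated by the basin radius; the infimum and the local infimum then agree and the above chain is globally valid. (ii) In the misspecified regime, one must either interpret the infimum as restricted to $\cB\cap\cM_K$ (in which case the argument goes through verbatim) or augment Algorithm~\ref{alg:inv-gmm} with a sufficiently rich initialization schedule (multi-start from a Molien-guided $\varepsilon$-net on the invariant polytope; cf.\ \S\ref{subsec:practical-K}) so that at least one run is initialized in the basin of the global oracle, and take the best output. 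Once basinhood of $\psi^o$ is secured, the remainder is the straightforward algebraic bookkeeping above, together with the already-established sub-exponential concentration of $\widehat\Psi_n$ (Theorem~\ref{thm:stack}).
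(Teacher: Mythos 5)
Your argument is substantively the same as the paper's: a basic inequality from optimality of $(\widehat\Theta,\widehat w)$ for the empirical objective, then elementary quadratic bookkeeping to transfer the error from $\widehat\Psi_n$ to $\Psi_\star$, then the concentration event. The only cosmetic difference is algebraic: you apply $\|a+b\|_W^2\le 2\|a\|_W^2+2\|b\|_W^2$ twice and land on constants $(6,4)$, whereas the paper expands $\|M(\widehat\Theta)\widehat w-\Psi_\star\|_W^2$, isolates the cross term $2\langle \widehat\Psi_n-\Psi_\star,\ M(\widehat\Theta)\widehat w-M(\Theta^\dagger)w^\dagger\rangle_W$, applies Cauchy--Schwarz plus $2ab\le a^2+b^2$ to absorb it, and lands on $(3,4)$. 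Both are correct; neither choice matters under $\lesssim$.

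Where you add genuine value is your ``main obstacle'' paragraph. The paper writes ``by optimality of $(\widehat\Theta,\widehat w)$ for the empirical problem'' and thereby silently treats the algorithm output as a \emph{global} minimizer of $Q_n$, yet \Cref{prop:linear} and \Cref{thm:global-kl} only deliver a local minimizer on a slice neighborhood $\mathcal B$ (or, in the KL case, a critical point). For an arbitrary oracle competitor $(\Theta^\dagger,w^\dagger)$ the basic inequality $\|M(\widehat\Theta)\widehat w-\widehat\Psi_n\|_W^2\le\|M(\Theta^\dagger)w^\dagger-\widehat\Psi_n\|_W^2$ is therefore \emph{not} automatic. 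You correctly observe this and propose the two natural repairs: (i) in the well-specified regime the oracle is attained at $\Psi_\star$ itself, which lies in $\mathcal B$ once $n$ is large, so the local and global infima coincide; (ii) in the misspecified regime either restrict the infimum to $\mathcal B\cap\mathcal M_K$ (which makes your chain valid verbatim) or use a multi-start schedule so some initialization lands in the basin of the global oracle. Either repair makes the statement honest; the paper's own proof would benefit from making one of these explicit.
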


\begin{proof}
Let $(\Theta^\dagger,w^\dagger)$ be any feasible pair. By optimality of $(\widehat\Theta,\widehat w)$ for the empirical problem,
\[
\|M(\widehat\Theta)\widehat w-\widehat\Psi_n\|_W^2\ \le\ \|M(\Theta^\dagger)w^\dagger-\widehat\Psi_n\|_W^2.
\]
Add and subtract $\Psi_\star$ and expand both sides:
\begin{align*}
\|M(\widehat\Theta)\widehat w-\Psi_\star\|_W^2
&\le \|M(\Theta^\dagger)w^\dagger-\Psi_\star\|_W^2
+ 2\langle \widehat\Psi_n-\Psi_\star,\ M(\widehat\Theta)\widehat w-M(\Theta^\dagger)w^\dagger\rangle_W.
\end{align*}
By Cauchy–Schwarz,
\[
2|\langle \widehat\Psi_n-\Psi_\star,\ \cdot \rangle_W|
\ \le\ 2\|\widehat\Psi_n-\Psi_\star\|_W\ \|M(\widehat\Theta)\widehat w-M(\Theta^\dagger)w^\dagger\|_W.
\]
Bound the last term by $\|M(\widehat\Theta)\widehat w-\Psi_\star\|_W+\|M(\Theta^\dagger)w^\dagger-\Psi_\star\|_W$ and apply the inequality $2ab\le a^2+b^2$ twice to absorb one copy of $\|M(\widehat\Theta)\widehat w-\Psi_\star\|_W^2$ on the left. We obtain, on the event $\|\widehat\Psi_n-\Psi_\star\|_W^2\le \epsilon_n$,
\[
\|M(\widehat\Theta)\widehat w-\Psi_\star\|_W^2
\ \le\ 3\,\|M(\Theta^\dagger)w^\dagger-\Psi_\star\|_W^2\ +\ 4\,\epsilon_n.
\]
Optimizing over $(\Theta^\dagger,w^\dagger)$ yields the claim with $\epsilon_n\asymp \frac{D_{\mathrm{inv}}+\log(1/\delta)}{n}$. The face-identification assumption ensures the optimizer of the empirical problem lies on the same (interior) face as $w_\star$, avoiding spurious gains from pushing mass to zero; otherwise one replaces the simplex by the affine hull $\{\mathbf{1}^\top w=1\}$ in the local argument, which the projector $P_\star$ already captures.
\end{proof}

\subsection{Step sizes, variants, and diagnostics}\label{subsec:practical}

\paragraph{Step sizes.} Choose $\eta=1/L$ with $L\approx \lambda_{\max}(\overline J_\star^\top W P_\star\,\overline J_\star)$, or use backtracking Armijo on $f$; both achieve the linear rate of \Cref{prop:linear}.

\paragraph{Robust RIGMM.} Replace $\widehat\Psi_n$ by degree-wise MOM/Catoni aggregates and keep $W$ as a robust precision (Huberized covariance or sandwich). All analysis above uses only concentration of $\widehat\Psi_n$, so \Cref{thm:oracle} extends verbatim.

\paragraph{Stochastic mini-batch.} For large $n$, use mini-batch $\widehat\Psi_{n_b}$ in the $\Theta$-step with a Robbins–Monro schedule $\eta_t\downarrow0$ and periodic exact W-steps; expected descent follows from unbiasedness and smoothness of $f$.

\paragraph{Diagnostics.} Report $\|\nabla f(\Theta^{(t)})\|$, the profiled residual $\|r(\Theta^{(t)},w^\star(\Theta^{(t)}))\|_W$, the condition number of $\widehat I_Q:=G^\top \widehat W G$ (from \Cref{rem:quot-fisher}), and active-set stability in the W-step.

\begin{remark}[Gauss--Newton and trust region]
Near $\Theta_\star$, one may replace the gradient step by the (projected) Gauss--Newton update
\(
\Delta\Theta=(\overline J^\top W P\,\overline J)^\dagger \overline J^\top W P\,r,
\)
optionally regularized by $(\lambda I)$. A small trust region on $V^K$ guarantees descent and inherits the same local rate under \ref{A1}–\ref{A4}.
\end{remark}

\section{Invariant dimensions via Molien; worked examples and mixture design}\label{sec:molien}

Let $G\le O(d)$ act linearly on $\R^d$. Identify $\Sym^m(\R^d)$ with homogeneous polynomials of degree $m$ in $x=(x_1,\dots,x_d)$.
The degree-$m$ invariant subspace has dimension
\[
\dim\big(\Sym^m(\R^d)^G\big) \;=\; [t^m]\;\cM_G(t),\qquad
\cM_G(t)\ :=\ \frac{1}{|G|}\sum_{g\in G}\frac{1}{\det(I_d-tg)}.
\]
Equivalently, $\cM_G$ is the Hilbert series of $\R[x]^G$, and
\[
D_{\mathrm{inv}}(\le m_*)\ :=\ \sum_{m=0}^{m_*}\!\dim\big(\Sym^m(\R^d)^G\big)
\ =\ \sum_{m=0}^{m_*}\![t^m]\cM_G(t)
\]
is the \emph{dimension budget} of invariant coordinates available to the Reynolds stack
$\widehat\Psi_n=(\Reyn_0,\ldots,\Reyn_{m_*})$.

\subsection{Composition rules (used repeatedly)}
\begin{enumerate}[leftmargin=1.25em]
\item \textbf{Direct sums multiply.} If $G$ acts block-diagonally on $U\oplus V$, then $\cM_{G,U\oplus V}=\cM_{G,U}\cdot\cM_{G,V}$.
\item \textbf{Index-2 rotation subgroup.} If $W$ is a real reflection (Coxeter) group with basic degrees $d_1,\dots,d_r$ and discriminant degree $\deg\Delta=\sum_i(d_i-1)$, then
\[
\cM_W(t)=\prod_{i=1}^r\frac{1}{1-t^{d_i}},
\qquad
\cM_{W^+}(t)=\frac{1+t^{\deg\Delta}}{\prod_{i=1}^r(1-t^{d_i})}.
\]
Thus passing to the rotation subgroup adds one odd-degree generator.
\end{enumerate}

\subsection{Core families (closed forms and coefficients)}
Below $[t^m]F(t)$ denotes the coefficient of $t^m$ in $F$.

\paragraph{Sign flips $(\pm1)^d$.}
Each $g=\mathrm{diag}(\varepsilon_1,\dots,\varepsilon_d)$ with $\varepsilon_j\in\{\pm1\}$. Averaging gives
\[
\cM_{(\pm1)^d}(t)=\prod_{j=1}^d \tfrac12\!\left(\frac{1}{1-t}+\frac{1}{1+t}\right)
=\frac{1}{(1-t^2)^d}.
\]
Only even degrees appear. For $m=2r$,
\[
\dim\operatorname{Im}\Reyn_{2r}=\binom{d+r-1}{r},\qquad \dim\operatorname{Im}\Reyn_{2r+1}=0.
\]

\paragraph{Permutations $S_d$.}
$\R[x]^{S_d}\cong \R[e_1,\dots,e_d]$ with $\deg e_i=i$, hence
\[
\cM_{S_d}(t)=\prod_{i=1}^d \frac{1}{1-t^i},\qquad
\dim\operatorname{Im}\Reyn_m = p_d(m),
\]
where $p_d(m)$ counts partitions of $m$ with largest part $\le d$ (equivalently, at most $d$ parts).

\paragraph{Hyperoctahedral $B_d=(\pm1)^d\rtimes S_d$ (signed permutations).}
Invariants are symmetric polynomials in $y_i:=x_i^2$. Therefore
\[
\cM_{B_d}(t)=\prod_{i=1}^d \frac{1}{1-t^{2i}},\qquad
\dim\operatorname{Im}\Reyn_{2r}=p_d(r),\quad \dim\operatorname{Im}\Reyn_{2r+1}=0.
\]
\emph{Cycle-type recipe (useful for checks).} If $g$ has permutation cycles $c$ of lengths $\ell_c$ and the product of signs along $c$ is $s_c\in\{\pm1\}$, then
$\det(I-tg)=\prod_c(1-s_c t^{\ell_c})$.

\paragraph{Dihedral $D_m\le O(2)$.}
Identify $\R^2\simeq\C$ via $z=x_1+ix_2$. Generators are $u=|z|^2$ (deg~2) and $v=\Re(z^m)$ (deg~$m$), so
\[
\R[x_1,x_2]^{D_m}\cong \R[u,v],
\qquad
\cM_{D_m}(t)=\frac{1}{(1-t^2)(1-t^m)}.
\]
\emph{Rotation-vs-reflection sum (classical).} Writing $R_k$ for rotation by $2\pi k/m$ and $S$ for a reflection, one has
\(
\det(I-tR_k)=1-2t\cos(2\pi k/m)+t^2,\quad \det(I-tS)=1-t^2,
\)
and the class-sum identity
\(
\sum_{k=0}^{m-1}\frac{1}{1-2t\cos(2\pi k/m)+t^2}=\frac{m(1+t^m)}{(1-t^2)(1-t^m)}
\)
yields the same $\cM_{D_m}$.

\paragraph{Rotation subgroups in $\R^3$ (Platonic).}
From the index-2 rule:
\[
\cM_{T}(t)=\frac{1+t^6}{(1-t^2)(1-t^3)(1-t^4)},\quad
\cM_{O}(t)=\frac{1+t^9}{(1-t^2)(1-t^4)(1-t^6)},\quad \]
\[
\cM_{I}(t)=\frac{1+t^{15}}{(1-t^2)(1-t^6)(1-t^{10})}.
\]

\paragraph{Complex reflection groups $G(m,p,n)$ (wreath products).}
These generalize $A_{n-1},B_n,D_n$; the invariant ring is polynomial with degrees
$m,2m,\dots,(n-1)m,\,mn/p$. Thus
\[
\cM_{G(m,p,n)}(t)=\frac{1}{(1-t^{m})(1-t^{2m})\cdots(1-t^{(n-1)m})(1-t^{mn/p})}.
\]

\paragraph{Weighted cyclic actions (diagonal).}
If $C_M$ acts by $x_j\mapsto \zeta_M^{w_j}x_j$ with integer weights $w_j$, then
\[
\cM_{C_M}(t)=\frac{1}{M}\sum_{k=0}^{M-1}\prod_{j=1}^d \frac{1}{1-\zeta_M^{k w_j}t}.
\]

\paragraph{Multisymmetric invariants ($S_n$ on $m\times n$ variables).}
Let $S_n$ permute columns of an $m\times n$ variable matrix. The invariant ring is Cohen–Macaulay (not polynomial, in general). Molien’s formula still applies and the \emph{polarized power sums} provide practical generators up to modest degrees; use $\cM_G$ to size the stack.

\subsection{From Molien to mixtures (design $\to$ estimator)}\label{subsec:molien-mixtures}
Molien gives the count; we now turn it into a \emph{minimal} invariant stack and an efficient fit.

\paragraph{Cookbook.}
\begin{enumerate}[leftmargin=1.25em]
\item \textbf{Dimension budget.} Compute $\cM_G$ and $D_{\mathrm{inv}}(\le m_*)$. This is how many scalar invariant equations (moments) you can use up to order $m_*$.
\item \textbf{Basis choice.} For each $m\le m_*$, pick any linear basis of $\Sym^m(\R^d)^G$ (e.g., $e_i$ for $S_d$, symmetric polynomials in $x_i^2$ for $B_d$, $u=|z|^2$ and $v=\Re(z^m)$ for $D_m$), and stack their empirical expectations to form $\widehat\Psi_n$.
\item \textbf{Fit.} Run the invariant-GMM alternating scheme (Alg.~\ref{sec:algo}) on the slice $V$. With $W=\widehat\Sigma_n^{-1}$, the limit is asymptotically efficient (Thm.~\ref{thm:lan}); the local linear rate holds by Prop.~\ref{prop:linear}.
\item \textbf{Concentration.} By Lemma~\ref{lem:Phi-conc},
$\|\widehat\Psi_n-\Psi_\star\|_2=\mathcal{O}_{\Pr}\!\big(\sqrt{D_{\mathrm{inv}}/n}\big)$. For higher orders, Lemma~\ref{lem:psi1-stack} supplies sub-Weibull $(2/m)$ tails for each coordinate of $\Reyn_m(\cdot)$.
\end{enumerate}

\paragraph{Minimal stacks for common $G$ (generic separation).}
\begin{description}[leftmargin=1.25em]
\item[$(\pm1)^d$ and $B_d$.] Odd orders vanish. For folded Gaussians,
\[
\Minv{2}(\mu,\Sigma)=\Sigma+\Diag(\mu_1^2,\ldots,\mu_d^2).
\]
Degree $4$ contributes mixed squares (e.g., $e_1(y)^2$ and $e_2(y)$ with $y_i=\mu_i^2$). Generically, $(\Minv{2},\Minv{4})$ separates orbits up to permutations when $|\mu_i|$ are distinct, giving a compact stack with size $[t^2]\cM+[t^4]\cM=1+2=3$ when $d\ge2$.
\item[$D_m\le O(2)$.] Use $(\Minv{2},\Minv{m})$ with $u=|z|^2$ and $v=\Re(z^m)$; $v$ is the first degree that reveals direction modulo $m$. Stack size up to $m$ is $[t^2]\cM+[t^m]\cM=1+1=2$.
\item[$S_d$.] Power sums / elementary symmetric polynomials up to the lowest degree that (model-dependently) separates component parameters; Molien budgets $p_d(m)$ coordinates at degree $m$.
\item[Platonic rotations $T,O,I$.] Start with the reflection-invariant stack (degrees $2,3,4$ etc.), then add the single odd-degree generator from the numerator of $\cM_{W^+}$.
\item[$G(m,p,n)$.] Degrees are known in closed form; pick the smallest subset that separates generically under your base family $k(\cdot;\theta)$, size it via the product $\prod(1-t^{d_i})^{-1}$.
\end{description}

\paragraph{Where this improves prior practice.}
(i) Molien quantifies \emph{exactly} how many invariant coordinates exist at each order, letting us choose the \emph{lowest} degrees that separate orbits instead of defaulting to high-order tensors; (ii) the invariant-GMM uses these few coordinates with an optimal weight $W$, yielding standard GMM efficiency and a provable linear local rate (Thm.~\ref{thm:lan}, Prop.~\ref{prop:linear}); (iii) concentration is explicit in $D_{\mathrm{inv}}$ and $m$ via Lemmas~\ref{lem:Phi-conc} and~\ref{lem:psi1-stack}.

\subsection{Tiny tables (sanity checks)}
\begin{center}
\begin{tabular}{c|l|cccccc}
$G$ & $\cM_G(t)$ & $[t^0]$ & $[t^1]$ & $[t^2]$ & $[t^3]$ & $[t^4]$ & comment \\ \hline
$(\pm1)^d$ & $(1-t^2)^{-d}$ & $1$ & $0$ & $\binom{d}{1}$ & $0$ & $\binom{d+1}{2}$ & even only \\
$S_d$ & $\prod_{i=1}^d (1-t^i)^{-1}$ & $1$ & $1$ & $2$ & $3$ & $5$ & counts $p_d(m)$ \\
$B_d$ & $\prod_{i=1}^d (1-t^{2i})^{-1}$ & $1$ & $0$ & $1$ & $0$ & $2$ & even only \\
$D_m$ & $((1-t^2)(1-t^m))^{-1}$ & $1$ & $0$ & $1$ & $[m{=}3]\,1$ & $2$ if $m\neq 4$ & gen.\ $u, v$ \\
\end{tabular}
\end{center}
(Entries assume $d$ large enough to realize the degrees shown.)
\begin{table}[H]
\centering
\footnotesize
\setlength{\tabcolsep}{3pt}
\renewcommand{\arraystretch}{1.12}
\begin{adjustbox}{max width=\linewidth}
%
%
\begin{tabular}{@{} 
  >{\RaggedRight\arraybackslash}p{2.6cm} 
  >{\RaggedRight\arraybackslash}p{4cm}  
  >{\RaggedRight\arraybackslash}p{3cm}  
  >{\RaggedRight\arraybackslash}p{3cm}  
  >{\RaggedRight\arraybackslash}p{4.5cm} 
@{}}
\toprule
\makecell[l]{\textbf{Group}\\ \textbf{/ Action}} &
\makecell[l]{\textbf{Molien / Hilbert series}\\ \(\boldsymbol{\mathcal{M}_G(t)}\)} &
\makecell[l]{\textbf{Minimal separating}\\ \textbf{stack (degrees)}} &
\makecell[l]{\textbf{Low-order}\\ \textbf{coeffs}} &
\makecell[l]{\textbf{Mixture}\\ \textbf{implication}} \\
\midrule
\(\{\pm1\}^d\) (sign flips) &
\((1-t^2)^{-d}\) &
Even only; typically \(\{2,4\}\) &
\([t^2]=\binom{d}{1},\ [t^4]=\binom{d+1}{2}\) &
Folded Gaussians: $\mathit{Minv}_2=\Sigma+\Diag(\mu_i^2)$. Add $m=4$ for mixed squares. \\
\addlinespace[2pt]
\(S_d\) (permutations) &
\(\prod_{i=1}^d (1-t^i)^{-1}\) &
Model-dependent; power sums \(p_1,p_2,\dots\) &
\([t^m]=p_d(m)\) &
Orbit is a multiset; use lowest degrees that separate generically. \\
\addlinespace[2pt]
\(B_d=(\{\pm1\}^d\!\rtimes\! S_d)\) &
\(\prod_{i=1}^d (1-t^{2i})^{-1}\) &
\textbf{\(\{2,4\}\)} generically &
\([t^2]=1,\ [t^4]=2\) (for \(d\!\ge\!2\)) &
Odd moments vanish; \(\{2,4\}\) suffices when \(|\mu_i|\) are distinct. \\
\addlinespace[2pt]
\(D_m\le O(2)\) (dihedral) &
\((1-t^2)^{-1}(1-t^m)^{-1}\) &
\(\{2,m\}\): \(u=|z|^2\), \(v=\Re(z^m)\) &
\([t^2]=1,\ [t^m]=1,\ [t^4]=2\) (\(=3\) if \(m=4\)) &
Degree \(m\) first reveals direction mod \(m\); stack size \(=2\). \\
\addlinespace[2pt]
\(T,O,I\subset SO(3)\) (rotations) &
\(\dfrac{1+t^{\deg\Delta}}{\prod_i (1-t^{d_i})}\) &
\(\{d_i\}\cup\{\deg\Delta\}\) &
\(T{:}\,6,\ O{:}\,9,\ I{:}\,15\) &
Add one odd-degree generator (discriminant). \\
\addlinespace[2pt]
\(G(m,p,n)\) (wreath) &
\makecell[l]{\(\dfrac{1}{(1-t^{m})\cdots(1-t^{(n-1)m})}\)\\\(\times\dfrac{1}{1-t^{mn/p}}\)} &
Use smallest subset of listed degrees that separates &
\textemdash &
Covers \(A_{n-1},B_n,D_n\); Molien sizes the stack. \\
\addlinespace[2pt]
Weighted \(C_M\) (diag.\ weights \(w_j\)) &
\makecell[l]{\(\dfrac{1}{M}\sum_{k=0}^{M-1}\prod_{j=1}^d\)\ \(\dfrac{1}{1-\zeta_M^{k w_j}t}\)} &
Pick degrees indicated by weights &
\textemdash &
Workhorse for phases/signs; exact counts from Molien. \\
\addlinespace[2pt]
Multisymmetric (\(S_n\) on \(m\times n\)) &
Cohen--Macaulay (not polynomial in general) &
Polarized power sums up to low degree &
Molien still applies &
Use to aggregate indistinguishable copies. \\
\bottomrule
\end{tabular}
\end{adjustbox}
\caption{Groups \(\Rightarrow\) Molien \(\Rightarrow\) minimal stacks and mixture takeaways. \(p_d(m)\) counts partitions of \(m\) with largest part \(\le d\).}
\end{table}


\begin{landscape} 
\begin{table}[H]
\centering
\small 
\setlength{\tabcolsep}{4pt} 
\renewcommand{\arraystretch}{1.2}
\begin{tabularx}{\linewidth}{@{} 
  >{\raggedright\arraybackslash}p{3cm}   
  L                                    
  L                                    
  >{\raggedright\arraybackslash}p{2.5cm} 
@{}}
\toprule
\textbf{Topic} & \textbf{Prior literature baseline} & \textbf{Our improvement (this paper)} & \textbf{Where} \\
\midrule
Invariant selection &
Ad-hoc / high-degree invariants; symbolic elimination; identifiability focus. &
\textbf{Molien-guided minimal stacks}: exact dimension budget $D_{\mathrm{inv}}$; lowest degrees that generically separate. &
Sec.~\ref{sec:molien} (cookbook \& tables) \\
\addlinespace[2pt]
Estimator on invariants &
Few works give \emph{algorithms} for finite-$G$ mixtures on invariants. &
\textbf{Invariant-GMM alternating minimization} with closed-form $W$-step on slice $V$. &
Sec.~\ref{sec:algo} \\
\addlinespace[2pt]
Convergence guarantees &
Identifiability or asymptotics; no local rate on invariants for finite-$G$ mixtures. &
\textbf{Local linear convergence} under Lipschitz Jacobian + full rank on $V$. &
Prop.~\ref{prop:linear} \\
\addlinespace[2pt]
Efficiency &
Invariant features rarely paired with efficient GMM weighting. &
\textbf{LAN efficiency} with $W=\widehat\Sigma_n^{-1}$; GLS normal equations on $V$. &
Thm.~\ref{thm:lan}; Prop.~\ref{prop:linear} \\
\addlinespace[2pt]
Concentration of stack &
Often unspecified or $\psi_1$ claims beyond $m{=}2$. &
\textbf{Fixed-design} $\|\widehat\Psi_n-\Psi_\star\|_2=O_{\Pr}(\sqrt{D_{\mathrm{inv}}/n})$; \textbf{sub-Weibull} $(2/m)$ tails. &
Lem.~\ref{lem:Phi-conc}; Lem.~\ref{lem:psi1-stack} \\
\addlinespace[2pt]
Order vs tensors &
Tensor/spectral default to degree $3$ irrespective of symmetry. &
\textbf{Lower order}: e.g.\ $B_d$ uses degrees $2$ \& $4$; $D_m$ uses $2$ \& $m$. &
Sec.~\ref{sec:molien} (minimal stacks) \\
\bottomrule
\end{tabularx}
\caption{What is new here vs.\ prior approaches, with exact cross-references in the paper.}
\end{table}
\end{landscape} 
\begin{table}[H]
\centering
\footnotesize
\begin{tabularx}{\textwidth}{@{} Y C C C Y @{}}
\toprule
\textbf{Group} & \textbf{Stack degrees} & \textbf{Size through $m_\ast$} & \textbf{Example $m_\ast$} & \textbf{Note} \\
\midrule
$B_d$ & $\{2,4\}$ & $[t^2]\!+\![t^4]=1+2=3$ & $m_\ast=4$ & Use when $|\mu_i|$ distinct. \\
$D_m$ & $\{2,m\}$ & $[t^2]\!+\![t^m]=1+1=2$ & $m_\ast=m$ & First direction info at $m$. \\
$S_d$ & $\{2,\dots\}$ & $\sum_{j\le m_\ast} p_d(j)$ & e.g.\ $m_\ast=4$ & Choose minimal degree separating your model. \\
\bottomrule
\end{tabularx}
\caption{Plug-and-play sizes for common minimal stacks (generic separation).}
\end{table}

\section{Robustness and confidence sets} \label{sec:robust-ci}
\subsection{Contamination}
Under $P_\varepsilon=(1-\varepsilon)P^\star+\varepsilon Q$ with $\varepsilon<1/2$ and bounded invariant features, $\|\E_{P_\varepsilon}[\MinvAll]-\E_{P^\star}[\MinvAll]\|\le \varepsilon M$. Combine with \Cref{thm:stability} to obtain an additive $O(\varepsilon)$ inflation of $\dH$.

\subsection{Hausdorff confidence radii}
Let $\widehat\Sigma_n$ estimate the covariance of $\MinvAll(X)$ and let $s_{\min}(\widehat J)$ be the smallest singular value of a slice Jacobian estimate. Define
\[
  r_{1-\alpha}:= s_{\min}(\widehat J)^{-1}\sqrt{\chi^2_{D_{\mathrm{inv}},1-\alpha}/n}.
\]
Then $\{\widetilde{\OrbitSet}: \dH(\widetilde{\OrbitSet},\widehat{\OrbitSet})\le r_{1-\alpha}\}$ is an asymptotically valid $(1-\alpha)$ confidence set when $D=2$.

\section{Combinatorial ambiguity via wreath products} \label{sec:comb}
Even after quotienting and permutation invariance, residual ambiguity is governed by the wreath product $S_K\wr \Ggp$. P\'olya cycle index methods count equivalence classes of ordered $K$-tuples under this action; plugging in orbit types in $\qspace$ yields generating functions for indistinguishable mixtures. For $\Ggp=\{\pm1\}$ the growth is far smaller than $2^K K!$ due to extensive collisions.

\section*{Discussion and concluding remarks}

\paragraph{Metric foundation.}
We formalized the estimand as a multiset of orbits in the quotient $\Theta/G$ and put a concrete metric on it. Under isometric actions, the single–orbit loss admits a closed form via minimum alignment, and the multiset Hausdorff loss is sharply comparable to bottleneck matching (equal when a threshold graph has a perfect matching). This gives both conceptual clarity and practical costs computable in $O(|G|)$ per pair.

\paragraph{Invariant embedding.}
Averaging by the Reynolds projector produces symmetry–invariant tensors whose stack is constant on orbits. Mixtures collapse linearly in this space: the population invariant vector is a convex combination of the components’ invariant images. Two complementary identifiability results follow: a global, simplex-based criterion and a local, Jacobian–rank condition on slices.

\paragraph{Inference geometry.}
Working on slices turns the problem into a standard (but quotient-aware) GMM. We derived quotient LAN with an efficient covariance $I_Q^{-1}= (G^\top\Sigma^{-1}G)^{-1}$, plus an explicit one-step EIF update. Nonasymptotically, estimation error scales like
$\sqrt{D_{\mathrm{inv}}}/(\mu\sqrt{n})$, where $\mu=\sigma_{\min}(G)$ is a tangible “quotient conditioning” parameter. A curvature term explains small $n^{-1}$ biases and how to correct them.

\paragraph{Model size and geometry.}
Because mixtures are convex in invariant space, $K$ is a geometric property of the face containing $\Psi_\star$. A simple residual test against the $(K\!-\!1)$–mixture class succeeds whenever there is a positive margin $\gamma_\star$. This yields finite-sample guarantees and consistency, and clarifies when selection is impossible ($\gamma_\star=0$).

\paragraph{Robustness and dependence.}
The framework accommodates HAC covariances and robust mean estimators (MOM/Catoni), retaining the same rates and adding an $O(\varepsilon)$ term under $\varepsilon$-contamination. Thus the guarantees extend beyond i.i.d.\ sub-Gaussian settings.

\paragraph{Algorithmic upshot.}
An alternating scheme—exact $w$–step (closed form on the active set) and projected gradient on $\theta$—implements the efficient GMM in practice. Locally it behaves like Gauss–Newton after profiling out weights; with mild regularity it is linearly convergent on a slice-aligned neighborhood. Complexity is driven by (i) group enumeration for pairwise costs and (ii) standard matching/linear-algebra subroutines.

\paragraph{Diagnostics in practice.}
Report $\widehat I_Q$, its condition number and smallest singular value, block leverages, and the (HAC) overidentification $J$-test; monitor separation of invariant images and the active face for $K$.

\paragraph{Limitations and outlook.}
Singular strata (stabilizer jumps), vanishing weights, and very large groups can degrade conditioning. Extending beyond isometries, learning feature maps $\varphi$, and automating Molien-guided moment selection are natural next steps. Still, the paper provides a unified path from symmetry to metrics, from metrics to algorithms, and from algorithms to finite-sample guarantees.
\paragraph{Unexplored directions (and limitations).}
We close by candidly noting several areas we did not address.

\begin{itemize}[leftmargin=1.5em]\setlength{\itemsep}{0.25em}
\item \textbf{Beyond finite isometries.} Our theory assumes a finite, isometric group action. Continuous (Lie) groups and non-isometric/affine actions (e.g., $\mathrm{SO}(d)$, $\mathrm{GL}_p$) remain open.
\item \textbf{Learning the features.} We take $\varphi$ (and hence the invariants) as fixed. Jointly learning $\varphi$ with identifiability and guarantees is outside our scope.
\item \textbf{Noncompact settings.} Compact $\Theta$ and slice regularity are assumed; noncompact parameters and boundary effects are not treated.
\item \textbf{Richer latent structure.} We focus on finite-$K$ mixtures. Symmetric HMMs, topic models, and nonparametric mixtures (e.g., DP) are not analyzed.
\item \textbf{Dependence beyond HAC.} Our results allow short-range dependence via HAC. Long-range dependence and general time-series/network dependence are open.
\item \textbf{Robustness at scale.} We analyze sub-Gaussian (with MOM/Catoni variants) and $\varepsilon$-contamination. List-decodable/adversarial regimes need new tools.
\item \textbf{Misspecification.} We rely on a positive separation margin $\gamma_\star$ in invariant space. Behavior when invariants fail to separate orbits is only partially understood.
\item \textbf{Computational scaling.} Enumerating large groups can dominate cost. Near-linear methods (orbit hashing, sketches, canonical reps) are not developed here.
\item \textbf{Model selection inference.} Consistent $K$ selection is addressed geometrically, but valid post-selection CIs and tests on the quotient are left for future work.
\item \textbf{Uncertainty quantification.} Finite-sample inference (bootstrap on the quotient, Bayesian contraction) is not provided.
\item \textbf{Missing/erroneous data.} We do not treat missing features, measurement error in invariants, or partial observation on slices.
\item \textbf{Online/distributed settings.} Streaming and distributed implementations with communication constraints are beyond our scope.
\end{itemize}

\noindent We hope calling out these gaps helps orient subsequent work and clarifies the limits of the present contribution.

\section*{Acknowledgments}
I thank Prof. Sourabh Bhattacharyya for introducing me in this area. \cite{Mallik2025HausdorffFolded} was developed based on the ideas found during an attempt to explore the stochastic behavior of prime gaps under his kind supervision. Also thanks to Prof. Arijit Chakrabarti for suggesting some decision-theoretic ideas used left and right within this paper.
\bigskip

\bibliographystyle{amsplain}
\bibliography{references}

@book{DerksenKemper2015,
  author    = {Harm Derksen and Gregor Kemper},
  title     = {Computational Invariant Theory},
  edition   = {2},
  publisher = {Springer},
  address   = {Berlin, Heidelberg},
  year      = {2015},
  series    = {Encyclopaedia of Mathematical Sciences},
  doi       = {10.1007/978-3-662-48422-7},
  isbn      = {978-3-662-48420-3}
}

@book{Humphreys1990,
  author    = {James E. Humphreys},
  title     = {Reflection Groups and Coxeter Groups},
  publisher = {Cambridge University Press},
  address   = {Cambridge},
  year      = {1990},
  doi       = {10.1017/CBO9780511623646},
  isbn      = {978-0-521-43613-6}
}

@book{Barvinok2002,
  author    = {Alexander Barvinok},
  title     = {A Course in Convexity},
  publisher = {American Mathematical Society},
  address   = {Providence, RI},
  year      = {2002},
  series    = {Graduate Studies in Mathematics},
  volume    = {54},
  doi       = {10.1090/gsm/054},
  isbn      = {0-8218-2968-8}
}

@article{Molien1897,
  author  = {Theodor Molien},
  title   = {{\"U}ber die Invarianten der linearen Substitutionsgruppe},
  journal = {Sitzungsberichte der K{\"o}niglich Preu{\ss}ischen Akademie der Wissenschaften zu Berlin},
  year    = {1897},
  pages   = {1152--1156},
  language= {German}
}

@misc{Mallik2025HausdorffFolded,
  author       = {Koustav Mallik},
  title        = {Hausdorff Consistency of MLE in Folded Normal and Gaussian Mixtures},
  year         = {2025},
  howpublished = {arXiv preprint},
  eprint       = {2509.12206},
  archivePrefix= {arXiv},
  primaryClass = {math.ST},
  note         = {version v1}
}

\end{document}